\newcommand\norm[1]{\left\lVert#1\right\rVert} 
\theoremstyle{plain}
\newtheorem{lemma}{Lemma}
\newtheorem{theorem}{Theorem}
\newtheorem{proposition}{Proposition}
\newtheorem{corollary}{Corollary}
\theoremstyle{definition}
\newtheorem{definition}{Definition}
\newtheorem*{asu*}{Assumption}
\newtheorem{remark}{Remark}
\newif\if@gather@prefix 
\preto\place@tag@gather{%
  \if@gather@prefix\iftagsleft@ 
    \kern-\gdisplaywidth@ 
    \rlap{\gather@prefix}%
    \kern\gdisplaywidth@ 
  \fi\fi 
} 
\appto\place@tag@gather{%
  \if@gather@prefix\iftagsleft@\else 
    \kern-\displaywidth 
    \rlap{\gather@prefix}%
    \kern\displaywidth 
  \fi\fi 
  \global\@gather@prefixfalse 
} 
\preto\place@tag{%
  \if@gather@prefix\iftagsleft@ 
    \kern-\gdisplaywidth@ 
    \rlap{\gather@prefix}%
    \kern\displaywidth@ 
  \fi\fi 
} 
\appto\place@tag{%
  \if@gather@prefix\iftagsleft@\else 
    \kern-\displaywidth 
    \rlap{\gather@prefix}%
    \kern\displaywidth 
  \fi\fi 
  \global\@gather@prefixfalse 
} 
\newcommand*{\beforetext}[1]{%
  \ifmeasuring@\else
  \gdef\gather@prefix{#1}%
  \global\@gather@prefixtrue 
  \fi
} 
\DeclareMathOperator*{\argmin}{arg\,min}
\pgfplotsset{compat=1.17}
\newcommand{\cN}{\mathcal{N}}
\newcommand{\cJ}{\mathcal{J}}
\newcommand{\bX}{\mathbb{X}}
\newcommand{\bB}{\mathbb{B}}
\newcommand{\R}{\mathbb{R}}
\newcommand{\CS}{S}
\newcommand{\CR}{\mathcal{R}}
\newcommand{\p}{{\rm I}\kern-0.18em{\rm P}}
\newcommand{\E}{{\rm I}\kern-0.18em{\rm E}}
\newcommand{\B}{\boldsymbol}
\newcommand{\by}{\mathbf{y}}
\newcommand{\Bb}{\mathbf}
\newcommand{\schur}[3]{{#1}/ [{#2},{#3}]}
\newcommand{\1}{{\rm 1}\kern-0.24em{\rm I}}
\newcommand\barbelow[1]{\stackunder[1.2pt]{$#1$}{\rule{.8ex}{.075ex}}}
\newcommand{\ind}{\mathrel{\perp\!\!\!\perp}} 
\let\OLDthebibliography\thebibliography
\renewcommand\thebibliography[1]{
  \OLDthebibliography{#1}
  \setlength{\parskip}{0pt}
  \setlength{\itemsep}{0pt plus 0.3ex}
}
\begin{document}
\def\spacingset#1{\renewcommand{\baselinestretch}%
{#1}\small\normalsize} \spacingset{1}

\title{
Multi-Task Learning for Sparsity Pattern Heterogeneity:\\ Statistical and Computational Perspectives
}
\date{}

\author{Kayhan Behdin,$^{\dagger,1}$ Gabriel Loewinger,$^{\ast,1,2}$ Kenneth T. Kishida,$^{\dagger\dagger}$
	\\ Giovanni Parmigiani,$^{\ast\ast, \mathsection}$  Rahul Mazumder$^{\dagger,\ddag}$ 
	\\[4pt]
	$^{\dagger}$\textit{MIT Operations Research Center, Cambridge, MA}
	\\[2pt]
	$^{\ast}$\textit{Machine Learning Team, National Institute on Mental Health, Bethesda, MD} \\[2pt]
	$^{\dagger\dagger}$\textit{Department of Physiology and Pharmacology,} \textit{Department of Neurosurgery,} \\ \textit{Wake Forest School of Medicine, Winston Salem, NC} \\[2pt]
	$^{\ast\ast}$\textit{Department of Biostatistics, Harvard School of Public Health, Boston, MA} \\[2pt]
	$^{\mathsection}$\textit{Department of Data Science,
		Dana Farber Cancer Institute, Boston, MA}\\ [2pt]
	$^{\ddag}$\textit{MIT Sloan Schools of Management, Cambridge, MA}}

\footnotetext[1]{KB and GL contributed equally to this work.}
\footnotetext[2]{To whom correspondence should be addressed. \textit{gloewinger@gmail.com}}
\maketitle

\hypersetup{
	pdfkeywords={Multi-Task Learning, Sparse Linear Regression, Discrete Optimization, Integer Programming, Variable Selection}
}
\begin{abstract}{
We consider a problem in Multi-Task Learning (MTL) where multiple linear models are jointly trained on a collection of datasets (``tasks’’). A key novelty of our framework is that it allows the sparsity pattern of regression coefficients and the values of non-zero coefficients to differ across tasks while still leveraging partially shared structure. Our methods encourage models to share information across tasks through \textit{separately} encouraging 1) coefficient \textit{supports}, and/or 2) nonzero coefficient \textit{values} to be similar. This allows models to borrow strength during variable selection even when non-zero coefficient values differ across tasks.  We propose a novel mixed-integer programming formulation for our estimator. We develop custom scalable algorithms based on block coordinate descent and combinatorial local search to obtain high-quality (approximate) solutions for our estimator. Additionally, we propose a novel exact optimization algorithm to obtain globally optimal solutions.
We investigate the theoretical properties of our estimators.
We formally show how our estimators leverage the shared support information across tasks to achieve better variable selection performance. We evaluate the performance of our methods in simulations and two biomedical applications. Our proposed approaches appear to outperform other sparse MTL methods in variable selection and prediction accuracy. 
We provide the $\texttt{sMTL}$ package on \texttt{CRAN}.
}
\end{abstract}

\section{Introduction}
\label{intro}
Multi-task learning (MTL) seeks to leverage structure shared across datasets to improve prediction performance of each model on its respective task. Implementation of MTL has been successful in a variety of biomedical settings, such as neuroscience \citep{Moran, neuroImg_multiTask}, oncology \citep{ivan} and the synthesis of microarray datasets \citep{kim}. Medical applications of machine learning models increasingly rely on model interpretability as a means of drawing scientific conclusions, avoiding biases, engendering trust in model predictions, and encouraging widespread adoption in clinical settings \citep{modInterpret}. As such, considerable methodological research continues to explore linear models, where parameter interpretation is more transparent than more flexible approaches, such as kernel-based or neural network-based methods. A major challenge in biomedical settings is that covariates are often high-dimensional and sample sizes are low, thereby requiring regularization or variable selection. ``Multi-task feature learning'' is a sub-field of MTL that has a rich literature on methods for these settings. Our focus here is on feature selection where we seek to select a subset of relevant features~\citep{Zhang}. 
In the MTL setting, we are given $K\geq 2$ tasks, each identified with a design matrix $\bX_k\in\R^{n_k\times p}$ and an outcome vector $\by_k\in\R^{n_k}$ for $k\in[K]$, where $n_k$ is the number of observations of task $k$, $p$ is the dimension of the covariates, and $[K]$ denotes the set $\{1,2,...,K\}$. In this paper, we seek a linear approximation to each task's model. In particular, our goal is to present ``interpretable'' linear estimators $\hat{\B{\beta}}_k, k\in[K]$ such that $\by_k\approx \bX_k\hat{\B{\beta}}_k$. Before discussing our contributions in this paper, we briefly review relevant sparse regression and MTL literature that provide the methodological basis for our contribution.

\subsection{Best Subset Selection} We motivate our methods through the well-known Sparse Linear Regression (SLR) problem. Given a model matrix $\bX\in\R^{n\times p}$ and outcome vector $\by\in\R^n$, we seek to estimate the vector $\B{\beta}\in\R^p$ such that $\B{\beta}$ is sparse (i.e., has only a few nonzero coefficients) and the least squares error $\|\by-\bX\B{\beta}\|_2^2$ is minimized. Sparsity in the model is desirable from statistical and interpretability perspectives, especially in high-dimensional settings where $p\gg n$.

In general, estimators proposed for the SLR problem seek to minimize a penalized (or constrained) version of the least squares error, where the penalty (or constraint) promotes sparsity. Common choices for this setup include the $\ell_2$ penalty, or Ridge~\citep{ridge}, $\ell_1$ penalties such as the Lasso~\citep{Lasso} and non-convex penalties like SCAD~\citep{scad}. In this paper, we build on the Best Subset Selection (BSS) estimator~\citep{miller1990subset}. Formally, the BSS estimator is defined as
\begin{align} \label{l0}
	&\underset{\boldsymbol{\beta} }{\mbox{min }} \norm{  \mathbf{y} - \mathbb{X} \boldsymbol{\beta} }_2^2 ~~~
	\mbox{s.t. }~\norm{\boldsymbol{\beta}}_0 \leq s 
\end{align}
where $\|\cdot\|_0$ denotes the number of nonzero elements of a vector. Problem~\eqref{l0} is computationally challenging~\citep{natarajan1995sparse}. However, recent advances in mixed-integer programming have led to the development of algorithms that can obtain good or optimal solutions to problem~\eqref{l0} for moderate to large problem instances. Some effective approaches to obtaining good solutions to problem~\eqref{l0} include approximate methods~\citep[see also, references therein]{l0learn}   
and integer programming based global optimization methods~\citep{bestSubset,bart,l0BnB}. 
Recent work has also explored the performance of combined penalties such as $\ell_0\ell_2$~\citep{lowSNRL0}, as combining a $\ell_0$ penalty with coefficient regularization can improve prediction performance. 
Importantly, these methods also have connections to Bayesian models using priors to achieve desirable shrinkage of regression coefficients and regularization. 
Certain sparsity-inducing priors for linear regression yield maximum a posterior (MAP) estimates that are the point estimates arising from the SLR problem (e.g., see \cite{lowSNRL0} and references therein). For example, the Bernoulli-Gaussian mixture models yield MAPs corresponding to solutions to the $\ell_2$ regularized $\ell_0$-constrained SLR problem \citep{bayesianL0, berGauss}. Similarly, imposing a Bernoulli-Laplace prior on the regression coefficients yields MAPs that also solve the $\ell_1$ regularized $\ell_0$-constrained SLR problem \citep{Amini2012TheAF, bayes_L0L1}. These and other related Bayesian methods (e.g., see \cite{Decoupling_Bayes} and references therein) achieve sparse coefficient estimates, and decouple variable selection and shrinkage of regression coefficients, an approach that bears conceptual similarities to the strategy we apply in the MTL setting.

\subsection{Multi-task Learning} A common linear MTL strategy is to jointly fit $K$ linear models with a shared rank restriction \citep{rrr} or penalty on the matrix of model coefficients, $\mathbb{B} = [\B{\beta}_1,\ldots, \B{\beta}_K]_{p \times K}$. For example, penalties such as trace \citep{traceNorm}, graph Laplacian \citep{laplacian} or spectral \citep{spectralNorm} norms encourage models to borrow strength across tasks. These methods generally do not
result in sparse $\B{\beta}_k$ estimates
thereby limiting interpretability. One alternative is to use sparsity-inducing penalties on $\mathbb{B}$. For example, the Group Lasso regularizer $\sum_{j=1}^p
\norm{ \boldsymbol{\beta}^{(j)}}_2$, where $\B{\beta}^{(j)} \in \mathbb{R}^{K}$ is the vector of coefficients for covariate $j$, is a commonly used convex regularizer to induce sparse solutions~\citep{groupLasso,groupLasso2}. This method also encourages tasks to share a common support (i.e., $\hat{\B{\beta}}_k$ for all $k\in[K]$ share the same location of nonzeros). However, this general \emph{all-in} or \emph{all-out} variable selection approach  can degrade the prediction performance, or result in misleading variable selection, if the ``true'' supports are not identical across tasks. A rich literature on variations of $\ell_{p,q}$ norms exist to induce sparsity and borrow strength across tasks in linear MTL (see~\cite{Zhang} and references therein). 

An alternative approach is to allow each $\B{\beta}_k$ to have a different sparsity pattern, but still encourage them to share information across tasks. For example, regularization of the model coefficients with the non-convex penalty, $\alpha \sum_{k=1}^K \left ( \norm{\boldsymbol{\beta}_k}_1 - \norm{\boldsymbol{\beta}_k}_2 \right )$, can allow for support heterogeneity \citep{sparseMultiTask}. The multi-level Lasso proposed by \cite{het1} writes $\B{\beta}_k$ as the element-wise product of two vectors, one common to all tasks and one common to task $k$. This \textit{product based decomposition} of task coefficients allows for heterogeneous sparsity patterns and has been extended to more general regularizers \citep{multiplicativeMT}.
Bayesian approaches to MTL can also encourage sparse solutions with potentially differing sparsity patterns by, for example, using sparsity-inducing priors like the matrix-variate generalized normal prior \citep{prob1} or generalized horseshoe prior \citep{prob3}. Some priors yield MAPs that are the same as penalized maximum likelihood point estimates, and have been used to motivate specific penalties in optimization-based MTL formulations (e.g., see \cite{trippa} and references therein). 
It can be helpful to apply methods that provide the flexibility to separately select variables and shrink nonzero coefficients.
\subsection{Outline of our approach and contributions}
We propose a flexible framework in linear MTL that: 1) 
estimates a sparse solution by directly controlling the number of nonzero regression coefficients, 
2) allows for differing sparsity patterns (supports) across tasks, and 3) shares information through the supports and values of the task regression coefficients. 
To achieve such flexibility, we propose a novel estimator given by the solution to an integer program. 
Our estimator allows the regression coefficient estimates of each task to have their own separate and sparse support. By using binary variables to model the supports of the regression coefficients ($\boldsymbol{z}_k = \mathbbm{1}(\B{\beta}_k\neq \boldsymbol{0}) $), we introduce a penalty that encourages the supports of the tasks $\{\B{z}_k\}_1^{K}$ to be similar to each other. This encourages tasks to share information during variable selection while still allowing the coefficient estimates of the tasks to have different sparsity patterns. Importantly, this allows for borrowing strength across the $\B{\beta}_k$'s even when the \textit{values} of the nonzero $\beta_{k,j}$'s differ widely across tasks for a given covariate, $j$. As a result, our framework allows for sharing information across tasks through two \textit{separate} mechanisms: 1) penalties that shrink the $\B{\beta}_k$ \textit{values} together, and/or 2) penalties that encourage  $\boldsymbol{z}_k$'s, the \textit{supports} of $\B{\beta}_k$'s,
to be similar across tasks.

To provide further insight into our modeling framework, we present numerical experiments on synthetic data\footnote{Please see Section \ref{coef_supp} for more information on the design and results of these sets of experiments.}. In Figure~\ref{fig:introSupp}, we compare coefficient estimates from a 
Group Lasso estimator and our method, ``Zbar+L2,'' against the true simulated regression coefficients. In Figure~\ref{fig:introSupp} [Left panel], we show the true $\B{\beta}_k^*$ for $k\in\{1,2\}$. Next, we show the output of our model and the Group Lasso-based estimator. The latter selects a variable for either both tasks or neither task and fails to recover the true supports (note that the true supports differ across tasks). Since the Group Lasso penalty shrinks all $\boldsymbol{\beta}_k$ values, the nonzero coefficient estimates are noticeably shrunk. Finally, the number of nonzero coefficients in the 
estimates are far too large. Our proposed method recovers the full support by allowing for heterogeneity of supports while encouraging the supports of different tasks to be similar. In Figure~\ref{fig:introSupp} [Right panel], we show the results of our method for different values of $\delta$, which is the regularization parameter (aka hyperparameter) that controls the degree to which the $\B{z}_k$ are encouraged to be similar to each other. For $\delta=0$, no support heterogeneity shrinkage is applied and thus no information is shared across tasks. Given the low sample size, borrowing information across tasks is critical and thus the solutions fail to recover the true support.
For $\delta=0.05$, our method encourages the tasks to have similar supports, which allows the models to leverage the shared information across tasks to recover the supports of both tasks perfectly. Finally, increasing $\delta$ to one makes the supports equal, though the coefficient values can differ. This common-support solution misses some locations of the correct support. This shows the strength of our \textit{support heterogeneity regularization} approach compared to common support methods, independent models ($\delta = 0$), and methods like the Group Lasso.

\begin{figure*}[t] 
	\centering
	\begin{subfigure}[t]{0.4\textwidth} 
		\centering
		\includegraphics[width=0.80\linewidth]{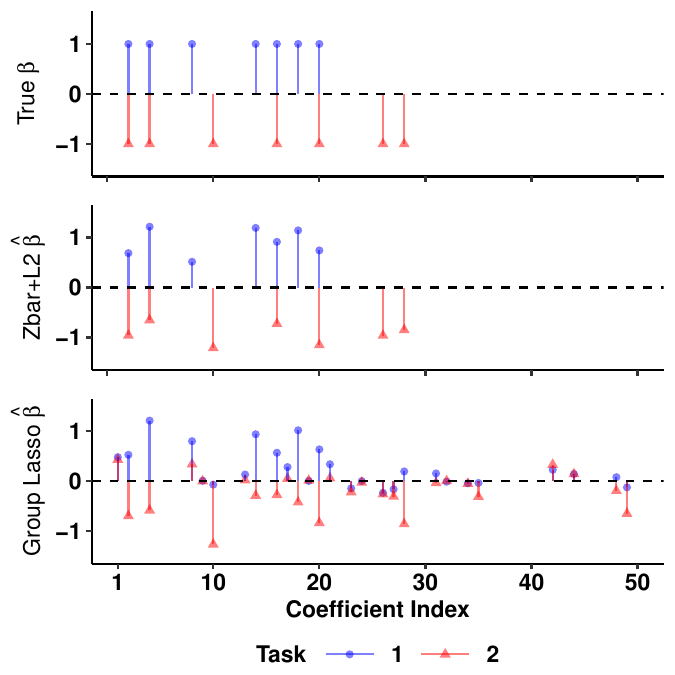}
		\label{fig:SHR_c1}
	\end{subfigure}
	\begin{subfigure}[t]{0.4\textwidth}
		\centering
		\includegraphics[width=0.80\linewidth]{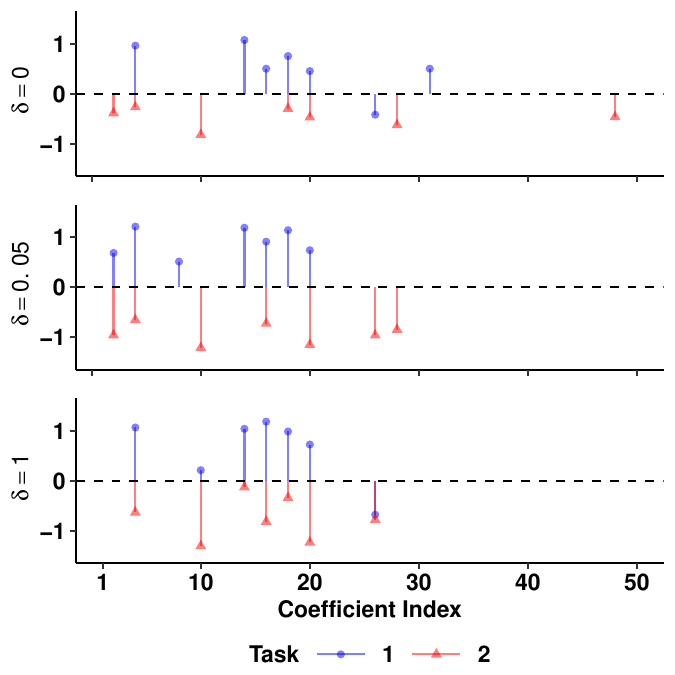}
		\label{fig:SHR_d1}
	\end{subfigure}
	\caption{\footnotesize [Left] The top panel shows the true simulated $\boldsymbol{\beta}$, and the bottom two panels show coefficient estimates from tuned models. The color and symbol indicate task index. Our proposed method, Zbar+L2, recovers the support and improves coefficient estimate accuracy over the group Lasso estimates made with $\texttt{glmnet}$. [Right] Our penalty reduces support heterogeneity across tasks as $\delta$ increases.} 
	\label{fig:introSupp}
\end{figure*}

Motivated by our empirical findings, we theoretically analyze our proposed estimator under suitable regularity assumptions.
We derive an upper bound on the sum of prediction errors across all $K$ tasks that scales as $O\left((Ks/n)\log(p/s)\right)$ where $s$ is an upper bound on the sparsity level of different tasks (cf. Section~\ref{sec:theory-1} for a more detailed discussion). Moreover, our theory formalizes how the proposed estimator limits the support heterogeneity of the solution and clarifies that the method encourages tasks to share information through the coefficient \textit{supports} rather than through coefficient \textit{values}. In particular, if the true model coefficients have (near) common supports, the estimated regression coefficients will also have (near) common supports. We also investigate the variable selection performance of our framework and show that by sharing information through task \textit{supports}, our framework produces good variable selection performance even when some tasks' observations have low signal. Importantly, this holds even if the exact common support assumption does not hold (cf. Section~\ref{sec:theory-2} for a more general discussion).

Since our estimator is given by the solution to a mixed integer program (MIP), it can be computed using off-the-shelf solvers for small/moderate-scale instances. 
To enhance the scalability of our estimator, we propose novel algorithms. 
In particular, we propose (i) approximate algorithms that can obtain good solutions quickly, and 
(ii) exact algorithms that deliver optimal solutions. The approximate algorithms do not have (global) optimality guarantees but are useful from a practical standpoint. The exact algorithms serve to quantify the quality of the solutions available from (i) and possibly improve the quality of the approximate solutions with associated optimality certificates via dual bounds.  
Our experiments on synthetic and real data show that 
our approximate algorithms lead to high-quality solutions (which are often optimal), result in good statistical performance, and yield interpretable solutions. In terms of statistical properties, we study estimators corresponding to the global optimum as well as approximate solutions available from our computational framework.

Our contributions in this paper can be summarized as follows. \textbf{(1)} We propose a flexible family of sparse MTL estimators which borrow strength across tasks through \textit{separately} shrinking coefficient values towards each other, and/or encouraging tasks to have similar supports. Our estimator is given by a MIP. 
\textbf{(2)} We develop scalable approximate algorithms based on first-order optimization and local combinatorial search that provide high-quality solutions and allow for quickly fitting paths of solutions for tuning hyperparameters. \textbf{(3)} 
We also propose custom exact algorithms that can certify quality of solutions available from approximate algorithms and possibly improve them. Our exact algorithms are much more scalable than off-the-shelf commercial solvers.
\textbf{(4)} We establish statistical guarantees for our method that show our estimator generally leads to good prediction and variable selection performance while encouraging task-specific models to have similar supports. \textbf{(5)} We compare the performance of our methods with other sparse MTL approaches in two applications and on synthetic datasets. \textbf{(6)} An R package implementing our methods is available on $\texttt{CRAN}$.
\section{Methods}\label{sec:method}

\noindent {\bf Notation:} We discuss some notation that will be used in this paper. 
A single observation of the outcome and covariates are denoted as ${y}_{k,i} \in \mathbb{R}$ and $\mathbf{x}_{k,i} \in \mathbb{R}^p$, respectively. 
We denote the support of the regression coefficients as $\boldsymbol{z}_k = \1({\boldsymbol{\beta}_k \neq \boldsymbol{0}})$. We write the matrix of $\boldsymbol{\beta}_k$ as $\mathbb{B} \in \mathbb{R}^{p \times K}$. Similarly, we define the matrix of $\B{z}_k$ as $\B{Z}\in\{0,1\}^{p\times K}$. $\beta_{k,j}$ is the regression coefficient associated with task $k$ and covariate $j$ and $z_{k,j} = \1({\beta_{k,j} \neq 0})$. For a binary vector $\B{z}\in\{0,1\}^p$ we let $S(\B{z})$ denote the support of $\B{z}$, $S(\B{z})=\{j:z_j=1\}$. We also denote the smallest eigenvalue of a symmetric $\B{\Sigma}\in\R^{p\times p}$ as $\lambda_{\min}(\B{\Sigma})$. We use $\|\cdot\|_{\text{op}}$ to denote the operator norm of a matrix. For $S_1\subseteq[n_1],S_2\subseteq[n_2]$, we use $\B{A}_{S_1,S_2}$ to denote the submatrix of $\B{A}\in\R^{n_1\times n_2}$ with rows and columns indexed by $S_1$ and $S_2$, respectively. We use $\odot$ to indicate element-wise multiplication. We use the notations $\lesssim,\gtrsim$ to show the inequality holds up to a universal constant.

\subsection{Proposed Estimators}\label{section:proposed-estimator}
We now introduce our proposed estimator. First, we describe a special case of our general framework corresponding to the basic ``common support'' model.

In the common support case, we assume the $K$ task-specific models are parameterized by separate $\boldsymbol{\beta}_k$, but that the $\boldsymbol{\beta}_k$ share the same support across tasks (i.e., $\boldsymbol{z}_k = \boldsymbol{z} ~\forall~ k \in [K]$). In addition, we use a penalty that allows models to borrow strength across tasks when estimating the task-specific $\boldsymbol{\beta}_k$'s. The common support (CS) estimator can be written as:
\begin{align} \label{suppCommon}    
	\beforetext{(CS)}\underset{\boldsymbol{z}, \mathbb{B}, \bar{\boldsymbol{\beta}} }{\mbox{min }}~~& \sum_{k=1}^K \frac{1}{n_k} \norm{  \mathbf{y}_k - \mathbb{X}_k \boldsymbol{\beta}_k }_2^2 + \alpha \norm{\mathbb{B}}_F^2 + \lambda  \sum_{k=1}^K \norm{\boldsymbol{\beta}_k - \bar{{\boldsymbol{\beta}}} }_2^2 \\
	\mbox{s.t. }~~& z_{j} \in \{0,1 \} ~~ \forall ~j \in [p], \sum_{j=1}^p z_j\leq s,~~\notag \\
	&\beta_{k,j}(1 - z_j) = 0~ \forall ~j \in [p], k \in [K]. \notag 
\end{align}
In problem~\eqref{suppCommon}, binary variables $\{z_j\}_1^p$ encode the common support across $K$ tasks. The constraint $\sum_j z_j\leq s$ ensures that the common support is sparse (with at most $s$ nonzero entries). 
The optimization variables ${z}_j$'s do not appear in the objective, and enforce sparsity through the constraints: for covariate $j$, if $z_j = 0$, then $\beta_{k,j} = 0$ for all $k \in [K]$. At optimality $\bar{\boldsymbol{\beta}}=\frac{1}{K}\sum_{k=1}^K \boldsymbol{\beta}_k$ is the average of the regression coefficients. As a result, the penalty $\sum_{k=1}^K \norm{\boldsymbol{\beta}_k - \bar{{\boldsymbol{\beta}}} }_2^2$ encourages regression coefficients to share strength across tasks. This penalty has precedents in \cite{bbar_cite} \cite{rashid2020modeling} and \cite{trippa}. Estimator \eqref{suppCommon} is a generalization of the group $\ell_0$ penalized methods (e.g.,  see~\cite{groupedL0} and references therein). Removing the constraint that task models share the same support, we get the support heterogeneous (HET) estimator:
\begin{align} \label{suppHet}
\beforetext{(HET)}	\underset{\B{Z},\mathbb{B},\bar{\boldsymbol{\beta}},\bar{\boldsymbol{z}} }{\mbox{min }}~& \sum_{k=1}^K \frac{1}{n_k} \norm{\mathbf{y}_k - \mathbb{X}_k \boldsymbol{\beta}_k }_2^2 + \alpha \norm{\mathbb{B}}_F^2  + \lambda  \sum_{k=1}^K \norm{\boldsymbol{\beta}_k - \bar{{\boldsymbol{\beta}}} }_2^2  + \delta \sum_{k=1}^K\norm{\boldsymbol{z}_k - \bar{\boldsymbol{z}}}_2^2\\
	~~~~ \mbox{s.t.}~& z_{k,j} \in \{0,1 \},~~\sum_{j=1}^p z_{k,j}\leq s~\forall k\in[K],\nonumber\\
	&      \beta_{k,j}(1 - z_{k,j}) = 0~ \forall ~j \in [p], k \in [K]. \notag
\end{align}
Similar to $\bar{\boldsymbol{\beta}}$ in the common support case, at optimality, $\bar{\boldsymbol{z}}$ is the average of the $\boldsymbol{z}_k$'s. As a result, 
the penalty $\sum_{k=1}^K\norm{\boldsymbol{z}_k - \bar{\boldsymbol{z}}}_2^2$ encourages the supports of different tasks to be similar, without forcing them to be identical. Note that the common support problem can be obtained as special case of problem~\eqref{suppHet} when $\delta \to \infty$ (in terms of algorithms however, we use a specialized algorithm for the limiting case, i.e., problem~\eqref{suppCommon}).
In practice (cf Section~\ref{data_apps}), we include intercepts, $\beta_{k,0}$, for problems~\eqref{suppCommon} and~\eqref{suppHet}, that are neither penalized nor subject to sparsity constraints, but we omit this for notational conciseness.

The penalty term, $\sum_{k=1}^K \norm{\boldsymbol{z}_k - \bar{\boldsymbol{z}}}_2^2$ denoted as \textit{Zbar},
encourages \textit{support heterogeneity regularization}. Formulations~\eqref{suppCommon} and~\eqref{suppHet} also include a Ridge penalty. In practice, we found that a small Ridge penalty, (e.g., $\alpha < 10^{-6}$), was useful for some of the estimators discussed above, but the solutions were not sensitive to the exact hyperparameter value, provided the penalty was sufficiently small (see Section~\ref{cv} for hyperparameter tuning details). We either set $\lambda >0$ or $\alpha>0$, but never both.
Problem~\eqref{suppHet} is our general estimator and includes as special cases important estimators that use subsets of the three penalties (i.e., we set at least one of $\{\lambda, \alpha, \delta\}$ to zero). 
Table \ref{table:losses_table} presents a summary of the methods considered based upon two criteria: 1) whether the supports are allowed to vary across the tasks' regression coefficients, and 2) which model parameter penalties are included. By combining different penalties, we define a larger set of methods that allow us to understand which properties of our estimators are associated with improvements in prediction and support recovery accuracy.  
If the term ``CS'' (common support) is not used in a method name, this indicates that the supports of the regression coefficients are free to vary across tasks. 

\begin{table*}[!h]
	\scriptsize
	\centering
	\begin{tabular}{c | ccccc}
		\hline
		\hline
		\multicolumn{1}{c}{\bfseries Method} \vline & \multicolumn{1}{c}{\bfseries Support}  &  \multicolumn{1}{c}{\bfseries MTL Squared Error Loss } 
		&  \multicolumn{1}{c}{\bfseries L2 } 
		&  \multicolumn{1}{c}{\bfseries Bbar } 
		&  \multicolumn{1}{c}{\bfseries Zbar }\\
		\midrule
		\midrule
		$L_0 L_2$  & HET & $\sum_{k=1}^K \frac{1}{n_k} \norm{  \mathbf{y}_k - \mathbb{X}_k \boldsymbol{\beta}_k }_2^2$ & $+ \alpha \norm{\mathbb{B}}_F^2 $ & & \\
		Bbar  & HET & $\sum_{k=1}^K \frac{1}{n_k} \norm{  \mathbf{y}_k - \mathbb{X}_k \boldsymbol{\beta}_k }_2^2$ & & $+ \lambda \sum_{k=1}^K \norm{\boldsymbol{\beta}_k - \bar{{\boldsymbol{\beta}}} }_2^2 $ &
		\\ 
		Zbar+L2  &  HET & $\sum_{k=1}^K \frac{1}{n_k} \norm{  \mathbf{y}_k - \mathbb{X}_k \boldsymbol{\beta}_k }_2^2$ & $+ \alpha \norm{\mathbb{B}}_F^2 $ & &  $+ \delta \norm{\boldsymbol{z}_k - \bar{\boldsymbol{z}}}_2^2 $ \\
		Zbar+Bbar  &  HET & $\sum_{k=1}^K \frac{1}{n_k} \norm{  \mathbf{y}_k - \mathbb{X}_k \boldsymbol{\beta}_k }_2^2$ & &  $+ \lambda \sum_{k=1}^K \norm{\boldsymbol{\beta}_k - \bar{{\boldsymbol{\beta}}} }_2^2$ & $+ \delta \norm{\boldsymbol{z}_k - \bar{\boldsymbol{z}}}_2^2  $\\
		CS+L2  & CS & $\sum_{k=1}^K \frac{1}{n_k} \norm{  \mathbf{y}_k - \mathbb{X}_k \boldsymbol{\beta}_k }_2^2$ & $+  \alpha \norm{\mathbb{B}}_F^2$ & & \\
		CS+Bbar  &  CS & $\sum_{k=1}^K \frac{1}{n_k} \norm{  \mathbf{y}_k - \mathbb{X}_k \boldsymbol{\beta}_k }_2^2$ & & $+ \lambda \sum_{k=1}^K \norm{\boldsymbol{\beta}_k - \bar{{\boldsymbol{\beta}}} }_2^2 $ & \\
		\hline
	\end{tabular}
	\caption{\footnotesize Method names for the losses with different combinations of support constraints and penalties. Heterogeneous (HET) and common (CS) support estimators are special cases of problems~\eqref{suppHet}, and~\eqref{suppCommon}, respectively.}
	\label{table:losses_table}
\end{table*}

Formulation~\eqref{suppHet} includes a number of special cases that serve as benchmarks in our numerical experiments. A collection of independent $\ell_0$ constrained regressions with a Ridge penalty arises when $\lambda = \delta = 0$. This provides a gauge for the performance of sparse regressions without any information shared across tasks. We refer to these task-specific sparse regressions as ``$L_0 L_2$''. The ``Bbar'' penalty provides a way of sharing information on the regression coefficient supports through the $\boldsymbol{\beta}_k$ values. This is because for each covariate $j$, the Bbar \textit{penalty} shrinks all $K$ task-specific coefficients in the vector $\boldsymbol{\beta}^{(j)}$ together, even for $\beta_{k,j} = 0$. However, unlike the Group Lasso, which also shares information through the $\boldsymbol{\beta}_k$'s, the Bbar penalty on its own may not result in a common support. 

Problems~\eqref{suppCommon} and~\eqref{suppHet} are MIPs that can be solved to optimality (or with certificates of optimality) for small to moderate instances with commercial solvers such as Gurobi (see Section~\ref{optSec} for more details on MIP formulations and solvers). To address large-scale instances occurring in biomedical applications, it is useful to have algorithms that can deliver good solutions quickly. This is useful for wider adoption since models must be solved many times during, for example, hyperparameter tuning. Thus, in addition to developing custom MIP approaches, we also propose 
a scalable framework
for obtaining high-quality solutions to our estimators based on first-order optimization \citep{beck2009fast}, and local combinatorial search methods extending the work of~\citet{l0learn}.

\begin{remark}\label{remarkconvexrelax}
	We can obtain a convex relaxation of the Zbar penalty by relaxing the binary variables to their interval constraints $z_{k,j}\in[0,1]$ for all $k,j$ --- this does not 
	result in the Bbar penalty. We explore this formally in Supplement~\ref{app:convexrelax}. This  shows that the Bbar and Zbar penalties are fundamentally different. To our knowledge, the convex relaxation of Zbar has not been studied in prior work, further motivating our study of the Zbar penalty.
\end{remark}

\section{Statistical Theory}\label{sec:theory}
In our numerical experiments (see Section \ref{sec:simulations}) we observed that the Zbar methods tend to outperform other related methods in prediction performance and variable selection accuracy. Since estimators of this nature, to our knowledge, have not been proposed before, we explored the statistical properties associated with our Zbar (and Bbar) methods. 

\noindent \textbf{Model Setup:} We assume for every task $k\in[K]$, each row of $\bX_k$ is drawn independently as $\mathbf{x}_{k,i} \sim \cN_p(\B{0},\B{\Sigma}^{(k)})$ where $\B{\Sigma}^{(k)}\in\R^{p\times p}$ is a positive definite matrix. Although our framework is based on linear models, we do not assume the underlying model is sparse or linear allowing for model misspecification. In particular, we assume the observations are $\by_k = \by_k^*(\bX_k) + \B{\epsilon}_k$ where $\by_k^*\in\R^n$ are noiseless observations and the noise vector $\B{\epsilon}_k\in\R^{n_k}$ follows $\B{\epsilon}_k\sim \cN_n(\B{0},\sigma_k^2\B{I})$ and is independent of $\bX_k$. We define the best $s$-sparse linear approximation to $\by_k^*$ as 
\begin{equation}
	\B{\beta}^*_k\in \argmin_{\B{\beta}\in\R^p} \|\by_k^*-\bX_k\B{\beta}\|_2~~\text{s.t.}~~ \|\B{\beta}\|_0\leq s.
\end{equation}
We assume the oracle regression coefficients $\B{\beta}_k^*\in\R^p$ are $s$-sparse, that is $\|\B{\beta}^*_k\|_0=s$. We denote the support of $\B{\beta}_k$ with the binary vector $\B{z}^*_k$. We also define the error resulting from estimating $\by_k^*$ by the oracle as
$\B{r}_k=\by_k^*-\bX_k\B{\beta}^*_k$.
We introduce some additional notation.
\begin{definition}\label{sallscommdef}
	Let $\B{z}_1,\cdots,\B{z}_K\in\{0,1\}^p$. We define 
	\begin{equation}
		\begin{aligned}
			S_{\text{all}}(\B{Z}) &= \{j\in[p]: \sum_{k=1}^K z_{k,j}\geq 1\},  \\
			S_{\text{common}}(\B{Z})& = \{j\in[p]: \sum_{k=1}^K z_{k,j} = K\}.
		\end{aligned}
	\end{equation}
\end{definition}
The set $S_{\text{common}}$ is the set of coordinates which are common amongst the supports of all tasks, while the set $S_{\text{all}}$ is the set of coordinates that appear in the support of at least one task. The set-difference operation $S_{\text{all}}\setminus S_{\text{common}}$ denotes the coordinates which appear in the support of some tasks, but are not common. Consequently, a solution for which the size of the set $S_{\text{all}}\setminus S_{\text{common}}$ is small, includes more common covariates with nonzero associated regression coefficients across tasks and can thus be more interpretable. We use the notation $\bar{\B{z}}=\frac{1}{K}\sum_{k=1}^K \B{z}_k$ and $\bar{\B{\beta}}=\frac{1}{K}\sum_{k=1}^K \B{\beta}_k$.

\begin{remark}\label{multislrremark}
	Although we are assuming each row of $\bX_k$ follows a multivariate normal distribution, we do not make any assumption on the joint distribution of $\{\bX_k\}_k$ (across the tasks).
	This differs from linear models with 
	multivariate responses, which often specify a joint distribution across the different response variables~\citep{multislr1,multislr2,multislr3}. Additionally, we do not assume any dependence structure on $\B{\epsilon}_k$ across the tasks.
\end{remark}

\subsection{A General Prediction Bound}\label{sec:theory-1}
We provide a general prediction bound for the support heterogeneous case---the estimator from problem~\eqref{suppHet}. 
\begin{theorem}\label{thm1}
	Suppose $\{\hat{\B{\beta}}_k,\hat{\B{z}}_k\}_{k=1}^K$ is an optimal solution to Problem~\eqref{suppHet}. Then, under our assumed model setup with high probability\footnote{\label{foot1}An explicit expression for the probability can be found in~\eqref{thm1-prob}.} we have
	\begin{multline}\label{thm1-inequlity}
		\sum_{k=1}^K \frac{1}{n_k} \|\bX_k(\B{\beta}_k^*-\hat{\B{\beta}}_k)\|_2^2 + {\alpha\|\hat{\mathbb{B}}\|_F^2} +
		\sum_{k=1}^K \left\{ \delta  \|\hat{\B{z}}_k-\bar{\hat{\B{z}}}\|_2^2+ \lambda  \|\hat{\B{\beta}}_k-\bar{\hat{\B{\beta}}}\|_2^2\right\} \lesssim\\ \sum_{k=1}^K\left\{ \frac{\sigma_k^2 s \log(p/s)}{n_k}+\frac{1}{n_k}\|\B{r}_k\|_2^2\right\}+
		\sum_{k=1}^K\left\{\delta  \|{\B{z}}^*_k-\bar{\B{z}^*}\|_2^2+\lambda  \|{\B{\beta}}^*_k-\bar{\B{\beta}^*}\|_2^2\right\}+\alpha\|\bB^*\|_F^2.
	\end{multline}
\end{theorem}
Theorem~\ref{thm1} presents a bound on the prediction error of the estimator from problem~\eqref{suppHet} over all tasks, captured by  $ \sum_{k=1}^K  \|\bX_k(\B{\beta}_k^*-\hat{\B{\beta}}_k)\|_2^2/n_k$ in addition to the penalty terms 
$\sum_{k=1}^K \|\hat{\B{z}}_k-\bar{\hat{\B{z}}}\|_2^2$ which captures the support heterogeneity of the solution, and $\sum_{k=1}^K \|\hat{\B{\beta}}_k-\bar{\hat{\B{\beta}}}\|_2^2$ which captures the coefficient value heterogeneity of the solution. By bounding the support heterogeneity of the solution with the support heterogeneity of the oracle, this theorem 
quantifies how the Zbar penalty borrows information across the supports of the tasks. For example, it shows that if the oracle has similar or common support, the penalty term $\sum_{k=1}^K \|\hat{\B{z}}_k-\bar{\hat{\B{z}}}\|_2^2$ can be small (zero), which forces the supports of the different tasks to be similar (the same). Similarly, Theorem~\ref{thm1} shows that the Bbar penalty can shrink the regression coefficients of different tasks together. Finally, we see that if $\alpha>0$, Theorem~\ref{thm1} provides an upper bound on the squared norm $\|\hat{\mathbb{B}}\|_F^2$, which quantifies the effect of the ridge penalty term in Problem~\eqref{suppHet}. In practice, we use a small value of $\alpha$. To provide further insight into Theorem~\ref{thm1}, we next present results for a particular choice of $\alpha,\delta,\lambda$ in problem~\eqref{suppHet}.
\begin{corollary}\label{cor-general}
	Under the assumptions of Theorem~\ref{thm1}, suppose $\delta,\lambda,\alpha$ are taken such that 
	\begin{equation}\label{cor1-paramchoice}
		\begin{aligned}
			0\leq\delta & \lesssim  \frac{\sigma_k^2 s\log(p/s)}{n_k\|\B{z}_k^*-\bar{\B z^*}\|_2^2}~~\forall k\in[K] \\
			0\leq  \lambda & \lesssim  \frac{\sigma_k^2 s\log(p/s)}{n_k\|\B{\beta}_k^*-\bar{\B \beta^*}\|_2^2}~~\forall k\in[K]\\
			0\leq   \alpha & \lesssim \frac{1}{\|\bB^*\|_F^2}\sum_{k=1}^K\frac{\sigma_k^2s\log(p/s)}{n_k}.
		\end{aligned}
	\end{equation}
	Then, w.h.p.\footref{foot1} we have the following error bound
	\begin{equation*}
		\sum_{k=1}^K \frac{1}{n_k} \|\bX_k(\B{\beta}_k^*-\hat{\B{\beta}}_k)\|_2^2\lesssim  \sum_{k=1}^K\left\{ \frac{\sigma_k^2 s \log(p/s)}{n_k}+\frac{1}{n_k}\|\B{r}_k\|_2^2\right\}.
\end{equation*}
\end{corollary}
\begin{remark}\label{new-remark-1}
	As seen in Corollary~\ref{cor-general}, when the regularization parameters $\lambda,\delta,\alpha$ are chosen as in~\eqref{cor1-paramchoice} and the oracle error $\sum_{k=1}^K \|\B{r}_k\|_2^2/n_k$ is comparatively small, the estimator from problem~\eqref{suppHet} is able to achieve a prediction error rate $\sum_{k=1}^K {(\sigma_k^2 s/n_k)} \log(p/s)$ even if the underlying model does not have common support. Note that a sparse linear regression estimator on study $k$ results in prediction error of order ${(\sigma_k^2 s/n_k)} \log(p/s)$ (for example, see~\citet[Ch. 7]{wainwright2019high})---hence, the overall prediction error rate for $K$ separate sparse linear regressions is $\sum_{k=1}^K {(\sigma_k^2 s/n_k)} \log(p/s)$, which is the same as that from problem~\eqref{suppHet}. However, Theorem~\ref{thm1} additionally provides an upper bound on the support and coefficient heterogeneity of the solutions to problem~\eqref{suppHet}, given by $ \sum_{k=1}^K \left\{ \delta  \|\hat{\B{z}}_k-\bar{\hat{\B{z}}}\|_2^2+ \lambda  \|\hat{\B{\beta}}_k-\bar{\hat{\B{\beta}}}\|_2^2\right\}$ on the lhs of~\eqref{thm1-inequlity}. Such additional guarantees might not be available from learning $K$ separate sparse regression models.

\end{remark}

Next,  we consider two special cases that provide further insight into Theorem~\ref{thm1}.
\begin{corollary}\label{cor1}
	Suppose the underlying oracle model follows the common support model, that is $\B{z}^*_1=\cdots=\B{z}^*_K$. Let $\{\hat{\B{\beta}}_k,\hat{\B{z}}_k\}_{k=1}^K$ be the optimal solution to problem~\eqref{suppHet} with $\lambda$ and $\alpha$ chosen as in~\eqref{cor1-paramchoice}. Then w.h.p.\footref{foot1} 
	$$ \sum_{k=1}^K \frac{1}{n_k} \|\bX_k(\B{\beta}_k^*-\hat{\B{\beta}}_k)\|_2^2 \lesssim \sum_{k=1}^K\left\{ \frac{\sigma_k^2 s \log(p/s)}{n_k}+\frac{1}{n_k}\|\B{r}_k\|_2^2\right\}.$$
	Moreover, if $\delta\gtrsim K\sum_{k=1}^K {([\sigma_k^2 s\log(p/s)+\|\B{r}_k\|_2^2]/n_k)} $ is sufficiently large, we have $\hat{\B{z}}_1=\cdots=\hat{\B{z}}_K$.
\end{corollary}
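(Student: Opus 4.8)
The plan is to read off both conclusions directly from Theorem~\ref{thm1}, exploiting two features: the common-support hypothesis, which annihilates the oracle support-heterogeneity term on the right-hand side, and the integrality of the $\hat{\B{z}}_k$, which forces the realized support-heterogeneity penalty to be an ``all-or-nothing'' quantity.

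For the prediction bound, I would first observe that $\B{z}_1^*=\cdots=\B{z}_K^*$ implies $\bar{\B{z}^*}=\B{z}_k^*$ and hence $\norm{\B{z}_k^*-\bar{\B{z}^*}}_2^2=0$ for every $k$. Substituting this into Theorem~\ref{thm1} and discarding the nonnegative term $\delta\sum_{k=1}^K\norm{\hat{\B{z}}_k-\bar{\hat{\B{z}}}}_2^2$ from the left-hand side immediately gives the claimed bound on $\sum_{k=1}^K \frac1{n_k}\norm{\bX_k(\B{\beta}_k^*-\hat{\B{\beta}}_k)}_2^2$. This step is pure bookkeeping.

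The exact-recovery claim is the substantive part. The key point is that $\sum_{k=1}^K\norm{\hat{\B{z}}_k-\bar{\hat{\B{z}}}}_2^2$ cannot be positive yet arbitrarily small: it exhibits a combinatorial gap. Writing $m_j=\sum_{k=1}^K \hat{z}_{k,j}\in\{0,1,\dots,K\}$ for the number of tasks selecting coordinate $j$, a direct computation gives the coordinate-wise contribution $\sum_{k=1}^K(\hat{z}_{k,j}-m_j/K)^2=m_j(K-m_j)/K$, which vanishes exactly when $m_j\in\{0,K\}$ and is otherwise minimized over $1\le m_j\le K-1$ at $m_j\in\{1,K-1\}$ with value $(K-1)/K$. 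Consequently either every $m_j\in\{0,K\}$ --- equivalently $\hat{\B{z}}_1=\cdots=\hat{\B{z}}_K$ --- or $\sum_{k=1}^K\norm{\hat{\B{z}}_k-\bar{\hat{\B{z}}}}_2^2\ge (K-1)/K$.

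I would then combine this gap with Theorem~\ref{thm1}. Using the common-support hypothesis again to drop the oracle term and discarding the nonnegative prediction term on the left, the theorem yields $\delta\sum_{k=1}^K\norm{\hat{\B{z}}_k-\bar{\hat{\B{z}}}}_2^2\lesssim \sum_{k=1}^K([\sigma_k^2 s\log(p/s)+\norm{\B{r}_k}_2^2]/n_k)$. Were the supports not all equal, the left-hand side would be at least $\delta(K-1)/K$, forcing $\delta\lesssim \tfrac{K}{K-1}\sum_{k=1}^K([\sigma_k^2 s\log(p/s)+\norm{\B{r}_k}_2^2]/n_k)$; choosing $\delta$ above the stated threshold contradicts this, so the supports must coincide. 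The main obstacle is recognizing and quantifying this combinatorial gap, since it is precisely what lets a finite penalty weight $\delta$ --- rather than the nominal limit $\delta\to\infty$ --- enforce exact equality of the discrete supports; tracking the universal constant through the inequality then recovers the stated sufficient condition, the factor $K$ there being a conservative replacement for $K/(K-1)\le 2$.
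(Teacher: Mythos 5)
Your proposal is correct and follows essentially the same route as the paper: both conclusions are read off from Theorem~\ref{thm1} (the common-support hypothesis kills the oracle penalty term), and exact support recovery follows from a combinatorial lower bound on any nonzero value of $\sum_{k=1}^K\|\hat{\B{z}}_k-\bar{\hat{\B{z}}}\|_2^2$ combined with a large $\delta$. The only cosmetic difference is that your exact per-coordinate computation $m_j(K-m_j)/K\geq (K-1)/K$ is a slightly sharper version of the paper's Lemma~\ref{tech-lem2} (which bounds each heterogeneous coordinate's contribution below by $1/K$ and then invokes integrality of $|\hat{S}_{\text{all}}\setminus\hat{S}_{\text{common}}|$), so your contradiction argument and the paper's counting argument are interchangeable and yield the same conclusion under the stated threshold for $\delta$.
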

As seen in Corollary~\ref{cor1}, when the underlying oracle has a common support, problem~\eqref{suppHet} (with sufficiently small $\lambda,\alpha$) is able to achieve a prediction error rate $\sum_{k=1}^K {(\sigma_k^2 s/n_k)} \log(p/s)$ regardless of the value of $\delta$. As we discussed in Remark~\ref{new-remark-1}, this
is the same rate we would achieve if we
were to fit $K$ separate sparse linear regressions. However, Corollary~\ref{cor1} additionally provides an upper bound on the support heterogeneity of the solutions to problem~\eqref{suppHet}. This even guarantees that the estimator yields solutions with common support if $\delta$ is sufficiently large.

In many MTL settings, analysts might acknowledge that the underlying model of the tasks have supports that are close but not exactly identical. If one were to use a common support estimator, theoretical guarantees on the performance of the estimator may not be satisfactory (i.e., the oracle error term in Corollary~\ref{cor1} becomes too large) if the underlying model does not truly have a common support. However, our framework, and in particular the Zbar penalty, allows for different sparsity patterns while still borrowing strength across tasks during variable selection. In what follows, we show that this behavior of our estimator leads to solutions that enjoy prediction performance guarantees, and controls the support heterogeneity of the solution.

To better understand and quantify the performance of our method in the support heterogeneity case, we consider a model where the supports of tasks are mostly common but not necessarily identical. In particular, we assume that if a feature is not common to all tasks, it is unique to a single task. We call such models \textit{regular}.

\begin{definition}[Regular Support]\label{def-reg}
	We call $\B{z}_1,\cdots,\B{z}_K\in\{0,1\}^p$ regular if for each $j
	\in[p]$, we have one of the following: 1) $\sum_{k=1}^K z_{k,j}=0$, 2) $\sum_{k=1}^K z_{k,j} = K$, or 3) $\sum_{k=1}^K z_{k,j}=1$.
\end{definition}

\begin{corollary}\label{cor2}
	Suppose $\B{z}_1^*,\cdots,\B{z}_K^*$ is regular and let $\{\hat{\B{\beta}}_k,\hat{\B{z}}_k\}_{k=1}^K$ be the optimal solution to problem~\eqref{suppHet}. Then, for a suitably chosen value\footnote{See~\eqref{delta-cor2} for an expression for $\delta$.} of $\delta$, with $\lambda$ and $\alpha$ chosen as in~\eqref{cor1-paramchoice}, we have w.h.p\footref{foot1}
	\begin{equation*}
		\sum_{k=1}^K \frac{1}{n_k} \|\bX_k(\B{\beta}_k^*-\hat{\B{\beta}}_k)\|_2^2 \lesssim  \sum_{k=1}^K \left\{\frac{\sigma_k^2 s \log(p/s)}{n_k}+\frac{1}{n_k}\|\B{r}_k\|_2^2\right\}
	\end{equation*}
	and
	\begin{equation*}
		|\hat{S}_{\text{all}}\setminus \hat{S}_{\text{common}}|\lesssim (K-1)|S^*_{\text{all}}\setminus S^*_{\text{common}}|
	\end{equation*}
	where $\hat{S}_{\text{all}}=S_{\text{all}}(\hat{\B{Z}}),  ~\hat{S}_{\text{common}}= S_{\text{common}}(\hat{\B{Z}})$, \\ 
	${S}^*_{\text{all}}=S_{\text{all}}({\B{Z}}^*),~  {S}^*_{\text{common}}= S_{\text{common}}({\B{Z}}^*)$ with $S_{\text{all}},~S_{\text{common}}$ defined in Definition~\ref{sallscommdef}.
\end{corollary}
Corollary~\ref{cor2} states that, under the regular (but not necessarily common) support assumption, choosing a suitable value of $\delta$ leads to an optimal prediction error bound, similar to the results in Corollary~\ref{cor1}. Moreover, Corollary~\ref{cor2} provides the guarantee that the degree of support heterogeneity of the solution is bounded above by the support heterogeneity of the true model (up to a multiplicative factor of $K$). 
As discussed above, such guarantees are not available for separate sparse linear regressions without additional assumptions.

\subsection{A Deeper Investigation of Support Recovery} \label{sec:theory-2}
In this section, we further investigate guarantees on support recovery of our methods under the regularity assumption (cf Definition~\ref{def-reg}). For simplicity, we consider the well-specified case where the underlying model is linear and sparse (i.e., $\B{r}_1=\cdots=\B{r}_K=\B{0}$). We study the Zbar penalty (i.e. $\lambda=\alpha=0$) as similar regularizers do not appear to have been explored earlier, whereas approaches similar to the Bbar penalty have been explored empirically in prior work~\citep{trippa, bbar_cite}.

\begin{asu*}
	We assume the following:
	\begin{compactenum}
		\item \label{as-rip}\label{as-condition}~ For $k\in[K]$, we have $$
		\phi_k := \min_{\substack{S\subseteq [p] \\ |S|\leq 2s  }} \lambda_{\min}\left({\B{\Sigma}}^{(k)}_{S,S}\right)>0$$ and  $\|\B{\Sigma}^{(k)}\|_{\text{op}}\leq 1$.
		\item \label{cJdef}~ We have $|\mathcal{J}|\geq 1$ where $$\mathcal{J}=\left\{k\in[K]: n_k\geq c s\log p/\phi_k^2\right\}$$ for some universal constant $c>0$ that is sufficiently large.
		\item \label{as-beta-min}~ For $k\in\mathcal{J}$ and $j\in[p]$, every nonzero coefficient $\beta^*_{k,j}\neq 0$, is bounded away from zero $|\beta^*_{k,j}|\geq\beta_{\min,k}$ where
		$$ \beta_{\min,k} = \sqrt{\frac{\eta_k \log p}{\phi_k n_k}}$$
		for some sufficiently large $\eta_k>0$.
		\item \label{as-cn}~ There exists an absolute constant $c_n\leq 1$ such that ${\bar{n}}/{\barbelow{n}}= {1}/{c_n}$
		where $\bar{n}=\max_{k\in[K]} n_k$ and $\barbelow{n}=\min_{k\in[K]} n_k$.
	\end{compactenum}
\end{asu*}

Assumption~\ref{as-rip} states that $\B{\Sigma}^{(k)}$ is well-conditioned and does not have very small or large eigenvalues. Assumption~\ref{cJdef} ensures that there exists tasks that have sufficiently large numbers of samples. Assumption~\ref{as-beta-min} is a minimum signal requirement for model identifiability. These three are common assumptions in the sparse linear regression literature. Finally, Assumption~\ref{as-cn} ensures that the number of samples does not differ too much across tasks.

\begin{theorem}\label{supthm}
	Suppose Assumptions~\ref{as-rip} to~\ref{as-cn} hold and the underlying model is regular as in Definition~\ref{def-reg}. Let $\{\hat{\B{\beta}}_k,\hat{\B{z}}_k\}_{k=1}^K$ be the optimal solution to problem~\eqref{suppHet}. Then, w.h.p.\footnote{An explicit expression for the probability can be found in~\eqref{supthmprob}\label{foot2}.} we have
	\begin{equation}\label{supthm-result}
		|S^*_{\text{all}}\setminus S^*_{\text{common}}| \geq  \frac{|\hat{S}_{\text{all}}\setminus \hat{S}_{\text{common}}|}{K}+\frac{\log p}{\delta\barbelow{n}}\sum_{k=1}^K|\tilde{S}_k|\left[0.4c_n\eta_k \1(k\in\cJ)-c_t\sigma_k^2\right]
	\end{equation}
	where $\hat{S}_{\text{all}}=S_{\text{all}}(\hat{\B{Z}}),~  \hat{S}_{\text{common}}= S_{\text{common}}(\hat{\B{Z}})$, \\ 
	${S}^*_{\text{all}}=S_{\text{all}}({\B{Z}}^*),~  {S}^*_{\text{common}}= S_{\text{common}}({\B{Z}}^*)$ with $S_{\text{all}},~S_{\text{common}}$ defined in Definition~\ref{sallscommdef}, $\tilde{S}_k=\{j:z_{k,j}^*=1,\hat{z}_{k,j}=0\}$ is the set of mistakes in the support of task $k$, and $c_t$ is an absolute constant.
\end{theorem}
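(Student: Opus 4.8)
The plan is to exploit the global optimality of $\{\hat{\B{\beta}}_k,\hat{\B{z}}_k\}_{k=1}^K$ for problem~\eqref{suppHet} by comparing its objective value against that of the oracle $\{\B{\beta}_k^*,\B{z}_k^*\}_{k=1}^K$, which is feasible since $\|\B{\beta}_k^*\|_0=s$. Writing the objective as $L+P+\delta H$, where $L$ is the MTL squared-error loss, $P$ collects the ridge/Bbar penalties, and $H(\B{Z})=\sum_k\|\B{z}_k-\bar{\B{z}}\|_2^2$ is the SHR penalty, optimality gives $\delta H(\hat{\B{Z}})-\delta H(\B{Z}^*)\le (L(\B{\beta}^*)-L(\hat{\B{\beta}}))+(P(\B{\beta}^*)-P(\hat{\B{\beta}}))$. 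Dividing by $\delta$ and rearranging will produce~\eqref{supthm-result}, provided I can (i) translate the two $H$ terms into the support-heterogeneity cardinalities and (ii) lower-bound the excess loss $L(\hat{\B{\beta}})-L(\B{\beta}^*)$ by the mistake-dependent quantity.

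For (i), a direct computation gives the pointwise identity $\sum_k\|\B{z}_k-\bar{\B{z}}\|_2^2=\sum_{j=1}^p m_j(K-m_j)/K$ with $m_j=\sum_k z_{k,j}$. For the regular oracle support every $m_j\in\{0,1,K\}$, so $H(\B{Z}^*)=\tfrac{K-1}{K}|S^*_{\text{all}}\setminus S^*_{\text{common}}|\le |S^*_{\text{all}}\setminus S^*_{\text{common}}|$; for the (possibly non-regular) estimate each coordinate of $\hat{S}_{\text{all}}\setminus\hat{S}_{\text{common}}$ has $1\le m_j\le K-1$ and hence contributes at least $1/K$, so $H(\hat{\B{Z}})\ge\tfrac1K|\hat{S}_{\text{all}}\setminus\hat{S}_{\text{common}}|$. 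Substituting these two bounds produces the $|S^*_{\text{all}}\setminus S^*_{\text{common}}|$ and $\tfrac1K|\hat{S}_{\text{all}}\setminus\hat{S}_{\text{common}}|$ terms with clean constants; the deliberate relaxation to $1/K$ is what keeps the loss term free of a $K/(K-1)$ factor.

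For (ii), write $\B{u}_k=\hat{\B{\beta}}_k-\B{\beta}_k^*$ and expand using $\by_k=\bX_k\B{\beta}_k^*+\B{\epsilon}_k$ to obtain $L(\hat{\B{\beta}})-L(\B{\beta}^*)=\sum_k[\tfrac1{n_k}\|\bX_k\B{u}_k\|_2^2-\tfrac2{n_k}\B{\epsilon}_k^\top\bX_k\B{u}_k]$. The leading term is handled via an empirical restricted-eigenvalue bound: for $k\in\cJ$ the sample-size condition guarantees $\tfrac1{n_k}\|\bX_k\B{u}_k\|_2^2\gtrsim\phi_k\|\B{u}_k\|_2^2$ w.h.p., and since $\hat{z}_{k,j}=0$ forces $u_{k,j}=-\beta^*_{k,j}$ on $\tilde{S}_k$, Assumption~\ref{as-beta-min} gives $\|\B{u}_k\|_2^2\ge|\tilde{S}_k|\beta_{\min,k}^2=|\tilde{S}_k|\eta_k\log p/(\phi_k n_k)$, whence $\tfrac1{n_k}\|\bX_k\B{u}_k\|_2^2\gtrsim|\tilde{S}_k|\eta_k\log p/n_k$ for $k\in\cJ$ (and nothing is retained for $k\notin\cJ$, which is why the indicator $\1(k\in\cJ)$ appears). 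Converting $1/n_k$ to $1/\barbelow{n}$ via Assumption~\ref{as-cn} introduces the factor $c_n$, yielding the $0.4c_n\eta_k\1(k\in\cJ)$ term. The cross term is controlled by the stationarity of $\hat{\B{\beta}}_k$ on its own support: the estimated residual is (up to the ridge/Bbar contributions) orthogonal to the columns indexed by $\hat{S}_k$, so only the missed columns $\tilde{S}_k$ interact with the noise, and a restricted concentration bound for $\bX_{k,\tilde{S}_k}^\top\B{\epsilon}_k$ costs order $\sigma_k^2|\tilde{S}_k|\log p/n_k$; absorbing half of the prediction error via Cauchy--Schwarz produces the $-c_t\sigma_k^2$ term.

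The main obstacle is precisely this localization of the noise cost to $|\tilde{S}_k|$ rather than to the full support size $s$: a naive restricted-eigenvalue and noise argument over the $2s$-dimensional set $\hat{S}_k\cup S_k^*$ would give a $\sigma_k^2 s$ cost, which is too weak since $|\tilde{S}_k|$ may be far smaller than $s$. The delicate step is therefore to use the first-order optimality of $\hat{\B{\beta}}_k$ (including the ridge and Bbar terms) to show the residual is nearly orthogonal to $\bX_{k,\hat{S}_k}$, so that the effective noise direction lives in the $|\tilde{S}_k|$-dimensional missed set. This, together with bounding the penalty difference $P(\hat{\B{\beta}})-P(\B{\beta}^*)$ and assembling the high-probability events (empirical restricted-eigenvalue validity for each $k\in\cJ$ and the restricted noise bounds) through a union bound, is the technical heart of the argument. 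Dividing the assembled inequality by $\delta$ and rearranging then gives~\eqref{supthm-result}.
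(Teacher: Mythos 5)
Your proposal is correct and follows essentially the same route as the paper's proof: optimality against the feasible oracle, your two combinatorial bounds on the SHR penalty (exactly the content of Lemmas~\ref{tech-lem2} and~\ref{tech-lem3}), and a per-task excess-risk lower bound of order $(0.4c_n\eta_k\1(k\in\cJ)-c_t\sigma_k^2)|\tilde{S}_k|\log p/\barbelow{n}$ via restricted eigenvalues, the $\beta_{\min}$ condition, and support-localized noise bounds union-bounded over candidate supports (note the paper's optimality inequality~\eqref{supthmfromopt} uses only the least-squares loss plus the Zbar term, i.e.\ it effectively sets $\lambda=\alpha=0$, so the penalty difference $P$ you defer never arises there). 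Your stationarity/orthogonality device is a reparametrization of the paper's profiled objective $\CR_{k,S}$ in~\eqref{optimalobj}, whose expansion~\eqref{objexpanded} encodes the same orthogonality, and the one noise piece your cross-term sketch leaves implicit --- the quadratic difference $\B{\epsilon}_k^T(\B{P}_{\bX_{k,\hat{S}_k}}-\B{P}_{\bX_{k,S^*_k}})\B{\epsilon}_k$, which must likewise be localized to scale $|\tilde{S}_k|\log p$ --- is exactly what event $\mathcal{E}_3$ in~\eqref{events} (Lemma~\ref{suptechlem3}) supplies.
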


Theorem~\ref{supthm} presents support recovery guarantees for problem~\eqref{suppHet}. Particularly, Theorem~\ref{supthm} limits the total number of mistakes in the support of the estimates (given by $|\tilde{S}_k|$ on the rhs in~\eqref{supthm-result}). 
Moreover, Theorem~\ref{supthm} presents a bound on support heterogeneity, i.e., 
the number of nonzero indices outside the common support $|\hat{S}_{\text{all}}\setminus\hat{S}_{\text{common}}|$. 
Similar to the previous section, we next consider some special cases that help to interpret the result of Theorem~\ref{supthm}. 

\begin{corollary}\label{cor3}
	Suppose the underlying model follows the common support model, that is $\B{z}^*_1=\cdots=\B{z}^*_K$. Let $\{\hat{\B{\beta}}_k,\hat{\B{z}}_k\}_{k=1}^K$ be an optimal solution to problem~\eqref{suppHet} with $\delta$ chosen as in Corollary~\ref{cor1}. If 
	\begin{equation}\label{supreccond}
		\sum_{k\in\mathcal{J}}\eta_k\gtrsim \sum_{k=1}^K \sigma_k^2
	\end{equation}
	where $\mathcal{J}$ is defined in Assumption~\ref{cJdef}, then under the assumptions of Theorem~\ref{supthm} we have
	$\hat{\B{z}}_1=\cdots=\hat{\B{z}}_K=\B{z}^*_1=\cdots=\B{z}^*_K$
	with high probability.\footref{foot2}
\end{corollary}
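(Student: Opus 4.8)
The plan is to combine the support-equalization already furnished by Corollary~\ref{cor1} with the heterogeneity inequality of Theorem~\ref{supthm}, and then invoke the signal condition~\eqref{supreccond} to pin the estimated support exactly to the truth. First I would apply Corollary~\ref{cor1}: since we are in the well-specified regime ($\B r_k=\B 0$) with a common true support and $\delta$ taken as in Corollary~\ref{cor1}, that corollary guarantees (on the corresponding high-probability event) that $\hat{\B z}_1=\cdots=\hat{\B z}_K=:\hat{\B z}$. Two consequences follow at once: (i) $\hat S_{\text{all}}=\hat S_{\text{common}}=S(\hat{\B z})$, so $|\hat S_{\text{all}}\setminus\hat S_{\text{common}}|=0$; and (ii) because both the true and the estimated supports are common across tasks, the per-task mistake sets $\tilde S_k$ all coincide with a single set $\tilde S:=\{j:z^*_j=1,\hat z_j=0\}$, so $|\tilde S_k|=|\tilde S|$ for every $k$.

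Next I would substitute into~\eqref{supthm-result}. The common-support hypothesis gives $|S^*_{\text{all}}\setminus S^*_{\text{common}}|=0$ on the left, consequence (i) kills the first right-hand term, and (ii) lets me factor $|\tilde S|$ out of the sum, so the inequality collapses to
\begin{equation*}
0\;\geq\;\frac{|\tilde S|\log p}{\delta\,\barbelow{n}}\left[\,0.4\,c_n\sum_{k\in\cJ}\eta_k\;-\;c_t\sum_{k=1}^K\sigma_k^2\,\right].
\end{equation*}
The prefactor $\log p/(\delta\,\barbelow n)$ is strictly positive, so reading~\eqref{supreccond} with a constant large enough that $0.4\,c_n\sum_{k\in\cJ}\eta_k>c_t\sum_{k=1}^K\sigma_k^2$ makes the bracket strictly positive; a strictly positive quantity times $|\tilde S|\geq 0$ cannot be $\leq 0$ unless $|\tilde S|=0$. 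Hence $\tilde S=\emptyset$, i.e. $S(\B z^*)\subseteq S(\hat{\B z})$, and since the constraint $\sum_j\hat z_j\leq s$ together with $\|\B\beta^*_k\|_0=s$ forces $|S(\hat{\B z})|\leq s=|S(\B z^*)|$, the inclusion is an equality, giving $\hat{\B z}=\B z^*$. With $\hat{\B z}_1=\cdots=\hat{\B z}_K$ this is exactly $\hat{\B z}_1=\cdots=\hat{\B z}_K=\B z^*_1=\cdots=\B z^*_K$.

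The main obstacle is not a single computation but making the two probabilistic guarantees interlock: the argument must run on the intersection of the high-probability events of Corollary~\ref{cor1} and Theorem~\ref{supthm}, and the single choice of $\delta$ must simultaneously exceed Corollary~\ref{cor1}'s lower threshold (so the supports equalize) while leaving the prefactor in Theorem~\ref{supthm} finite and positive. The only genuinely delicate bookkeeping is the constant calibration in~\eqref{supreccond}: it must be chosen so that the positive contributions of the information-rich tasks $k\in\cJ$ strictly dominate the aggregate noise $c_t\sum_k\sigma_k^2$, which in turn ties the minimum-signal constants $\eta_k$ of Assumption~\ref{as-beta-min} to the noise levels $\sigma_k$.
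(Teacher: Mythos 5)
Your proof is correct and follows essentially the same route as the paper's: invoke Corollary~\ref{cor1} (with the same choice of $\delta$) to equalize the estimated supports, substitute into~\eqref{supthm-result} so that the left-hand side and the first right-hand term both vanish and the mistake sets $\tilde{S}_k$ collapse to a single set, then use~\eqref{supreccond} to make the bracket strictly positive and force $|\tilde{S}_k|=0$. Your explicit closing step---using the constraint $\sum_j \hat z_j\leq s$ together with $\|\B{\beta}^*_k\|_0=s$ to upgrade the inclusion $S(\B{z}^*)\subseteq S(\hat{\B{z}})$ to equality---is left implicit in the paper's proof, and spelling it out (along with intersecting the high-probability events of Corollary~\ref{cor1} and Theorem~\ref{supthm}) is a welcome bit of extra care.
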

Based on Corollary~\ref{cor3}, when the true models of the $K$ tasks have a common support, then the estimator from Problem~\eqref{suppHet} can recover the support of the underlying model correctly under condition~\eqref{supreccond}. Condition~\eqref{supreccond} states that the average signal level in tasks with a sufficiently large number of samples (i.e., tasks in $\mathcal{J}$) is large compared to the noise. In fact, there may be some tasks that do not have large enough sample sizes, or some tasks that do not have signals that are high enough for support recovery, but as long as the average signal is high enough among tasks in $\mathcal{J}$, problem~\eqref{suppHet} recovers the support of every task correctly. 

\begin{remark}
	Under the setup we have considered here, $n_k \gtrsim s\log p$ samples are required to recover the support of task $k$ correctly if we were to fit a separate sparse regression on each task~\citep{minimaxlinear}. Although Corollary~\ref{cor3} requires some tasks to have at least $n_k \gtrsim s\log p$ samples, some tasks can have fewer samples when we use our methods. Moreover, some $\eta_k$'s can be small enough such that a sparse linear model fit to that task alone would be unable to recover the support, but problem~\eqref{suppHet} can use the information from every task to estimate the common support correctly. This shows that sharing information across tasks under a common support model can improve support recovery.
\end{remark}
Next we consider the regular case.
\begin{corollary}\label{cor4}
	Let $\{\hat{\B{\beta}}_k,\hat{\B{z}}_k\}_{k=1}^K$ be the optimal solution to Problem~\eqref{suppHet} with a suitably chosen value\footnote{See~\eqref{cor4-delta} for an expression for $\delta$.} of $\delta$.
	If 
	$\sum_{k\in\mathcal{J}}\eta_k\gtrsim s\sum_{k=1}^K \sigma_k^2$
	where $\mathcal{J}$ is defined in Assumption~\ref{cJdef}, then under the assumptions of Theorem~\ref{supthm} w.h.p,\footref{foot2}
	\begin{equation*}
		\begin{aligned}
			S^*_{\text{common}}&\subseteq \bigcup_{k\in\mathcal{J}}\{j:\hat{z}_{k,j}=1\} \\
			|\hat{S}_{\text{all}}\setminus \hat{S}_{\text{common}}| &\leq |{S}^*_{\text{all}}\setminus {S}^*_{\text{common}}|.
		\end{aligned}
	\end{equation*}
\end{corollary}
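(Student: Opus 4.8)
The plan is to prove the two conclusions separately: the inclusion $S^*_{\text{common}}\subseteq\bigcup_{k\in\cJ}\{j:\hat z_{k,j}=1\}$ by a contradiction argument reading off the master inequality of Theorem~\ref{supthm}, and the heterogeneity bound $|\hat S_{\text{all}}\setminus\hat S_{\text{common}}|\le|S^*_{\text{all}}\setminus S^*_{\text{common}}|$ by re-opening the optimality argument behind that theorem. Write $H=|\hat S_{\text{all}}\setminus\hat S_{\text{common}}|$ and $m^*=|S^*_{\text{all}}\setminus S^*_{\text{common}}|$. For the inclusion, suppose toward a contradiction that some $j^*\in S^*_{\text{common}}$ is selected by no task in $\cJ$. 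Since $j^*$ lies in the true support of every task, $j^*\in\tilde S_k$ for all $k\in\cJ$, so $|\tilde S_k|\ge 1$ there; and since $\tilde S_k$ is contained in the size-$s$ true support, $|\tilde S_k|\le s$ for every $k$. Hence the weighted mistake sum in Theorem~\ref{supthm} obeys $\sum_k|\tilde S_k|[0.4c_n\eta_k\1(k\in\cJ)-c_t\sigma_k^2]\ge 0.4c_n\sum_{k\in\cJ}\eta_k-c_t s\sum_{k=1}^K\sigma_k^2$, which, under the hypothesis $\sum_{k\in\cJ}\eta_k\gtrsim s\sum_k\sigma_k^2$ with a large enough constant, is positive and of order $\sum_{k\in\cJ}\eta_k$. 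Dropping the nonnegative $H/K$ term, Theorem~\ref{supthm} then forces $m^*\gtrsim (\log p/\delta\barbelow{n})\sum_{k\in\cJ}\eta_k$; choosing $\delta$ below the threshold in the statement makes the right-hand side exceed $m^*$, a contradiction, which establishes the inclusion.

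For the heterogeneity bound I would \emph{not} use the $1/K$-weighted form of Theorem~\ref{supthm}, since solving its inequality for $H$ only yields $H\le Km^*-(K\log p/\delta\barbelow{n})\sum_k|\tilde S_k|[\cdots]$, i.e.\ a bound lossy by a factor $K$ precisely when the solution recovers well and the mistake sum is small. Instead I would return to the basic optimality inequality used to prove Theorem~\ref{supthm}: comparing the optimum $\{\hat{\B\beta}_k,\hat{\B z}_k\}$ against the oracle $\{\B\beta^*_k,\B z^*_k\}$ gives $\sum_k\frac{1}{n_k}\|\bX_k(\hat{\B\beta}_k-\B\beta^*_k)\|_2^2+\delta\sum_k\|\hat{\B z}_k-\bar{\hat{\B z}}\|_2^2\le \delta\sum_k\|\B z^*_k-\bar{\B z^*}\|_2^2+\mathrm{Err}$, where $\mathrm{Err}$ collects the noise cross-terms (here $\B r_k=\B 0$). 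The key device is the elementary identity $\sum_k\|\B z_k-\bar{\B z}\|_2^2=\frac{1}{K}\sum_j a_j(K-a_j)$ with $a_j=\sum_k z_{k,j}$: each coordinate of $S_{\text{all}}\setminus S_{\text{common}}$ has $1\le a_j\le K-1$ and so contributes at least $(K-1)/K$, while common or absent coordinates contribute $0$.

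Applying this identity on both sides, the penalty of the regular oracle is exactly $\frac{K-1}{K}m^*$ (every heterogeneous true coordinate is unique to one task, so $a^*_j=1$ there), whereas the penalty of the estimate is at least $\frac{K-1}{K}H$. Substituting into the basic inequality and dropping the nonnegative prediction term yields $\frac{K-1}{K}\delta H\le \frac{K-1}{K}\delta m^*+\mathrm{Err}$, i.e.\ $H\le m^*+\frac{K}{(K-1)\delta}\mathrm{Err}$. Because $H$ and $m^*$ are integers, it then suffices to choose $\delta$ large enough (within the window already constrained by the contradiction argument) that $\frac{K}{(K-1)\delta}\mathrm{Err}<1$, which forces $H\le m^*$.

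The main obstacle is controlling $\mathrm{Err}$. Unlike in Theorem~\ref{supthm}, where it only needs to be absorbed into an $O(s\log p)$ prediction rate, here it must be driven below $(K-1)\delta/K$ so that the integer rounding closes. This requires bounding the noise cross-terms $\sum_k\frac{1}{n_k}\langle\B\epsilon_k,\bX_k(\hat{\B\beta}_k-\B\beta^*_k)\rangle$ on the random selected supports via the restricted-eigenvalue Assumption~\ref{as-rip} and Gaussian concentration, and then using the minimum-signal Assumption~\ref{as-beta-min} together with the signal condition $\sum_{k\in\cJ}\eta_k\gtrsim s\sum_k\sigma_k^2$ to argue that any retained false negative or spurious coordinate would contradict optimality. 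The two demands on $\delta$---small enough for the contradiction in the first part, large enough for the rounding in the second---must be simultaneously satisfiable; verifying that the hypothesis on $\sum_{k\in\cJ}\eta_k$ (with its extra factor $s$ relative to Corollary~\ref{cor3}) leaves a nonempty admissible window for $\delta$ is the most delicate piece of bookkeeping.
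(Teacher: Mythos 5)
Your first half coincides with the paper's own proof. The paper fixes $\delta$ as in~\eqref{cor4-delta}, assumes some $j\in S^*_{\text{common}}$ is selected by no task in $\cJ$, notes $|\tilde S_k|\geq 1$ for $k\in\cJ$ (and $|\tilde S_k|\leq s$ throughout), and reads off from Theorem~\ref{supthm} the contradiction $s\sum_{k}\sigma_k^2\gtrsim\sum_{k\in\cJ}\eta_k$; your ``choose $\delta$ below a threshold'' version is the same argument with the bookkeeping rearranged. Your identity $\sum_{k}\|\B{z}_k-\bar{\B{z}}\|_2^2=\frac{1}{K}\sum_j a_j(K-a_j)$ with $a_j=\sum_k z_{k,j}$ is also correct, and it genuinely sharpens Lemma~\ref{tech-lem2}: every coordinate of $\hat S_{\text{all}}\setminus\hat S_{\text{common}}$ contributes at least $(K-1)/K$ rather than $1/K$. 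Notably, your diagnosis of the factor-$K$ loss applies to the paper itself: from~\eqref{cor4-helper} the paper's proof only concludes $|\hat S_{\text{all}}\setminus\hat S_{\text{common}}|\lesssim K\,|S^*_{\text{all}}\setminus S^*_{\text{common}}|$, which is weaker than the displayed claim of the corollary. Your route, run with the paper's $\delta$ from~\eqref{cor4-delta}, gives $H\leq m^*+\frac{K}{(K-1)\delta}\mathrm{Err}$ with $\mathrm{Err}\lesssim\sum_k\sigma_k^2\,s\log(p/s)/n_k$ (on the event of Lemma~\ref{maximalimp}, since $\B{r}_k=\B{0}$), hence $H\lesssim m^*$ with an absolute constant --- an improvement on the paper's $K m^*$.

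The gap is in closing the integer-rounding step to get the literal $H\leq m^*$. Rounding requires $\frac{K}{(K-1)\delta}\mathrm{Err}<1$, i.e.\ $\delta\gtrsim(s\log p/\barbelow{n})\sum_k\sigma_k^2$, while your first-part contradiction requires $\delta\lesssim(\log p/(\barbelow{n}\,m^*))\sum_{k\in\cJ}\eta_k$. These two constraints are simultaneously satisfiable only if $\sum_{k\in\cJ}\eta_k\gtrsim m^*\, s\sum_k\sigma_k^2$, which exceeds the stated hypothesis $\sum_{k\in\cJ}\eta_k\gtrsim s\sum_k\sigma_k^2$ by a factor of $m^*=|S^*_{\text{all}}\setminus S^*_{\text{common}}|$; the admissible window for $\delta$ that you flag as ``the most delicate piece of bookkeeping'' is therefore empty whenever $m^*\gg 1$, and the plan does not close. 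Nor can you escape by running the sharpened identity inside the proof of Theorem~\ref{supthm} and hoping the signal terms absorb the noise: the residual $\frac{\log p}{\barbelow{n}}\,c_t\sum_k|\tilde S_k|\sigma_k^2$ can be supported entirely on tasks outside $\cJ$ (where the bracket $0.4c_n\eta_k\1(k\in\cJ)-c_t\sigma_k^2$ is negative), so the hypothesis, which only controls $\sum_{k\in\cJ}\eta_k$, cannot sign it away, and one is pushed back to the same incompatible lower bound on $\delta$. In short: your part one matches the paper, your sharper penalty identity proves more than the paper's own proof does ($H\lesssim m^*$ instead of $H\lesssim Km^*$), but the exact inequality $H\leq m^*$ as displayed is not reachable by your plan under the stated signal condition --- a shortfall the paper shares, since its proof also establishes only the up-to-constants (and there, $K$-lossy) version.
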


Based on Corollary~\ref{cor4}, every feature that appears in 
the support common to the true regression coefficients across tasks also appears in the estimated support of some task. 
In other words, our proposed estimator from Problem~\eqref{suppHet}
is able to identify nonzero coefficients that are common across all tasks. Moreover, Corollary~\ref{cor4} limits the number of nonzero indices outside the common support. Such joint guarantees on recovery and support heterogeneity might not be available if we were to use independent sparse linear regressions for different tasks.

\subsection{Statistical Properties of Approximate Solutions} \label{stats_approx}
An appealing aspect of our MIP-based global optimization framework is its ability to
deliver a certificate of how close the current solution is to the optimum. In this section we show the statistical properties of approximate solutions to MIP in problem~\eqref{suppHet}. Loosely speaking, if the objective value of an approximate solution is sufficiently close to the optimal objective, the statistical properties of the approximate solution are quite similar to that of the optimal ones.  These guarantees can inform the statistical properties of a solution obtained from a MIP solver terminated early due to computation budget constraints or an approximate solution available from heuristics.

Formally, let $(\check{\mathbb{B}},\check{\B{Z}})$ be a feasible solution\footnote{This can be obtained, for example, from a MIP solver such as the one from Section~\ref{app:mip-solver} or an approximate solution.} to Problem~\eqref{suppHet}. For simplicity, let us study the case with $\alpha=\lambda=0$. Our MIP framework  
returns a dual (aka lower) bound on the optimal value as 
\begin{equation}\label{mip-lowerbound}
	\sum_{k=1}^K \frac{1}{n_k} \norm{  \mathbf{y}_k - \mathbb{X}_k \check{\boldsymbol{\beta}}_k }_2^2  + \delta \sum_{k=1}^K\norm{\check{\boldsymbol{z}}_k - \bar{\check{\boldsymbol{z}}}}_2^2\leq \sum_{k=1}^K \frac{1}{n_k} \norm{  \mathbf{y}_k - \mathbb{X}_k \hat{\boldsymbol{\beta}}_k }_2^2  + \delta \sum_{k=1}^K\norm{\hat{\boldsymbol{z}}_k - \bar{\hat{\boldsymbol{z}}}}_2^2+\tau
\end{equation}
where $(\hat{\mathbb{B}},\hat{\B{Z}})$ is an optimal solution to Problem~\eqref{suppHet} and $\tau\geq 0$ is the optimality gap. Below, we discuss the statistical properties of the approximate solution $(\check{\mathbb{B}},\check{\B{Z}})$. Indeed, we prove a more general version of Proposition~\ref{app-prop1} in Supplement~\ref{app:prop1_proof}.\\

\begin{proposition}\label{app-prop1}
	Suppose $\{\check{\B{\beta}}_k,\check{\B{z}}_k\}_{k=1}^K$ is as defined in~\eqref{mip-lowerbound}. Then, under our assumed model setup w.h.p.\footref{foot1} we have
	\begin{equation*}
		\sum_{k=1}^K \left\{\frac{1}{n_k} \|\bX_k(\B{\beta}_k^*-\check{\B{\beta}}_k)\|_2^2+ \delta  \|\check{\B{z}}_k-\bar{\check{\B{z}}}\|_2^2\right\} \lesssim  \sum_{k=1}^K\left\{ \frac{\sigma_k^2 s \log(p/s)}{n_k}+ \delta  \|{\B{z}}^*_k-\bar{\B{z}^*}\|_2^2+\frac{1}{n_k}\|\B{r}_k\|_2^2\right\}+\tau.
	\end{equation*}
\end{proposition}
Proposition~\ref{app-prop1} shows that as long as $\tau\lesssim \sum_{k=1}^K {(\sigma_k^2 s)\log(p/s)/n_k }$, the prediction error rate from the optimal solution $(\hat{\mathbb{B}},\hat{\B Z})$ and the approximate solution $(\check{\mathbb{B}},\check{\B Z})$ are the same. Hence, such approximate solutions can still have strong statistical guarantees. 

We next study the variable selection properties of approximate solutions. 
\begin{proposition}\label{app-prop2}
	Under the assumptions and notation of Theorem~\ref{supthm}, let $\{\check{\B{\beta}}_k,\check{\B{z}}_k\}_{k=1}^K$ be as defined in~\eqref{mip-lowerbound}. Then, w.h.p.\footref{foot2} we have
	\begin{equation}
		|S^*_{\text{all}}\setminus S^*_{\text{common}}| \geq  \frac{|\check{S}_{\text{all}}\setminus \check{S}_{\text{common}}|}{K}+\frac{\log p}{\delta\barbelow{n}}\sum_{k=1}^K|\bar{S}_k|\left[0.4c_n\eta_k \1(k\in\cJ)-c_t\sigma_k^2\right] - \frac{\tau}{\delta}
	\end{equation}
	where $\check{S}_{\text{all}}=S_{\text{all}}(\check{\B{Z}}),~ \check{S}_{\text{common}}= S_{\text{common}}(\check{\B{Z}})$, and \\ $\bar{S}_k=\{j:z_{k,j}^*=1,\check{z}_{k,j}=0\}$ is the set of mistakes in the support of task $k$.
\end{proposition}
From Proposition~\ref{app-prop2} we see that if $\tau < \delta/K$, the quantity $|\check{S}_{\text{all}}\setminus \check{S}_{\text{common}}|$ can increase at most by one, compared to the case where $\tau=0$ (see Theorem~\ref{supthm} and related discussions for corresponding properties of an optimal solution).

\section{Exact Optimization Algorithms}\label{optSec}

Here we discuss exact optimization approaches including our custom algorithms designed to compute our estimators introduced in Section~\ref{sec:method}. Such exact algorithms can be used to obtain globally optimal solutions to our estimator for moderate problem sizes\footnote{We can generally handle problems where number of features $p$ is in thousands. Please refer to Section~\ref{sec:scalability} for numerical demonstrations and specific problem sizes.}.

\subsection{Mixed Integer Program (MIP) Formulations}\label{mip-app}
We first present MIP formulations for problems \eqref{suppCommon} and \eqref{suppHet} that can be solved to optimality with off-the-shelf solvers such as Gurobi, Mosek, etc.  
Problems~\eqref{suppCommon_mip} and~\eqref{suppHet_mip} below present equivalent MIP reformulations for  problems~\eqref{suppCommon} and~\eqref{suppHet}, respectively.

\begin{align} \label{suppCommon_mip}    
	\min_{\boldsymbol{z},\mathbb{B},\bar{\boldsymbol{\beta}} }&~~ \sum_{k=1}^K \frac{1}{n_k} \norm{  \mathbf{y}_k - \mathbb{X}_k \boldsymbol{\beta}_k }_2^2 + \lambda  \sum_{k=1}^K \norm{\boldsymbol{\beta}_k - \bar{{\boldsymbol{\beta}}} }_2^2+\alpha\|\bB\|_F^2 \\
	\text{s.t.}&~~ z_{j} \in \{0,1 \} ~~ \forall ~j \in [p]  \notag \\
	& ~~ -M z_{j} \leq \beta_{k,j} \leq M z_j ~~\forall~ k \in [K], ~ \forall ~j \in [p]  \notag \\
	& ~~ \sum_{j=1}^p z_j\leq s.   \notag 
\end{align}

\begin{align} \label{suppHet_mip}
	\min_{\B{Z},\mathbb{B},\bar{\boldsymbol{\beta}},\bar{\boldsymbol{z}} }& ~~ \sum_{k=1}^K \frac{1}{n_k} \norm{  \mathbf{y}_k - \mathbb{X}_k \boldsymbol{\beta}_k }_2^2 +\alpha\|\bB\|_F^2   + \lambda  \sum_{k=1}^K \norm{\boldsymbol{\beta}_k - \bar{{\boldsymbol{\beta}}} }_2^2 + \delta \sum_{k=1}^K\norm{\boldsymbol{z}_k - \bar{\boldsymbol{z}}}_2^2\\
	\text{s.t.}&~~ z_{k,j} \in \{0,1 \} ~~ \forall ~j \in [p] \notag \\
	& ~~ -M z_{k,j} \leq \beta_{k,j} \leq M z_{k,j} ~~\forall~ k \in [K], ~ \forall ~j \in [p],  \notag \\
	& ~~ \sum_{j=1}^p z_{k,j}\leq s ~~\forall~ k \in [K]. \notag 
\end{align}
Above, $M>0$ is a pre-specified large positive constant often known as the Big-M parameter (see \cite{bestSubset} for further details pertaining to the best subset selection problem). 

\subsection{Our Custom Exact Solver}\label{app:mip-solver}
Off-the-shelf commercial solvers such as Gurobi or Mosek can obtain globally optimal (or near-optimal) solutions to Problem~\eqref{suppHet_mip} for moderately large instances (with $pK$ in hundreds). Such solvers however may face challenges for larger instances typical in biomedical applications. This is because Problem~\eqref{suppHet_mip} has $O(pK)$ binary variables and $O(pK)$ continuous variables. Here, we present a custom global optimization-based algorithm for  Problem~\eqref{suppHet_mip}
that can scale to larger problem instances. 
For simplicity, we present the $\lambda=0$ case here (i.e., the Zbar + L2 estimator from Table~\ref{table:losses_table}). We discuss the general case ($\lambda  \geq 0$) in Supplement~\ref{app:oa-detail}. We begin by reformulating~\eqref{suppHet_mip} into the form
\begin{align}\label{generalmi}
	\min_{\B{Z}, \bar{\B{z}}}& ~~~   \sum_{k=1}^K F_k(\B z_k)  +\delta \sum_{k=1}^K \|\B{z}_k-\bar{\B z}\|_2^2\\  \text{s.t.}& ~~  \B z_{k}\in\{0,1\}^p;~~   \sum_{j=1}^p z_{k,j} \leq s~\forall k\in[K],\nonumber
\end{align}
where, as we show subsequently, $F_k:[0,1]^p\to\R$ are convex functions for $k\in[K]$. 
Problem~\eqref{generalmi} is now an optimization problem with $pK$ binary variables and $p$ continuous variables, unlike~\eqref{suppHet_mip} which is a function of $O(pK)$ continuous and $O(pK)$ binary variables. To optimize~\eqref{generalmi}, we employ a convex outer approximation algorithm~\citep{duran1986outer,bart} as discussed below. The following section shows how to reformulate Problem~\eqref{suppHet_mip} into form~\eqref{generalmi}. 

‌\subsubsection{A reformulation of Problem~\eqref{suppHet_mip}}\label{a reform sec}
For $k\in[K]$ and $\B z_k\in[0,1]^p$, let us define the following function:
\begin{align}\label{Freg2}
	F_k(\B z_k)=\min_{\B\beta_k,\B\xi_k} &\quad   \frac{1}{n_k} \left\Vert \B\xi_k \right\Vert_2^2 +\alpha \|\B\beta_k\|_2^2\\
	\text{s.t.} & \quad   |\beta_{k,j}|\leq Mz_{k,j}~~ \forall j\in[p]\nonumber \\
	& \quad\B\xi_k = \mathbf{y}_k-\bX_k\B\beta_k.\nonumber 
\end{align}
Then, we have the following result.
\begin{proposition}\label{prop5-main}
	Problem~\eqref{suppHet_mip} for $\lambda=0$ is equivalent to solving~\eqref{generalmi} where $F_k(\B z_k)$ for $k \in [K]$ is implicitly described via display~\eqref{Freg2}.
\end{proposition}
Functions $F_k(\cdot)$ are implicitly defined, and have several desirable properties. As we discuss in Supplement~\ref{app:oa-zbar} and Proposition~\ref{subgradpers}, these functions are convex and sub-differentiable. Moreover, for every $k\in[K]$,
we can compute subgradients of function $F_k(\cdot)$ by solving optimization Problem~\eqref{Freg2} using first-order methods. For each $k\in[K]$, Problem~\eqref{Freg2} effectively has only $s$ nonzero variables due to sparsity in binary variables---these optimization problems and subgradients can be computed efficiently since we consider sparse regimes with $s \ll p$ in practice.
Next, we discuss our custom MIP algorithm for Problem~\eqref{generalmi}.

\subsubsection{Outer Approximation Algorithm}\label{outersec}
We present an outer approximation (or cutting plane) algorithm~\citep{duran1986outer} to solve Problem~\eqref{generalmi}.
Our algorithm requires access to oracles that can compute the tuple 
$(F_k(\B z_k), \B g_k^{(\B z_k)})$
where, $\B g_k^{(\B z_k)} \in \partial F_k(\B z_k)$ is a subgradient of $F_k$ at an integral $\B z_k$. We refer the reader to Proposition~\ref{subgradpers} for computation of $(F_k(\B z_k), \B g_k^{(\B z_k)})$.
As $F_k$ is convex, 
$F_k(\B x)\geq F_k(\B z_k) +(\B x-\B z_k)^T\B g_k^{(\B z_k)}$ for all $\B x\in [0,1]^p$. Therefore, for $\B z_k^0,\cdots,\B z_k^{t-1}\in[0,1]^p$, we have:
\begin{equation}\label{lowerbound}
	\begin{aligned}
		F_{k,LB}(\B z_k):=&\max\left\{F_k(\B z_k^{a})+(\B z_k-\B z_k^{a})^T\B g_k^{(\B z_k^{a})}\right\}_{a=1}^{t-1} \\
		\leq& F_k(\B z_k)
	\end{aligned}
\end{equation}
where, $\B z_k \mapsto F_{k,LB}(\B z_k)$ is a piecewise linear convex lower bound to the map $\B z_k \mapsto F_k(\B z_k)$ on $\B z_k \in [0,1]^p$.
At iteration $t\geq 1$, the outer approximation algorithm replaces each $F_k$ for $k\in[K]$ in~\eqref{generalmi} with the lower bound $F_{k,LB}$. As each $F_{k,LB}$ is a piece-wise linear function, this results in a Mixed Integer Quadratic Program (MIQP):
\begin{align}\label{outerMILP}
	 \left((\B z_k^t,\eta_k^t)_{k=1}^K,\bar{\B z} \right)\in & \argmin_{\B{Z},\bar{\B z},\eta_k} ~~   \sum_{k=1}^K\eta_k + \delta \sum_{k=1}^K \|\B z_k - \bar{\B z}\|_2^2\\
	& \text{s.t.} ~~  \B z_k\in\{0,1\}^p,~\sum_{i=1}^p z_{k,i} \leq s~~\forall k\in[K]\nonumber\\
	& ~~~~~~\eta_k \geq F_k(\B z_k^i)+(\B z_k-\B z_k^i)^T\B g_k^{(\B z_k^{i})},  i\leq t-1,~k\in[K]\nonumber .
\end{align}
As the feasible set of Problem \eqref{generalmi} contains finitely many elements, an optimal solution is found after finitely many iterations\footnote{This is true as the lower bound obtained by the outer approximation in each iteration removes the current solution from the feasible set, unless it is optimal---see~\citet{duran1986outer} for details.} where iterations are indexed by $t$.  In addition, the objective in~\eqref{lowerbound}
$\sum_{k=1}^K\eta_k^t+\delta \sum_{k=1}^K \|\B z_k-\bar{\B z}\|_2^2$ 
is a lower bound of the optimal objective value in~\eqref{generalmi}. Recall that any feasible solution $\{\B z_k^{t}\}$ leads to an upper bound for Problem~\eqref{generalmi}.
Consequently, the optimality gap of the outer approximation algorithm can be calculated as $\text{OG}=(\text{UB}-\text{LB})/\text{UB}$ where $\text{LB}$ is the current (and the best) lower bound achieved by the piecewise approximation 
in~\eqref{outerMILP}, and $\text{UB}$ is the best current upper bound. 

We note that the MIQP in~\eqref{outerMILP} can be solved by standard off-the-shelf solvers such as Mosek and Gurobi. Importantly, 
Problem~\eqref{outerMILP} involves fewer continuous variables than~\eqref{suppHet_mip} ($\mathcal{O}(p)$ vs $\mathcal{O}(pK)$), and is generally faster to solve.

\begin{remark}
	Outer approximation algorithms have been recently used to successfully 
	solve 
	best-subset selection-type problems where we minimize 
	a least squares loss with an $\ell_0$-regularizer---see for example~\citep{bart,behdin2021sparse} and references therein. 
	The structure of the problem we consider here is, however, different from earlier works. 
\end{remark}

\section{Approximate Algorithms}\label{section:approx-algs}
We now discuss approximate algorithms for our estimators. Such algorithms enable us to obtain high-quality feasible solutions quickly. This is important when, for example, the runtime is of essence or when tuning hyper-parameters. High-quality approximate solutions are also helpful for warm-starting our exact solver. These approximate methods do not, however, deliver optimality certificates---these certificates are available from the MIP-based global optimization framework discussed in Section~\ref{app:mip-solver}. 

First-order optimization methods are often used to obtain good solutions to discrete optimization problems in sparse learning~\citep{bestSubset,beckeldar,l0learn}. However, the structure of the optimization problem in~\eqref{suppHet} is different from related estimators, such as best subset selection. Due to the specific structure of Problem~\eqref{suppHet}, we use the block Coordinate Descent (CD) procedure where we partially minimize the objective w.r.t. the blocks of variables $(\{\B\beta_k,\B z_k\},\bar{\B\beta},\bar{\B z})$. 
The block CD procedure is quite efficient but can get stuck in sub-optimal solutions, so we present a local combinatorial search procedure that can further improve the quality of the solutions~\citep{l0learn,groupedL0,beckeldar}.

\subsection{Block Coordinate Descent} \label{section:blockcd}
We describe our block CD procedure for problem~\eqref{suppHet}. 
We write the objective function of problem~\eqref{suppHet} as $\sum_{k=1}^K g_k(\B{\beta}_k,\B{\bar{\beta}}) +\sum_{k=1}^K h_k(\B{z}_k,\B{\bar{z}})$, 
where 
\begin{align}
	g_k(\B{\beta}_k,\B{\bar{\beta}}) &=  \frac{1}{n_k} \norm{  \mathbf{y}_k - \mathbb{X}_k \boldsymbol{\beta}_k }_2^2 + \lambda   \norm{\boldsymbol{\beta}_k - \bar{{\boldsymbol{\beta}}} }_2^2 + \alpha \norm{\B{\beta}_k}_2^2, \nonumber\\
	h_k(\B{z}_k,\B{\bar{z}}) &= \delta \norm{\boldsymbol{z}_k - \bar{\boldsymbol{z}}}_2^2. \label{ghdef}
\end{align}
In our block CD algorithm, we first optimize problem~\eqref{suppHet} over decision variables of task $k$, $(\B{\beta}_k, \B{z}_k)$, holding all other variables fixed. 
To update $\B{w}_k$, we use a proximal gradient update~\citep{beck2009fast}. 
Specifically, we update $(\B{\beta}_k, \B{z}_k)$ by finding a minimizer of the following optimization problem (w.r.t. $\B{\beta}_k, \B{z}_k$):
\begin{equation}\label{IHT1}
	\begin{aligned}
		\min~ & \frac{L_k}{2} \left\Vert\B{\beta}_k - \left[\B{\beta}_k^+-\frac{1}{L_k}\nabla_{\B{\beta}_k}g_k(\B{\beta}_k^+,\bar{\B{\beta}})\right]\right\Vert_2^2 + h_k(\B{z}_k,\bar{\B{z}})\\
		\text{s.t.}~&~ z_{k,j}\in\{0,1\}~\forall j\in[p], k\in[K],\\ 
		&\beta_{k,j}(1-z_{k,j})=0~\forall j\in[p], k\in[K], \\
		&\sum_{j=1}^p z_{k,j}\leq s ~~ \forall k\in[K]  
	\end{aligned}
\end{equation}
where $L_k$ is the Lipschitz constant of the gradient 
$\B{\beta}_k \mapsto \nabla_{\B{\beta}_k}g_k(\B{\beta}_k,\bar{\B{\beta}})$, and $\boldsymbol{\beta}_k^{+}$ is the current value. 
{Once we update the $\B{w}_k$'s, we update $\B{\bar{z}}$ as $ \B{\bar{z}} = \frac{1}{K} \sum_{k=1}^K \B{z}_k$; we then update $\B{\bar{\beta}}$ as $\B{\bar{\beta}} = \frac{1}{K}\sum_{k=1}^K \B{\beta}_k$.}
By construction, the update rules for $\B{\bar{z}}$, $\B{\bar{\beta}}$, and $\B{w}$ result in a descent algorithm (i.e., one that does not increase the objective value in~\eqref{suppHet}), while maintaining feasibility. We note that problem~\eqref{IHT1} is itself a mixed-integer problem and is considerably more challenging to solve compared to the proximal sub-problems appearing in best-subset selection. However, as we discuss in Supplement~\ref{app:blockcd}, this can be solved in closed form efficiently. Furthermore, to make the block CD algorithm faster, we use active set updates, where only a subset of variables are updated in each iteration (see Supplement~\ref{app:activeset} for details). 

\subsection{Local Combinatorial Optimization}\label{sec:localsearch}
Once we obtain a good solution to problem~\eqref{suppHet} using the block CD algorithm (discussed above), we apply a local search method to further improve the quality of the solution. Let $\tilde{\mathbb{B}},\tilde{\B{Z}}$ be a solution from the block CD method and fix $k\in[K]$. The main idea behind the local search algorithm is to swap a coordinate inside the support of $\tilde{\boldsymbol{\beta}}_{k}$ with a coordinate outside the support (by setting the coordinate inside to zero and letting the coordinate outside become nonzero) and checking if optimizing over the new coordinate in the support leads to an improvement in the objective value. Mathematically, let

\begin{equation}
	\begin{aligned}
		{g}(\mathbb{B})& = \sum_{k=1}^K g_k(\B{\beta}_k,\sum_{k=1}^K \B{\beta}_k/K),\\  
		{h}(\B{Z}) &=\sum_{k=1}^K h_k(\B{z}_k,\sum_{k=1}^K \B{z}_k/K)
	\end{aligned}
\end{equation}
where $g_k,h_k$ are defined in~\eqref{ghdef}. Next, for a fixed $k_0\in[K]$ we consider
\begin{equation}
	\min_{\substack{j_1: \tilde{\beta}_{k_0,j_1}\neq 0 \\ j_2: \tilde{\beta}_{k_0,j_2}= 0 }}  \min_{b\in\R} \bigg\{ {g}(\tilde{\mathbb{B}}-\tilde{\beta}_{k_0,j_1}\boldsymbol{E}_{k_0,j_1}+b\boldsymbol{E}_{k_0,j_2}) + h(\tilde{\B{Z}}-\boldsymbol{E}_{k_0,j_1}+\boldsymbol{E}_{k_0,j_2})\bigg\} \label{localsearch}
\end{equation}
where $\B{E}_{j_1,j_2}$ is the matrix with all coordinates set to zero except coordinate $(j_1,j_2)$ set to one.
In words, Problem~\eqref{localsearch} identifies a swap between coordinates inside and outside of the support that leads to the lowest objective. If the optimal solution to problem~\eqref{localsearch} has a lower objective compared to $(\tilde{\mathbb{B}},\tilde{\B{Z}})$, we update our current solution. We cycle through $k$ until no swap improves the objective. For any $j_1,j_2$, the inner optimization problem in problem~\eqref{localsearch} is a convex quadratic problem over $b$. In Supplement~\ref{app:localsearch}, we show the inner optimization in problem~\eqref{localsearch} can be solved efficiently.
Importantly, Problem~\eqref{localsearch} finds the swap that leads to the best overall objective in Problem~\eqref{suppHet}, not just the swap that improves the contribution of task $k_0$ in the objective. The local combinatorial optimization problem~\eqref{localsearch} considers a Bbar/Zbar penalty, and hence results in a solution that is different from what we would get by using local combinatorial search on each task separately.

\section{Simulations} \label{sec:simulations}

\paragraph{Experimental Setup} \label{sec:sim_design}

We simulated datasets to better understand the effect of four aspects of sparse MTL settings on the performance of our proposed methods: 1) the strength of correlation among covariates, 2) the degree of support heterogeneity across tasks, 3) the sample size of each task-specific dataset, and 4) the sparsity level, $s$, of the coefficient \textit{estimates}. 

We simulated data from the linear model $y_{k,i} = \beta_{0,k} + \mathbf{x}_{k,i}^T ({\boldsymbol{\beta}}_k \odot \boldsymbol{z}_k) + \epsilon_{k,i}$
where ${\boldsymbol{\beta}}_k \overset{\text{iid}}{\sim} \cN_{p}(\boldsymbol{\mu}_{\beta}, \sigma^2_{\beta} {\B{I}} )$, $\mathbf{x}_{k,i} \overset{\text{iid}}{\sim} \mathcal{N}_p( \boldsymbol{0}, \B{\Sigma} )$, and $\boldsymbol{z}_{k} \in \{0,1\}^{p}$. We drew the errors and coefficients independently (i.e., $\beta_{k,j} \ind \epsilon_{k,i}~\forall~ k, i, j$). We describe the distribution of $\epsilon_{k,i}$ below. The covariance matrix of the covariates, $\B{\Sigma}$, had an exponential correlation structure: $\Sigma_{l,r} = \rho^{|l-r|}$ for $l \neq r$ and $\Sigma_{r,r} = 1$. We set $\rho = 0.5$ as this level of correlation provided a good testing ground to characterize the support recovery accuracy of the methods at the sample sizes tested \citep{l0learn}. We conducted ablation studies and varied correlations levels with $\rho \in \{0.2,0.5,0.8\}$. The results from these experiments are shown in Supplement~\ref{supp_sims}. In our experiments, we let each $\B{z}_k$ to have $s^*=\norm{\boldsymbol{\beta}_k}_0 = 10$ nonzero elements. We use $s^*$ to denote the \textit{true} (simulated) sparsity level of the underlying model, whereas $s$ denotes the sparsity level specified via our estimator. We evaluated the performance of each estimator at sparsity levels $s \in \{7,10,13\}$. This allowed us to compare estimation performance when the sparsity of the solutions was below, equal to, and above the true simulated sparsity level. 

We simulated data across a range of support heterogeneity levels to empirically evaluate the statistical properties we theoretically examined in Section~\ref{sec:theory}. We simulated the support of task $k$, $\mathcal{Q}_k=\{j:z_{k,j}=1\}$, by drawing it uniformly at random from the set $\tilde{\mathcal{Q}}=\{1,3,\cdots,2q-1\}$ for some pre-specified $q \geq s^*$. The choice of odd integers as the elements of the set $\tilde{\mathcal{Q}}$ is motivated by the correlation structure of the covariates. The high multicollinearity of adjacent features induced by the exponential correlation structure can make learning the support of adjacent elements of $\B{\beta}_{k}$ difficult. We thus set $\tilde{\mathcal{Q}}$ to have an alternating structure so that $\boldsymbol{z}_k$ never had successive nonzero elements. We show results for simulations in which we set $\tilde{\mathcal{Q}}$ to have a non-alternating structure (i.e., $\boldsymbol{z}_k$ \textit{can} have successive nonzero elements) in Supplement~\ref{sec:non_alt_sims}. We varied $q = |\tilde{\mathcal{Q}}|$, to adjust the level of support heterogeneity and we use the metric $s^*/q$ as a measure of support homogeneity. When  $q=s^*$, there is no support heterogeneity across tasks as $\tilde{\mathcal{Q}} = {\mathcal{Q}}_k ~ \forall ~k$ w.p.1. Conversely, smaller values of $s^* / q$ indicate higher values of support heterogeneity on average. Both the supports, $\B{z}_k$, and values of the nonzero model coefficients, $\boldsymbol{\beta}_k$, were allowed to (independently) vary across tasks. These were used to simulate two forms of between-task model heterogeneity that are characteristic of many MTL settings. 

To simulate heterogeneity in the SNR across tasks, we drew $\epsilon_{k,i} | \sigma^2_k \overset{\text{iid}}{\sim} \mathcal{N}(0, \sigma^2_k$), and $\sigma^2_k \overset{\text{iid}}{\sim} \mbox{Unif}(1/2, 2)$ for each simulation replicate and task. We varied $n_k \in \{50, 100\}$ and set $p=250$. We drew the means of the nonzero coefficients $\mu_{\beta_{k,j}} \overset{\text{iid}}{\sim} \mbox{Unif}(-0.5, -0.2) \cup \mbox{Unif}(0.2, 0.5)$ to ensure they were bounded away from zero. We set $\sigma^2_{\beta} = 50$ to simulate high heterogeneity in model coefficient values across tasks. This value was selected based on the average sample variance of the regression coefficient estimates in our data applications (see Supplement~\ref{sim_scheme} for more detail). 

We simulated 100 replications for each set of simulation parameters. In each replicate, we simulated $K$ tasks that each had $n_k$ training observations and $n_k$ test set observations. On each replicate, we fit models and calculated performance measures for each task and method. We averaged the metrics over tasks and show the distribution of the average performance measures across the 100 replicates. We present out-of-sample prediction performance (RMSE) and the F1 score of the supports to compare support recovery (i.e., to compare $\hat{\B{z}}_k$ and $\B{z}_k$). We provide a detailed description of performance metrics in Supplement~\ref{performance}.

\begin{table*}[!h]
	\scriptsize
	\label{losses_table}
	\centering
	\begin{tabular}{c | ccc}
		\hline
		\hline
		\multicolumn{1}{c}{\bfseries Method} \vline & \multicolumn{1}{c}{\bfseries Abbreviation}  &  \multicolumn{1}{c}{\bfseries MTL Squared Error Loss } 
		&  \multicolumn{1}{c}{\bfseries Penalty } \\
		\midrule
		\midrule
		Group Lasso  & GL & $\sum_{k=1}^K \frac{1}{n_k} \norm{  \mathbf{y}_k - \mathbb{X}_k \boldsymbol{\beta}_k }_2^2$ & $+~ \lambda \sum_{j=1}^p \norm{\boldsymbol{\beta}^{(j)}}_2$ \\
		Sparse Group Lasso  & SGL & $\sum_{k=1}^K \frac{1}{n_k} \norm{  \mathbf{y}_k - \mathbb{X}_k \boldsymbol{\beta}_k }_2^2$ & $~+(1-\alpha) \lambda \sum_{j=1}^p \norm{\boldsymbol{\beta}^{(j)}}_2 + \alpha \lambda \sum_{j=1}^p \norm{\boldsymbol{\beta}^{(j)}}_1$ \\
		Group Exponential Lasso  &  gel & $\sum_{k=1}^K \frac{1}{n_k} \norm{  \mathbf{y}_k - \mathbb{X}_k \boldsymbol{\beta}_k }_2^2$ & $+~\lambda^2 / \tau \sum_{j=1}^p \left ( 1 - exp \left [ -\tau / \lambda \norm{\boldsymbol{\beta}^{(j)}}_1 \right ] \right )$ \\
		Composite MCP  &  cMCP & $\sum_{k=1}^K \frac{1}{n_k} \norm{  \mathbf{y}_k - \mathbb{X}_k \boldsymbol{\beta}_k }_2^2$ & $+\sum_{j=1}^p \mbox{MCP}_{\lambda,\gamma_1} \left [\sum_{k=1}^K \mbox{MCP}_{\lambda,\gamma_2} \left ( | \beta_{j,k} | \right ) \right ] $ \\
		Group MCP  & grMCP & $\sum_{k=1}^K \frac{1}{n_k} \norm{  \mathbf{y}_k - \mathbb{X}_k \boldsymbol{\beta}_k }_2^2$ & $+\sum_{j=1}^p \mbox{MCP}_{\lambda,\gamma_1} \left ( \norm{\boldsymbol{\beta}^{(j)}}_2 \right )$ \\
		\hline
	\end{tabular}
	\caption{\footnotesize Benchmark method names and loss functions. $\mbox{MCP}_{\lambda,\gamma}$ denotes the minimax concave penalty $\rho(x; \lambda, \gamma) = \lambda \int_0^{|x|} (1 - t/(\gamma \lambda))_+dt, \gamma > 1$ \citep{groupSel_review}. Defaults $\tau = 1/3$ for gel, and $\gamma = 3$ for MCP were used.}
	\label{table:losses_table2}
\end{table*}

\paragraph{Simulation Modeling and Tuning} \label{cv}

We tuned models with a 10-fold cross-validation procedure in which we set aside a validation set for each task and fold. For each fold, we averaged across the $K$ task-specific cross-validation errors and selected the hyperparameters associated with the lowest average error. When tuning our $\ell_0$-constrained models, for each $s$, we fit a path of solutions similar to the approach taken for a path of Lasso \citep{glmnet} or $\ell_0$-penalized regression solutions \citep{l0learn}. To avoid tuning over a 2-dimensional grid, we tuned the $\alpha$ for the $L_0 L_2$ method, and used that tuned $\alpha / 2$ as a fixed hyperparameter when tuning and fitting the final Zbar+L2 models.
In practice, we found the $\alpha$ values selected by cross-validation were extremely small (typically less than $10^{-6}$) and usually had no measurable effect on prediction performance. For the small values of $s$ typical in $\ell_0$-constrained models, we expect that tuning $\alpha$ is unnecessary provided it is sufficiently small. For the CS+L2 models, we nevertheless tuned $\alpha$ since it is the only hyperparameter we tuned for that method.

We compared the performance of our methods with a wide range of existing sparse MTL methods as shown in Table~\ref{table:losses_table2}. 
The Group Lasso (GL) tends to set coefficients to zero for all tasks or none of the tasks. The Sparse Group Lasso (SGL) yields solutions for which coefficients can be zero for some tasks and nonzero for others. GL and SGL are convex methods. To overcome the downsides of excess shrinkage arising from $\ell_1$ penalties, non-convex alternatives have been proposed. Similar to GL, the group MCP (grMCP) encourages coefficients to be all zero or nonzero across tasks, but applies the non-convex MCP penalty. The Group Exponential Lasso (gel) and composite MCP (cMCP) apply non-convex penalties that, like SGL, allow for sparsity pattern heterogeneity. 
We note that for non-convex penalties, the quality of the solution can depend on specific choices of the optimization algorithm\footnote{This should be contrasted with our approach which is based on MIP and global optimization.}.
Importantly, our methods control the sparsity level (through the hyperparameter $s$), the degree of sparsity pattern heterogeneity (through $\delta$), and information sharing on the coefficient \textit{values} (through $\lambda$) in a more direct and transparent fashion.

All methods had unpenalized task-specific intercepts. To tune the hyperparameters of both our methods and these benchmark methods (e.g., Sparse Group Lasso) across a wide range of values, we tuned hyperparameters in two steps. This allowed us to identify a neighborhood of hyperparameter values that yielded good prediction performance and had comparable support sizes. 
We used the $\texttt{sparsegl}$ package \citep{sparsegl_package} for the sparse group lasso (SGL), the $\texttt{grpreg}$ package for the group exponential lasso (gel), group MCP (gMCP), and composite MCP (cMCP) \citep{groupreg1, groupreg2, groupreg3}. We tuned over hyperparameter grids with a total of 100 values that complied with the sparsity constraint, for both $\ell_0$ and benchmark methods. We provide additional details in Supplement~\ref{sims_tuning}.

\paragraph{Simulation Results} \label{sims}

In Figure \ref{fig:sims_f1_rmse} we compare the prediction performance and support recovery of the methods at different sample sizes, $n_k$, and levels of support heterogeneity in the underlying model. Given the number of methods compared, we present figures from the methods that exhibited the best performance for the remainder of the manuscript. Supplemental Figures~\ref{fig:sims_rmse_supp}–\ref{fig:sims_f1_supp} show, however, the results from a wider range of simulation parameters and include performance summaries from all methods described in Tables \ref{table:losses_table} and \ref{table:losses_table2}. 

These results show that, across different levels of $s$, the Zbar+L2 and Bbar methods substantially outperform the group penalties when support heterogeneity is higher (i.e., $s^*/q$ is low). There were, however, meaningful differences in relative performance across the different sparsity levels. For example, when $s = 7 < s^* = 10$, the prediction performance of the Zbar+L2 and Bbar methods were substantially better than the group penalties (e.g., SGL, gel) across the range of support heterogeneity levels tested. The support recovery, as measured by F1 score, of the Zbar+L2 and Bbar methods were also superior except in the common support case, $s^*/q = 1$, where they were comparable to CS+L2 and grMCP. As expected, performance of the common support method, CS+L2, degrades as support heterogeneity grows. Interestingly, at low sample sizes ($n_k=50$), most methods that yield heterogeneous supports (e.g., Zbar+L2, gel) outperform methods that yield common supports (e.g., CS+L2, grMCP) even when $s^*/q = 1$. When $s=s^*=10$, the Zbar+L2 and Bbar methods vastly outperformed the other methods in RMSE and F1 score except in the common support case, in which the CS+L2 and grMCP performed comparably. Interestingly, the Zbar+L2 outperformed the Bbar method for higher levels of support homogeneity. Finally, when $s= 13 > s^*=10$, the Zbar+L2 and Bbar methods strongly outperformed the other methods in RMSE and F1 score at high levels of support heterogeneity. For low levels of support heterogeneity, the performance of group penalties was comparable to, or even slightly better than the Zbar and Bbar methods. Importantly, the Zbar+L2 outperformed the Bbar at most levels of support heterogeneity. 

For most settings tested, Figure \ref{fig:sims_f1_rmse} and Supplemental Figures~\ref{fig:sims_rmse_supp}–\ref{fig:sims_f1_supp} show that the shrinkage-based group penalties (e.g., SGL, gel) appear to result in greater false positives and bias than the proposed $\ell_0$-constrained estimators. However, when both $s=13$, and $s^*/q$ is high, some shrinkage-based methods exhibit better prediction performance than our methods.

Comparing across the $\ell_0$ estimators, specifically, illustrates the advantages of methods that borrow strength across tasks through the coefficient \textit{supports} rather than through the coefficient \textit{values}. For example, the Zbar+L2 method exhibits comparable or superior prediction performance and support recovery compared to the Bbar method. Even at parameter settings for which prediction performance is comparable, the Zbar+L2 often exhibits substantially higher support recovery compared to the Bbar. The benefits of the Zbar method are more pronounced at lower sample sizes: when $n_k = 50$, the Zbar method exhibits competitive support recovery and even superior prediction performance to the common support methods. In fact, the Zbar+L2 method outperforms the CS+L2 even when the true model has a common support (i.e., $s^*/q=1$). This suggests that the Zbar method is useful both for support heterogeneous and homogeneous settings while the CS methods only perform well when the supports are nearly identical. 

Supplement~\ref{supp_sims} shows results with a wider range of simulation settings and performance metrics. 
In Supplemental Figure~\ref{fig:sims_rmse_unadjust}, we present the raw prediction performance results shown in Figure \ref{fig:sims_f1_rmse}, without adjustment by the performance of $L_0 L_2$. These results show that for most methods, prediction performance improves as support heterogeneity in the underlying model falls (as $s^*/q \rightarrow 1$). Under our simulation setting, the supports and the values of nonzero coefficients are drawn independently. Thus as the true support heterogeneity decreases, the overall between-task heterogeneity in the simulated coefficient values decreases too. As such, lower overall task heterogeneity in the true model coefficients implies that borrowing strength across tasks is expected to improve estimation, and hence overall prediction performance should increase. We note this is also expected from our statistical theory: In Theorem~\ref{thm1}, the r.h.s of~\eqref{thm1-inequlity} is smaller for given values of $\delta,\alpha,\lambda$ when the between-task heterogeneity is lower. Importantly, lower sparsity levels, $s$, do not in general result in improved prediction performance if $s$ is misspecified (i.e., $s \neq s^*$). Indeed, Supplemental Figure~\ref{fig:sims_rmse_unadjust} shows that when $s$ is misspecified, the performance of all methods degrades. 

Finally, Supplement~\ref{noCommSupp} shows results from a setting in which task coefficient supports were simulated to have no overlap (i.e., $\B{z}_k^T \B{z}_{k'} = 0 ~\forall ~k \neq k'$). In this setup, the relative method performances depended heavily on the covariate correlation ($\rho$), sparsity ($s$), and sample sizes ($n_k$). Interestingly, gel exhibited prediction performance superior to the non-MTL methods, $L_0 L_2$ and Lasso, at low sample sizes ($n_k=50$). At higher sample sizes ($n_k = 100$), the $L_0 L_2$ outperformed all MTL methods except when covariate correlation levels were low ($\rho = 0.2$); in that case, the Zbar methods slightly outperformed other methods.

In summary, the simulations illustrate that on balance the proposed Bbar and Zbar+L2 methods exhibit superior prediction and support recovery performance compared to the existing shrinkage-based group penalties across a wide range of simulation settings. Importantly, while the Zbar+L2 often performs comparably to the Bbar for high levels of support heterogeneity, it exhibits substantially better prediction and support recovery performance as support heterogeneity falls. Even in cases where prediction performance is comparable, the Zbar+L2 often exhibits far superior support recovery compared to the Bbar. Together, these results highlight the benefits of using a combinatorial penalty that encourages the coefficient \textit{supports} to be similar rather than shrinking the coefficient \textit{values} together across tasks.

\begin{figure*}
	\centering
	\begin{tabular}{cc}
		\centering
		\includegraphics[width=0.99\linewidth]{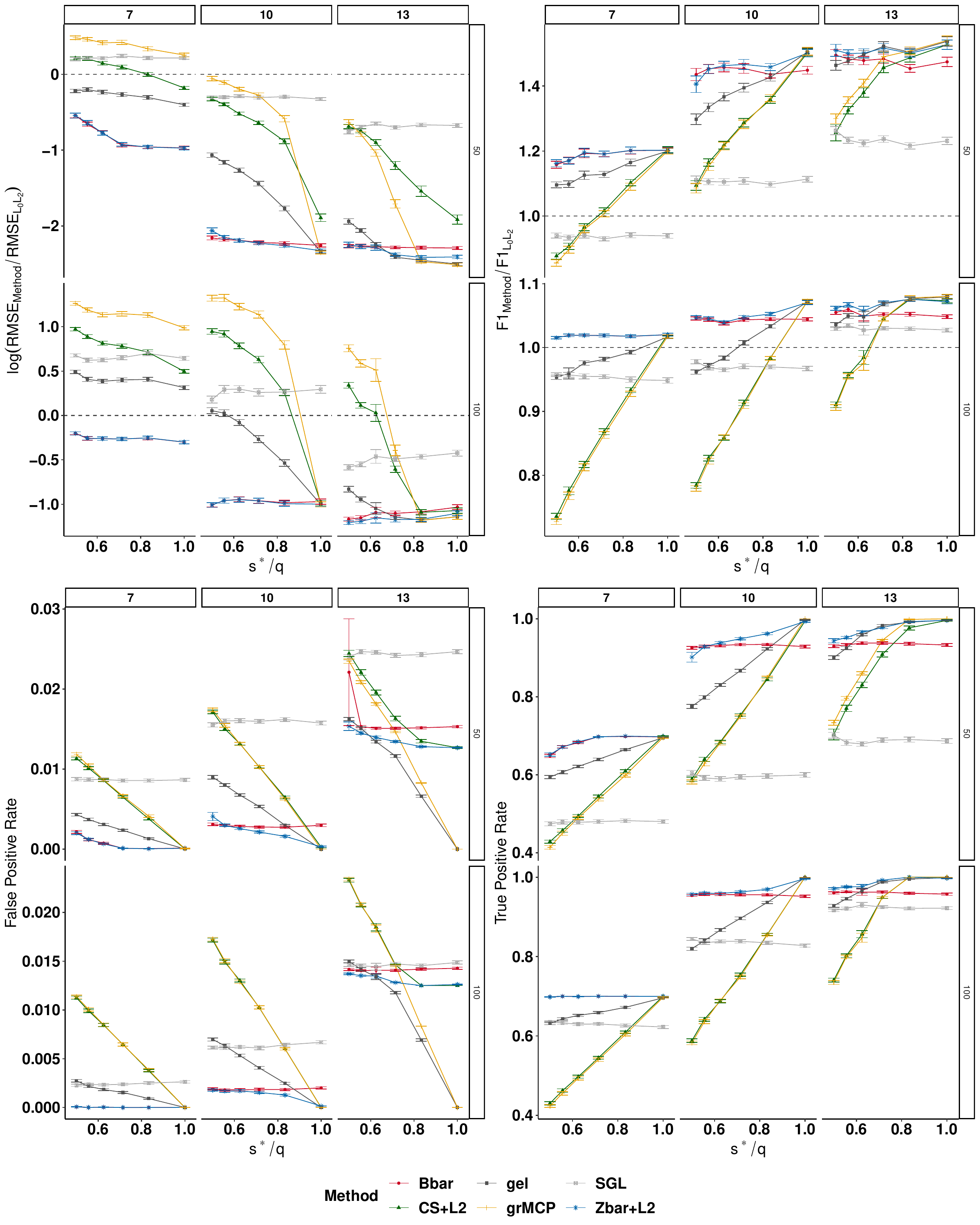} &
	\end{tabular}
	\caption{\footnotesize Prediction performance (RMSE) [Top Left], F1 score of support recovery [Top Right], false positive rate [Bottom Left], and true positive rate [Bottom Right] averaged across tasks for different $n_k$ (rows), support sizes (columns),
		and levels of support homogeneity of the true (simulated) model  ($s^*/q = 1$ indicates that all tasks had identical support). 
		Lower RMSE and higher F1 scores indicate superior (relative) performance. Data were simulated with support size  $s^*=10$, and covariate correlation parameter $\rho = 0.5$. }
	\label{fig:sims_f1_rmse}
\end{figure*}

\subsection{An Illustration of Coefficient Estimation} \label{coef_supp}

In Figure~\ref{fig:fullSupp} we include the results of a greater set of methods on the example shown above in Figure~\ref{fig:introSupp}. To visualize the effect of the different methods, we simulated a dataset where $p = 50$, $\sigma^2 = 5$, $\rho = 0.5$, $\beta_{k,j} \in \{-1, 0, 1\}$, $s^* = 7$, and $n_k = 25$ for $k \in \{1,2\}$. We then tuned and fit models with the cardinality of the solution set to $s=7$ for all $\ell_0$ methods explored to ensure no differences in model performance were due to sparsity level tuning. All other simulation parameters were the same as those described in Section~\ref{sec:sim_design}. We compared regression coefficient estimates against their true values. The $L_0 L_2$ method only partially recovers the support, and the nonzero estimates are overly shrunk, likely due to the unfavorable aspect ratio, $p/n_k = 2$, and the fact that no information is shared across tasks. The CS+L2 approach exhibits poor estimation accuracy likely because the common support constraint is misspecified in this example. The Bbar method shrinks the $\boldsymbol{\beta}_k$ values towards each other and therefore performs poorly in this example where 1) the supports vary across tasks, and 2) the coefficient values that are nonzero in both tasks, for fixed $j$, have opposite signs (i.e., $\beta_{1,j} = -\beta_{2,j}$). While this example is simulated in a way that is especially challenging for methods that borrow strength across the $\boldsymbol{\beta}_k$ values, the example provides a useful illustration of when the Zbar methods would be expected to outperform other methods. We show in a neuroscience application that the support heterogeneity and sign changes in regression coefficients across tasks simulated here do arise in practice.

\begin{figure*}[t] 
	\centering
	\begin{subfigure}[t]{0.4\textwidth}
		\centering
		\includegraphics[width=0.8\linewidth]{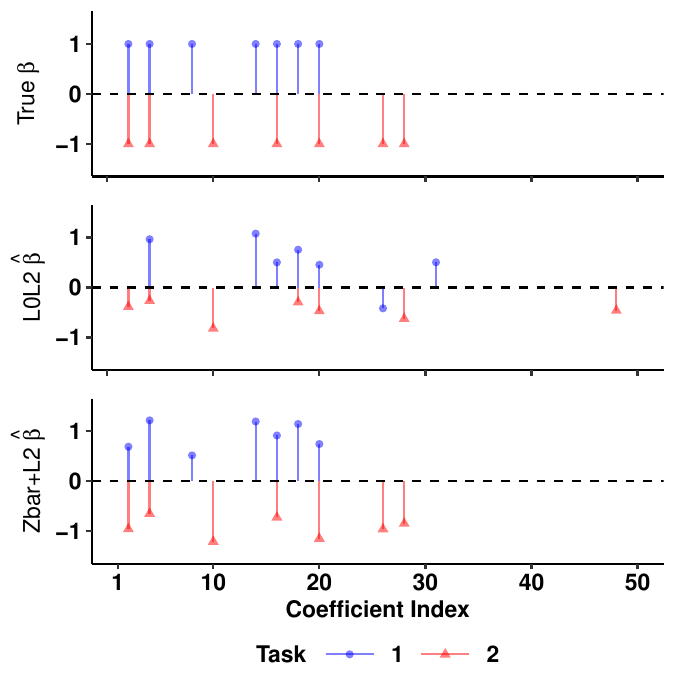}
		\label{fig:SHR_c}
	\end{subfigure}
	\begin{subfigure}[t]{0.4\textwidth}
		\centering
		\centering
		\includegraphics[width=0.8\linewidth]{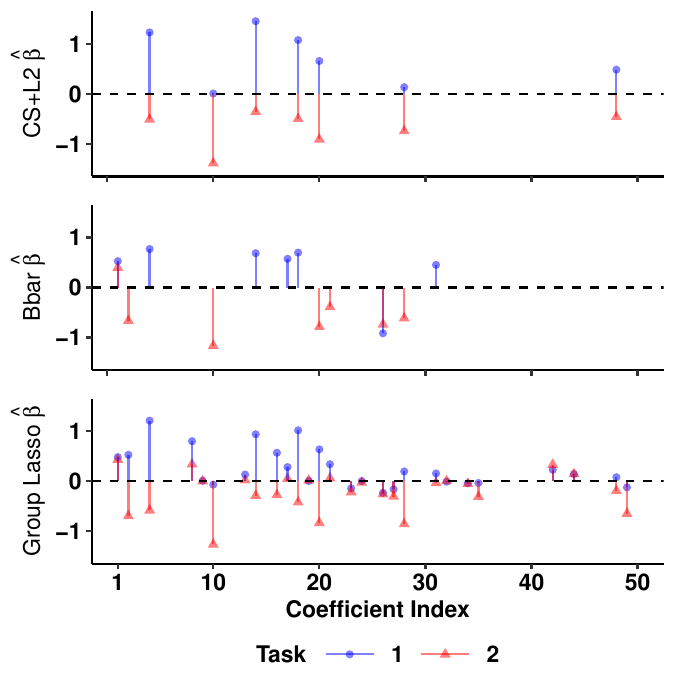}
		\label{fig:SHR_d}
	\end{subfigure}
	\caption{\footnotesize The Zbar+L2 method improves support recovery and coefficient estimate accuracy. The top left panel shows the true $\boldsymbol{\beta}$. Other panels show coefficient estimates. Color and symbol indicate task index.}
	\label{fig:fullSupp}
\end{figure*}

\subsection{Optimization Algorithms Simulations}\label{sec:opt-sim}
\subsubsection{Experimental Setup}
Most of our setup is similar to the one in Section~\ref{sec:sim_design}. We set $K=5$ and for $i\in[n],k\in[K]$, we set $y_{k,i}=\mathbf{x}_{k,i}^T\B\beta_k + \epsilon_{k,i}$. For $k\in[K]$, We draw $\mathbf{x}_{k,i}\overset{\text{iid}}{\sim} \mathcal{N}_p(\B 0,\B\Sigma)$ where $\B \Sigma$ follows an exponential correlation form with correlation $\rho$. 
We also let $\epsilon_{k,i}\sim\mathcal{N}(0,\sigma_k^2)$ where we set $\sigma_k^2$ to match the desired signal-to-noise ratio (SNR) that we take to be the same for all tasks, $SNR=\|\bX_k\B{\beta}_k\|_2/\|\B\epsilon_k\|_2$.
Next, we explain how we draw $\B\beta_k$. To this end, we fix the support size to $s^*$ and we consider a setup where the coordinates $\{1,\cdots,s^*-1\}$ are common among the supports of all the $\B\beta_k$'s, simulating a setup with approximately common support. We take the last coordinate in the support of $\B\beta_k$ to be $s^*+k$.
We draw the nonzero $\beta_{k,i}$'s independently from $\text{Unif}(0,1)$. Finally, we normalize $\B\beta_k$ to have $\|\B\beta_k\|_2=1$.  We study the case with $\lambda=0$, i.e., we only consider the Zbar penalty in~\eqref{suppHet}. Experiments in this section were done on a machine equipped with Intel Xeon 8260 CPU with 32 GB of RAM.

\subsubsection{Scalability of the exact solver}\label{sec:scalability}
To illustrate the performance of our custom exact solver (from Section~\ref{app:mip-solver}) for Problem~\eqref{suppHet_mip}, we compare it to Gurobi, a state-of-the-art commercial MIP solver. In particular, we consider Problem~\eqref{suppHet_mip} with Zbar and L2 penalties, with $\lambda=0$. We set $n=300,~SNR=100$ and $\rho=0.4$. We used the solutions from the approximate methods (applying both block CD and local search) as a warm start for both solvers. 

We consider two values of $s^* \in \{5,6\}$ and report the runtime of exact solvers in Table~\ref{table-mip}. Particularly, we report the runtime to reach less than optimality gap $\text{OG}=5\%$ (see Section~\ref{outersec}). We report the final optimality gap if the method does not reach a $5\%$ gap in 400 seconds.

\begin{table*}[t!]
	\small
	\centering
	\begin{tabular}{ c|cc|cc }
		\midrule
		\midrule
		& \multicolumn{2}{c|}{$s=5$} & \multicolumn{2}{c}{$s=6$}\\
		& Outer Approximation  & Gurobi & Outer Approximation   & Gurobi \\
		& (Our custom Algorithm) & &  (Our custom Algorithm) & \\
		\midrule
		\midrule
		$p=100$& $2.4\pm 0.4$ & $28.8\pm14.1$  & $8.1\pm 3.0$  & $32.2\pm14.9$ \\
		$p=200$& $4.5\pm 1.2$ &  $164\pm 64$ & $23.4\pm 3.8$ & $(6.9\%)$ \\
		$p=500$& $20.4\pm 5.9$&  - & $51.3\pm 13.8$ & -\\
		$p=1000$& $121\pm 31$ &  - & $184\pm 15$ & - \\
		$p=1500$& $146 \pm  46$ &  -  & $326\pm56$  & - \\
		$p=2000$& $198\pm76$ &  -  & ($20.6\%$) & - \\
		$p=2500$& $254\pm 83$ &  -  & ($33.2\%$) & - \\
		\hline
	\end{tabular}
	
	\caption{\footnotesize Comparison of the average runtime (in seconds) across 10 replicates ($\pm$ standard error) of our tailored outer approximation method with an off-the-shelf solver (Gurobi) from Section~\ref{sec:scalability}.  The numbers in parenthesis show the optimality gap after 400 seconds, if the optimality gap is not less than 5$\%$. A `--' (dash) indicates no valid lower bound was returned by Gurobi after 400 seconds. }
	\label{table-mip}
\end{table*}

Table~\ref{table-mip} shows that Gurobi struggles to obtain optimality certificates for problems with $p\approx 200$. On the other hand, our custom algorithm is able to obtain and certify optimal solutions for problems with $p=2500$ in minutes. We note that the feasible set of Problem~\eqref{suppHet_mip} is of size $\approx {p \choose s}^K$. This implies that a simple exhaustive search over all feasible solutions for largest examples we discuss would require visiting ${2500 \choose 5}^5\approx 10^{74}$ instances. Interestingly, our 
proposed specialized algorithm can deliver optimality certificates for such large problems in minutes. This demonstrates the usefulness of our exact solver. 

Importantly, these experiments also demonstrate the quality of our approximate algorithms. Indeed, in all replicates in these experiments, the solutions returned from the exact solver and our approximate algorithms were the same. This shows that our approximate methods are able to deliver high-quality (near)-optimal solutions although they are unable to certify the optimality of the solution (via dual bounds). 
This observation, together with our theory for approximate solutions (cf Section~\ref{stats_approx}), supports the strong statistical performance of our approximate algorithms in practice.

Our approximate algorithm is quite efficient in addressing problems with $pK$ in tens of thousands.
The algorithm (i.e., both block CD and local search together) took 3 (for $p=100$) to 10 seconds (for $p=2500$) to converge when fit with tuned values of hyperparameters. Our algorithms are 
therefore applicable in settings 
where it is useful to compute solutions fast, and optimality certificates (via dual bounds) are not of primary need.

\subsubsection{Effectiveness of local search}\label{app:local-experiments}
Next, we study how our local search can improve the quality of the solutions from our block CD method. To this end, we run our approximate framework, with and without local search. Then, we compare the quality of the solutions, as well as the runtime of methods. Here, we set $p=250,s^*=10$ and $SNR=10$. In this section, we study Zbar+L2.

We report the results for these experiments in Figure~\ref{fig:ls-new}. In particular, we report the RMSE on a test set of size $n=300$, which is drawn independently from the same distribution as the training data. The value of RMSE is normalized by the RMSE of the $L_0 L_2$ estimator. We also report the runtime normalized by the runtime of $L_0 L_2$. We consider performance with and without local search (i.e., just block CD method). 

Figure~\ref{fig:ls-new} shows that in all cases, using local search reduces the test RMSE, compared to not using it. In fact, for large $n$, the Zbar solutions without local search can perform worse than the $L_0 L_2$ with local search. However, when applying local search, our Zbar estimator always outperforms the $L_0 L_2$. Importantly, local search does not increase total runtime by much. Together, these results show that local search quickly improves the quality of the solutions using our approximate algorithms.

\begin{figure*}[t!]
	\centering
	\begin{tabular}{cccc}
		\multicolumn{2}{c}{$\log(\text{RMSE}/\text{RMSE}_{L_0L_2})$} & \multicolumn{2}{c}{$\log(\text{Time}/\text{Time}_{L_0L_2})$}\\
		$\rho = 0.4$ & $\rho=0.5$ & $\rho = 0.4$ & $\rho=0.5$ \\
		\includegraphics[width=0.23\linewidth]{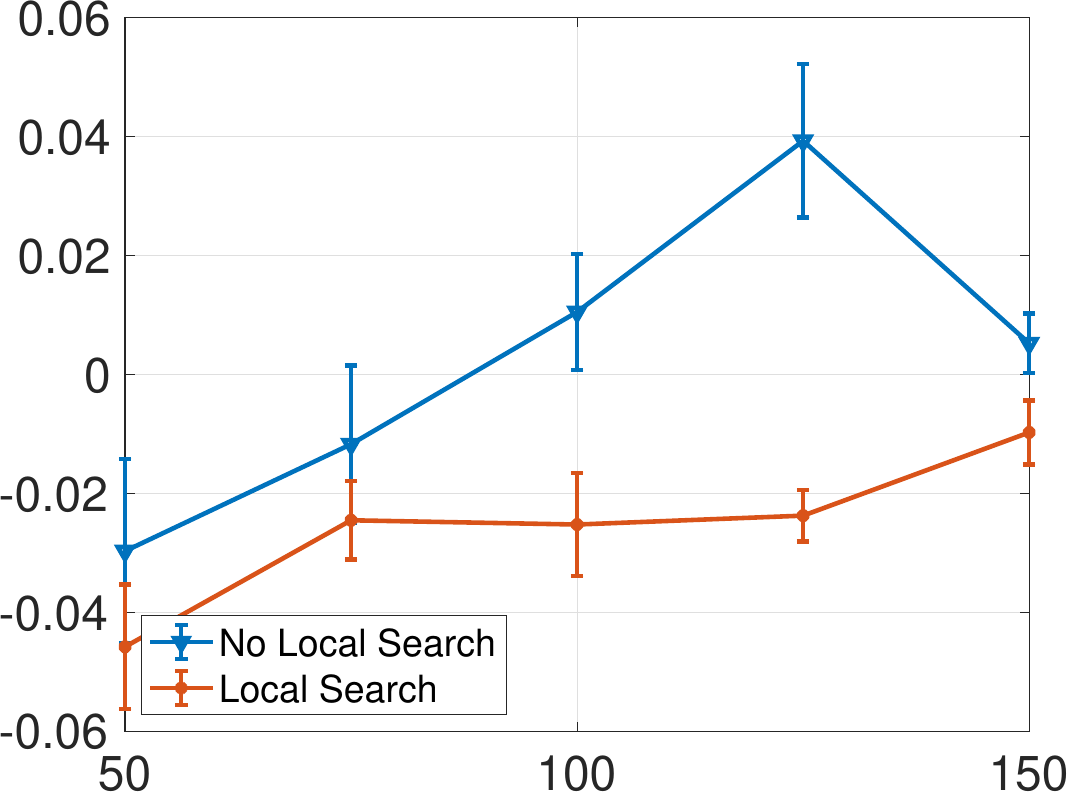}& 
		\includegraphics[width=0.23\linewidth]{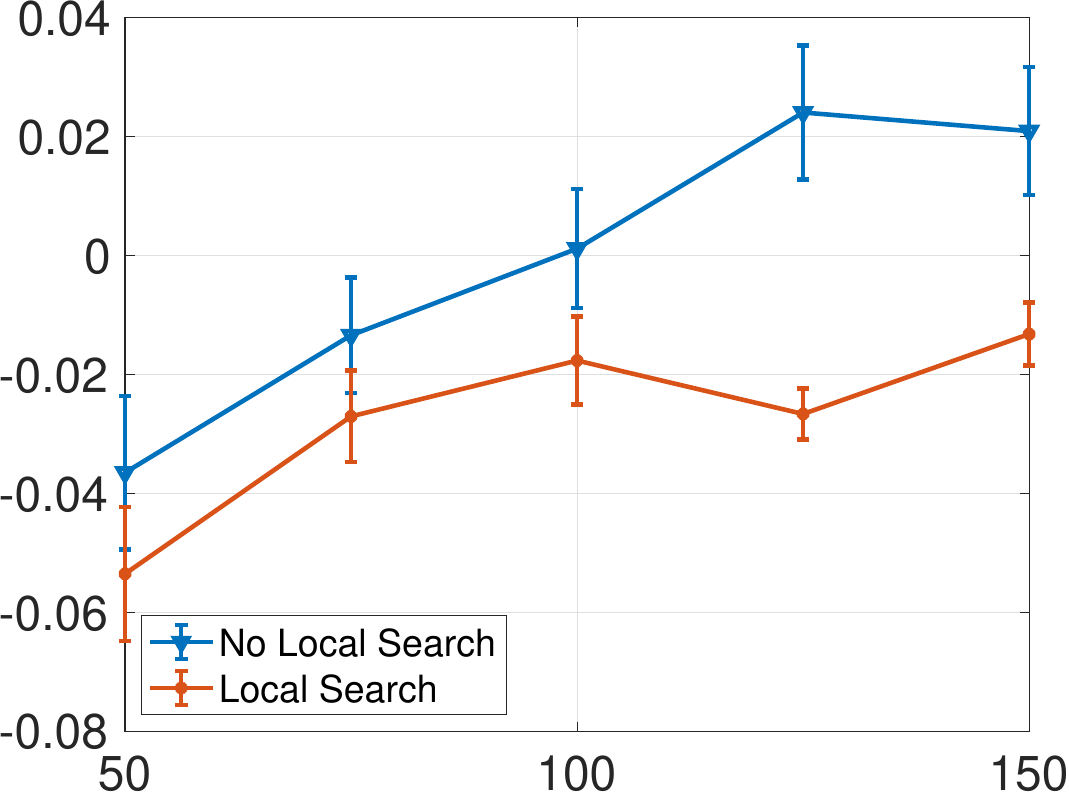} & \includegraphics[width=0.23\linewidth]{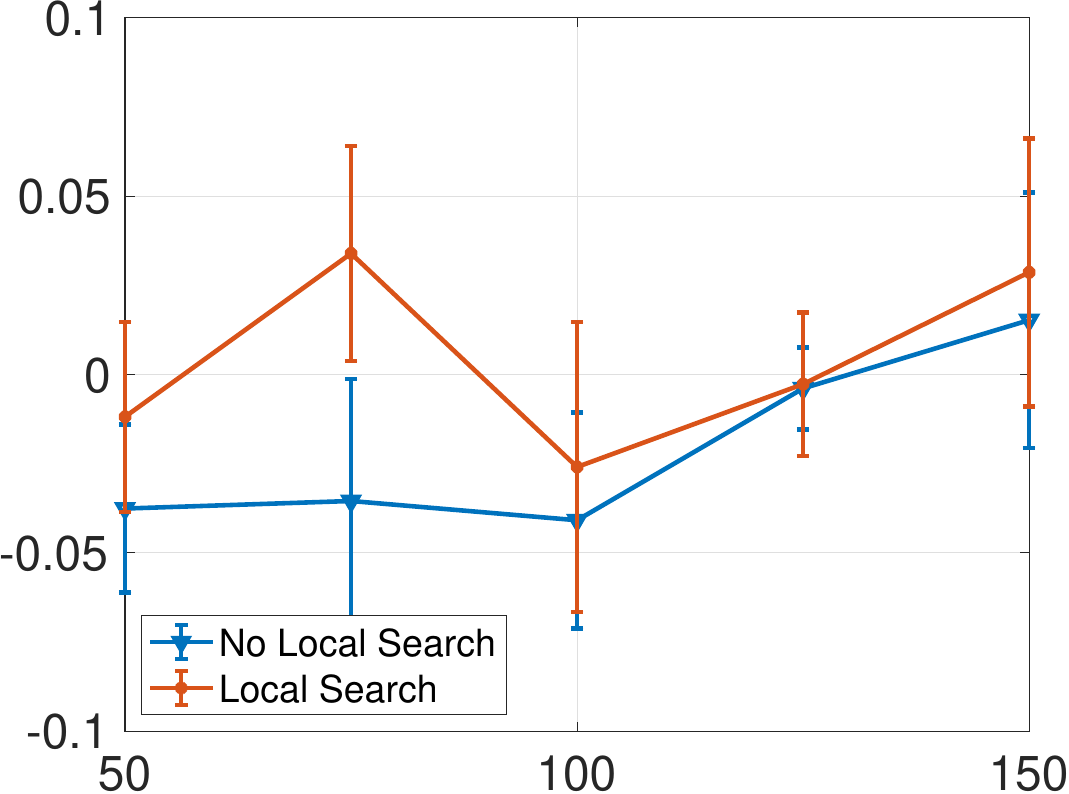}& 
		\includegraphics[width=0.23\linewidth]{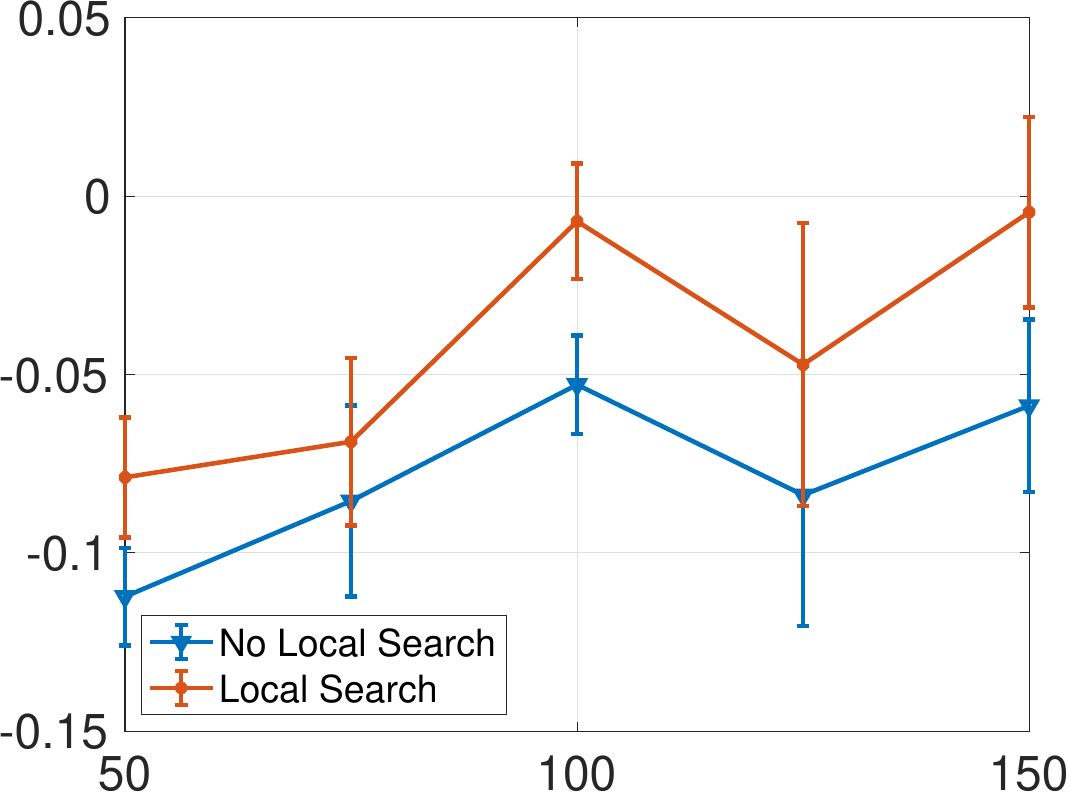} \\
		$n$ & $n$ & $n$ & $n$
	\end{tabular}
	
	\caption{\footnotesize Local search improves solution quality without substantially increasing algorithm runtime. Two left panels compare out-of-sample prediction performance (RMSE) of solutions delivered with (local search and block CD) and without (block CD only) local search at two levels of covariate correlation levels. These show that adding local search reduces the RMSE. Two right panels compare the runtime of the algorithm with and without local search. Local search does not substantially increase runtime.}
	\label{fig:ls-new}
\end{figure*}

\section{Real Data Applications} \label{data_apps}

We explore the performance of our methods in two multi-task biology settings to showcase different properties of the proposed estimators. First we explore performance in a chemometrics neuroscience application to demonstrate the method's suitability to compression settings. This application includes a collection of datasets, where each dataset is treated as a separate task. Each task was collected with a different recording electrode. Exploratory analyses suggested that the support heterogeneity regularization provided by the Zbar penalty could improve prediction performance in these data because regression coefficient estimates fit separately to each task differed substantially in both their support and values. In the second application, we explore a ``multi-label'' setting where the design matrix is fixed but the outcome differs across tasks (i.e., a multivariate outcome). Interpretability of model coefficients is of key importance in this application because the covariates are expression levels of genes in a network.
\subsection{Neuroscience Application}\label{fscv_modeling}
\paragraph{Dataset Background}
Studying neurotransmitters (e.g., dopamine), that serve as chemical messengers between brain cells (neurons) is critical for developing treatments for neurological diseases. Recently, the application of Fast Scan Cyclic Voltammetry (FSCV) has been used to study neurotransmitter levels in humans \citep{Kishida, Kishida2011PLoSone}. The implementation of FSCV in humans relies on prediction methods to estimate neurotransmitter concentration based upon the raw electrical current measurements recorded by electrodes (a high dimensional time series signal). This vector of current measurements at a given time point can be used as covariates to model the concentration of a neurotransmitter. In vitro datasets are generated to serve as training sets because the true concentrations, the outcome, are known (i.e., the data are labelled). The trained models are then used to make predictions of neurochemical concentration in the brain. In practice, each in vitro dataset is generated on a different electrode, which we treat as a task here because signals of each electrode differ in the marginal distribution of the covariates and in the conditional distribution of the outcome given the covariates \citep{Loewinger, Bang, Kishida, Moran}. Given the high dimensional nature of the recordings, researchers typically apply regularized linear models \citep{Kishida}. Importantly, estimates from sparse linear models fit on each task separately ($L_0 L_2$) exhibit considerable heterogeneity in both the coefficient values and support as can be seen in Figure \ref{fig:suppHet_fig}. For these reasons, we hypothesized that multi-task methods that employ regularizers that encourage models to share information through the $\boldsymbol{\beta}_k$ values (e.g., the Bbar method, group penalties) would perform worse than methods that share information through the supports, $\B{z}_k$ (i.e., the Zbar methods). We provide a more detailed description of this application in Supplement~\ref{neuro_background}. 

We explored the performance of our methods at different values of $n_k$ and $p$. This is motivated by the fact that FSCV labs have reported fitting models with only a subset of the covariates \citep{randomBurst} as the covariate values are on the same scale and units, and contain redundant information. This allows us to characterize the performance of our methods at different aspect ratios of $n_k/p$.
\paragraph{Modeling}\label{fscv_modeling_paragraph} 
In order to assess out-of-sample prediction performance as a function of $n_k$ and $p$, we repeated the following 100 times. For each replicate, we randomly selected with uniform probability $K = 4$ out of the total 15 datasets (tasks) to train models on. For each set of tasks, we split the datasets into training and testing sets and fit models on each task's training data. We used the fit for task $k$, $\hat{\boldsymbol{\beta}}_k$, to make predictions on the test set of task $k$ and estimated performance as described above. 
Since we intend to obtain sparse solutions from our methods, we consider solutions from benchmark and $\ell_0$ methods having similar sparsity levels. Specifically, we tuned the hyperparameters of the benchmark group penalty methods (e.g., gel, cMCP, SGL) to the best cross-validated values that produced solutions of cardinality no greater than $s$. This is because if no cardinality constraints were imposed on the benchmark methods, cross-validation tended to select regularization hyperparameter values that resulted in coefficient estimates with large support sizes. As expected, across all methods, dense solutions tended to yield better prediction performance than sparse estimates. 
We provide additional details about dataset pre-processing in Supplement~\ref{neuro_modeling}. 
\paragraph{Results}
We first evaluated the hypothesis that sparse regression models fit separately to each task exhibited support heterogeneity. To characterize this observed pattern, we tuned and fit an $L_0 L_2$ on four of the 15 available tasks (in vitro datasets) with sparsity $s = 50$ and plotted the coefficient estimates in Figure \ref{fig:suppHet_fig}. We only inspected one set of tasks to plot in order to avoid biases. The support of the estimates differed markedly between tasks and for some covariates, coefficient estimates were positive for some tasks and negative for others. The values of the nonzero coefficients varied so greatly across tasks that we had to use a log-transformation to visualize them all on the same scale. We also tuned and fit Zbar+L2, gel and Bbar models to visualize the impact of the different methods on coefficient estimate values and supports (Figure \ref{fig:suppHet_fig}). These methods estimated coefficients with very different levels of support heterogeneity. To summarize this, we calculated a measure of support homogeneity, which we defined as $\binom{K}{2}^{-1} p^{-1} \sum_{j=1}^p \sum_{k=1}^K \sum_{l < k} \hat{z}_{k,j} \hat{z}_{l,j}$ (more details are provided in Supplement~\ref{performance}). The support homogeneity measures for the Zbar+L2, gel, and Bbar were over 3.5, 21, and 4 times that of the $L_0 L_2$, respectively (higher values indicate greater support \textit{homogeneity}).  We include a figure that shows the support across the full vector of covariates in Supplemental Figure~\ref{fig:SHR_supp}. 

\begin{figure*}[!t]\centerline{\includegraphics[width=0.65\linewidth]{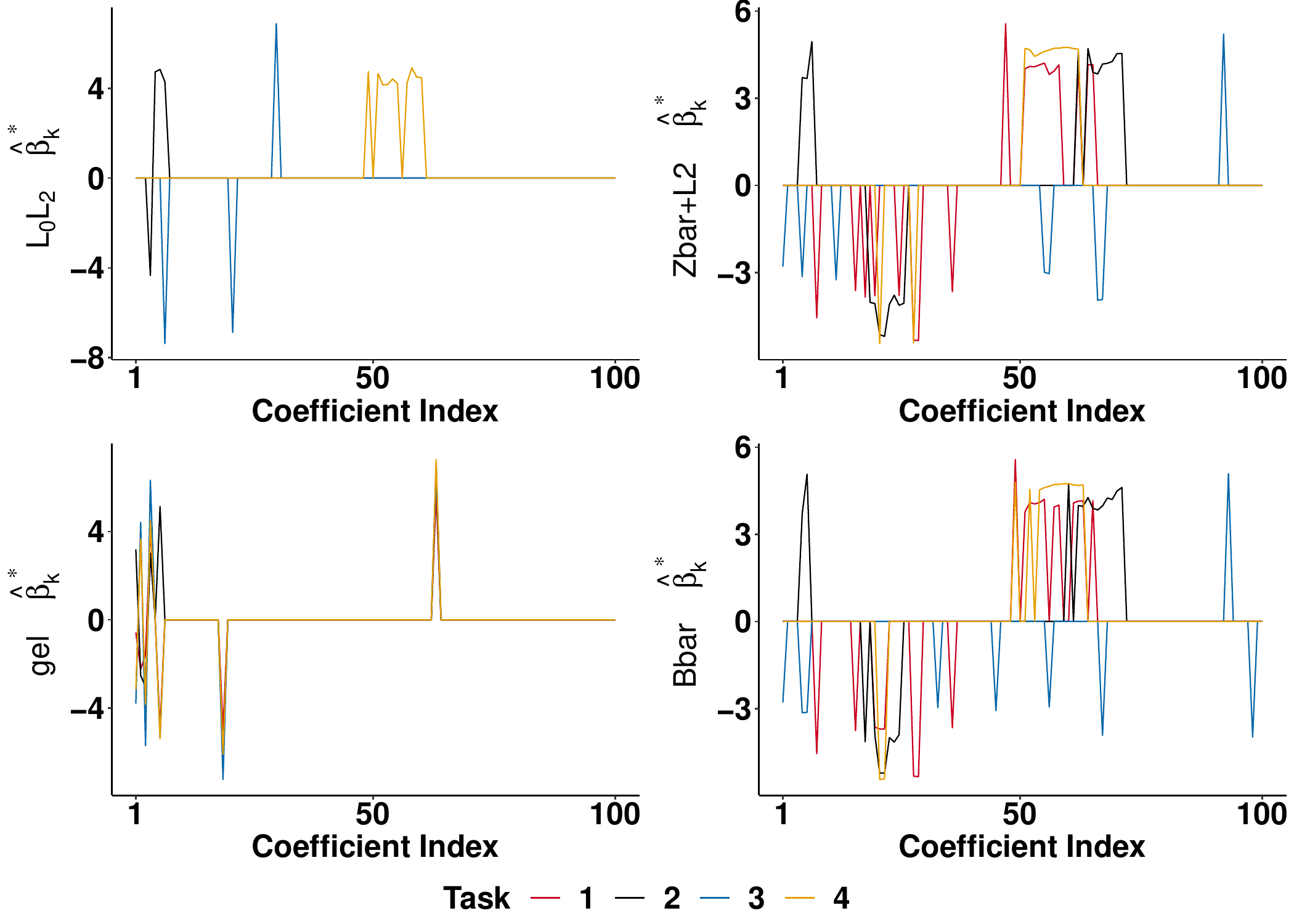}}
	\caption{\footnotesize Coefficient estimates with $L_0 L_2$ [Top Left], Zbar+L2 [Top Right], gel [Bottom Left] and Bbar [Bottom Right]. To show the solutions on the same scale across tasks, we plot the nonzero elements of $\hat{\boldsymbol{\beta}}_k^* = sgn(\hat{\boldsymbol{\beta}}_k) \odot \log(|\hat{\boldsymbol{\beta}}_k|)$.}
	\label{fig:suppHet_fig}
\end{figure*}

\begin{figure*}[!t]
	\centering
	\begin{subfigure}[t]{0.99\textwidth}
		\centering
		\includegraphics[width=0.8 \linewidth]{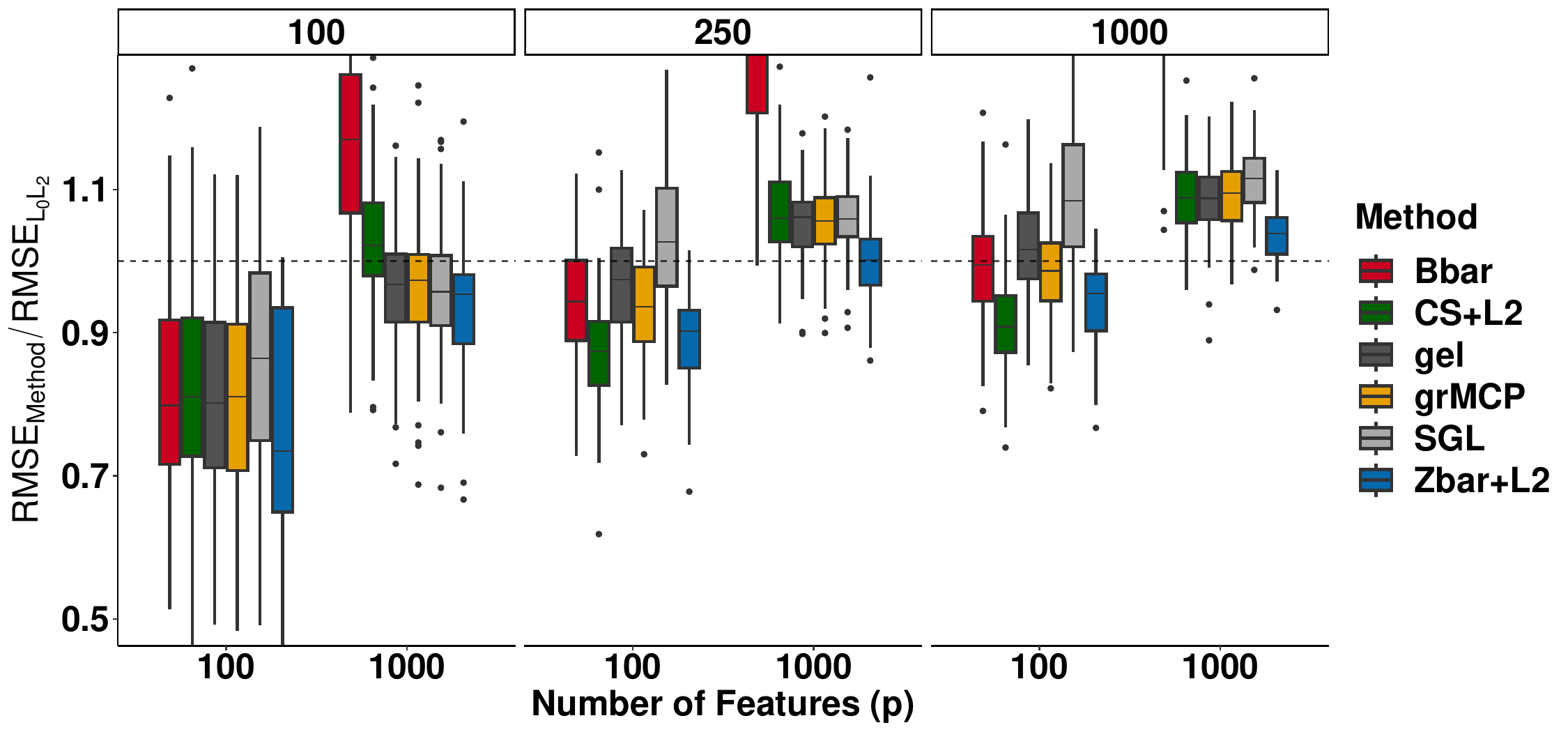}
	\end{subfigure}
	\caption{\footnotesize {Neuroscience application results.} Out-of-sample prediction performance (RMSE) averaged across random sets of tasks ($K=4$) for different $p$ (displayed on the x-axis) and $n_k$ (displayed on the panels).} 
	\label{fig:fscvRMSE}
\end{figure*}

We next study the out-of-sample prediction performance of our methods as a function of $n_k$ and $p$ averaged across tasks. We show boxplots in Figure \ref{fig:fscvRMSE} for a subset of methods at $s=25$ and results for other methods and sparsity levels in Supplemental Figures~\ref{fig:fscvRMSE_supp1} and \ref{fig:fscvRMSE_supp2}. The relative benefit of applying MTL methods, over separate $L_0 L_2$ regressions, tended to be greater at lower sample sizes, $n_k$, and higher sparsity values, $s$. 
The relative performances of the MTL methods (i.e., relative to each other) appear to be fairly consistent across the different sparsity levels. 

The results in Figure \ref{fig:fscvRMSE} illustrate the benefits of sharing information through the supports. Indeed, the Zbar+L2 remained competitive with or outperformed all methods that borrow strength directly through the $\boldsymbol{\beta}_k$ values (i.e., Bbar, gel, SGL). For example, the Bbar and group penalty methods performed well only when both $p$ and $n_k$ were low, but their relative performance degraded as those parameters increased. In fact, when $p$ was high, the Bbar method performed substantially worse than all other methods, even the $L_0 L_2$. Although the CS+L2 remained competitive with the Zbar+L2 for low $p$, the Zbar+L2 outperformed all other $\ell_0$ methods in most settings, emphasizing that the gains from our estimator were not only a function of the $\ell_0$-constraint in our proposed estimator. In sum, these results suggests that in settings with high between-task heterogeneity in coefficient values and supports (as suggested by Figure \ref{fig:suppHet_fig}) prediction performance may be improved more by borrowing strength across tasks through the coefficient \textit{supports}, rather than through the coefficient \textit{values}.

\subsection{Cancer Genomics Application}
\paragraph{Dataset Background}
We next explore the performance of our method in a cancer biology application to highlight our proposed method's strength in high-dimensional genomics settings where interpretability is critical. This also serves as an important point of comparison because unlike the neuroscience dataset, one might expect common support methods to perform well in this application given that the tasks are defined by different gene expression levels from the same network of highly correlated genes. We build off of work by \cite{deVito} focused on a multi-study dataset of breast cancer gene expression measurements. In cancer biology, the elucidation of regulatory gene network structure is critical for the characterization of the biology of disease subtypes and potentially the development of targeted therapeutics. As such, we sought to jointly predict the expression levels of a group of highly co-expressed genes (i.e., as the outcome variables) using the expression levels of all other genes in the dataset as covariates. In view of the importance of the Estrogen Receptor (ESR1) in breast cancer development and therapy, we selected the expression levels of ESR1 and co-regulated genes FOXA1, TBC1D9, GATA3, MLPH, CA12, XBP1 and KRT18 to serve as the outcome for each task as these were found to be highly inter-connected in \cite{deVito}.
Specifically, we specified the genes that form our collection of task-specific outcome variables based upon their central location in clusters identified in an estimated gene co-expression network \citep{deVito}. 
We selected the final set of genes before fitting any models to avoid biases. 
\paragraph{Modeling}
The dataset is comprised of data from 18 studies. In order to assess out-of-sample prediction performance, we implemented a hold-one-study-out testing procedure, whereby we iteratively fit models for each of the tasks on 17 of the datasets and tested prediction accuracy of those tasks jointly on the held-out study. To explore prediction performance as a function of $K$ and $p$, we ran 100 replicates of the following experiments: we randomly selected, with uniform probability, 1) $K$ of the eight possible outcome variables to serve as tasks, 2) $p$ of the total covariates, and 3) one of the studies to serve as a held-out test set. For each replicate, we fit all models and compared performance. We considered all combinations of $K \in \{2,4,6\}$ and $p \in \{100,200,1000\}$. We constrained all methods to provide estimates with sparsity level $s \leq 50$ and tuned $\ell_0$ methods with local search. 
All other hyperparameter selection details are described in Section~\ref{fscv_modeling_paragraph}. We show additional results for other methods and sparsity levels in Supplemental Figures~\ref{fig:cancerRMSE_supp} and \ref{fig:cancerRMSE_supp1}. As the original dataset had over 10,000 covariates, we first reduced the dimension of the covariates to less than 2,000 through using only those covariates associated with coefficients selected in a Group Lasso model fit to the full set of covariates. We then fit models on a set of covariates randomly selected from this smaller dataset.

\paragraph{Results}
\begin{figure*}[t!]
	\centerline{
		\includegraphics[width=0.8 \linewidth]{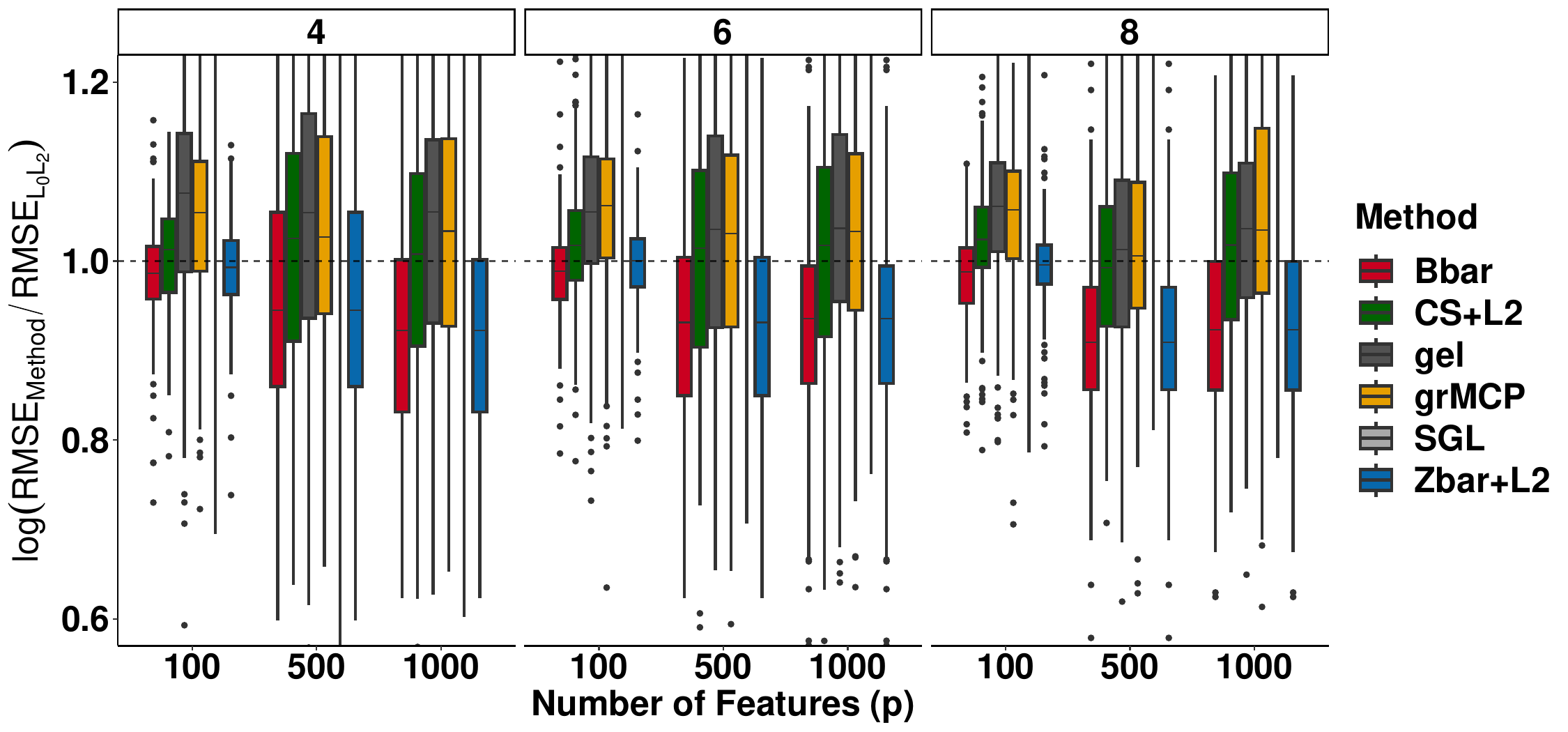}
	}
	\caption{ \footnotesize {Cancer application results.} Hold-one-study-out prediction performance of $\ell_0$ methods averaged across tasks for different $K$. The performance (RMSE) is presented relative to the performance of the $L_0 L_2$ method.} 
	\label{fig:cancerRMSE}
\end{figure*}

We show boxplots of relative prediction performance for a subset of methods in Figure \ref{fig:cancerRMSE} and for additional methods in Supplemental Figure \ref{fig:cancerRMSE_supp2}. Similar to the neuroscience data, the relative benefit of applying MTL methods, over separate $L_0 L_2$ regressions, tended to be greater at higher sparsity values, $s$. The relative performances of the MTL methods (i.e., relative to each other) were, however, fairly consistent across the different sparsity levels. We observed that the Zbar+L2 and Bbar methods consistently outperformed the benchmark methods (e.g., gel, SGL), as well as the common support $\ell_0$ methods, especially for higher $K$ and $p$. However, unlike in the neuroscience application, the Bbar method performed comparably to the Zbar methods. Interestingly, the relative performance of the Zbar+L2 and Bbar methods \textit{improved} as a function of~$p$. Conversely, the prediction accuracy (RMSE) of both the common support and benchmark group penalties \textit{decreased} with higher $p$.

\subsection{Application Comparison}

The comparison between the Zbar and the Bbar methods in these two applications sheds light on when a continuous penalty that shrinks the $\boldsymbol{\beta}_k$ \textit{values} together can perform as well as a combinatorial penalty that shrinks the $\boldsymbol{\beta}_k$ \textit{supports} together. While the two datasets differ in many ways, inspection of simple measures of the distribution of the $\hat{{\beta}}_{k,j}$ across tasks, $k$, reveal important insights. For each application, we fit task-specific Ridge regressions and compared rug plots of the $\hat{\beta}_{k,j}$ across tasks for the 10 most ``important'' covariates (Figure~\ref{fig:rugPlot}). Specifically, we selected the 10 $j$'s associated with the greatest average magnitude of coefficient estimates ($\frac{1}{K} \sum_{k=1}^K | \hat{\beta}_{k,j}|$). Importantly, the estimates in the cancer plots are all positive, have similar values and are relatively far from zero, potentially reflecting low heterogeneity in the coefficient signs, values and supports. The Bbar penalty might therefore be expected to perform well in this application. The neuroscience dataset estimates, on the other hand, exhibit considerable heterogeneity in coefficient values and signs. Moreover, for some covariates, a subset of the task estimates has large magnitudes, while the other task estimates are close to 0, potentially reflecting support heterogeneity. In cases like these, a penalty that encourages information sharing through coefficient supports may be preferable to penalties that share information by shrinking the $\boldsymbol{\beta}_k$ values together. These patterns may partially explain why the Zbar+L2 method strongly outperformed the Bbar method in the neuroscience application, but performed similarly to the Bbar method in the cancer genomics application.

Interestingly, the neuroscience coefficient estimates also tended to fall into clusters of similar patterns (e.g., coefficients 1, 5, and 10 comprise one cluster) on the rug plot. This may be because the covariates in the FSCV dataset exhibit a structure characteristic of functional covariates such as a natural ordering and similar levels of correlation with the outcome across adjacent covariates (see~\cite{Loewinger}). More extensive rug plots in Supplement~\ref{rugPlots} confirm that the patterns described are representative across a large set of covariates. 

Finally, we observed that the relative prediction performances were more variable in the cancer application than in the neuroscience application. This may be because each replicate displayed in Figure \ref{fig:cancerRMSE} was generated from fitting models using a different random subset of genes as task outcomes (i.e., a multi-label problem), whereas each replicate in the neuroscience data (Figure \ref{fig:fscvRMSE}) was calculated from fitting models on a different subset of electrodes as tasks, but using data with the same outcome type (i.e., dopamine concentration). The relative performance of any given multi-task method may be more variable across different \textit{subsets} of genes because each \textit{individual} gene may be better predicted by different methods. The variability in relative performance of most methods may be lower across different electrode subsets in the neuroscience application because the outcome was the same (dopamine concentration) in all tasks.

\begin{figure*}[t] 
	\centering
	\begin{subfigure}[t]{0.49\textwidth}
		\centering		\includegraphics[width=0.9\linewidth]{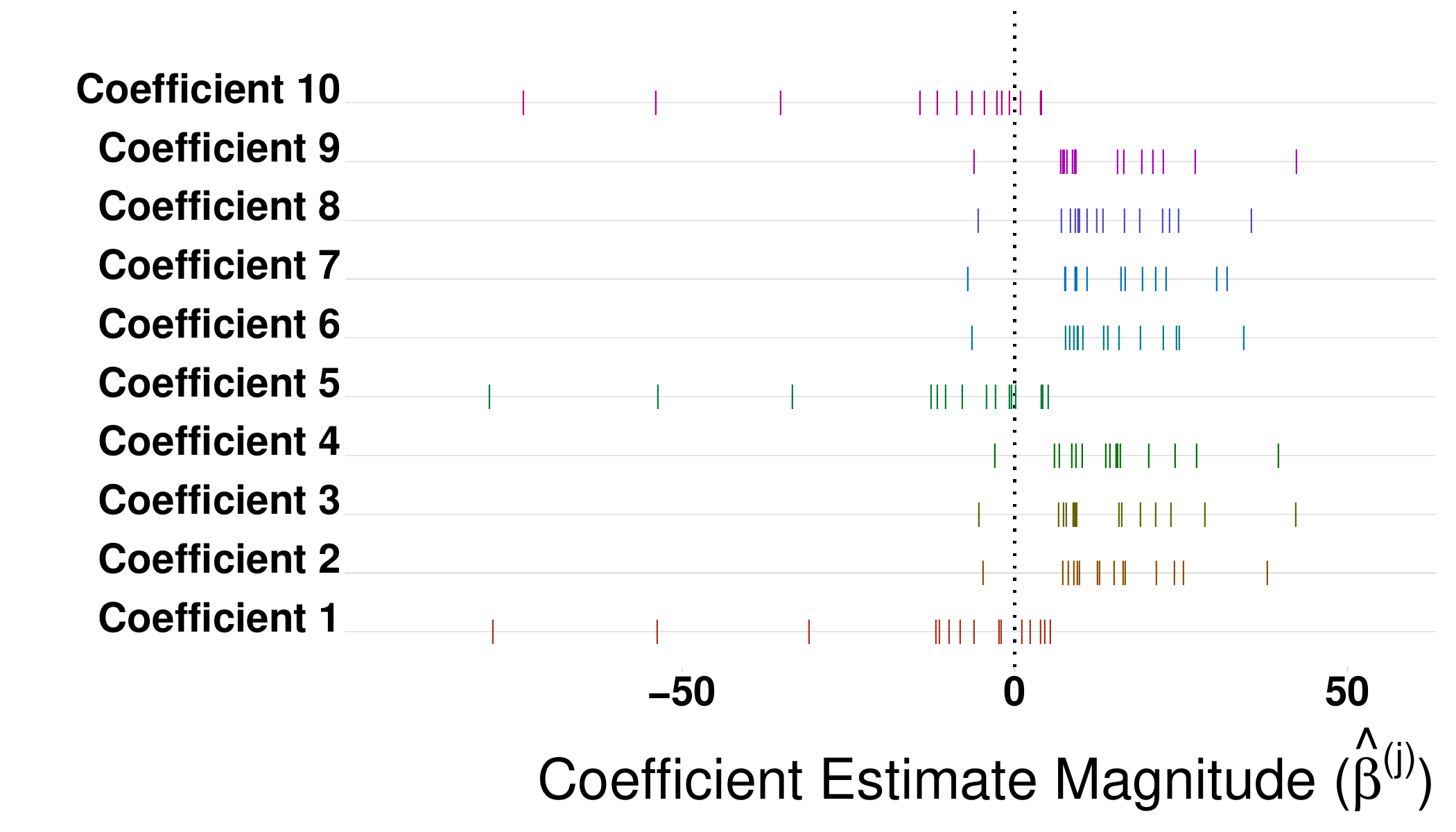}
	\end{subfigure}
	\hfill
	\begin{subfigure}[t]{0.49\textwidth}
		\centering
		\includegraphics[width=0.9\linewidth]{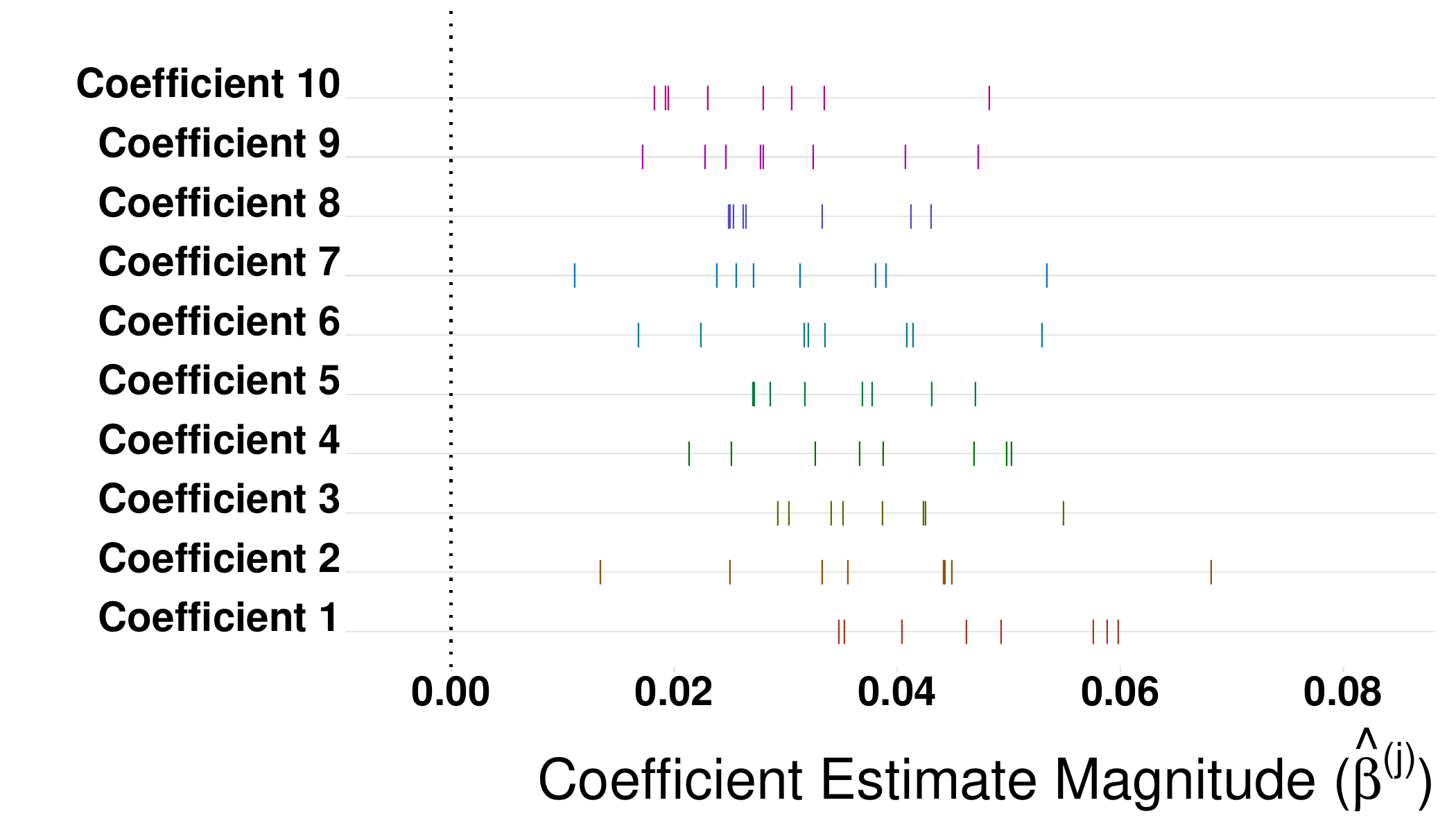}
		\label{fig:breastCancer_rugplot}
	\end{subfigure}
	\caption{\footnotesize Rug plots displaying $\hat{\beta}_{k,j}$ for the 10 $j$'s with the greatest average magnitude across the $K$ task-specific Ridge regressions ($\frac{1}{K} \sum_{k=1}^K | \hat{\beta}_{k,j}|$). For Coefficient $j$, each mark on the horizontal line is one of $K$ task-specific empirical estimates of the $\hat{\boldsymbol{\beta}}_{k,j}$. [Left] Neuroscience dataset. [Right] Breast Cancer dataset.}.
	\label{fig:rugPlot}
\end{figure*}

\section{Discussion}

We propose a class of estimators that penalize the heterogeneity across tasks in the support vectors $\boldsymbol{z}_k$. These methods allow for information to be shared directly through the support of regression parameters without relying on methods that shrink coefficient values together, an approach that can perform poorly when the coefficient values differ substantially between tasks. Our theoretical analysis shows that the Zbar penalty shrinks the supports together across tasks, improves support recovery accuracy when the true regression coefficients have (near) common support, and maintains statistically optimal prediction performance. We developed algorithms based on block CD and local search to obtain high-quality solutions to our estimator. We also developed a tailored exact solver for our estimator based on outer approximation. Our experiments on synthetic and real data show the usefulness of our method. Based on these numerical results, we recommend the Zbar+L2 over the Zbar+Bbar. Indeed, the Zbar+L2 can encourage models to borrow strength through the coefficient supports, and can improve prediction performance regardless of the distribution of the nonzero $\B{\beta}_k$ \textit{values}. Formally, as seen in Theorem~\ref{thm1} and Corollary~\ref{cor-general}, if the true regression coefficient values differ substantially between tasks, using a large Bbar penalty (i.e. large $\lambda$) can introduce additional prediction error. Nevertheless, our framework is flexible and allows users to combine multiple penalties in cases where borrowing strength across both coefficient \textit{supports} and \textit{values} is useful.
Taken together, the present work promises to be a useful methodological contribution to the MTL literature.

\section{Software and Reproducibility}
\label{sec5}

Code and instructions to reproduce analyses, figures and tables are available at: \url{https://github.com/gloewing/sMTL}. This contains code to tune and fit penalized linear regression problems with all the methods assessed here. Block CD and local search algorithms are coded in $\texttt{Julia}$ and are called from $\texttt{R}$ where simulations and data analyses are implemented. We will release the $\texttt{R}$ package, $\texttt{sMTL}$, that implements all sparse regression methods implemented here for multi-task learning, domain generalization, multi-label learning as well as standard sparse regression for single dataset/task settings.

\section*{Acknowledgments}
GCL conducted the majority of the work for this paper while a PhD student at the Harvard University Department of Biostatistics.
GCL was supported by the NIH-NIDA F31DA052153. GP was supported by NSF-DMS grants 1810829 and 2113707. RM acknowledges partial research support from NSF-IIS-1718258, and ONR N000142112841. KTK acknowledges support from NIH-NIMH R01MH121099, 
NIH-NIDA R01DA048096, 
NIH-NIMH R01MH124115, 
NIH 5KL2TR001420.
{\it Conflict of Interest}: None declared.

\bibliographystyle{biorefs}
\singlespacing
{\footnotesize \bibliography{refs}}

\newpage

\appendix

\numberwithin{equation}{section}
\numberwithin{lemma}{section}
\numberwithin{proposition}{section}
\numberwithin{figure}{section}
\numberwithin{table}{section}

\section{Exploring the convex relaxations of Zbar}\label{app:convexrelax}
As we discussed in Section~\ref{mip-app}, our estimator in~\eqref{suppHet} can be written as a MIP, as in Problem~\eqref{suppHet_mip}. We can obtain a convex (interval) relaxation of Problem~\eqref{suppHet_mip} by relaxing all binary variables $z_{k,j}$ to take values in the interval $[0,1]$, instead of taking binary values.
An interesting question is that how a convex relaxation of an estimator with Zbar penalty compares to the other estimators, specifically, with convex penalties. Particularly, one might ask if a convex relaxation of the Zbar penalty results in a penalty that resembles the Bbar penalty. To answer this question, let us study a special case of our estimator~\eqref{suppHet_mip} to gain more insights. Particularly, let us assume $n=p=1$ and $K=2$ and $\alpha=\lambda=0$ for simplicity. Also, to make the presentation of the results easier, we consider an $\ell_0$-penalized version of~\eqref{suppHet_mip}, instead of a cardinality constrained version:
\begin{align}\label{convex-1}
	\min_{\B\beta,\B z}&~~~ (y_1-\beta_1x_1)^2 + (y_2-\beta_2x_2)^2 + \frac{\delta}{2} (z_1-z_2)^2 + \lambda_0(z_1+z_2)\\
	\text{s.t.} & ~~~ \B z\in\{0,1\}^2, |\B\beta|\leq M\B z\nonumber
\end{align}
for some $\lambda_0>0$. Note that problem~\eqref{convex-1} can be written as
\begin{align}
	\min_{\B\beta} (y_1-\beta_1x_1)^2 + (y_2-\beta_2x_2)^2 + \psi_0(\B\beta)
\end{align}
where
\begin{align}\label{psi0}
	\psi_0(\B\beta) = \min_{\B z}  \frac{\delta}{2} (z_1-z_2)^2 + \lambda_0(z_1+z_2)~~
	\text{s.t.}  ~~ \B z\in\{0,1\}^2, |\B\beta|\leq M\B z.
\end{align}
Note that a convex (interval) relaxation of~\eqref{psi0} is given by relaxing the binary constraints $\B z\in\{0,1\}^2$ into interval constraints, $\B z\in[0,1]^2$. This results in
\begin{align}\label{psi1}
	\psi_1(\B\beta) = \min_{\B z}  \frac{\delta}{2} (z_1-z_2)^2 + \lambda_0(z_1+z_2)~~
	\text{s.t.}  ~~ \B z\in[0,1]^2, |\B\beta|\leq M\B z.
\end{align}
Suppose $M\geq \beta_1,\beta_2\geq 0$. Looking at~\eqref{psi1}, one might guess that at optimality of~\eqref{psi1}, we have $\B z\overset{?}{=}\B\beta/M$, which implies $\psi_1(\B\beta)\overset{?}{=}\phi(\B\beta)$ where
\begin{align}\label{phi-def}
	\phi(\B\beta) =  \frac{\delta}{2M^2} (\beta_1-\beta_2)^2 + \frac{\lambda_0}{M}(\beta_1+\beta_2).
\end{align}
If this hypothesis holds, then a convex relaxation of Zbar leads to the Bbar penalty, which is given by $\delta(\beta_1-\beta_2)^2/(2M^2)$ in~\eqref{phi-def}. However, below we show that this is not necessarily the case. Consider Proposition~\ref{prop-convex-relax}.
\begin{proposition}\label{prop-convex-relax}
	Suppose $M\geq\beta_1\geq\beta_2\geq 0$. Then, at optimality of~\eqref{psi1} we have
	$$z^*_1=\frac{\beta_1}{M},~~z^*_2=\max\left\{\frac{\beta_1}{M}-\frac{\lambda_0}{\delta},\frac{\beta_2}{M}\right\}.$$
	Moreover, if $0\leq \lambda_0\leq \delta(\beta_1-\beta_2)/M$,
	$$\psi_1(\B\beta)=-\frac{\lambda_0^2}{2\delta} + \frac{2\lambda_0 \beta_1}{M}.$$
\end{proposition}

Proof of Proposition~\ref{prop-convex-relax} is presented at the end of this section. We see that Proposition~\ref{prop-convex-relax} shows that in general, $\psi_1(\B\beta)\neq \phi(\B\beta)$. As an example, if $\beta_2=0$ and $0\leq \lambda_0\leq \delta\beta_1/M$, 
$$\psi_1(\B\beta) =   \frac{2\lambda_0 \beta_1}{M}-\frac{\lambda_0^2}{2\delta},~~\phi(\B\beta)=\frac{\delta\beta_1^2}{2M^2}+\frac{\lambda_0\beta_1}{M}.$$
As an example, let $\beta_1=M=1,\beta_2=0,\lambda_0=\delta/2$ which implies
$$\psi_1(\B\beta)=\frac{7\delta}{8}\neq \delta = \phi(\B\beta).$$
This shows that fundamentally, Zbar and Bbar penalties are different, and we are not aware of any work that discusses the convex relaxation of the Zbar penalty.

\begin{proof}[\textbf{Proof of Proposition~\ref{prop-convex-relax}}]
	First, we show that $z_2^*\leq z_1^*= \beta_1/M$. Suppose $\B z$ is feasible for~\eqref{psi1}, with $z_2>z_1$. Then, $(z_1,z_1)$ is also feasible for~\eqref{psi1} and we have 
	$$\frac{\delta}{2} (z_1-z_2)^2 + \lambda_0(z_1+z_2)\geq \frac{\delta}{2} (z_1-z_1)^2 + \lambda_0(z_1+z_1) = 2\lambda_0 z_1$$
	showing we have $z_2^*\leq z_1^*$. Next, Suppose $\B z$ is feasible for~\eqref{psi1} with $z_1>\beta_1/M$ and $z_1\geq z_2$. Then,  $(\max(\beta_1/M,z_2),z_2)$ is feasible for~\eqref{psi1} and 
	$$\frac{\delta}{2} (z_1-z_2)^2 + \lambda_0(z_1+z_2)> \frac{\delta}{2} (\max(\beta_1/M,z_2)-z_2)^2 + \lambda_0(\max(\beta_1/M,z_2)+z_2).$$
	Therefore, we must have $z_1^*=\beta_1/M$, which implies 
	\begin{equation}\label{prop-a-1-helper-1}
		z_2^*\in\argmin_{z_2} \frac{\delta}{2}(z_2-\beta_1/M)^2 + \lambda_0 z_2~~\text{s.t.}~~1\geq z_2\geq \beta_2/M.
	\end{equation}
	Setting the derivative of the objective in~\eqref{prop-a-1-helper-1} equal to zero, we get
	$$z_2=\frac{\beta_1}{M}-\frac{\lambda_0}{\delta}$$
	which results in $z_2^*$ as given in the proposition.

\end{proof}

\section{Outer Approximation Details}\label{app:oa-detail}
Here we discuss supplementary technical details pertaining to using outer approximation to solve Problem~\eqref{suppHet_mip}. We first consider the special case 
of $\lambda=0$ (see Section~\ref{app:oa-zbar}) which is the case presented in the main paper. We then present the general case $\lambda \geq 0$ (Section~\ref{app:oa-general}).

\subsection{Special Case of Zbar + L2}\label{app:oa-zbar}
Here, we discuss some properties of the functions $F_k(\cdot)$ defined in~\eqref{Freg2}. In particular, we show how subgradients of these functions can be calculated efficiently for sparse and binary $\B z_k$.
Proposition~\ref{subgradpers} shows that $F_k$ is convex and characterizes its subgradient. Before stating the proposition, for $k\in[K]$ and $\B z_k\in[0,1]^p$, let us define
\begin{align}\label{linregperstosolve}
	\bar{\B\beta}_{k}\in\argmin_{\B\beta_k} & ~ \frac{1}{n_k} \left\Vert \mathbf{y}_k-\bX_k\B\beta_k \right\Vert_2^2+\alpha\|\B\beta_k\|_2^2 \\  \text{s.t.}& ~~|\beta_{k,j}|\leq Mz_{k,j} ~\forall j\in[p].\nonumber
\end{align}
Moreover, let
\begin{equation}\label{alphak-def}
	\B\zeta_k=\frac{2}{n_k}\bX_k^T(\mathbf{y}_k-\bX_k\B\bar{\B\beta}_k)-2\alpha\bar{\B\beta}_k\in\R^p.
\end{equation}

\begin{proposition}\label{subgradpers}
	Let $F_{k}$ be as defined in~\eqref{Freg2}. The followings hold true:
	\begin{enumerate}
		\item (\textit{Convexity}) The function $\B z_k \mapsto F_k(\B z_k)$ on 
		$\B z_k \in [0,1]^p$ is convex.  
		\item (\textit{Subgradient}) 
		Let $\B\zeta_k$ be as defined in~\eqref{alphak-def}.
		The vector $\B g_k\in\mathbb{R}^p$ with $i$-th coordinate given by
		$$g_{k,i}= -M|\zeta_{k,i}|$$
		for $i\in[p]$, is a subgradient of $\B z_k \mapsto F_k(\B z_k)$ for $\B z_k\in\{0,1\}^p$.
	\end{enumerate}
\end{proposition}

The proof of Proposition~\ref{subgradpers} is presented at the end of this section. Note that $F_{k}$ in~\eqref{Freg2} is implicitly defined via the solution of a quadratic program (QP). For a feasible $\B z_k$ that has at most $s$ nonzero values,~\eqref{Freg2} is a QP with at most $s$-many variables. Therefore, calculating all values of $F_k$ for all tasks $k\in[K]$ requires solving $K$-many QPs with $s$-many variables.
This makes calculating $F_k$'s for (sparse) feasible binary $\B z_k$'s substantially faster compared to general dense $\B z_k$'s, which would require solving $K$-many QPs each with $p$-many variables. Moreover, we do not have a closed form expression for a subgradient of $F_{k}(\B z_k)$ (cf Proposition~\ref{subgradpers}), but the subgradient can be computed for $\B z_k \in \{0,1\}^p$ as a by-product of solving the QP in~\eqref{Freg2}.

\begin{proof}[\textbf{Proof of Proposition~\ref{subgradpers}}]
	{\bf Part 1)} The map appearing in the cost function of~\eqref{Freg2}, that is:
	\begin{equation}\label{Gconvex}
		(\B z_k,\B\xi_k,\B\beta_k) \mapsto H_k(\B z_k,\B\xi_k,\B\beta_k):=\frac{1}{n_k}\left\Vert \B\xi_{k} \right\Vert_2^2+\alpha \|\B\beta_k\|_2^2
	\end{equation}
	is jointly convex in $\B z_k,\B\xi_k,\B\beta_k$. 
	Let 
	\begin{equation*}
		\mathbb{S} = \bigg\{ (\B z_k,\B\xi_k,\B\beta_k): |\beta_{k,j}|\leq Mz_{k,j}~ \forall j\in[p],~
		\B\xi_{k} = \by_k-\bX_k\B\beta_k,~~\B z_z\in[0,1]^p\bigg\}\subseteq\R^{p}\times \R^{n}\times \R^{p}.
	\end{equation*}
	Note that $\mathbb{S}$ defined above is also convex. Let $\1_C(\cdot)$ denote the characteristic function of a set $C$, 
	$$\1_C(x) = \begin{cases} 0 &\text{if}~ x\in C \\
		\infty & \text{if}~ x\notin C. \end{cases} $$
	Note that if $C$ is a convex set, $\1_C(\cdot)$ is a convex function. With this notation in place, Problem~\eqref{Freg2} can be written as 
	\begin{equation}\label{Freg_chara}
		F_k(\B z_k)=\min_{\B\beta_k,\B\xi_k} ~~ H_k(\B z_k,\B\xi_k,\B\beta_k) +\1_{\mathbb{S}}(\B z_k,\B\xi_k,\B\beta_k)
	\end{equation}
	where $H_k$ is defined in~\eqref{Gconvex}. Based on our discussion above, the  function $(\B z_k,\B\xi_k,\B\beta_k) \mapsto H_k(\B z_k,\B\xi_k,\B\beta_k) +\1_{\mathbb{S}}(\B z_k,\B\xi_k,\B\beta_k)$ 
	is convex. As $F_k(\B z_k)$ is obtained after a marginal minimization of a jointly convex function over a convex set, the map $\B z_k \mapsto F_k(\B z_k)$ is convex on $\B z_k \in [0,1]^p$ \citep[Chapter 3]{boyd2004convex}. 
	
	\smallskip
	
	\noindent {\bf Part 2)} Note that as the objective function of \eqref{Freg2} is convex [see Part 1] and all its constraints are affine, strong duality holds for \eqref{Freg2} by  enhanced  Slater's condition (see \citet{boyd2004convex}, Section 5.2.3). 
	Next, we derive the dual of Problem~\eqref{Freg2}. 
	Considering Lagrange multipliers $\B\Lambda^{+},\B\Lambda^{-}\in\R^{p}$, $\B\gamma\in\R^{n_k}$   for problem constraints, the Lagrangian for this problem
	${\mathcal L}(\B\beta_k,\B\xi_k,\B\Lambda^{+},\B\Lambda^{-},\B\gamma)$ or ${\mathcal L}$ (for short),
	can be written as
	\begin{equation}\label{persconds1}
		\begin{aligned}
			\mathcal{L} = &   \frac{1}{n_k}\left\Vert\B\xi_k \right\Vert_2^2  +\alpha \|\B\beta_k\|_2^2+   \langle\B\Lambda^{+},\B\beta_k - M\B z_k\rangle -\langle\B\Lambda^{-},\B\beta_k + M\B z_k\rangle + \langle\B\gamma,\by_k - \bX_k\B\beta_k-\B\xi_k\rangle   \\
			= & \frac{1}{n_k}\|\B\xi_k\|_2^2 +\alpha \|\B\beta_k\|_2^2+ \langle\B\beta_k,-\bX_k^T\B\gamma+\B\Lambda^{+}-\B\Lambda^{-}\rangle -M\langle \B z_k,\B\Lambda^++\B\Lambda^-\rangle
			+\langle\B\gamma,\by_k -\B\xi_k \rangle.
		\end{aligned}
	\end{equation}
	By setting the gradient of the Lagrangian with respect to $\B\beta_k$, $\B\xi_k$ equal to zero, we get
	\begin{equation}\label{kkt}
		\begin{aligned}
			\frac{2}{n_k}\B\xi_k&=\B\gamma, \\
			2\alpha\B\beta_k+   \B\Lambda^+-\B\Lambda^- &= \bX_k^T\B\gamma.
		\end{aligned}
	\end{equation}
	Therefore, the dual of Problem~\eqref{Freg2} is given as
	\begin{equation}\label{freg-dual}
		F_k(\B{z}_k)=\max_{\B\Lambda\geq 0,\B\gamma} -\frac{n_k}{4}\|\B\gamma\|_2^2 -\frac{1}{4\alpha}\|\bX_k^T\B\gamma+\B\Lambda^--\B\Lambda^+\|_2^2 + \B\gamma^T\by_k - M \langle \B z_k,\B\Lambda^++\B\Lambda^-\rangle.
	\end{equation}
	For the rest of proof, we assume $\B z_k\in\{0,1\}^p$ as we focus on the subgradient for feasible binary solutions. First, by~\eqref{kkt}, we can see that at optimality of~\eqref{freg-dual},
	$$\B\Lambda^+-\B\Lambda^- := \B\zeta_k = \frac{2}{n_k}\bX_k^T(\by_k-\bX_k\bar{\B\beta}_k)-2\alpha\bar{\B\beta}_k$$
	where $\bar{\B\beta}_k$ is defined in~\eqref{linregperstosolve}.
	Upon inspection of~\eqref{freg-dual}, we can see that at optimality of~\eqref{freg-dual}, for $j\in[p]$ a possible choice of $\B\Lambda$ is 
	$${\Lambda}^+_{j} = \begin{cases}
		\zeta_{k,j}  &  \zeta_{k,j}\geq 0 \\
		0 & \text{otherwise}
	\end{cases},{\Lambda}^-_{j} = \begin{cases}
		-\zeta_{k,j}  &  \zeta_{k,j}< 0 \\
		0 & \text{otherwise},
	\end{cases}$$
	implying that with this choice of $\B\Lambda^+,\B\Lambda^-$ at the optimality of~\eqref{freg-dual}, 
	\begin{equation}\label{gradient-z_k}
		\Lambda^+_j+\Lambda^-_j=|\zeta_{k,j}|.
	\end{equation}
	In \eqref{freg-dual}, $F_k(\B z_k)$ is written as the maximum of a linear function of $\B{z}_k$, therefore, the gradient of the dual cost function w.r.t $\B z_k$ at an optimal dual solution, is a subgradient of $F_k(\B{z}_k)$ by Danskin's Theorem \citep{bertsekas1997nonlinear}. Finally, the gradient of cost in~\eqref{freg-dual} with respect to $z_{j,k}$ is given as $-M(\Lambda^+_j+\Lambda^-_j)$.
	
\end{proof}

\subsection{The General Case}\label{app:oa-general}

Next, we present our outer approximation solver for the general case of our estimator given by Problem~\eqref{suppHet_mip} where $\alpha,\lambda,\delta$ can be nonzero. The basic idea is similar to the previous case but the algebra gets more involved. We start by presenting a reformulation of Problem~\eqref{suppHet_mip} that takes the form:
\begin{align}\label{generalmi-general}
	\min_{\B{Z}, \bar{\B{z}}}& ~~~   F(\B Z)  +\delta \sum_{k=1}^K \|\B{z}_k-\bar{\B z}\|_2^2\\  \text{s.t.}& ~~  \B z_{k}\in\{0,1\}^p;~~   \sum_{j=1}^p z_{k,j} \leq s~\forall k\in[K]\nonumber
\end{align}
where, for any $\B Z\in[0,1]^{p\times K}$, we define the function $F(\B Z)$ as follows:
\begin{align}\label{Freg2-general}
	F(\B Z)=\min_{\mathbb{B},\bar{\B\beta}} &\quad \sum_{k=1}^K  \frac{1}{n_k} \left\Vert \mathbf{y}_k-\bX_k\B\beta_k\right\Vert_2^2 + \lambda \sum_{k=1}^K \|\B\beta_k-\bar{\B\beta}\|_2^2 +\alpha \|\mathbb{B}\|_F^2 \\
	\text{s.t.} & \quad   |\beta_{k,j}|\leq Mz_{k,j}~~ \forall j\in[p], k\in[K].\nonumber 
\end{align}
Note that if $\lambda=0$, we have that $F(\B Z)=\sum_{k=1}^K F_k(\B z_k)$ where $F_k(\B z_k)$ is defined in~\eqref{Freg2}. Under this definition, we can generalize Proposition~\ref{prop5-main} to the case where all regularization coefficients $\alpha,\lambda,\delta$ can be nonzero.

\begin{proposition}\label{prop5-supp}
	Problem~\eqref{suppHet_mip} is equivalent to solving the optimization problem~\eqref{generalmi-general} where $F(\B Z)$ is implicitly described via display~\eqref{Freg2-general}.
\end{proposition}
Similar to functions $F_k(\B z_k)$, the function $\B{Z} \mapsto F(\B Z)$ has several desirable properties. In particular, $F(\B Z)$ is convex and its subgradient can be calculated as a by-product of solving the optimization problem in~\eqref{Freg2-general} (this is discussed later in Proposition~\ref{app:subgrad-prop}). Similar to the case of $F_k(\cdot)$, we have a pointwise linear lower bound for $F$:
$$F(\B X)\geq F(\B Z) +(\B X-\B Z)^T\B g^{(\B Z)}~~\forall \B X\in [0,1]^{p\times K}$$ 
where $\B g^{(\B Z)} \in \partial F(\B Z)$ is a subgradient of $F$ at $\B Z\in[0,1]^{p\times K}$. As a result, we can apply the same outer approximation idea from Section~\ref{outersec} to 
to Problem~\eqref{generalmi-general}. In particular, at iteration $t>0$ of our outer approximation solver, we substitute $F(\cdot)$ in~\eqref{generalmi-general} with a piece-wise linear lower bound, resulting in the MIQP
\begin{align}\label{outerMILP-general}
	\left(\B Z^t,\eta^t,\bar{\B z} \right)\in  \argmin_{\B{Z},\bar{\B z},\eta} &~~   \eta + \delta \sum_{k=1}^K \|\B z_k - \bar{\B z}\|_2^2 \\
	\text{s.t.} & ~~ \B Z\in\{0,1\}^{p\times K},~\sum_{i=1}^p z_{k,i} \leq s~~\forall k\in[K]\nonumber\\
	& ~~\eta \geq F(\B Z^i)+(\B Z-\B Z^i)^T\B g^{(\B Z^i)},  i\leq t-1\nonumber .
\end{align}
Similar to the case of $\lambda=0$, after finitely-many iteration of solving Problem~\eqref{outerMILP-general}, we obtain an optimal solution to Problem~\eqref{generalmi-general}, as the feasible set of Problem~\eqref{generalmi-general} is finite. Moreover, at each iteration, the optimal objective value of Problem~\eqref{outerMILP-general} is a lower bound for the optimal objective value of Problem~\eqref{generalmi-general}, similar to the case discussed in Section~\ref{outersec}.

Next, we discuss how to compute subgradients of the function $F(\cdot)$. Let
\begin{align}\label{underlinebarbeta}
	((\underline{\B\beta}_k)_{k=1}^K,\underline{\bar{\B\beta}})\in\argmin_{\mathbb{B},\bar{\B\beta}} &\quad \sum_{k=1}^K  \frac{1}{n_k} \left\Vert \mathbf{y}_k-\bX_k\B\beta_k\right\Vert_2^2 + \lambda \sum_{k=1}^K \|\B\beta_k-\bar{\B\beta}\|_2^2 +\alpha \|\mathbb{B}\|_F^2 \\
	\text{s.t.} & \quad   |\beta_{k,j}|\leq Mz_{k,j}~~ \forall j\in[p], k\in[K].\nonumber 
\end{align}
For $k\in[K]$ we define
\begin{equation}\label{zeta_k-general}
	\B{\zeta}_k = \frac{2}{n_k}\bX_k^T(\by_k-\bX_k\underline{\B\beta}_k)-2\alpha\underline{\B\beta}_k -2\lambda(\underline{\B\beta}_k - \underline{\bar{\B\beta}})\in\R^p.
\end{equation}
Proposition~\ref{app:subgrad-prop} presents an expression for the subgradient of $F(\cdot)$.
\begin{proposition}\label{app:subgrad-prop}
	Let $F$ be as defined in~\eqref{Freg2-general}. The following results hold true:
	\begin{enumerate}
		\item (\textit{Convexity}) The function $\B Z \mapsto F(\B Z)$ on 
		$\B Z \in [0,1]^{p\times K}$ is convex. 
		\item (\textit{Subgradient}) 
		Let $\B\zeta_k$ be as defined in~\eqref{zeta_k-general}.
		The matrix $\B G\in\mathbb{R}^{p\times K}$ with $k$-th column given by
		$$\B G_{:,k}= -M|\B\zeta_{k}|$$
		for $k\in[K]$, is a subgradient of $\B Z \mapsto F(\B Z)$ for $\B Z\in\{0,1\}^{p\times K}$.
	\end{enumerate}    
\end{proposition}
The proof of Proposition~\ref{app:subgrad-prop} is presented at the end of this section. Note that from Proposition~\ref{app:subgrad-prop}, the subgradient of $F$ can be calculated as a by-product of solving Problem~\eqref{Freg2-general}. As $\B{Z}$ is sparse with at most $sK$ nonzero coordinates, Problem~\eqref{Freg2-general} is a convex QP with $\mathcal{O}(sK)$ variables, which can be solved efficiently via an off-the-shelf solver for sufficiently sparse problems (i.e., $s$ is small).

\begin{proof}[\textbf{Proof of Proposition~\ref{app:subgrad-prop}}]
	\textbf{Part 1)} The proof is similar to the proof of the first part of Proposition~\ref{subgradpers}.
	
	\noindent\textbf{Part 2)} 
	Note that Problem~\eqref{Freg2-general} can be equivalently written as
	
	\begin{align}\label{Freg2-rewrite-general}
		\min_{\mathbb{B},\B\xi_k,\B\omega_k,\bar{\B\beta}} &\quad \sum_{k=1}^K  \frac{1}{n_k} \left\Vert \B\xi_k\right\Vert_2^2 + \lambda \sum_{k=1}^K \|\B\omega_k\|_2^2 +\alpha \|\mathbb{B}\|_F^2 \\
		\text{s.t.} & \quad   |\beta_{k,j}|\leq Mz_{k,j}~~ \forall j\in[p], k\in[K] \nonumber \\ 
		&\quad  \B\xi_k= \mathbf{y}_k-\bX_k\B\beta_k,~~\B\omega_k=
		\B\beta_k-\bar{\B\beta}~\forall k\in[K].\nonumber
	\end{align}
	We start by deriving the dual of Problem~\eqref{Freg2-rewrite-general} (note that strong duality holds similar to Proposition~\ref{subgradpers}). Considering Lagrange multipliers $\B\Lambda^{+}_k,\B\Lambda^{-}_k\in\R^{p}$, $\B\gamma_k\in\R^{n_k},\B\nu_k\in\R^p$ for $k\in[K]$ for problem constraints, the Lagrangian for this problem
	${\mathcal L}(\B\beta_k,\bar{\B\beta},\B\omega_k,\B\xi_k,\B\Lambda_k^{+},\B\Lambda_k^{-},\B\gamma_k,\B\nu_k)$ or ${\mathcal L}$ (for short),
	can be written as
	\begin{align}
		\mathcal{L} = &  \sum_{k=1}^K\Big\{ \frac{1}{n_k}\left\Vert\B\xi_k \right\Vert_2^2  +\alpha \|\B\beta_k\|_2^2+  \lambda \|\B\omega_k\|_2^2+ \langle\B\Lambda_k^{+},\B\beta_k - M\B z_k\rangle -\langle\B\Lambda_k^{-},\B\beta_k + M\B z_k\rangle\nonumber \\ 
		&+ \langle\B\gamma_k,\by_k - \bX_k\B\beta_k-\B\xi_k\rangle +  \langle\B\nu_k, \B\omega_k - \B\beta_k+\bar{\B\beta}\rangle\Big\}\nonumber\\
		= & \sum_{k=1}^K\Big\{ \frac{1}{n_k}\left\Vert\B\xi_k \right\Vert_2^2  +\alpha \|\B\beta_k\|_2^2 + \langle\B\beta_k, -\bX_k^T\B\gamma_k +\B\Lambda^+_k-\B\Lambda^-_k -\B\nu_k\rangle -\langle\B\xi_k,\B\gamma_k\rangle \nonumber \\
		& +\lambda\|\B\omega_k\|_2^2 + \langle\B\omega_k,\B\nu_k\rangle + \langle\bar{\B\beta},\B\nu_k\rangle -M\langle \B z_k,\B\Lambda_k^++\B\Lambda_k^-\rangle + \langle\B\gamma_k,\by_k\rangle\Big\}.
	\end{align}
	By the first order stationary conditions (i.e, setting the gradient of the Lagrangian with respect to $\B\xi_k$, $\B\beta_k$, $\B\omega_k$ and $\bar{\B\beta}$ equal to zero), we get for all $k$:
	\begin{equation}\label{kkt-2}
		\begin{aligned}
			\frac{2}{n_k}\B\xi_k&=\B\gamma_k, \\
			2\alpha\B\beta_k+   \B\Lambda^+_k-\B\Lambda^-_k -\B\nu_k&= \bX_k^T\B\gamma_k\\
			2\lambda\B\omega_k & = -\B\nu_k\\
			\sum_{k=1}^K \B\nu_k & = 0.
		\end{aligned}
	\end{equation}
	Therefore, the dual of Problem~\eqref{Freg2-rewrite-general} is given by:
	\begin{align}
		\max_{\B\Lambda_k\geq 0,\B\nu_k,\B\gamma_k, \forall k} & ~~ \sum_{k=1}^K \Big\{-\frac{n_k}{4}\|\B\gamma_k\|_2^2 -\frac{1}{4\alpha}\|\bX_k^T\B\gamma_k +\B\nu_k+ \B\Lambda^-_k - \B\Lambda^+_k \|_2^2 -\frac{1}{4\lambda}\|\B\nu_k\|_2^2+\B\gamma_k^T\by_k - M\langle \B{z}_k,\B\Lambda^+_k+\B\Lambda^-_k\rangle  
		\Big\}\\
		\text{s.t.} & ~~ \sum_{k=1}^K \B\nu_k  = 0. \nonumber
	\end{align}
	Additionally, note that from~\eqref{kkt-2} at optimality of Problem~\eqref{Freg2-rewrite-general}, we have that for $k\in[K]$,
	\begin{equation}
		\begin{aligned}
			\B{\nu}_k &= 2\lambda(\underline{\bar{\B\beta}} -\underline{\B\beta}_k) \\
			\B\gamma_k & = \frac{2}{n_k}(\by_k-\bX_k\underline{\B\beta}_k)\\
			\B\Lambda_k^+-\B\Lambda_k^- & = \bX_k^T\B\gamma_k -2\alpha\underline{\B\beta}_k +\B\nu_k
		\end{aligned}
	\end{equation}
	where $\underline{\B\beta}_k,\underline{\bar{\B\beta}}$ are defined in~\eqref{underlinebarbeta}. The rest of the proof follows similar to the second part of Proposition~\ref{subgradpers}.
\end{proof}

\section{Details from Approximate Algorithms Implementation}\label{app:alg-sec}
\subsection{Block CD Algorithm}\label{app:blockcd}
In this section, we discuss how problem~\eqref{IHT1} can be solved efficiently. Suppose $k\in[K]$ is given and fixed. The goal is to find the optimal values of $\B{\beta}_k,\B{z}_k$. As $\B{z}_k$ is a binary vector, for $j\in[p]$ we consider two cases $z_{k,j}=0,1$. If $z_{k,j}=1$, the optimal value of $\beta_{k,j}$ is given as 
$$\beta_{k,j}= b_{k,j}:=\beta_{k,j}^+-\frac{1}{L_k}\frac{\partial g_k(\B{\beta}_k^+,\bar{\B{\beta}})}{\partial \beta_{k,j}}.$$
Moreover, $(z_{k,j}-\bar{z}_j)^2=(1-\bar{z}_j)^2$. If $z_{k,j}=0$, then we have $\beta_{k,j}=0$ and $(z_{k,j}-\bar{z}_j)^2=\bar{z}_j^2$. Hence, the difference in the contribution of term $j$ in the objective of~\eqref{IHT1} for $z_{k,j}=1,0$ is 
$$\Delta_{k,j} = (1-\bar{z}_j)^2 - \left(\frac{L_k}{2}b_{k,j}^2 + \bar{z}_j^2\right).$$
If $\Delta_{k,j}\geq 0$, setting $z_{k,j}=0$ leads to a lower objective and therefore for any such $j$, the optimal values are given as $\beta_{k,j}=z_{k,j}=0$. For any other $j$, setting $z_{k,j}=1$ results in a lower objective, however, at most $s$ of values of $z_{k,j}$ can be set to one to ensure feasibility. As a result, we select at most $s$ values of $j$ that lead to smallest (most negative) value of $\Delta_{k,j}$. 

Before moving on, let us discuss the computational complexity of our block CD algorithm. For simplicity, assume $n_k=n$ for all $k\in[K]$. Suppose we fix $k$. 
Then, calculating $\nabla_{\B\beta_k}g_k(\B\beta_k,\bar{\B\beta})$ requires calculating 
$\bX_k^T(\by_k - \bX_k\B\beta_k)$
which can be done in $\mathcal{O}(np)$.
For fixed $k$, given $\nabla_{\B\beta}g_k(\B\beta,\bar{\B\beta})$, the computational cost of calculating and (partial) sorting $\Delta_{k,j}$'s is $\mathcal{O}(p+p\log s)$. Therefore, for a fixed $k$, the computational complexity of solving~\eqref{IHT1} is $\mathcal{O}(np)$. We also need to calculate $L_k$, which requires access to $\lambda_{\max}(\bX_k^T\bX_k)=\lambda_{\max}(\bX_k\bX_k^T)$. However, $\lambda_{\max}(\bX_k\bX_k^T)$ can be obtained in $\mathcal{O}(n^2p)$ using the power method. If we perform $T$ passes of the block CD method, considering the initial cost of  calculating $L_k$,
block CD will have the computational complexity $\mathcal{O}(Kn^2p + TKnp)$.

\subsection{Active Sets}\label{app:activeset}
To further speed-up the convergence of our algorithm, we implement an active set version of the block CD algorithm. We start by an initial active set $\mathcal{I}_{\text{active}}$. Then, we run the block CD algorithm on problem~\eqref{suppHet} with the additional constraint 
$$\beta_{k,j}=0~~\forall k\in [K], j\notin \mathcal{I}_{\text{active}}.$$
Under this constraint, the dimension of the problem is effectively reduced to $|\mathcal{I}_{\text{active}}|$, instead of $p$, which leads to faster convergence if $|\mathcal{I}_{\text{active}}|\ll p$. After the solution to the restricted problem is found, we run an iteration of block CD on the original problem (problem~\eqref{suppHet}) with $p$ features. If this iteration does not change the solution, the current solution to the restricted problem is also a solution to the original problem. Otherwise, we update the active set as
$$ \mathcal{I}_{\text{active}}\leftarrow \mathcal{I}_{\text{active}}\cup \{j:\exists k\in[K]: \beta_{k,j}\neq 0\}.$$

\subsection{Local Search Method}\label{app:localsearch}
Suppose $k_0\in[K]$ is fixed. In Proposition~\ref{lsprop} below, we give a closed form solution for the inner optimization in problem~\eqref{localsearch} (i.e., the value of $b$ for given $j_1,j_2$).
\begin{proposition}\label{lsprop}
	Fix $k_0,j_1,j_2$ in problem~\eqref{localsearch}. Let $\tilde{b}$ be the optimal solution to the inner problem in Problem~\eqref{localsearch}. Moreover, let
	\begin{equation}
		\begin{aligned}
			\mathbf{r} & = \mathbf{y}_{k_0}-\mathbb{X}_{k_0}\tilde{\boldsymbol{\beta}}_{k_0}\\
			p_1 & = \frac{\|\mathbf{x}_{k_0,j_2}\|_2^2}{n_{k_0}}+\alpha + \lambda \frac{K^2-K}{K^2}\\
			p_2 & = -\frac{2}{n_{k_0}}\mathbf{r}^T\mathbf{x}_{k_0,j_2}-\frac{2}{n_{k_0}}\mathbf{x}_{k_0,j_2}^T\mathbf{x}_{k_0,j_1}\tilde{\beta}_{k_0,j_1}-\frac{2\lambda}{K}\sum_{k=1}^K\tilde{\beta}_{k,j_2}\\
			p_3 & =\frac{\|\mathbf{x}_{k_0,j_1}\|_2^2}{n_{k_0}}\tilde{\beta}_{k_0,j_1}^2 + \frac{1}{n_{k_0}}\mathbf{r}^T \mathbf{x}_{k_0,j_1}\tilde{\beta}_{k_0,j_1} .
		\end{aligned}
	\end{equation}
	Then,
	\begin{enumerate}
		\item One has 
		$$\tilde{b} = -\frac{p_2}{2p_1}.$$
		\item Moreover, 
		\begin{multline*}
			g(\tilde{\mathbb{B}}-\tilde{\beta}_{k_0,j_1}\boldsymbol{E}_{k_0,j_1}+\tilde{b}\boldsymbol{E}_{k_0,j_2}) - g(\tilde{\mathbb{B}})=p_3 -\frac{p_2^2}{4p_1}\\ -\lambda\left[\sum_{k=1}^K\left(\tilde{\beta}_{k,j_1}-\frac{\sum_{k=1}^K \tilde{\beta}_{k,j_1}}{K}\right)^2\right]+\lambda\left[\sum_{k:k\neq k_0}^K\left(\tilde{\beta}_{k,j_1}-\frac{\sum_{k:k\neq k_0} \tilde{\beta}_{k,j_1}}{K}\right)^2\right]+\lambda \left(\frac{\sum_{k:k\neq k_0} \tilde{\beta}_{k,j_1}}{K}\right)^2
		\end{multline*}
		and 
		\begin{multline*}
			h(\tilde{\B{Z}}-\boldsymbol{E}_{k_0,j_1}+\boldsymbol{E}_{k_0,j_2}) - h(\tilde{\B{Z}}) =\\ -\delta\left[\sum_{k=1}^K\left(\tilde{z}_{k,j_1}-\frac{\sum_{k=1}^K \tilde{z}_{k,j_1}}{K}\right)^2\right]+\delta\left[\sum_{k:k\neq k_0}^K\left(\tilde{z}_{k,j_1}-\frac{\sum_{k:k\neq k_0} \tilde{z}_{k,j_1}}{K}\right)^2\right]+\delta \left(\frac{\sum_{k:k\neq k_0} \tilde{z}_{k,j_1}}{K}\right)^2.
		\end{multline*}
	\end{enumerate}
\end{proposition}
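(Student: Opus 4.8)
The plan is to exploit that the support penalty $h$ in~\eqref{ghdef} depends only on the binary support pattern and not on the magnitude $b$, so the inner minimization over $b$ concerns $g$ alone. First I would write $g$ evaluated at the swapped matrix $\tilde{\mathbb{B}} - \tilde{\beta}_{k_0,j_1}\B{E}_{k_0,j_1} + b\B{E}_{k_0,j_2}$ as an explicit function of $b$. The only task whose coefficient vector changes is $k_0$, so its residual becomes $\B{r} + \tilde{\beta}_{k_0,j_1}\B{x}_{k_0,j_1} - b\B{x}_{k_0,j_2}$; expanding the squared norm $\tfrac{1}{n_{k_0}}\|\cdot\|_2^2$ produces a quadratic in $b$ whose leading coefficient contributes $\|\B{x}_{k_0,j_2}\|_2^2/n_{k_0}$ and whose linear coefficient contributes $-\tfrac{2}{n_{k_0}}\B{r}^T\B{x}_{k_0,j_2} - \tfrac{2}{n_{k_0}}\B{x}_{k_0,j_2}^T\B{x}_{k_0,j_1}\tilde{\beta}_{k_0,j_1}$. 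The ridge term $\alpha\|\B{\beta}_{k_0}\|_2^2$ adds $\alpha b^2$ with no linear part.

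The delicate part is the Bbar penalty $\lambda\sum_k\|\B{\beta}_k - \bar{\B{\beta}}\|_2^2$, because modifying $\B{\beta}_{k_0}$ also shifts the mean $\bar{\B{\beta}}$. I would decompose this penalty coordinate by coordinate and recognize the per-coordinate term as the scaled sample variance $\sum_k \beta_{k,j}^2 - \tfrac{1}{K}(\sum_k \beta_{k,j})^2$. Only coordinates $j_1$ and $j_2$ are affected. For $j_2$, inserting the value $b$ into a column whose $k_0$ entry was zero gives a $b$-dependent contribution $\lambda[(1-\tfrac{1}{K})b^2 - \tfrac{2}{K}b\sum_k\tilde{\beta}_{k,j_2}]$; matching the $b^2$ coefficient yields the term $\lambda(K^2-K)/K^2$ in $p_1$ and matching the linear coefficient yields $-\tfrac{2\lambda}{K}\sum_k\tilde{\beta}_{k,j_2}$ in $p_2$. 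Collecting all pieces shows the $b$-dependent part of $g$ equals $p_1 b^2 + p_2 b + \text{const}$, which is convex since $p_1>0$; its unconstrained minimizer is $\tilde{b}=-p_2/(2p_1)$ with minimum value $-p_2^2/(4p_1)$, giving the first claim.

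For the second claim I would assemble the total change in $g$ as the sum of this minimized $b$-dependent part, $-p_2^2/(4p_1)$, and the $b$-independent part arising from zeroing out coordinate $j_1$. The least-squares piece of the latter gives $p_3$, and the Bbar piece gives the coordinate-$j_1$ variance difference; I would then verify that the difference of the two variance-type sums (over all $k$ versus over $k\neq k_0$) telescopes to the displayed three-term $\lambda$-expression by carefully expanding and cancelling. The change in $h$ follows from the identical variance decomposition applied to the coordinate-$j_1$ column of $\tilde{\B{Z}}$ when its $k_0$ entry flips from one to zero, which yields the stated $\delta$-expression; since $h$ is independent of $b$, no minimization is required there.

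The main obstacle I anticipate is the bookkeeping of the coupling through $\bar{\B{\beta}}$ and $\bar{\B{z}}$: because the averages range over all $K$ tasks, changing a single entry perturbs every deviation in that coordinate, so the algebra must carefully separate the terms that survive in the ``$k\neq k_0$'' sums from those that cancel, and track the cross terms between $j_1$ and $j_2$ without double counting. Writing each per-coordinate penalty in the variance form $\sum_k a_k^2 - \tfrac{1}{K}(\sum_k a_k)^2$ is the key device that makes these cancellations transparent, and is what I would rely on throughout to reconcile the collected terms with the claimed closed forms.
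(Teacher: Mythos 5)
Since the paper states Proposition~\ref{lsprop} without proof, your proposal can only be judged on its own terms. Your overall strategy---observing that $h$ does not depend on $b$, expanding the task-$k_0$ residual as $\B{r}+\tilde{\beta}_{k_0,j_1}\B{x}_{k_0,j_1}-b\,\B{x}_{k_0,j_2}$, writing each mean-coupled penalty per coordinate in the variance form $\sum_k a_k^2-\frac{1}{K}\left(\sum_k a_k\right)^2$, and minimizing the resulting convex quadratic in $b$---is the natural route, and it correctly delivers part~1: your expansions at coordinate $j_2$ reproduce $p_1$ and $p_2$ exactly, so $\tilde{b}=-p_2/(2p_1)$ follows, and the telescoping identity behind the three-term $\lambda$ expression at $j_1$ does check out, since with $S'=\sum_{k\neq k_0}\tilde{\beta}_{k,j_1}$ one has $\sum_{k\neq k_0}\bigl(\tilde{\beta}_{k,j_1}-S'/K\bigr)^2+(S'/K)^2=\sum_{k\neq k_0}\tilde{\beta}_{k,j_1}^2-S'^2/K$, which is precisely the post-swap $j_1$ contribution; the analogous identity holds for the $\delta$ terms.

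The gap is in part~2, where you assert outcomes that the computation you describe does not produce, at exactly the delicate points you defer with ``I would then verify.'' (i) The $b$-independent least-squares change from zeroing coordinate $j_1$ is $\frac{1}{n_{k_0}}\norm{\B{r}+\tilde{\beta}_{k_0,j_1}\B{x}_{k_0,j_1}}_2^2-\frac{1}{n_{k_0}}\norm{\B{r}}_2^2=\frac{\norm{\B{x}_{k_0,j_1}}_2^2}{n_{k_0}}\tilde{\beta}_{k_0,j_1}^2+\frac{2}{n_{k_0}}\B{r}^T\B{x}_{k_0,j_1}\tilde{\beta}_{k_0,j_1}$, carrying a factor $2$ on the cross term, whereas $p_3$ as stated carries a factor $1$; your claim that this piece ``gives $p_3$'' is therefore not what the expansion yields. (ii) You track the ridge penalty only through its $\alpha b^2$ contribution and silently drop the constant $-\alpha\tilde{\beta}_{k_0,j_1}^2$ arising from removing coordinate $j_1$, which appears nowhere in the displayed identity, so even your own bookkeeping is incomplete. (iii) For $h$, the swap also flips $z_{k_0,j_2}$ from $0$ to $1$, which changes the coordinate-$j_2$ column of $\tilde{\B{Z}}$ and contributes $\delta\bigl[1-(2\sum_{k}\tilde{z}_{k,j_2}+1)/K\bigr]$; this does not vanish in general, yet both the stated formula and your verification plan contain only $j_1$ terms. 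Carried out honestly, your ``carefully expanding and cancelling'' step would fail to reconcile with the closed forms in the statement; a complete write-up must either record these as corrections to the displayed identities or exhibit additional hypotheses on $\tilde{\mathbb{B}},\tilde{\B{Z}}$ (none are stated in the paper, and block-CD stationarity at $\tilde{\mathbb{B}}$ would at best address (i)--(ii), not (iii)) under which the discrepant terms cancel.
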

Proposition~\ref{lsprop} shows that the inner optimization in problem~\eqref{localsearch}, and therefore the entire mathematical program~\ref{localsearch}, can be solved efficiently via a closed form solution. In our implementation, we perform the calculations described in Proposition~\ref{lsprop} in matrix form. As a result, we can identify the optimal swap for each $k$ without any for loops over $j_1,j_2$. Our numerical experiments confirm the benefits of local search in practice.

\section{Proofs of Main Results}

\subsection{Proof of Theorem~\ref{thm1}}
The proof of this theorem is based on the following technical lemma.
\begin{lemma}\label{maximalimp}
	Suppose $\B{X}\in\R^{n\times p}$ is independent of $\B{\epsilon}\sim\mathcal{N}(0,\sigma^2 \B{I}_n)$. Let 
	\begin{equation}\label{event4}
		\mathcal{E}(\B{X},\B{\epsilon})=\left\{\sup_{\substack{\B{v}\in\R^p \\ \|\B{v}\|_0\leq 2s}}\left\vert\B{\epsilon}^T \frac{\B{Xv}}{\|\B{Xv}\|_2}\right\vert \leq  c_1\sigma \sqrt{s\log(p/s)}\right\}
	\end{equation}
	for some absolute constant $c_1>0$. Then 
	$$ \p(\mathcal{E}(\B{X},\B{\epsilon})) \geq 1- \exp(-10s\log(p/s)).$$
\end{lemma}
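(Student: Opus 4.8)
The plan is to recognize the quantity inside $\mathcal{E}(\B{X},\B{\epsilon})$ as the supremum of a Gaussian linear functional over a union of low-dimensional subspaces, and then to combine a $\chi^2$ tail bound with a union bound over sparse supports. First I would fix a support set $S\subseteq[p]$ with $|S|\leq 2s$ and restrict the supremum to vectors $\B{v}$ supported on $S$. As $\B{v}$ ranges over all such vectors, $\B{Xv}$ sweeps out the subspace $W_S=\mathrm{range}(\B{X}_{\cdot,S})$ spanned by the columns of $\B{X}$ indexed by $S$, so the normalized vectors $\B{Xv}/\|\B{Xv}\|_2$ trace out the unit sphere of $W_S$. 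By the variational characterization of the orthogonal projection, the inner supremum therefore equals $\|P_S\B{\epsilon}\|_2$, where $P_S$ denotes the orthogonal projection onto $W_S$.

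Next I would condition on $\B{X}$. Because $\B{\epsilon}\sim\cN(\B{0},\sigma^2\B{I})$ is independent of $\B{X}$, once $\B{X}$ is fixed $P_S$ is a deterministic projection of rank $r=\rank(\B{X}_{\cdot,S})\leq 2s$, and hence $\|P_S\B{\epsilon}\|_2^2/\sigma^2$ is distributed as $\chi^2_r$ with $r\leq 2s$ degrees of freedom. A standard concentration inequality for chi-square variables (Laurent--Massart) then yields, for any $t>0$, the bound $\|P_S\B{\epsilon}\|_2\leq \sigma(\sqrt{2s}+\sqrt{2t})$ except on an event of probability at most $e^{-t}$; since this tail estimate is uniform over $\B{X}$, it also holds unconditionally after integrating out $\B{X}$.

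I would then take a union bound over all $\binom{p}{2s}$ choices of $S$. Using the elementary estimate $\log\binom{p}{2s}\lesssim s\log(p/s)$, choosing $t=c\,s\log(p/s)$ for a sufficiently large absolute constant $c$ absorbs the entropy term and drives the overall failure probability below $\exp(-10s\log(p/s))$, while the resulting bound $\sigma(\sqrt{2s}+\sqrt{2t})\lesssim \sigma\sqrt{s\log(p/s)}$ furnishes the absolute constant $c_1$.

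The conceptual crux is the reduction to projection norms, which converts a supremum over a continuum of sparse directions into a single $\chi^2$ quantity per support and makes the effective dimension of the relevant Gaussian object $2s$ rather than $p$. The main obstacle is purely quantitative: one must tune $t$ and track constants carefully so that the union-bound entropy $\log\binom{p}{2s}$ is dominated by $-t$ with enough room to yield the target $\exp(-10s\log(p/s))$, while simultaneously keeping $c_1$ an absolute constant independent of $n,p,s,\sigma$.
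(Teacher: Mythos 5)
Your proposal is correct and follows essentially the same route as the paper's proof: both reduce the supremum over $2s$-sparse directions, via a union bound over supports $S$, to the norm of the noise projected onto a subspace of dimension at most $2s$ (the paper writes this as $\sup_{\|\B{v}\|_2=1}(\B{\Phi}_S^T\B{\epsilon})^T\B{v}$ for an orthonormal basis $\B{\Phi}_S$ of the column span of $\B{X}_S$, which is exactly your $\|P_S\B{\epsilon}\|_2$), conditioning on $\B{X}$ and integrating it out at the end. The only cosmetic difference is the tail bound used for that norm: you invoke the exact $\chi^2_r$ law with the Laurent--Massart inequality, while the paper uses a sub-Gaussian supremum bound over the unit sphere; both absorb the entropy $\log\binom{p}{2s}\lesssim s\log(p/s)$ and yield the same failure probability.
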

\begin{proof}
	For any $S\subseteq [p]$, let $\B{X}_S\in\R^{n\times |S|}$ be the submatrix of $\B{X}$ with columns indexed by $S$. Moreover, let $\B{\Phi}_S\in\mathbb{R}^{p\times |S|}$ be an orthonormal basis for the column span of $\B{X}_S$. One has
	\begin{align*}
		\p \left(\left.\sup_{\substack{\B{v}\in\R^p\\\|\B{v}\|_0\leq 2s}}\left\vert\B{\epsilon}^T\frac{\B{Xv}}{\|\B{Xv}\|_2}\right\vert> t\right\vert \B{X}\right) &= \p \left(\left.\max_{\substack{S\subseteq[p]\\ |S|=2s}}\sup_{\substack{\B{v}\in\R^p\\ S(\B{v})=S}}\left\vert\B{\epsilon}^T\frac{\B{Xv}}{\|\B{Xv}\|_2}\right\vert> t\right\vert \B{X}\right) \\
		& \stackrel{(a)}{=}\p \left(\left.\max_{\substack{S\subseteq[p]\\ |S|=2s}}\sup_{\substack{\B{v}\in\R^{|S|}\\ \|\B{v}\|_2=1}}(\B{\Phi}_{S}^T\B{\epsilon})^T\B{v}> t\right\vert \B{X}\right)\\
		& \stackrel{(b)}{\leq} {p \choose 2s}\p \left(\left.\sup_{\substack{\B{v}\in\R^{|S|}\\ \|\B{v}\|_2=1}}(\B{\Phi}_{S}^T\B{\epsilon})^T\B{v}> t\right\vert \B{X}\right)\\
		& \stackrel{(c)}{\leq} {p\choose 2s}\exp\left(-\frac{t^2}{8\sigma^2}+2s\log 5\right) \\
		& \stackrel{(d)}{\leq} \left(\frac{ep}{2s}\right)^{2s}\exp\left(-\frac{t^2}{8\sigma^2}+2s\log 5\right)\\
		& \leq \exp\left(-\frac{t^2}{8\sigma^2}+2s\log 5 + 2s\log(ep/2s)\right)
	\end{align*}
	where $(a)$ is due to the definition of $\B{\Phi}_S$, $(b)$ is due to union bound, $(c)$ is true as by independence of $\B{X}$ and $\B{\epsilon}$, we have $\B{\Phi}_S^T\B{\epsilon}|\B{X} \sim\mathcal{N}(\B{0},\sigma^2 \B{I}_{2s})$ and Theorem~1.19 of~\cite{rigollet2015high} and $(d)$ is by the inequality ${p\choose 2s}\leq (ep/2s)^{2s}$. Take $t^2=64\sigma^2c_1s\log(p/s)$ and $c_1$ large enough to have that $8c_1 s\log(p/s)-2s\log 5 -2s\log(ep/2s)>10s\log(p/s)$. As a result, 
	\begin{align}\label{fastbount}
		\p \left(\left.\mathcal{E}(\B{X},\B{\epsilon})\right\vert \B{X}\right)\geq 1- \exp(-10s\log(p/s)).
	\end{align}
	Finally, note that 
	\begin{align*}
		\p(\mathcal{E}(\B{X},\B{\epsilon}))=\int_{\B{X}} \p \left(\left.\mathcal{E}(\B{X},\B{\epsilon})\right\vert \B{X}\right)d\p_{\B{X}}\geq \int_{\B{X}}(1- \exp(-10s\log(p/s)))d\p_{\B{X}} = 1- \exp(-10s\log(p/s)).
	\end{align*}
\end{proof}

\begin{proof}[\textbf{Proof of Theorem~\ref{thm1}}]
	The proof of this theorem is based on the intersection of events $\mathcal{E}(\bX_1,\B{\epsilon}_1),\cdots,\mathcal{E}(\bX_K,\B{\epsilon}_K)$ which by Lemma~\ref{maximalimp} and union bound happens with probability at least 
	\begin{equation}\label{thm1-prob}
		1- K\exp(-10s\log(p/s)).
	\end{equation}
	To lighten the notation, let 
	\begin{equation}
		\begin{aligned}
			h(\B{z}_1,\cdots,\B{z}_K)&=\delta\sum_{k=1}^K \|\B{z}_k - \bar{\B{z}}\|_2^2 \\ l(\B{\beta}_1,\cdots,\B{\beta}_K)&=\lambda\sum_{k=1}^K \|\B{\beta}_k - \bar{\B{\beta}}\|_2^2
		\end{aligned}
	\end{equation}
	with $\bar{\B{z}}=\sum_{k=1}^K \B{z}_k/K$  and $\bar{\B{\beta}}=\sum_{k=1}^K \B{\beta}_k/K$. Moreover, let
	\begin{equation}
		\begin{aligned}
			\hat{h}&=h(\hat{\B{z}}_1,\cdots,\hat{\B{z}}_K)+l(\hat{\B{\beta}}_1,\cdots,\hat{\B{\beta}}_K) \\ h^*&=h(\B{z}^*_1,\cdots,\B{z}^*_K)+l(\B{\beta}^*_1,\cdots,\B{\beta}^*_K).
		\end{aligned}
	\end{equation}
	By optimality of $\hat{\B{\beta}}_k,\hat{\B{z}}_k$ and feasibility of $\B{\beta}_k^*,\B{z}_k^*$ for problem~\eqref{suppHet}, we have:
	\begin{align}
		& \sum_{k=1}^K \frac{1}{n_k}\|\by_k-\bX_k\hat{\B{\beta}}_k\|_2^2+\hat{h} +\alpha\|\hat{\bB}\|_F^2\leq \sum_{k=1}^K \frac{1}{n_k}\|\by_k-\bX_k{\B{\beta}}^*_k\|_2^2+h^*+\alpha\|\bB^*\|_F^2\nonumber \\
		\stackrel{(a)}{\Rightarrow} & \sum_{k=1}^K \frac{1}{n_k}\|\bX_k\B{\beta}_k^*+\B{r}_k+\B{\epsilon}_k-\bX_k\hat{\B{\beta}}_k\|_2^2+\hat{h}  +\alpha\|\hat{\bB}\|_F^2\leq \sum_{k=1}^K \frac{1}{n_k}\|\B{r}_k+\B{\epsilon}_k\|_2^2+ h^*+\alpha\|\bB^*\|_F^2\nonumber \\
		\Rightarrow & \sum_{k=1}^K \frac{1}{n_k} \|\bX_k(\B{\beta}_k^*-\hat{\B{\beta}}_k)\|_2^2+\hat{h}+\alpha\|\hat{\bB}\|_F^2\leq \sum_{k=1}^K \frac{-2}{n_k}(\B{\epsilon}_k+\B{r}_k)^T\bX_k(\B{\beta}_k^*-\hat{\B{\beta}}_k) +h^*+\alpha\|\bB^*\|_F^2\nonumber\\
		\Rightarrow & \sum_{k=1}^K \frac{1}{n_k} \|\bX_k(\B{\beta}_k^*-\hat{\B{\beta}}_k)\|_2^2+\hat{h}+\alpha\|\hat{\bB}\|_F^2 \leq \sum_{k=1}^K \frac{-2}{n_k}(\B{\epsilon}_k+\B{r}_k)^T\frac{\bX_k(\B{\beta}_k^*-\hat{\B{\beta}}_k)}{\|\bX_k(\B{\beta}_k^*-\hat{\B{\beta}}_k)\|_2}\|\bX_k(\B{\beta}_k^*-\hat{\B{\beta}}_k)\|_2+h^*+\alpha\|\bB^*\|_F^2\nonumber \\
		\stackrel{(b)}{\Rightarrow} & \sum_{k=1}^K \frac{1}{2n_k} \|\bX_k(\B{\beta}_k^*-\hat{\B{\beta}}_k)\|_2^2+ \hat{h}+\alpha\|\hat{\bB}\|_F^2 \leq \sum_{k=1}^K \frac{2}{n_k}\left((\B{\epsilon}_k+\B{r}_k)^T\frac{\bX_k(\B{\beta}_k^*-\hat{\B{\beta}}_k)}{\|\bX_k(\B{\beta}_k^*-\hat{\B{\beta}}_k)\|_2}\right)^2+ h^*+\alpha\|\bB^*\|_F^2\label{thm1-helper1},
	\end{align}
	where $(a)$ is achieved by substituting $\by_k=\bX_k\B{\beta}_k^*+\B{\epsilon}_k+\B{r}_k$ from the oracle, and $(b)$ is a result of the inequality $-2ab\leq 2a^2+b^2/2$. Next, note that $\B{\beta}_k^*-\hat{\B{\beta}}_k$ has at most $2s$ nonzero coordinates. As a result, from event $\mathcal{E}(\bX_k,\B{\epsilon}_k)$ write
	\begin{equation}\label{thm1-helper2}
		\frac{1}{n_k}\left(\B{\epsilon}_k^T\frac{\bX_k(\B{\beta}_k^*-\hat{\B{\beta}}_k)}{\|\bX_k(\B{\beta}_k^*-\hat{\B{\beta}}_k)\|_2}\right)^2\lesssim \frac{\sigma_k^2 s\log(p/s)}{n_k}.
	\end{equation}
	Moreover, 
	\begin{align}
		\frac{1}{n_k}\left((\B{\epsilon}_k+\B{r}_k)^T\frac{\bX_k(\B{\beta}_k^*-\hat{\B{\beta}}_k)}{\|\bX_k(\B{\beta}_k^*-\hat{\B{\beta}}_k)\|_2}\right)^2& \lesssim \frac{1}{n_k}\left(\B{r}_k^T\frac{\bX_k(\B{\beta}_k^*-\hat{\B{\beta}}_k)}{\|\bX_k(\B{\beta}_k^*-\hat{\B{\beta}}_k)\|_2}\right)^2 + \frac{1}{n_k}\left(\B{\epsilon}_k^T\frac{\bX_k(\B{\beta}_k^*-\hat{\B{\beta}}_k)}{\|\bX_k(\B{\beta}_k^*-\hat{\B{\beta}}_k)\|_2}\right)^2\nonumber \\
		& \lesssim \frac{1}{n_k}\|\B{r}_k\|_2^2+ \frac{1}{n_k}\left(\B{\epsilon}_k^T\frac{\bX_k(\B{\beta}_k^*-\hat{\B{\beta}}_k)}{\|\bX_k(\B{\beta}_k^*-\hat{\B{\beta}}_k)\|_2}\right)^2.
	\end{align}
	Consequently, from~\eqref{thm1-helper1} and~\eqref{thm1-helper2} we have
	\begin{align}
		\sum_{k=1}^K \frac{1}{n_k} \|\bX_k(\B{\beta}_k^*-\hat{\B{\beta}}_k)\|_2^2+\hat{h}+\alpha\|\hat{\bB}\|_F^2\lesssim \sum_{k=1}^K \frac{\sigma_k^2 s \log(p/s)}{n_k}+h^*+\alpha\|\bB^*\|_F^2+\sum_{k=1}^K \frac{1}{n_k}\|\B{r}_k\|_2^2.
	\end{align}
\end{proof}

\subsection{Proof of Corollary~\ref{cor1}}
The proof of this corollary is based on the following lemma.
\begin{lemma}\label{tech-lem2}
	Suppose $\B{z}_1\cdots,\B{z}_K\in\{0,1\}^p$. Then, 
	$$\sum_{k=1}^K\|\B{z}_k-\bar{\B{z}}\|_2^2\geq \frac{|S_{\text{all}}\setminus S_{\text{common}}|}{K}$$
	where $\bar{\B{z}}=\sum_{k=1}^K \B{z}_k/K$, $S_{\text{all}}=S_{\text{all}}(\B{Z})$ and $S_{\text{common}}=S_{\text{common}}(\B{Z})$.
\end{lemma}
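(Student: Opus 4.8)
The plan is to exploit the fact that the objective separates across coordinates, reducing everything to a one-dimensional computation. Writing $m_j=\sum_{k=1}^K z_{k,j}$ for the number of tasks whose support contains coordinate $j$, so that $\bar z_j=m_j/K$, I would first observe that
$$\sum_{k=1}^K\|\B{z}_k-\bar{\B{z}}\|_2^2=\sum_{j=1}^p\sum_{k=1}^K(z_{k,j}-\bar z_j)^2,$$
and treat each coordinate $j$ independently.

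For a fixed $j$, exactly $m_j$ of the binary values $z_{k,j}$ equal $1$ and the remaining $K-m_j$ equal $0$. A short algebraic computation then gives the per-coordinate contribution
$$\sum_{k=1}^K\left(z_{k,j}-\frac{m_j}{K}\right)^2=m_j\left(1-\frac{m_j}{K}\right)^2+(K-m_j)\left(\frac{m_j}{K}\right)^2=\frac{m_j(K-m_j)}{K},$$
where the final equality comes from factoring out $m_j(K-m_j)/K^2$ and collecting the bracketed term into $(K-m_j)+m_j=K$.

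Next I would identify which coordinates actually contribute. By Definition~\ref{sallscommdef}, a coordinate $j$ lies in $S_{\text{all}}\setminus S_{\text{common}}$ precisely when $1\leq m_j\leq K-1$; for every other $j$ (those with $m_j=0$ or $m_j=K$) the contribution $m_j(K-m_j)/K$ vanishes. For each $j\in S_{\text{all}}\setminus S_{\text{common}}$, both $m_j$ and $K-m_j$ are positive integers, so their product is at least $1$, and the contribution is at least $1/K$. Summing only over these coordinates then yields
$$\sum_{k=1}^K\|\B{z}_k-\bar{\B{z}}\|_2^2=\sum_{j=1}^p\frac{m_j(K-m_j)}{K}\geq\frac{|S_{\text{all}}\setminus S_{\text{common}}|}{K},$$
which is the claim.

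There is no genuine obstacle here; the only mildly delicate step is the algebraic simplification of the per-coordinate term into the clean form $m_j(K-m_j)/K$, which is routine. I would remark that the same argument in fact delivers the stronger bound $(K-1)|S_{\text{all}}\setminus S_{\text{common}}|/K$, since $m_j(K-m_j)\geq K-1$ on the integer range $\{1,\dots,K-1\}$, but the weaker form stated in the lemma is all that is needed downstream.
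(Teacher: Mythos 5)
Your proof is correct, and it takes a slightly different and in fact sharper route than the paper's. The paper also decomposes coordinate-wise and identifies $S_{\text{all}}\setminus S_{\text{common}}$ as the only contributing coordinates, but it then bounds each individual term: from $1\leq m_j\leq K-1$ it deduces $1/K\leq\bar z_j\leq (K-1)/K$, hence $(z_{k,j}-\bar z_j)^2\geq 1/K^2$ for every $k$, and sums over the $K$ tasks to get $1/K$ per coordinate. You instead compute the per-coordinate contribution exactly, $\sum_{k=1}^K(z_{k,j}-\bar z_j)^2=m_j(K-m_j)/K$ (essentially $K$ times the variance of the binary column), and then bound that closed form. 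Your computation is exactly the one the paper deploys later, in its Lemma~\ref{tech-lem3}, for the regular case $m_j\in\{0,1,K\}$, so your argument unifies the two lemmas: the exact formula specializes to the equality $|S_{\text{all}}\setminus S_{\text{common}}|(K^2-K)/K^2$ under regularity, and your observation that $m_j(K-m_j)\geq K-1$ on $\{1,\dots,K-1\}$ shows the regular configuration attains the minimum, yielding the strictly stronger bound $(K-1)\,|S_{\text{all}}\setminus S_{\text{common}}|/K$ in general. This improvement by a factor of $K-1$ would mildly sharpen the constants downstream (e.g., the threshold on $\delta$ in Corollary~\ref{cor1} and the constant in Corollary~\ref{cor2}), though since the paper's results are stated up to universal constants via $\lesssim$, nothing changes asymptotically; as you note, the weaker form suffices for all uses in the paper.
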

\begin{proof}
	For $j\in S_{\text{common}}$ or $j\notin S_{\text{all}}$, 
	$$\sum_{k=1}^K (z_{k,j}-\bar{z}_{j})^2 =0.$$
	Suppose $j\in S_{\text{all}}\setminus S_{\text{common}}$. Then, there exist $k_1,k_2\in[K]$ such that $z_{k_1,j}=1$ and $z_{k_2,j}=0$, implying $1/K\leq\bar{z}_j\leq (K-1)/K$. As a result, for $k_0\in[K]$ if $z_{k_0,j}=0$,
	$$(z_{k_0,j}-\bar{z}_j)^2=\bar{z}_j^2 \geq \frac{1}{K^2}$$
	and if $z_{k_0,j}=1$,
	$$(z_{k_0,j}-\bar{z}_j)^2=(1-\bar{z}_j)^2\geq (1-(K-1)/K)^2\geq \frac{1}{K^2}.$$
	Consequently, 
	\begin{align}
		\sum_{k=1}^K\|\B{z}_k-\bar{\B{z}}\|_2^2\geq \sum_{j\in S_{\text{all}}\setminus S_{\text{common}}}\sum_{k=1}^K (z_{k,j}-\bar{z}_j)^2 \geq \sum_{j\in S_{\text{all}}\setminus S_{\text{common}}}\sum_{k=1}^K \frac{1}{K^2}=\frac{|S_{\text{all}}\setminus S_{\text{common}}|}{K}.
	\end{align}
\end{proof}
\begin{proof}[\textbf{Proof of Corollary~\ref{cor1}}]
	This proof is on the probability event considered in Theorem~\ref{thm1}. Note that as $\B{z}_1^*=\cdots=\B{z}_K$, we have that $h(\B{z}_1^*,\cdots,\B{z}_K^*)=0$. The prediction error part of the corollary follows from this fact and Theorem~\ref{thm1}. Let $\hat{S}_{\text{all}}=S_{\text{all}}(\hat{\B{Z}})$ and $\hat{S}_{\text{common}}=S_{\text{common}}(\hat{\B{Z}})$. Next,
	\begin{align}
		\delta\frac{|\hat{S}_{\text{all}}\setminus \hat{S}_{\text{common}}|}{K}& \stackrel{(a)}{\leq} h(\hat{\B{z}}_1,\cdots,\hat{\B{z}}_K) \nonumber\\
		& \leq \sum_{k=1}^K \frac{1}{n_k} \|\bX_k(\B{\beta}_k^*-\hat{\B{\beta}}_k)\|_2^2+h(\hat{\B{z}}_1,\cdots,\hat{\B{z}}_K) \nonumber \\
		& \stackrel{(b)}{\lesssim} \sum_{k=1}^K \frac{\sigma_k^2 s \log(p/s)}{n_k}+\sum_{k=1}^K\frac{1}{n_k} \|\B{r}_k\|_2^2
	\end{align}
	where $(a)$ is due to Lemma~\ref{tech-lem2} and $(b)$ is due to Theorem~\ref{thm1} by our choice of $\alpha,\lambda$. In particular, if $\delta\gtrsim K\sum_{k=1}^K {([\sigma_k^2 s\log(p/s)+\|\B{r}_k\|_2^2]/n_k)} $ is sufficiently large,
	\begin{equation}
		|\hat{S}_{\text{all}}\setminus \hat{S}_{\text{common}}| \lesssim \frac{K}{\delta}\sum_{k=1}^K\left[ \frac{\sigma_k^2 s \log(p/s)}{n_k}+\frac{1}{n_k} \|\B{r}_k\|_2^2\right]\leq c^*
	\end{equation}
	for some $c^* \in [0,1)$ and as $ |\hat{S}_{\text{all}}\setminus \hat{S}_{\text{common}}|\in\mathbb{Z}_{\geq 0}$, we have that $ |\hat{S}_{\text{all}}\setminus \hat{S}_{\text{common}}|=0$.
\end{proof}

\subsection{Proof of Corollary~\ref{cor2}}
\begin{lemma}\label{tech-lem3}
	Suppose $\B{z}_1\cdots,\B{z}_K\in\{0,1\}^p$ are regular. Then, 
	$$\sum_{k=1}^K\|\B{z}_k-\bar{\B{z}}\|_2^2= |S_{\text{all}}\setminus S_{\text{common}}|\frac{K^2-K}{K^2}$$
	where $\bar{\B{z}}=\sum_{k=1}^K \B{z}_k/K$, $S_{\text{all}}=S_{\text{all}}(\B{Z})$ and $S_{\text{common}}=S_{\text{common}}(\B{Z})$.
\end{lemma}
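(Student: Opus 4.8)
The plan is to compute the sum coordinate-by-coordinate and exploit regularity to pin down exactly which coordinates contribute. Writing $\bar{z}_j = \frac{1}{K}\sum_{k=1}^K z_{k,j}$ for the $j$-th entry of $\bar{\B{z}}$, I would first expand
\[
\sum_{k=1}^K \|\B{z}_k - \bar{\B{z}}\|_2^2 = \sum_{j=1}^p \sum_{k=1}^K (z_{k,j} - \bar{z}_j)^2,
\]
reducing the claim to analyzing the inner sum $T_j := \sum_{k=1}^K (z_{k,j} - \bar{z}_j)^2$ for each fixed $j$.

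Next, I would split into the three cases permitted by regularity (Definition~\ref{def-reg}). If $\sum_k z_{k,j} = 0$, then every $z_{k,j} = 0$ and $\bar{z}_j = 0$, so $T_j = 0$; similarly, if $\sum_k z_{k,j} = K$, then every $z_{k,j} = 1$ and $\bar{z}_j = 1$, so again $T_j = 0$. The only nontrivial case is $\sum_k z_{k,j} = 1$, where exactly one task has $z_{k,j} = 1$ and the remaining $K-1$ tasks have $z_{k,j} = 0$, giving $\bar{z}_j = 1/K$. A direct computation then yields
\[
T_j = \left(1 - \tfrac{1}{K}\right)^2 + (K-1)\left(\tfrac{1}{K}\right)^2 = \frac{(K-1)^2 + (K-1)}{K^2} = \frac{K^2 - K}{K^2}.
\]

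The final step is to identify the index set on which $T_j \neq 0$. Under regularity, the coordinates with $\sum_k z_{k,j} = 1$ are precisely those $j$ with $1 \leq \sum_k z_{k,j} \leq K-1$, which by Definition~\ref{sallscommdef} is exactly $S_{\text{all}} \setminus S_{\text{common}}$. Summing $T_j$ over all $j \in [p]$ therefore gives $|S_{\text{all}} \setminus S_{\text{common}}| \cdot \frac{K^2 - K}{K^2}$, as claimed.

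There is no genuine obstacle here: the argument is essentially the exact-equality refinement of Lemma~\ref{tech-lem2}. Whereas that lemma used only the crude bound $(z_{k,j} - \bar{z}_j)^2 \geq 1/K^2$ on each summand in the mixed coordinates to obtain a lower bound, regularity forces every mixed coordinate into the single case $\sum_k z_{k,j} = 1$, which lets me evaluate $T_j$ exactly rather than merely bound it. The only point requiring care is the elementary arithmetic of the one nontrivial case.
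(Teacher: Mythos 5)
Your proposal is correct and follows essentially the same route as the paper's proof: a coordinate-wise decomposition, the observation that coordinates in $S_{\text{common}}$ or outside $S_{\text{all}}$ contribute zero, and the same exact evaluation $\left(1-\tfrac{1}{K}\right)^2+(K-1)\left(\tfrac{1}{K}\right)^2=\tfrac{K^2-K}{K^2}$ on the mixed coordinates, where regularity forces $\sum_k z_{k,j}=1$. The only (cosmetic) difference is that you spell out explicitly that regularity identifies the mixed coordinates with exactly the set $S_{\text{all}}\setminus S_{\text{common}}$, a step the paper leaves implicit.
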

\begin{proof}
	For $j\in S_{\text{common}}$ or $j\notin S_{\text{all}}$, 
	$$\sum_{k=1}^K (z_{k,j}-\bar{z}_{j})^2 =0.$$
	Suppose $j\in S_{\text{all}}\setminus S_{\text{common}}$ and $k_0\in[K]$ is such that $z_{k_0,j}=1$. Then, $\bar{z}_j=1/K$ and
	\begin{align}
		\sum_{k=1}^K (z_{k,j}-\bar{z}_j)^2 = \sum_{k=1}^K (z_{k,j}-1/K)^2 =\frac{K-1}{K^2}+\frac{(K-1)^2}{K^2}=\frac{K^2-K}{K^2}.
	\end{align}
	As a result,
	\begin{align}
		\sum_{k=1}^K\|\B{z}_k-\bar{\B{z}}\|_2^2= \sum_{j\in S_{\text{all}}\setminus S_{\text{common}}}\sum_{k=1}^K (z_{k,j}-\bar{z}_j)^2 = |S_{\text{all}}\setminus S_{\text{common}}|\frac{K^2-K}{K^2}.
	\end{align}
	
\end{proof}
\begin{proof}[\textbf{Proof of Corollary~\ref{cor2}}]
	In this proof, we assume $\delta$ is taken as
	\begin{equation}\label{delta-cor2}
		\delta=c_{\delta}\frac{K}{K-1}\frac{1}{|S^*_{\text{all}}\setminus S^*_{\text{common}}|}\sum_{k=1}^K \left\{\frac{\sigma_k^2 s \log (p/s)}{n_k}+\frac{1}{n_k}\|\B{r}_k\|_2^2\right\}
	\end{equation}
	where $c_{\delta}$ is an absolute constant.
	
	This proof is on the probability event considered in Theorem~\ref{thm1}.  One has
	\begin{align}
		\sum_{k=1}^K \frac{1}{n_k} \|\bX_k(\B{\beta}_k^*-\hat{\B{\beta}}_k)\|_2^2 + \delta \frac{|\hat{S}_{\text{all}}\setminus \hat{S}_{\text{common}}|}{K} &  \stackrel{(a)}{\leq} \sum_{k=1}^K \frac{1}{n_k} \|\bX_k(\B{\beta}_k^*-\hat{\B{\beta}}_k)\|_2^2 + h(\hat{\B{z}}_1,\cdots,\hat{\B{z}}_K)\nonumber \\
		& \stackrel{(b)}{\lesssim} \sum_{k=1}^K \left\{\frac{\sigma_k^2 s \log(p/s)}{n_k}+\frac{1}{n_k}\|\B{r}_k\|_2^2\right\}+ h(\B{z}_1^*,\cdots,\B{z}_K^*)\nonumber \\
		& \stackrel{(c)}{\lesssim} \sum_{k=1}^K \left\{ \frac{\sigma_k^2 s \log(p/s)}{n_k}+\frac{1}{n_k}\|\B{r}_k\|_2^2\right\}+ \delta \frac{K^2-K}{K^2}|S^*_{\text{all}}\setminus S^*_{\text{common}}| \nonumber \\
		& \stackrel{(d)}{\lesssim} \sum_{k=1}^K \left\{\frac{\sigma_k^2 s \log(p/s)}{n_k}+\frac{1}{n_k}\|\B{r}_k\|_2^2 \right\}\label{cor2-helper}
	\end{align}
	where $(a)$ is by Lemma~\ref{tech-lem2}, $(b)$ is due to Theorem~\ref{thm1} by our choice of $\lambda,\alpha$, $(c)$ is by Lemma~\ref{tech-lem3} and $(d)$ is by the choice of $\delta$ in this corollary. Finally, from~\eqref{cor2-helper} we have that
	\begin{align*}
		|\hat{S}_{\text{all}}\setminus \hat{S}_{\text{common}}| \lesssim \frac{K}{\delta}\sum_{k=1}^K \left\{\frac{\sigma_k^2 s \log(p/s)}{n_k}+\frac{1}{n_k}\|\B{r}_k\|_2^2\right\}\lesssim \frac{K^2-K}{K}|S^*_{\text{all}}\setminus S^*_{\text{common}}|.
	\end{align*}
\end{proof}

\subsection{Proof of Theorem~\ref{supthm}}

Before proceeding with the proof of the theorem, we introduce some notation and technical results we will use. \\
\textbf{Notation. } We denote the submatrix of $\bX_k$ with columns indexed by $S\subseteq[p]$ as $\bX_{k,S}$. We denote the projection matrix onto the span of columns of $\bX_k$ indexed by the set $\CS\subseteq[p]$ as $\B{P}_{\bX_{k,\CS}}$. We denote the optimal objective to the least square problem with restricted support
\begin{equation}\label{limitedls}
	\min_{\B{\beta}_{S^c}=0}\frac{1}{n_k}\|\by_k-\bX_k\B{\beta}\|_2^2
\end{equation}
as
\begin{equation}\label{optimalobj}
	\CR_{k,S}(\Bb{y}_k)= \frac{1}{n_k}\Bb{y}_k^{T}(\B{I}_{n_k}-\B{P}_{\bX_{k,S}})\Bb{y}_k.
\end{equation}
For $S_1,S_2\subseteq [p]$, $\B{\Sigma}\in\R^{p\times p}$ positive definite and $S_0=S_2\setminus S_1$, we let
\begin{equation}
	\schur{\B{\Sigma}}{S_1}{S_2} = \B{\Sigma}_{S_0,S_0}- \B{\Sigma}_{S_0,S_1}\B{\Sigma}^{-1}_{S_1,S_1}\B{\Sigma}_{S_1,S_0}.
\end{equation}
Note that $\schur{\B{\Sigma}}{S_1}{S_2}$ is the Schur complement of the block $\B{\Sigma}_{S_1,S_1}$ of the matrix
\begin{equation}\label{gs1s2}
	\B{\Sigma}(S_1,S_2)=  \begin{bmatrix}
		\B{\Sigma}_{S_1,S_1} & \B{\Sigma}_{S_1,S_0} \\
		\B{\Sigma}_{S_0,S_1} & \B{\Sigma}_{S_0,S_0}
	\end{bmatrix}.
\end{equation}
The sample covariance matrix of study $k$ is defined as $\hat{\B{\Sigma}}^{(k)}=\bX_k^T\bX_k/n_k$. We also let $S^*_k=\{j:z_{k,j}^*=1\}$, $\hat{S}_k=\{j:\hat{z}_{k,j}=1\}$ to be the true and estimated supports, respectively. Note that we have $|\hat{S}_k|\leq s$ and $|S_k^*|=s$. Let us define the following events for $S_k\subseteq[p]$ such that $|S_k|\leq s$ and $k\in[K]$:
\begin{equation}\label{events}
	\begin{aligned}
		\mathcal{E}_1(k,S_k) &= \left\{ {(\B{\beta}^*_{k,\tilde{S}_k})}^{T} (\schur{\hat{\B{\Sigma}}^{(k)}}{\CS_k}{\CS^*_k}) \B{\beta}^*_{k,\tilde{S}_k} \geq   0.8\eta_k \frac{|\tilde{\CS}_k|\log p}{n_k}\1(k\in \mathcal{J})\right\} \\
		\mathcal{E}_2(k,S_k) &= \left\{ \frac{2}{n_k}\B{\epsilon}_k^{T} (\B{I}_{n_k}-\B{P}_{\bX_{k,\CS_k}})\bX_{k,\tilde{\CS}_k}\B{\beta}^*_{k,\tilde{\CS}_k}\geq -c_{t_1}\sigma_k \sqrt{{(\B{\beta}^*_{k,\tilde{S}_k})}^{T} (\schur{\hat{\B{\Sigma}}^{(k)}}{\CS_k}{\CS^*_k}) \B{\beta}^*_{k,\tilde{S}_k}}\sqrt{\frac{|\tilde{\CS}_k|\log p}{n_k}}\right\} \\
		\mathcal{E}_3(k,S_k) &= \left\{\B{\epsilon}_k^{T} (\B{P}_{\bX_{k,\CS_k}}-\B{P}_{\bX_{k,\CS^*_k}}) \B{\epsilon}_k\leq  c_{t_2}\sigma_k^2|\tilde{S}_k|\log p\right\} 
	\end{aligned}
\end{equation}
where $c_{t_1},c_{t_2}>0$ are absolute constants, $\B{\beta}^*_{k,S}$ is the subvector of $\B{\beta}_k^*$ restricted to $S$, and $\tilde{S}_k=S^*_k\setminus S_k$. In what follows, we show the events defined above hold with high probability.

\begin{lemma}\label{suptechlem1}
	Under the assumptions of Theorem~\ref{supthm}, we have
	\begin{equation}\label{lem8-1}
		\p\left(\bigcap_{k\in[K]}\bigcap_{\substack{S\subseteq[p] \\|S|\leq s }}\mathcal{E}_1(k,S)\right)\geq 1-Kp^{-8}.
	\end{equation}
\end{lemma}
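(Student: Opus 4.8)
The plan is to reduce the claim to a uniform lower bound on the smallest eigenvalue of the sample covariance over all small supports, and then convert that bound into the stated quadratic-form estimate using the minimum-signal condition. The first step is to rewrite the sample Schur complement in its projection form: fixing $k$ and $S$, writing $S_0=\tilde{S}_k=S^*_k\setminus S$ and using $\hat{\B{\Sigma}}^{(k)}=\bX_k^T\bX_k/n_k$, we have
\[
\schur{\hat{\B{\Sigma}}^{(k)}}{S}{S^*_k} = \frac{1}{n_k}\bX_{k,S_0}^T(\B{I}_{n_k}-\B{P}_{\bX_{k,S}})\bX_{k,S_0},
\]
which is always well defined and positive semidefinite (the right-hand expression is the one appearing in the events $\mathcal{E}_2,\mathcal{E}_3$). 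Consequently, whenever $k\notin\mathcal{J}$ the indicator on the right-hand side of $\mathcal{E}_1(k,S)$ vanishes while the left-hand quadratic form is nonnegative, so $\mathcal{E}_1(k,S)$ holds deterministically; the same is true in the degenerate case $\tilde{S}_k=\emptyset$. It therefore suffices to handle each $k\in\mathcal{J}$ separately and bound the single (support-uniform) event below; a union bound over $k$ then produces the factor $K$ in $1-Kp^{-8}$.

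For the deterministic core with $k\in\mathcal{J}$, I would set $S'=S\cup S_0$, so $|S'|\leq 2s$, and invoke the standard fact that the smallest eigenvalue of a Schur complement dominates that of the full positive definite block matrix, giving $\lambda_{\min}\!\big(\schur{\hat{\B{\Sigma}}^{(k)}}{S}{S^*_k}\big)\geq \lambda_{\min}(\hat{\B{\Sigma}}^{(k)}_{S',S'})$. Hence, on the event that $\lambda_{\min}(\hat{\B{\Sigma}}^{(k)}_{S',S'})\geq 0.8\,\phi_k$ holds for \emph{all} $|S'|\leq 2s$, the quadratic form is at least $0.8\,\phi_k\,\|\B{\beta}^*_{k,\tilde{S}_k}\|_2^2$. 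Since $\tilde{S}_k\subseteq S^*_k$, every coordinate of $\B{\beta}^*_{k,\tilde{S}_k}$ is nonzero, so Assumption~\ref{as-beta-min} yields $\|\B{\beta}^*_{k,\tilde{S}_k}\|_2^2\geq |\tilde{S}_k|\,\beta_{\min,k}^2 = |\tilde{S}_k|\,\eta_k\log p/(\phi_k n_k)$. Multiplying, the $\phi_k$ factors cancel and the bound becomes exactly $0.8\,\eta_k\,|\tilde{S}_k|\log p/n_k$, the threshold defining $\mathcal{E}_1(k,S)$.

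The probabilistic heart of the argument, which I expect to be the main obstacle, is establishing the uniform restricted-eigenvalue event $\{\lambda_{\min}(\hat{\B{\Sigma}}^{(k)}_{S',S'})\geq 0.8\,\phi_k\ \forall\,|S'|\leq 2s\}$ with probability at least $1-p^{-8}$ for $k\in\mathcal{J}$. Because $\lambda_{\min}(\B{\Sigma}^{(k)}_{S',S'})\geq\phi_k$ for $|S'|\leq 2s$ by Assumption~\ref{as-rip}, it is enough to show $\sup_{|S'|\leq 2s}\|\hat{\B{\Sigma}}^{(k)}_{S',S'}-\B{\Sigma}^{(k)}_{S',S'}\|_{\text{op}}\leq 0.2\,\phi_k$. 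For a fixed $S'$ of size $m\leq 2s$, $\hat{\B{\Sigma}}^{(k)}_{S',S'}$ is a Gaussian sample covariance, and a standard concentration bound gives, with probability at least $1-2e^{-u}$, an operator-norm deviation $\lesssim \|\B{\Sigma}^{(k)}\|_{\text{op}}\big(\sqrt{(m+u)/n_k}+(m+u)/n_k\big)$. Choosing $u\asymp s\log p$ to dominate the $\binom{p}{\leq 2s}\leq p^{O(s)}$ subsets via a union bound, using $\|\B{\Sigma}^{(k)}\|_{\text{op}}\leq 1$, and invoking $n_k\geq c\,s\log p/\phi_k^2$ from Assumption~\ref{cJdef}, the deviation is $\lesssim \sqrt{s\log p/n_k}\lesssim \phi_k/\sqrt{c}$ (and the linear term is smaller since $\phi_k\leq 1$), which falls below $0.2\,\phi_k$ once the universal constant $c$ is taken large enough.

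The delicate points I expect to be the crux are calibrating the universal constants so that the $0.8$/$0.2$ eigenvalue split survives the concentration step, and confirming that the $\phi_k^{-2}$ (rather than $\phi_k^{-1}$) scaling of the sample-size requirement in Assumption~\ref{cJdef} is exactly what the $\sqrt{s\log p/n_k}$ deviation rate demands; simultaneously, $u\asymp s\log p$ must be chosen with a large enough constant so that the union-bound failure probability per task is at most $p^{-8}$, whence summing over the at most $K$ tasks in $\mathcal{J}$ gives the claimed $1-Kp^{-8}$.
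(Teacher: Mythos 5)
Your proposal is correct and follows essentially the same route as the paper's proof: the same Schur-complement eigenvalue domination (the paper cites Corollary~2.3 of \cite{zhang2006schur}), the same Weyl-type perturbation argument via uniform operator-norm concentration of $\hat{\B{\Sigma}}^{(k)}_{S',S'}$ over all supports $|S'|\leq 2s$ with a union bound calibrated to give $1-Kp^{-8}$, and the same cancellation of $\phi_k$ against the $\beta_{\min,k}$ condition. Your only additions---the projection-form identity for the Schur complement and the explicit observation that $\mathcal{E}_1(k,S)$ holds deterministically when $k\notin\mathcal{J}$ or $\tilde{S}_k=\emptyset$---are points the paper leaves implicit, and they are handled correctly.
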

\begin{proof}
	Let the events $\mathcal{E}_0(k,S)$ for $S\subseteq[p]$ with $|S|\leq 2s$ and $\mathcal{E}$ be defined as
	\begin{equation*}
		\begin{aligned}
			\mathcal{E}_0(k,S)&=\left\{\left\|\hat{\B{\Sigma}}^{(k)}_{S,S}-\B{\Sigma}^{(k)}_{S,S}\right\|_{\text{op}} \lesssim \sqrt{\frac{s\log p}{n_k}}\right\},\\
			\mathcal{E} &= \bigcap_{k\in \mathcal{J}}\bigcap_{\substack{S\subseteq[p] \\  |S|\leq 2s }}\mathcal{E}_0(k,S).
		\end{aligned}
	\end{equation*}
	One has (for example, by Theorem~5.7 of~\cite{rigollet2015high} with $\delta = \exp(-11s\log p)$)
	$$\p(\mathcal{E}_0(k,S))\geq 1-\exp(-11s\log p)$$
	for $k\in\mathcal{J}$ as $n_k\gtrsim s\log p$ is sufficiently large for such $k$ and by Assumption~\ref{as-condition}, $\|\B{\Sigma}^{(k)}_{S,S}\|_{\text{op}}\leq 1$. As a result, by union bound
	\begin{align*}
		\p(\mathcal{E}) &\geq 1-\sum_{k\in\mathcal{J}}\sum_{\substack{ S\subseteq[ p] \\ |S|\leq 2s}}(1-\p(\mathcal{E}_0(k,S)))\geq 1-K\sum_{t=1}^{2s}{p\choose t}\exp(-11s\log p) \\
		&\geq 1-K\sum_{t=1}^{2s} p^{2s} p^{-11s} \geq 1-Kp\times p^{-9}=1-Kp^{-8}.
	\end{align*}
	The rest of the proof is on event $\mathcal{E}$. As $|S|,|S^*|\leq s$, for $k\in\mathcal{J}$,
	\begin{align}
		\|\hat{\B{\Sigma}}^{(k)}(\CS,\CS^*_k)-\B{\Sigma}^{(k)}(\CS,\CS^*_k)\|_{\text{op}} &  \leq c_b \sqrt{\frac{s\log p}{n_k}} :=\pi_k\label{stewineq2}
	\end{align}
	for some absolute constant $c_b>0$ where $\B{\Sigma}(S_1,S_2)$ is defined in~\eqref{gs1s2}. Let for $k\in\mathcal{J}$, 
	$$n_k\gtrsim  s\log p/\phi_k^2$$
	be sufficiently large such that $\pi_k\leq \phi_k/10$. Therefore, one has 
	\begin{align*}
		\lambda_{\min}(\schur{\hat{\B{\Sigma}}^{(k)}}{\CS}{\CS^*_k})&\stackrel{(a)}{\geq} \lambda_{\min}({\hat{\B{\Sigma}}}^{(k)}({\CS},{\CS^*_k})) \\
		&\stackrel{(b)}{\geq}  \lambda_{\min}({\B{\Sigma}^{(k)}}({\CS},{\CS^*_k})) - \|{\hat{\B{\Sigma}}}^{(k)}({\CS},{\CS^*_k}) - {\B{\Sigma}}^{(k)}({\CS},{\CS^*_k})\|_{\text{op}}  \geq \phi_k - 0.1\phi_k>0.8\phi_k
	\end{align*}
	where $\B{\Sigma}(S_1,S_2)$ is defined in~\eqref{gs1s2}, $(a)$ is by Corollary 2.3 of \cite{zhang2006schur} and $(b)$ is due to Weyl's inequality.
	Finally, by setting $\tilde{S}=S_k^*\setminus S$:
	\begin{align*}
		{(\B{\beta}^*_{k,\tilde{\CS}})}^{T} (\schur{\hat{\B{\Sigma}}^{(k)}}{\CS}{\CS_j^*}) \B{\beta}^*_{k,\tilde{\CS}} \geq \lambda_{\min}(\schur{\hat{\B{\Sigma}}^{(k)}}{\CS}{\CS^*})\|\B{\beta}^*_{k,\tilde{\CS}}\|_2^2 \geq 0.8\phi_k \|\B{\beta}^*_{k,\tilde{\CS}}\|_2^2\geq 0.8\eta_k \frac{|\tilde{\CS}|\log p}{n_k}
	\end{align*}
	where the last inequality is achieved by substituting $\beta_{\min}$ condition from Assumption~\ref{as-beta-min}.
\end{proof}

\begin{lemma}\label{suptechlem2}
	One has
	\begin{equation}
		\p\left(\bigcap_{k\in[K]}\bigcap_{\substack{S_k\subseteq[p]\\ |S_k|\leq s} }\mathcal{E}_2(k,S_k)\right)\geq 1-s^2Kp^{-18}.
	\end{equation}
\end{lemma}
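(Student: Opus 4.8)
The plan is to condition on the design $\bX_k$, recognize the noise-design inner product as a mean-zero Gaussian whose variance is governed by the very Schur-complement quadratic form that appears on the right-hand side of $\mathcal{E}_2(k,S_k)$, and then apply a Gaussian lower-tail estimate followed by a union bound over supports. First I would fix $k\in[K]$ and $S_k$ with $|S_k|\leq s$, set $\tilde{S}_k=S^*_k\setminus S_k$ and $t=|\tilde{S}_k|$, and dispose of the case $t=0$: there $\bX_{k,\tilde{S}_k}\B{\beta}^*_{k,\tilde{S}_k}=\B{0}$ and both sides of $\mathcal{E}_2(k,S_k)$ vanish, so the event holds surely. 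Assuming $t\geq1$, let $\B{u}=(\B{I}_{n_k}-\B{P}_{\bX_{k,S_k}})\bX_{k,\tilde{S}_k}\B{\beta}^*_{k,\tilde{S}_k}$. Conditionally on $\bX_k$, the quantity $\tfrac{2}{n_k}\B{\epsilon}_k^T\B{u}$ is a linear image of $\B{\epsilon}_k\sim\cN(\B{0},\sigma_k^2\B{I})$, hence Gaussian with mean $0$ and variance $\tau^2=\tfrac{4\sigma_k^2}{n_k^2}\|\B{u}\|_2^2$, using that $\B{I}_{n_k}-\B{P}_{\bX_{k,S_k}}$ is an orthogonal projector.

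The key step is the algebraic identity
\[
\|\B{u}\|_2^2 = (\B{\beta}^*_{k,\tilde{S}_k})^T\bX_{k,\tilde{S}_k}^T(\B{I}_{n_k}-\B{P}_{\bX_{k,S_k}})\bX_{k,\tilde{S}_k}\B{\beta}^*_{k,\tilde{S}_k} = n_k\,(\B{\beta}^*_{k,\tilde{S}_k})^T\big(\schur{\hat{\B{\Sigma}}^{(k)}}{S_k}{S^*_k}\big)\B{\beta}^*_{k,\tilde{S}_k},
\]
which follows by expanding $\schur{\hat{\B{\Sigma}}^{(k)}}{S_k}{S^*_k}=\tfrac{1}{n_k}\bX_{k,\tilde{S}_k}^T(\B{I}_{n_k}-\B{P}_{\bX_{k,S_k}})\bX_{k,\tilde{S}_k}$ directly from $\hat{\B{\Sigma}}^{(k)}=\bX_k^T\bX_k/n_k$ and the definition of the Schur complement. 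Writing $Q=(\B{\beta}^*_{k,\tilde{S}_k})^T(\schur{\hat{\B{\Sigma}}^{(k)}}{S_k}{S^*_k})\B{\beta}^*_{k,\tilde{S}_k}$, this gives $\tau^2=4\sigma_k^2 Q/n_k$, which matches the scale of the threshold $c_{t_1}\sigma_k\sqrt{Q}\sqrt{t\log p/n_k}$ defining $\mathcal{E}_2(k,S_k)$.

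With this in hand I would apply the Gaussian lower-tail bound $\p(W\leq -r)\leq\exp(-r^2/2\tau^2)$ with $r=c_{t_1}\sigma_k\sqrt{Q}\sqrt{t\log p/n_k}$; the factor $Q$ cancels cleanly, giving
\[
\p\big(\mathcal{E}_2(k,S_k)^c\mid \bX_k\big)\leq \exp\!\Big(-\tfrac{c_{t_1}^2\, t\log p}{8}\Big),
\]
a bound that is uniform in $\bX_k$ (when $Q=0$ the event is again trivial since $\B{u}=\B{0}$), so it survives integration over $\bX_k$. The final step is the union bound: the number of $S_k$ with $|S^*_k\setminus S_k|=t$ is at most $\binom{s}{t}(t+1)p^t$ (choose the $t$ missing coordinates of $S^*_k$, then the at most $t$ extra coordinates outside $S^*_k$), whence
\[
\p\Big(\bigcup_{k\in[K]}\bigcup_{|S_k|\leq s}\mathcal{E}_2(k,S_k)^c\Big)\leq \sum_{k=1}^K\sum_{t=1}^s \binom{s}{t}(t+1)p^t\exp\!\Big(-\tfrac{c_{t_1}^2\, t\log p}{8}\Big).
\]
Choosing $c_{t_1}$ large enough (e.g. $c_{t_1}^2/8\geq 20$) makes the $t=1$ term dominate and bounds the whole double sum by $s^2Kp^{-18}$.

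The probabilistic content here is entirely routine once the Schur-complement identity is recognized; I expect the main obstacle to be the bookkeeping of the union bound — specifically, verifying that a single constant $c_{t_1}$ simultaneously tames the combinatorial factor $\binom{s}{t}(t+1)p^t$ across all $t\in\{1,\dots,s\}$ while landing precisely at the stated $s^2Kp^{-18}$ rate. The elegant part is that the cancellation of $Q$ renders the conditional tail bound independent of the design, so no separate high-probability control of $\hat{\B{\Sigma}}^{(k)}$ is needed in this lemma (in contrast to Lemma~\ref{suptechlem1}).
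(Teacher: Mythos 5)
Your proposal is correct and follows essentially the same route as the paper: the paper's proof simply invokes Lemma~13 of \cite{behdin2021sparse} for the single-event conditional Gaussian tail bound (whose content is exactly your Schur-complement identity $\|\B{u}\|_2^2=n_k Q$ and the resulting cancellation of $Q$), and then performs the same union bound, merely organized as a double sum over $|S_k|$ and $t=|\tilde{S}_k|$ with count $\binom{s}{t}\binom{p-s}{|S_k|-(s-t)}$, which is an equivalent regrouping of your $\binom{s}{t}(t+1)p^t$. Your version is self-contained where the paper's is not, and the constant-bookkeeping you flag is handled identically in the paper by taking the absolute constant $c_{t_1}$ sufficiently large.
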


\begin{proof}
	The proof follows a similar path to the proof of Lemma~13 of~\cite{behdin2021sparse}. Let
	\begin{equation*}
		\B{\gamma}^{(k,\CS_k)}= (\B{I}_{n_k}-\B{P}_{\bX_{k,\CS_k}})\bX_{k,\tilde{\CS}_k}\B{\beta}^*_{k,\tilde{\CS}_k}.
	\end{equation*}
	Based on Lemma~13 of~\cite{behdin2021sparse} we achieve
	\begin{equation}
		\p\left(\frac{\B{\epsilon}_{k}^T \B{\gamma}^{(k,\CS_k)}}{n_k}<- c_{t_1}\sigma_k\sqrt{{(\B{\beta}^*_{k,\tilde{\CS}_k})}^T (\schur{\B{\hat{\Sigma}}^{(k)}}{\CS_k}{\CS_k^*}) \B{\beta}^*_{k,\tilde{\CS}_k}}\sqrt{\frac{2|\tilde{S}_k|\log p}{n_k}}\right) \leq \exp(-20|\tilde{S}_k|\log p)
	\end{equation}
	for some sufficiently large absolute constant $c_{t_1}$. Finally, we complete the proof by using union bound over all possible choices of $k,S$. As a result, the probability of the desired event in the lemma being violated is bounded as
	\begin{align*}
		\sum_{k=1}^K \sum_{|S_k|=1}^s\sum_{t=1}^s \sum_{\substack{ S_k\subseteq[p]  \\|S_k^*\setminus S_k|=t}}\exp(-20t\log p)&= \sum_{k=1}^K  \sum_{|S_k|=1}^s \sum_{t=1}^s {s\choose t}{p-s \choose {|S_k|-(s-t) }}\exp(-20t\log p) \\
		& \leq K \sum_{|S_k|=1}^s \sum_{t=1}^s p^{t}p^{|S_k|-(s-t)} \exp(-20t\log p) \\
		& \stackrel{(a)}{\leq} K \sum_{|S_k|=1}^s \sum_{t=1}^s p^{t}p^{t} \exp(-20t\log p) \\
		& \leq K \sum_{|S_k|=1}^s\sum_{t=1}^s \exp(-18t\log p)\\
		& \leq K \sum_{|S_k|=1}^s\sum_{t=1}^s \exp(-18\log p)\\
		& \leq s^2K  p^{-18} = s^2Kp^{-18}
	\end{align*}
	where $(a)$ is true as $|S_k|\leq s$, or $|S_k|-s\leq 0$.
\end{proof}

\begin{lemma}\label{suptechlem3}
	One has
	\begin{equation}
		\p\left(\bigcap_{k\in[K]}\bigcap_{\substack{S_k\subseteq[p]\\ |S_k|=s} }\mathcal{E}_3(k,S_k)\right)\geq 1-4s^2Kp^{-18}.
	\end{equation}
\end{lemma}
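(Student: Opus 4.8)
The plan is to condition on $\bX_k$ (so that the two projection matrices become deterministic) and to exploit the fact that, although $\B{P}_{\bX_{k,S_k}}-\B{P}_{\bX_{k,S_k^*}}$ is indefinite, its positive part is a projection whose rank is controlled by $|\tilde{S}_k|$. Write $C=S_k\cap S_k^*$ for the common indices. Since the column span of $\bX_{k,C}$ is contained in both the span of $\bX_{k,S_k}$ and that of $\bX_{k,S_k^*}$, the matrices $\B{Q}_1:=\B{P}_{\bX_{k,S_k}}-\B{P}_{\bX_{k,C}}$ and $\B{Q}_2:=\B{P}_{\bX_{k,S_k^*}}-\B{P}_{\bX_{k,C}}$ are themselves orthogonal projections (onto the orthogonal complement of $\text{span}(\bX_{k,C})$ inside each larger span) and hence positive semidefinite. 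Consequently
$$\B{\epsilon}_k^T(\B{P}_{\bX_{k,S_k}}-\B{P}_{\bX_{k,S_k^*}})\B{\epsilon}_k=\B{\epsilon}_k^T\B{Q}_1\B{\epsilon}_k-\B{\epsilon}_k^T\B{Q}_2\B{\epsilon}_k\leq \B{\epsilon}_k^T\B{Q}_1\B{\epsilon}_k,$$
so it suffices to bound the single quadratic form $\B{\epsilon}_k^T\B{Q}_1\B{\epsilon}_k$ from above.

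The key observation is the rank count: $\text{rank}(\B{Q}_1)=\dim\text{span}(\bX_{k,S_k})-\dim\text{span}(\bX_{k,C})\leq |S_k\setminus C|=|S_k\setminus S_k^*|$, and because $|S_k|=|S_k^*|=s$ in this lemma we have $|S_k\setminus S_k^*|=|S_k^*\setminus S_k|=|\tilde{S}_k|$. Thus $\B{Q}_1$ is a projection of rank at most $|\tilde{S}_k|$, and since $\B{\epsilon}_k\sim\cN(\B{0},\sigma_k^2\B{I})$ is independent of $\bX_k$, conditionally on $\bX_k$ the variable $\B{\epsilon}_k^T\B{Q}_1\B{\epsilon}_k/\sigma_k^2$ is $\chi^2$ with at most $|\tilde{S}_k|$ degrees of freedom. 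A standard $\chi^2$ tail bound (e.g. Laurent--Massart) applied with deviation parameter proportional to $|\tilde{S}_k|\log p$ then gives, for a suitably large absolute constant $c_{t_2}$,
$$\p\left(\B{\epsilon}_k^T\B{Q}_1\B{\epsilon}_k> c_{t_2}\sigma_k^2|\tilde{S}_k|\log p\,\middle|\,\bX_k\right)\leq \exp(-20|\tilde{S}_k|\log p).$$
This bound depends on $\bX_k$ only through the rank of $\B{Q}_1$, so integrating over $\bX_k$ leaves it unchanged, and combined with the display above it controls $\p(\mathcal{E}_3(k,S_k)^c)$.

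Finally I would apply a union bound over $k\in[K]$ and over all $S_k$ with $|S_k|=s$, stratified by the value $t=|\tilde{S}_k|$ (the case $t=0$ forces $S_k=S_k^*$, for which the quadratic form vanishes and $\mathcal{E}_3$ holds trivially). The number of size-$s$ sets $S_k$ with $|S_k^*\setminus S_k|=t$ equals $\binom{s}{t}\binom{p-s}{t}\leq p^{2t}$, so the total failure probability is at most $\sum_{k=1}^K\sum_{t=1}^s p^{2t}\exp(-20t\log p)=K\sum_{t=1}^s p^{-18t}\leq 4s^2Kp^{-18}$, matching the claim after absorbing constants. The main obstacle is the second paragraph: recognizing that the indefinite difference of projections can be upper bounded by a single projection whose rank is exactly $|\tilde{S}_k|$ (rather than the cruder $s$), which is what produces the tight $|\tilde{S}_k|\log p$ scaling on the right-hand side needed downstream; the remaining $\chi^2$ concentration and the counting are routine.
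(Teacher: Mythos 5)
Your proof is correct and is, in substance, the same argument the paper relies on: the paper's one-line proof outsources the concentration step to Lemma~15 of \cite{behdin2021sparse} --- which is precisely the $\chi^2$ tail bound on $\B{\epsilon}_k^T(\B{P}_{\bX_{k,S_k}}-\B{P}_{\bX_{k,S_k^*}})\B{\epsilon}_k$ that you derive explicitly via the decomposition through the common subspace $\mathrm{span}(\bX_{k,S_k\cap S_k^*})$, using that $|S_k|=|S_k^*|=s$ forces $\mathrm{rank}(\B{Q}_1)\leq|\tilde{S}_k|$ --- and then applies the same union bound stratified by $t=|\tilde{S}_k|$ as in Lemma~\ref{suptechlem2}. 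Your self-contained write-up checks out at every step (the $t=0$ case, the conditional $\chi^2$ argument with the bound depending on $\bX_k$ only through the rank, and the count $\binom{s}{t}\binom{p-s}{t}\leq p^{2t}$, which is in fact slightly tighter than the paper's Lemma~\ref{suptechlem2}-style count and comfortably yields the stated $1-4s^2Kp^{-18}$).
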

\begin{proof}
	The proof of this lemma follows the proof of Lemma 15 of~\cite{behdin2021sparse} and a union bound argument similar to the one in Lemma~\ref{suptechlem2}.
\end{proof}

\begin{proof}[\textbf{Proof of Theorem~\ref{supthm}}]

	The proof of this theorem on the intersection of events defined in~\eqref{events}. By Lemmas~\ref{suptechlem1} to~\ref{suptechlem3} and union bound, this happens with probability at least 
	\begin{equation}\label{supthmprob}
		1-6s^2Kp^{-8}.
	\end{equation}
	Recalling the definition of $\CR_{k,S}(\cdot)$ in~\eqref{optimalobj}, one has (see calculations leading to~(89) of~\cite{behdin2021sparse}),
	\begin{align}\label{objexpanded}
		\CR_{k,\CS_k}(\Bb{y}_k)  ={(\B{\beta}^*_{k,\tilde{\CS}_k})}^T (\schur{\hat{\B{\Sigma}}^{(k)}}{\CS_k}{\CS_k^*}) \B{\beta}^*_{k,\tilde{\CS}_k} + \frac{2}{n_k}\B{\epsilon}_{k}^T (\B{I}_{n_k}-\B{P}_{\bX_{k,\CS_k}})\bX_{k,\tilde{\CS}_k}\B{\beta}^*_{\tilde{\CS}_k} + \frac{1}{n_k} \B{\epsilon}_{k}^T(\B{I}_{n_k}-\B{P}_{\bX_{k,\CS_k}})\B{\epsilon}_{k}
	\end{align}
	and
	\begin{align}\label{objexpandedopt}
		\CR_{k,\CS^*_k}(\Bb{y}_k)  = \frac{1}{n_k} \B{\epsilon}_{k}^T(\B{I}_{n_k}-\B{P}_{\bX_{k,\CS^*_k}})\B{\epsilon}_{k}.
	\end{align}
	As a result, one can write
	\begin{align}
		& \CR_{k,\CS_k}(\Bb{y}_k)-\CR_{k,\CS^*_k}(\Bb{y}_k) \nonumber\\
		\stackrel{}{=} &  {(\B{\beta}^*_{k,\tilde{S}_k})}^{T} (\schur{\hat{\B{\Sigma}}^{(k)}}{\CS_k}{\CS_k^*}) \B{\beta}^*_{k,\tilde{S}_k}+\frac{2}{n_k}\B{\epsilon}_k^{T} (\B{I}_{n_k}-\B{P}_{\bX_{k,\CS_k}})\bX_{k,\tilde{\CS}_k}\B{\beta}^*_{k,\tilde{\CS}_k}-\frac{1}{n_k}\B{\epsilon}_k^{T} (\B{P}_{\bX_{k,\CS_k}}-\B{P}_{\bX_{k,\CS^*_k}}) \B{\epsilon}_k \nonumber \\
		\stackrel{(a)}{\geq} & {(\B{\beta}^*_{k,\tilde{S}_k})}^{T} (\schur{\hat{\B{\Sigma}}^{(k)}}{\CS_k}{\CS_k^*}) \B{\beta}^*_{k,\tilde{S}_k}-c_{t_1}\sigma_k \sqrt{{(\B{\beta}^*_{k,\tilde{S}_k})}^{T} (\schur{\hat{\B{\Sigma}}^{(k)}}{\CS_k}{\CS_k^*}) \B{\beta}^*_{k,\tilde{S}_k}}\sqrt{\frac{|\tilde{\CS}_k|\log p}{n_k}}-c_{t_2}\sigma_k^2\frac{|\tilde{S}_k|\log p}{n_k} \nonumber \\
		\stackrel{(b)}{\geq} & \frac{1}{2}{(\B{\beta}^*_{k,\tilde{S}_k})}^{T} (\schur{\hat{\B{\Sigma}}^{(k)}}{\CS_k}{\CS^*_k}) \B{\beta}^*_{k,\tilde{S}_k}-\left(2c_{t_1}^2+c_{t_2}\right) \sigma_k^2 \frac{|\tilde{S}_k|\log p}{n_k} \nonumber \\
		\stackrel{(c)}{\geq} & 0.4\eta_k \frac{|\tilde{\CS}_k|\log p}{n_k}\1(k\in \mathcal{J})-\left(2c_{t_1}^2+c_{t_2}\right) \sigma_k^2 \frac{|\tilde{S}_k|\log p}{n_k} \nonumber \\
		\stackrel{}{\geq } & 0.4\eta_k \frac{|\tilde{\CS}_k|\log p}{\bar{n}}\1(k\in \mathcal{J})-\left(2c_{t_1}^2+c_{t_2}\right) \sigma_k^2 \frac{|\tilde{S}_k|\log p}{\barbelow{n}}\nonumber\\
		\stackrel{(d)}{\geq } & 0.4\eta_kc_n \frac{|\tilde{\CS}_k|\log p}{\barbelow{n}}\1(k\in \mathcal{J})-\left(2c_{t_1}^2+c_{t_2}\right) \sigma_k^2 \frac{|\tilde{S}_k|\log p}{\barbelow{n}}\label{supthm-helper25}
	\end{align}
	where $(a)$ is by events $\mathcal{E}_2,\mathcal{E}_3$ defined in~\eqref{events}, $(b)$ is by the inequality $-2ab\geq -2a^2-b^2/2$, $(c)$ is due to event $\mathcal{E}_1$ in~\eqref{events} and $(d)$ is by Assumption~\ref{as-cn}. By optimality of $\hat{S}_k$, we have
	\begin{equation}\label{supthmfromopt}
		\sum_{k=1}^K \CR_{k,\CS^*_k}(\Bb{y}_k)+h(\B{z}_1^*,\cdots,\B{z}_K^*)\geq  \sum_{k=1}^K\CR_{k,\hat{\CS}_k}(\Bb{y}_k)+h(\hat{\B{z}}_1,\cdots,\hat{\B{z}}_K)
	\end{equation}
	where
	$$h(\B{z}_1,\cdots,\B{z}_K)=\delta\sum_{k=1}^K \|\B{z}_k - \bar{\B{z}}\|_2^2.$$
	Consequently, 
	\begin{align}
		\delta |S^*_{\text{all}}\setminus S^*_{\text{common}}| & \stackrel{(a)}{\geq}h(\B{z}_1^*,\cdots,\B{z}_K^*) \nonumber \\
		& \stackrel{(b)}{\geq}\sum_{k=1}^K\CR_{k,\hat{\CS}_k}(\Bb{y}_k)- \sum_{k=1}^K\CR_{k,\CS^*_k}(\Bb{y}_k)+h(\hat{\B{z}}_1,\cdots,\hat{\B{z}}_K)\nonumber \\
		& \stackrel{(c)}{\geq} \sum_{k=1}^K\CR_{k,\hat{\CS}_k}(\Bb{y}_k)- \sum_{k=1}^K\CR_{k,\CS^*_k}(\Bb{y}_k)+\frac{\delta|\hat{S}_{\text{all}}\setminus \hat{S}_{\text{common}}|}{K}\nonumber \\
		& \stackrel{(d)}{\geq}\frac{\delta|\hat{S}_{\text{all}}\setminus \hat{S}_{\text{common}}|}{K}+\frac{\log p}{\barbelow{n}}\sum_{k=1}^K|\tilde{S}_k|\left[0.4c_n\eta_k \1(k\in\cJ)-(2c_{t_1}^2+c_{t_2})\sigma_k^2\right]\label{thm2-tobeused-prop}
	\end{align}
	where $(a)$ is a result of Lemma~\ref{tech-lem3} and regularity, $(b)$ is due to~\eqref{supthmfromopt}, $(c)$ is by Lemma~\ref{tech-lem2} and $(d)$ is by~\eqref{supthm-helper25}. This completes the proof.
	
\end{proof}

\subsection{Proof of Corollary~\ref{cor3}}

\begin{proof}
	Note that by the choice of $\delta$ here, the solution has common support from Corollary~\ref{cor1}. As a result,
	\begin{align}
		0  & \stackrel{(a)}{=} \delta |S^*_{\text{all}}\setminus S^*_{\text{common}}| \nonumber \\
		& \stackrel{(b)}{\geq}\frac{\delta|\hat{S}_{\text{all}}\setminus \hat{S}_{\text{common}}|}{K}+\frac{\log p}{\barbelow{n}}\sum_{k=1}^K|\tilde{S}_k|\left[0.4c_n\eta_k \1(k\in\cJ)-(2c_{t_1}^2+c_{t_2})\sigma_k^2\right] \nonumber \\
		& \stackrel{(c)}{= }|\tilde{S}_1|\frac{\log p}{\barbelow{n}}\sum_{k=1}^K\left[0.4c_n\eta_k \1(k\in\cJ)-(2c_{t_1}^2+c_{t_2})\sigma_k^2\right] \stackrel{(d)}{\geq} 0\label{cor3-helper}
	\end{align}
	where $(a)$ is due to the fact that the underlying model has common support, $(b)$ is by Theorem~\ref{supthm}, $(c)$ is because the optimal solution has common support, as described above, so $|\hat{S}_{\text{all}}\setminus\hat{S}_{\text{common}}|=0$ and $\tilde{S}_1=\cdots=\tilde{S}_K$. $(d)$ is by taking $\eta_k$ large enough such that
	$$\sum_{k=1}^K\left[0.4c_n\eta_k \1(k\in\cJ)-(2c_{t_1}^2+c_{t_2})\sigma_k^2\right]>0,$$
	as in Condition~\eqref{supreccond}. From~\eqref{cor3-helper}, we must have $|\tilde{S}_1|=0$ which completes the proof.
\end{proof}

\subsection{Proof of Corollary~\ref{cor4}}
\begin{proof}
	Here, we choose
	\begin{equation}\label{cor4-delta}
		\delta=\frac{s\log p}{\barbelow{n}|S^*_{\text{all}}\setminus S^*_{\text{common}}|}\sum_{k=1}^K \sigma_k^2 .
	\end{equation}
	From Theorem~\ref{supthm} we have that with high probability,
	\begin{align}
		\frac{s\log p}{\barbelow{n}}\sum_{k=1}^K \sigma_k^2 & \stackrel{(a)}{\gtrsim}  \delta |S^*_{\text{all}}\setminus S^*_{\text{common}}| +  \frac{\log p}{\barbelow{n}}\sum_{k=1}^K (2c_{t_1}^2+c_{t_2})|\tilde{S}_k|\sigma_k^2 \nonumber \\
		&\stackrel{(b)}{\geq}\frac{\delta|\hat{S}_{\text{all}}\setminus \hat{S}_{\text{common}}|}{K}+ \frac{\log p}{\barbelow{n}}\sum_{k=1}^K0.4c_n\eta_k \1(k\in\cJ)|\tilde{S}_k|\label{cor4-helper}
	\end{align}
	where $(a)$ is by choice of $\delta$ in this corollary and the fact $|\tilde{S}_k|\leq s$, and $(b)$ is by Theorem~\ref{supthm}. From~\eqref{cor4-helper}, we have that 
	$$|\hat{S}_{\text{all}}\setminus \hat{S}_{\text{common}}|\lesssim K |S^*_{\text{all}}\setminus S^*_{\text{common}}|$$
	which completes the first part of the proof. Next, assume there is some $j\in S^*_{\text{common}}$ such that $j\notin \hat{S}_k$ for $k\in\mathcal{J}$. As a result, for $k\in\mathcal{J}$, we have that $|\tilde{S}_k|\geq 1$. Consequently, from~\eqref{cor4-helper} we have
	$$s\sum_{k=1}^K\sigma_k^2\gtrsim \sum_{k\in\mathcal{J}}\eta_k$$
	which is a contradiction. As a result, $j\in \cup_{k\in\mathcal{J}}\hat{S}_k$.
\end{proof}

\subsection{Proof of Proposition~\ref{app-prop1}} \label{app:prop1_proof}

Here, we prove a more general version of Proposition~\ref{app-prop1} where we allow all regularization parameters $\lambda,\delta,\alpha$ to be possibly nonzero. In particular, for a given set of regularization parameters $\lambda,\delta,\alpha$, we assume the approximate solution $(\check{\mathbb{B}},\check{\B{Z}})$ satisfies
\begin{align}\label{mip-lowerbound-general}
	&   \sum_{k=1}^K \frac{1}{n_k} \norm{  \mathbf{y}_k - \mathbb{X}_k \check{\boldsymbol{\beta}}_k }_2^2  + \delta \sum_{k=1}^K\norm{\check{\boldsymbol{z}}_k - \bar{\check{\boldsymbol{z}}}}_2^2 + \lambda \sum_{k=1}^K \|\check{\B\beta}_k-\bar{\check{\B\beta}}\|_2^2 +\alpha\|\check{\mathbb{B}}\|_F^2 \nonumber\\
	\leq & \sum_{k=1}^K \frac{1}{n_k} \norm{  \mathbf{y}_k - \mathbb{X}_k \hat{\boldsymbol{\beta}}_k }_2^2  + \delta \sum_{k=1}^K\norm{\hat{\boldsymbol{z}}_k - \bar{\hat{\boldsymbol{z}}}}_2^2+   \lambda \sum_{k=1}^K \|\hat{\B\beta}_k-\bar{\hat{\B\beta}}\|_2^2 +\alpha\|\hat{\mathbb{B}}\|_F^2+\tau
\end{align}
for some optimality gap $\tau\geq 0$ where $(\hat{\bB},\hat{\B Z})$ is an optimal solution to Problem~\eqref{suppHet} for the given values of $\lambda,\delta,\alpha$. We prove the following:

\begin{proposition}\label{app:app-prop1}
	Suppose $\{\check{\B{\beta}}_k,\check{\B{z}}_k\}_{k=1}^K$ is as defined in~\eqref{mip-lowerbound-general}. Then, under our assumed model setup w.h.p. we have
	\begin{align*}
		&    \sum_{k=1}^K \left\{\frac{1}{n_k} \|\bX_k(\B{\beta}_k^*-\check{\B{\beta}}_k)\|_2^2+ \delta  \|\check{\B{z}}_k-\bar{\check{\B{z}}}\|_2^2+\lambda \sum_{k=1}^K \|\check{\B\beta}_k-\bar{\check{\B\beta}}\|_2^2\right\}\\ 
		\lesssim & \sum_{k=1}^K\left\{ \frac{\sigma_k^2 s \log(p/s)}{n_k}+  \delta \sum_{k=1}^K\norm{\boldsymbol{z}_k^* - \bar{\boldsymbol{z}}^*}_2^2 + \lambda \sum_{k=1}^K \|\B\beta^*_k-\bar{\B\beta}^*\|_2^2+\frac{1}{n_k} \|\B{r}_k\|_2^2\right\}+\alpha\| \mathbb{B}^*\|_F^2+\tau.
	\end{align*}
\end{proposition}

\begin{proof}
	Using a notation similar to the one from the proof of Theorem~\ref{thm1}, by~\eqref{mip-lowerbound-general}
	
	\begin{align}
		&   \sum_{k=1}^K \frac{1}{n_k} \norm{  \mathbf{y}_k - \mathbb{X}_k \check{\boldsymbol{\beta}}_k }_2^2  + \delta \sum_{k=1}^K\norm{\check{\boldsymbol{z}}_k - \bar{\check{\boldsymbol{z}}}}_2^2 + \lambda \sum_{k=1}^K \|\check{\B\beta}_k-\bar{\check{\B\beta}}\|_2^2 +\alpha\|\check{\mathbb{B}}\|_F^2 \\
		\leq & \sum_{k=1}^K \frac{1}{n_k} \norm{  \mathbf{y}_k - \mathbb{X}_k \hat{\boldsymbol{\beta}}_k }_2^2  + \delta \sum_{k=1}^K\norm{\hat{\boldsymbol{z}}_k - \bar{\hat{\boldsymbol{z}}}}_2^2+   \lambda \sum_{k=1}^K \|\hat{\B\beta}_k-\bar{\hat{\B\beta}}\|_2^2 +\alpha\|\hat{\mathbb{B}}\|_F^2+\tau \nonumber \\
		\stackrel{(a)}{\leq} & \sum_{k=1}^K \frac{1}{n_k} \norm{  \mathbf{y}_k - \mathbb{X}_k \boldsymbol{\beta}^*_k }_2^2  + \delta \sum_{k=1}^K\norm{\boldsymbol{z}_k^* - \bar{\boldsymbol{z}}^*}_2^2 + \lambda \sum_{k=1}^K \|\B\beta^*_k-\bar{\B\beta}^*\|_2^2 +\alpha\| \mathbb{B}^*\|_F^2+\tau
	\end{align}
	
	where $(a)$ is due to optimality of $(\hat{\B\beta}_k,\hat{\B z}_k)$ and feasibility of $(\B\beta^*_k,\B z_k^*)$. Rest of the proof follows from the proof of Theorem~\ref{thm1}.
\end{proof}
\subsection{Proof of Proposition~\ref{app-prop2}}
\begin{proof}
	Throughout this proof, we use notations from the proof of Theorem~\ref{supthm}. Particularly, let $\check{S}_k=\{j:\check z_{k,j}=1\}$ and we have
	\begin{align}
		\sum_{k=1}^K \CR_{k,\check{\CS}_k}(\Bb{y}_k)+h(\check{\B{z}}_1,\cdots,\check{\B{z}}_K)& \stackrel{(a)}{\leq}   \sum_{k=1}^K \frac{1}{n_k} \norm{  \mathbf{y}_k - \mathbb{X}_k \check{\boldsymbol{\beta}}_k }_2^2  + \delta \sum_{k=1}^K\norm{\check{\boldsymbol{z}}_k - \bar{\check{\boldsymbol{z}}}}_2^2 \nonumber \\
		&\stackrel{(b)}{\leq} \sum_{k=1}^K\CR_{k,\hat{\CS}_k}(\Bb{y}_k)+h(\hat{\B{z}}_1,\cdots,\hat{\B{z}}_K)+\tau \nonumber \\
		& \stackrel{(c)}{\leq} \sum_{k=1}^K\CR_{k,\CS^*_k}(\Bb{y}_k)+h(\B{z}^*_1,\cdots,\B{z}^*_K)+\tau\label{propH2-1}
	\end{align}
	where $(a)$ is by definition of $\CR_{k,{\CS}_k}$ in~\eqref{optimalobj}, $(b)$ is by~\eqref{mip-lowerbound}, and $(c)$ is by optimality of $\hat{\B Z},\hat{\mathbb{B}}$. By the chain of inequalities leading to~\eqref{thm2-tobeused-prop}, we have
	\begin{align}
		\delta |S^*_{\text{all}}\setminus S^*_{\text{common}}| & \stackrel{}{\geq}h(\B{z}_1^*,\cdots,\B{z}_K^*) \nonumber \\
		& \stackrel{(a)}{\geq}\sum_{k=1}^K\CR_{k,\check{\CS}_k}(\Bb{y}_k)- \sum_{k=1}^K\CR_{k,\CS^*_k}(\Bb{y}_k)+h(\check{\B{z}}_1,\cdots,\check{\B{z}}_K)-\tau\nonumber \\
		& \stackrel{}{\geq} \sum_{k=1}^K\CR_{k,\check{\CS}_k}(\Bb{y}_k)- \sum_{k=1}^K\CR_{k,\CS^*_k}(\Bb{y}_k)+\frac{\delta|\check{S}_{\text{all}}\setminus \check{S}_{\text{common}}|}{K}-\tau\nonumber \\
		& \stackrel{(b)}{\geq}\frac{\delta|\check{S}_{\text{all}}\setminus \check{S}_{\text{common}}|}{K}+\frac{\log p}{\barbelow{n}}\sum_{k=1}^K|\bar{S}_k|\left[0.4c_n\eta_k \1(k\in\cJ)-(2c_{t_1}^2+c_{t_2})\sigma_k^2\right]-\tau
	\end{align}
	where $(a)$ is by~\eqref{propH2-1} and $(b)$ is by~\eqref{supthm-helper25}. This completes the proof.
\end{proof}

\section{Simulations} \label{main_sims}

\subsection{Simulation Scheme} \label{sim_scheme}

We set $\sigma^2_{\beta} = 50$ to simulate high heterogeneity in model coefficients across tasks.  This was selected based on the sample variances of regression coefficient estimates in our data applications. We estimated this by fitting separate Ridge regressions to each task (the features and outcomes were centered and scaled to make the estimates more comparable across applications) and then estimating the mean (across coefficients) of the sample variances (across tasks): $\frac{1}{p(K-1)} \sum_{j=1}^p \sum_{k=1}^K (\hat{\beta}_{k,j} -\frac{1}{K} \sum_{l=1}^K \hat{\beta}_{l,j})^2$. We calculated estimates of $3.49 \times 10^{-5}$ and $51.03$ in the cancer and neuroscience application, respectively, indicating that the two multi-task applications have considerably different levels of heterogeneity across tasks according to this measure and that our $\sigma^2_{\beta}$ fell within a range justified by our data.

\subsection{Estimation and Hyperparameter Tuning} \label{sims_tuning}

For our $\ell_0$-constrained methods, we used solutions (solved with block CD without local search) to a model with a support size of $4s$ as a warm start for our final model. Using this warm start, we fit final models (with support size $s$) with block CD followed by up to 50 iterations of local search. We initialized the parameter values of the warm start block CD at a matrix of zeros. 

When fitting paths of solution for hyperparameter tuning, we constructed a hyperparameter grid of length 100 and used the solution at a given hyperparameter value as a warm start for the model trained at the subsequent hyperparameter value. For a fixed $s$ value, we ordered the $\lambda$ values (associated with the $\bar{\boldsymbol{\beta}}$ penalty) from lowest to highest and $\alpha$ values (associated with the $\norm{\mathbb{B}}_F^2$ penalty) from highest to lowest. We did not fit models in which both $\alpha$ and $\lambda$ were nonzero since both can induce shrinkage. Finally, for a given $s$ and $\lambda$ (or $\alpha$) value, we ordered $\delta$ values (associated with the $\bar{\boldsymbol{z}}$ penalty) from lowest to highest.

We tuned hyperparameters of the group penalties over a grid of values that yielded coefficient solutions with sparsity level $s$ in two steps. In step one, we tuned over a grid of 5000 values by setting $\texttt{nlambda=5000}$ in the $\texttt{sparsegl}$ and $\texttt{grpreg}$ packages. We then identified the subset of tuning parameter values that produced coefficient estimates with average cardinality no greater than $s$: ${\Lambda} := \{\lambda~:~ \frac{1}{K} \sum_{k=1}^K \norm{\hat{\boldsymbol{\beta}}_k^{(\lambda)}}_0 = \bar{s} \leq s \}$, where $\hat{\boldsymbol{\beta}}_k^{(\lambda)}$ denotes the vector of coefficient estimates for task $k$, fit with hyperparameter value $\lambda$. If $| {\Lambda} | \leq 100$, we selected the hyperparameter value in $\Lambda$ that achieved the lowest cross-validated error. Otherwise, we selected the hyperparameter value that achieved the lowest cross-validated error from a random subset (drawn with uniform probability), $\tilde{\Lambda} \subseteq {\Lambda}$,  such that $| \tilde{\Lambda} | = 100$. 

Similarly, when tuning the Bbar and Zbar+L2 methods, we tuned in two steps. Since the scale of the Bbar and Zbar penalties, relative to the MTL squared error loss, can differ between datasets, the two-stage tuning procedure allowed us to quickly identify a reasonable neighborhood of hyperparameter values without requiring us to tune over a large grid. Specifically, in step one, we tuned over an initial grid of values for $\lambda$ or $\delta$, of roughly length 90. We identified the optimal hyperparameter, $\lambda^*_1$ or $\delta^*_1$ and created a small grid from the set in that neighborhood ($[\lambda^*_1 / 10, 10 *\lambda^*_1]$ or $[\delta^*_1 / 10, 10 *\delta^*_1]$) of roughly size 10. In the second stage, we selected the final hyperparameter values, $\lambda^*_2$ or $\delta^*_2$ from this smaller grid.

\subsection{Performance Metrics} \label{performance}

We denote the outcome vector and design matrix of the \textit{test} dataset for task $k$ as $\mathbf{y}^+_{k}$ and $\mathbb{X}^+_{k}$, respectively. 

We define the following estimation metrics:

\begin{itemize}
	\item \textbf{Average out of sample prediction performance (RMSE):} $\frac{1}{K \sqrt{n_k}} \sum_{k=1}^K \norm{\mathbf{y}_k^+ - \mathbb{X}_k^+ \hat{\boldsymbol{\beta}}_k}_2$ 
	
	\item \textbf{Average coefficient estimation performance (RMSE):} $\frac{1}{K \sqrt{p+1}} \sum_{k=1}^K \norm{ \boldsymbol{\beta}_k - \hat{\boldsymbol{\beta}}_k}_2$ 
	
	\item \textbf{Average True Positives} The average number of nonzero coefficients in both $\boldsymbol{\beta}_k$ and $\hat{\boldsymbol{\beta}}_k$: $\frac{1}{K}\sum_k TP_k$, where $TP_k = \boldsymbol{z}_k^T \hat{\boldsymbol{z}}_k$
	
	\item \textbf{Average False Negatives} The average number of zero coefficients in $\hat{\boldsymbol{\beta}}_k$ that are nonzero in $\boldsymbol{\beta}_k$: $\frac{1}{K}\sum_k FP_k$, where $FP_k = \boldsymbol{z}_k^T (\mathbbm{1} - \hat{\boldsymbol{z}}_k)$
	
	\item \textbf{Average Support Recovery F1 Score}: $\sum_k\frac{2P_k R_k}{K(P_k + R_k)}$,  where $R_k = TP_k / s_k$ and $P_k =  TP_k/(TP_k + FP_k)$. When the support is fully recovered, this quantity equals 1
	
	\item \textbf{Support Homogeneity}: $\binom{K}{2}^{-1} p^{-1} \sum_{j=1}^p \sum_{k=1}^K \sum_{l < k} \hat{z}_{k,j} \hat{z}_{l,j}$, where $\hat{z}_{k,j} = (\hat{\beta}_{k,j} \neq 0)$. When the supports are identical across tasks, this quantity equals 1 and if supports are totally different, this quantity equals 0
\end{itemize}

\subsection{Main Text Simulation Results} \label{supp_sims}

We provide more extensive versions of the simulation results in Section \ref{sims}. We first provide results of the methods presented in the main text across three levels of covariate correlation levels (exponential correlation $\rho$ where $\rho \in \{0.2, 0.5, 0.8\}$) and then present results from the same simulations (and same performance criteria) on a second subset of methods. We divide these into two plots for ease of comparison given the number of methods explored.

\subsubsection{Prediction Performance: RMSE}

\begin{figure}[H]
	\centerline{
		\includegraphics[width=0.85\linewidth]{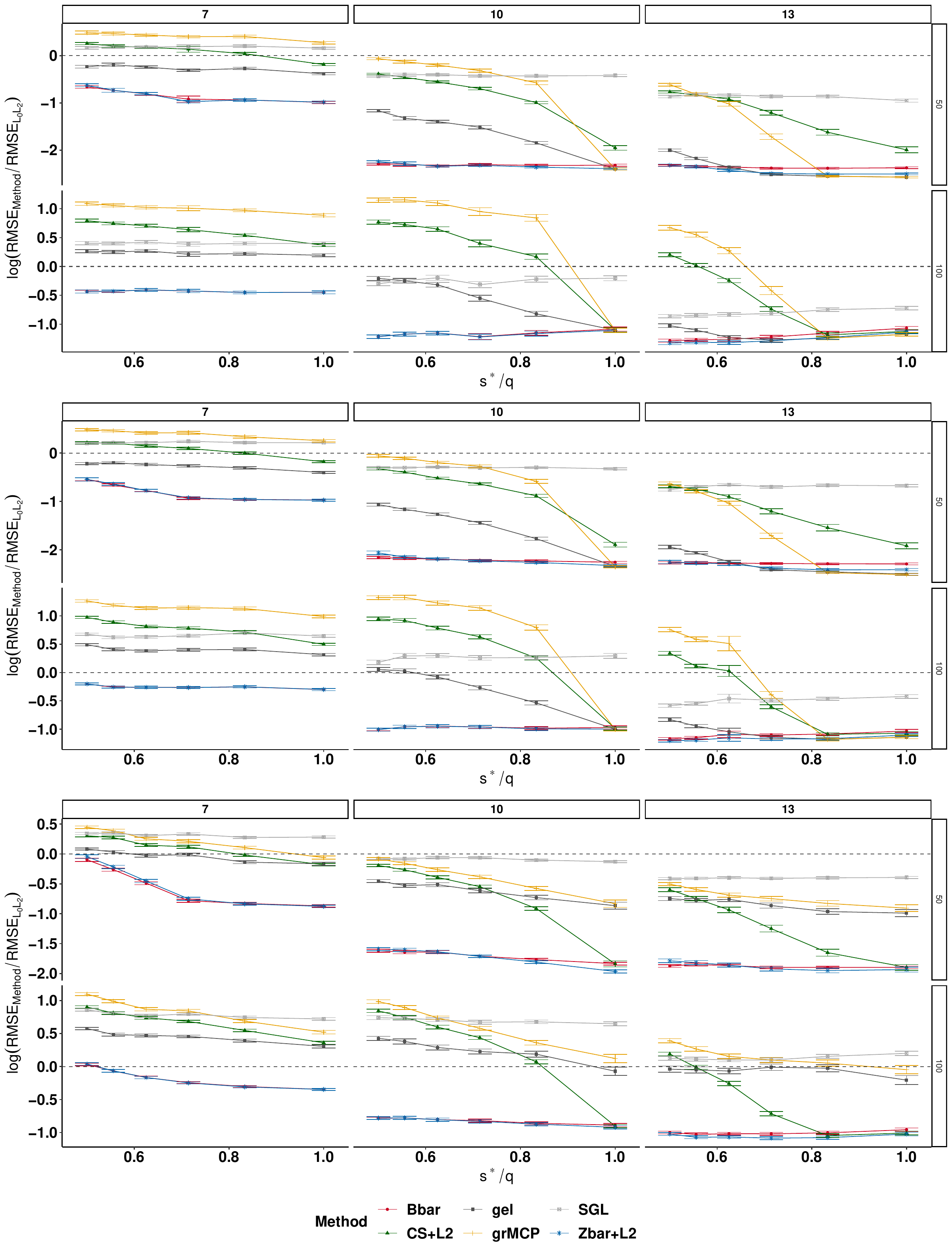}
	}
	\caption{\footnotesize  Out-of-sample prediction performance averaged across tasks for different sample sizes, $n_k$ (displayed on the horizontal panels), values of support size, $s$, for model fitting (vertical panels), degrees of support heterogeneity, $s^*/q$ (x-axis), and covariate correlation levels $\rho \in \{0.2, 0.5,0.8\}$ (from top to bottom figure). Low $s^*/q$ indicates high support heterogeneity and $s/q = 1$ indicates that all tasks had identical support. The performance of each method is presented in reference to the performance of a $L_0 L_2$ and thus lower values indicate superior (relative) prediction accuracy.} 
	\label{fig:sims_rmse_supp}
\end{figure}

\begin{figure}[H]
	\centerline{
		\includegraphics[width=0.9\linewidth]{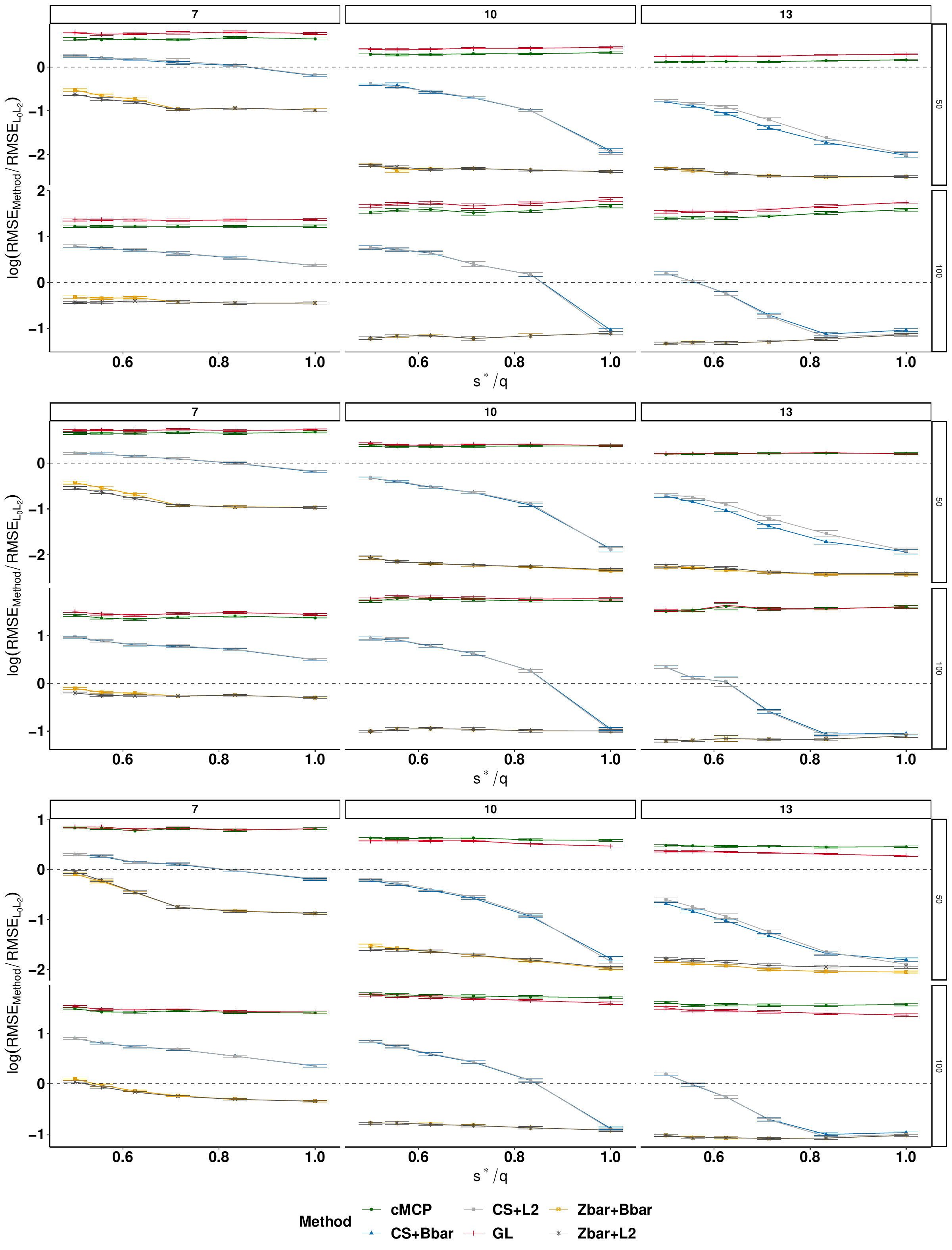}
	}
	\caption{\footnotesize Out-of-sample prediction performance averaged across tasks for different sample sizes, $n_k$ (displayed on the horizontal panels), values of support size, $s$, for model fitting (vertical panels), degrees of support heterogeneity, $s^*/q$ (x-axis), and covariate correlation levels $\rho \in \{0.2, 0.5,0.8\}$ (from top to bottom figure). Low $s^*/q$ indicates high support heterogeneity and $s/q = 1$ indicates that all tasks had identical support. The performance of each method is presented in reference to the performance of a $L_0 L_2$ and thus lower values indicate superior (relative) prediction accuracy.} 
	\label{fig:sims_rmse_supp2}
\end{figure}

\begin{figure}[H]
	\centerline{
		\includegraphics[width=0.9\linewidth]{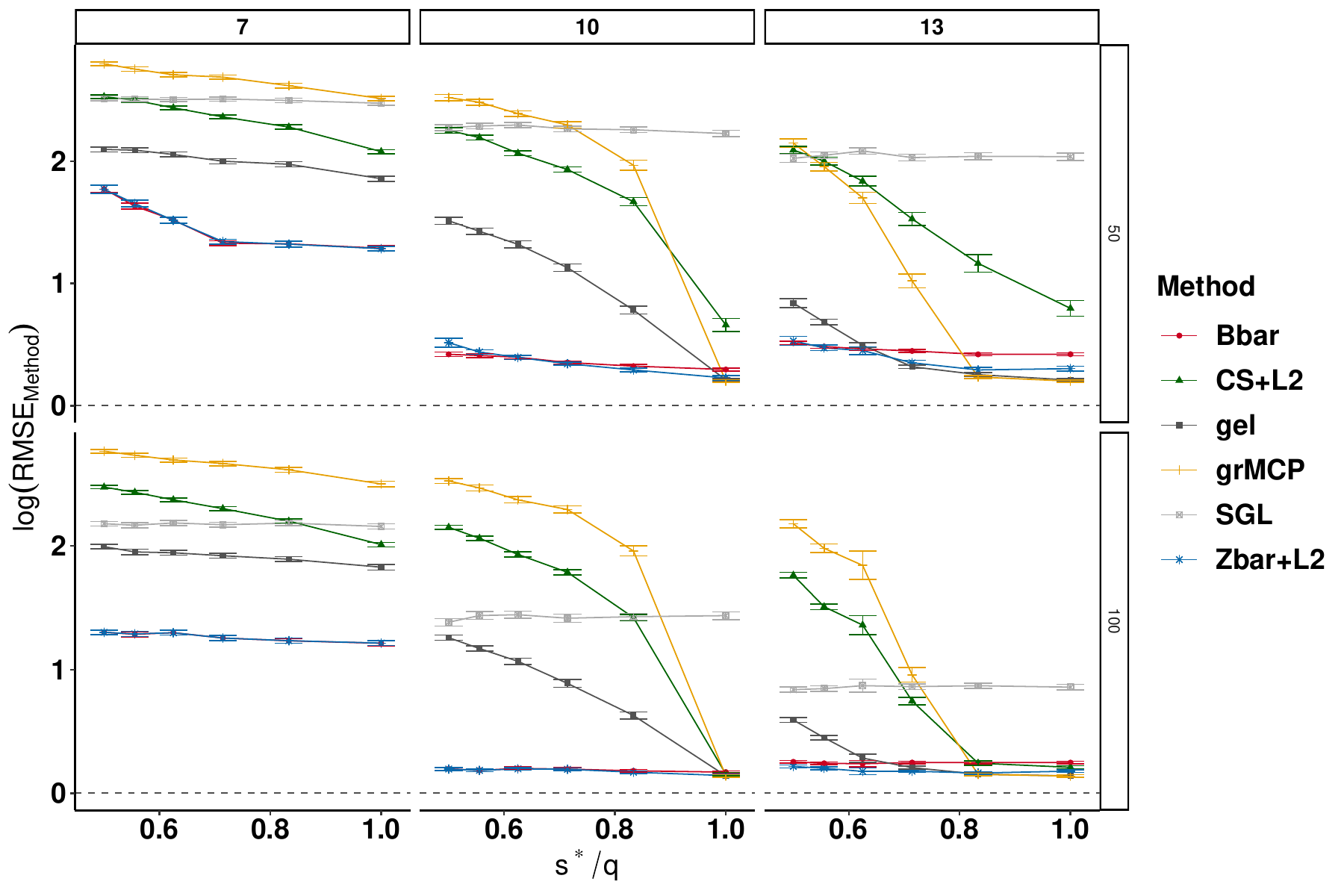}
	}
	\caption{\footnotesize  Out-of-sample prediction performance averaged across tasks for different sample sizes, $n_k$ (displayed on the horizontal panels), values of support size, $s$, for model fitting (vertical panels), degrees of support heterogeneity, $s^*/q$ (x-axis), and covariate correlation levels $\rho \in \{0.2, 0.5,0.8\}$ (from top to bottom figure). Low $s^*/q$ indicates high support heterogeneity and $s/q = 1$ indicates that all tasks had identical support. Unlike the figure in the main text, the performance of each method is presented without adjustment to the $L_0 L_2$ benchmark and is thus in terms of the raw log RMSE.} 
	\label{fig:sims_rmse_unadjust}
\end{figure}

\subsubsection{Coefficient Estimation Performance: RMSE}

\begin{figure}[H]
	\centerline{
		\includegraphics[width=0.85\linewidth]{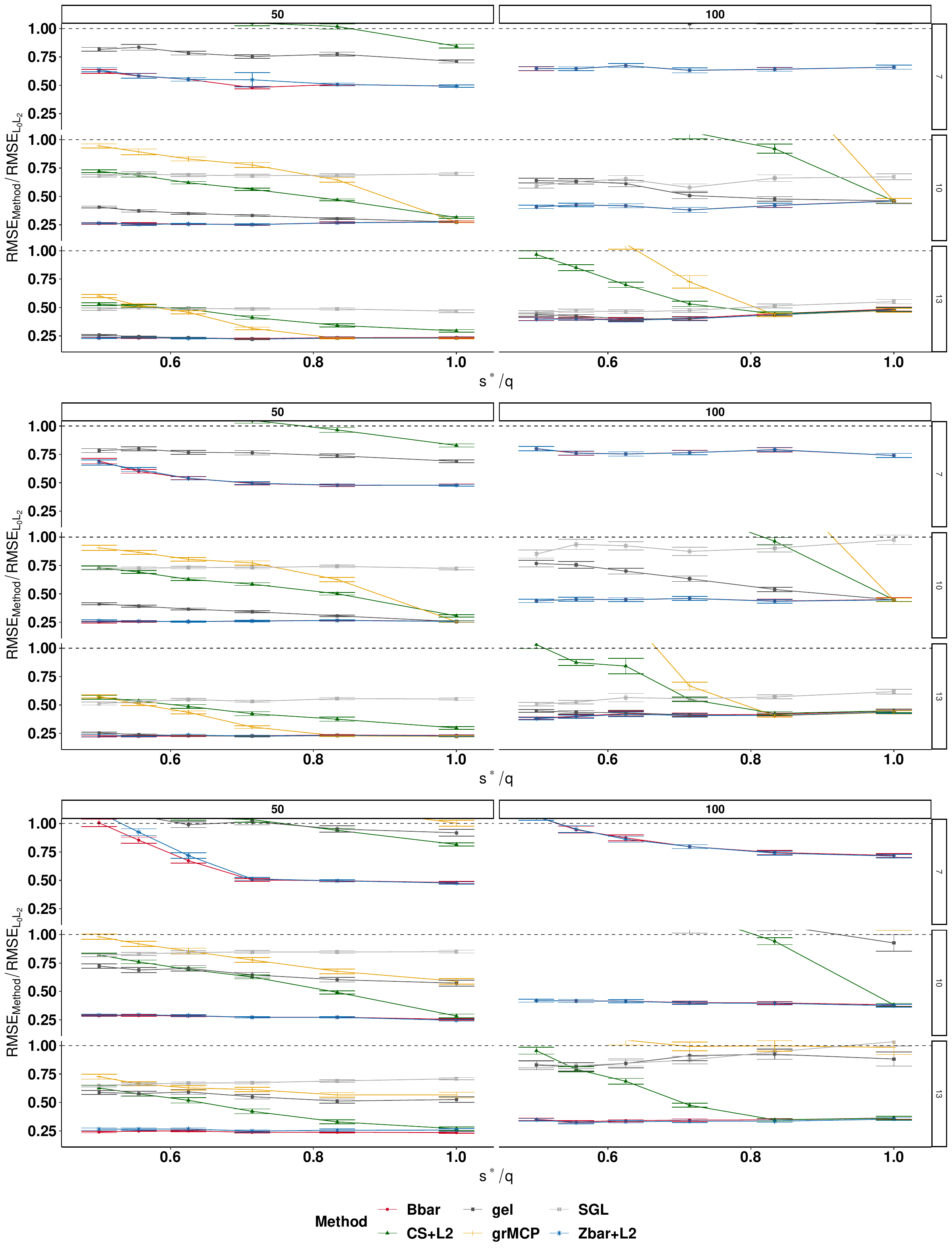}
	}
	\caption{\footnotesize  Coefficient estimation performance averaged across tasks for different sample sizes, $n_k$ (displayed on the horizontal panels), values of support size, $s$, for model fitting (vertical panels), degrees of support heterogeneity, $s^*/q$ (x-axis), and covariate correlation levels $\rho \in \{0.2, 0.5,0.8\}$ (from top to bottom figure). Low $s^*/q$ indicates high support heterogeneity and $s/q = 1$ indicates that all tasks had identical support. The performance of each method is presented in reference to the performance of a $L_0 L_2$ and thus lower values indicate superior (relative) prediction accuracy.} 
	\label{fig:sims_coef_supp}
\end{figure}

\begin{figure}[H]
	\centerline{
		\includegraphics[width=0.9\linewidth]{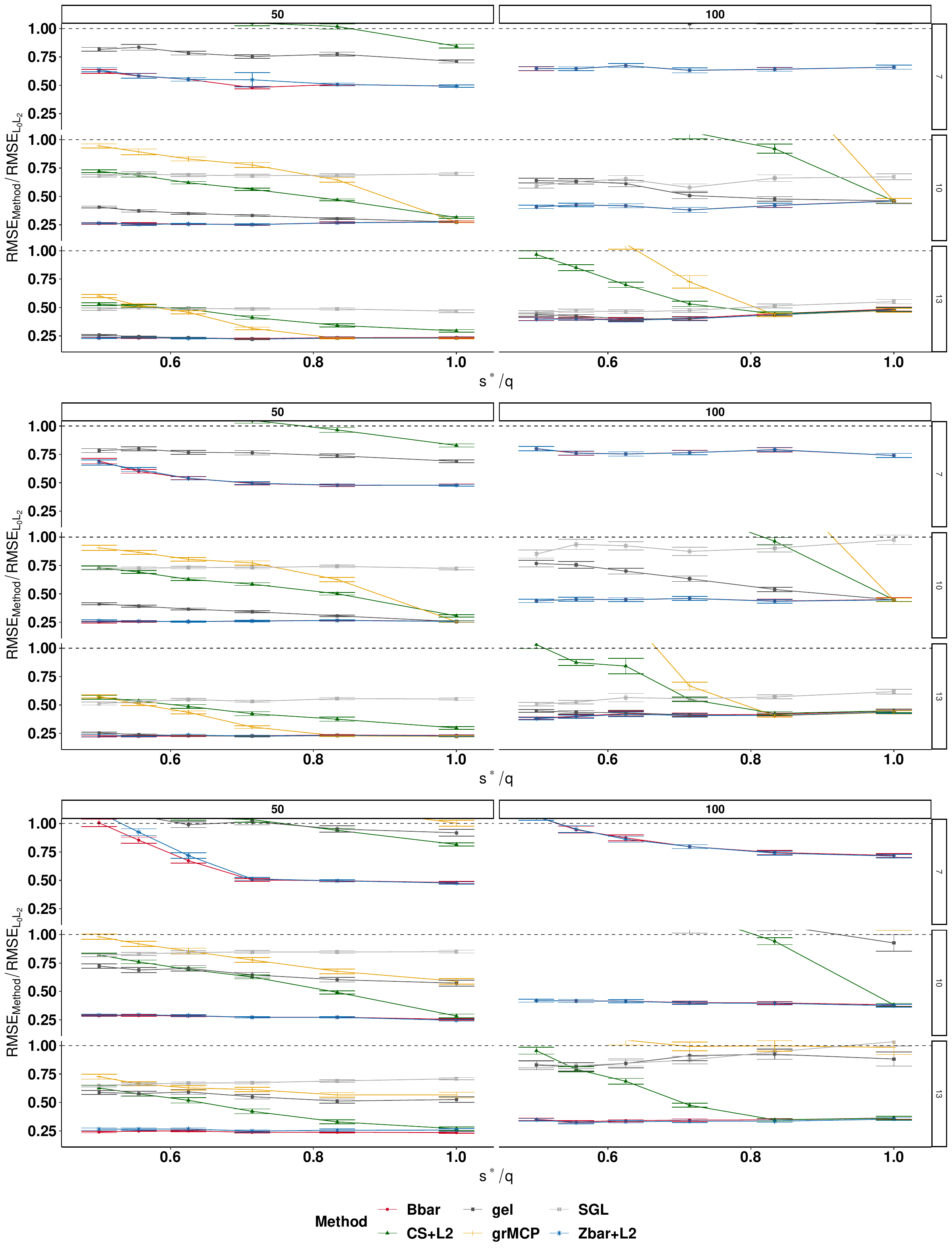}
	}
	\caption{\footnotesize  Coefficient estimation performance averaged across tasks for different sample sizes, $n_k$ (displayed on the horizontal panels), values of support size, $s$, for model fitting (vertical panels), degrees of support heterogeneity, $s^*/q$ (x-axis), and covariate correlation levels $\rho \in \{0.2, 0.5,0.8\}$ (from top to bottom figure). Low $s^*/q$ indicates high support heterogeneity and $s/q = 1$ indicates that all tasks had identical support. The performance of each method is presented in reference to the performance of a $L_0 L_2$ and thus lower values indicate superior (relative) prediction accuracy.} 
	\label{fig:sims_rmse_supp3}
\end{figure}

\subsubsection{Support Recovery: F1}
\begin{figure}[H]
	\centerline{
		\includegraphics[width=0.85\linewidth]{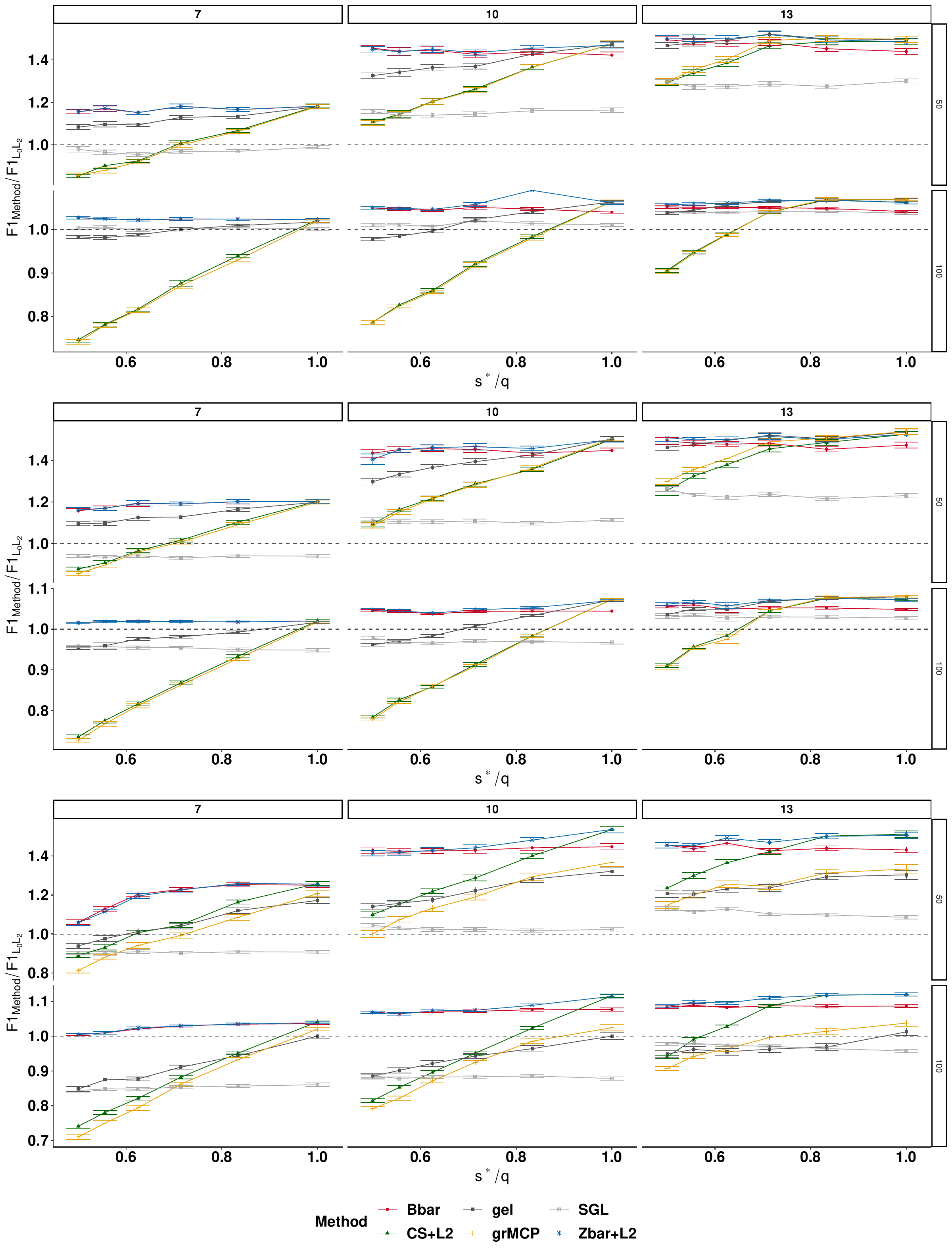}
	}
	\caption{\footnotesize Support recovery (F1 score) averaged across tasks for different sample sizes, $n_k$ (displayed on the horizontal panels), values of support size, $s$, for model fitting (vertical panels), degrees of support heterogeneity, $s^*/q$ (x-axis), and covariate correlation levels $\rho \in \{0.2, 0.5,0.8\}$ (from top to bottom figure). Low $s^*/q$ indicates high support heterogeneity and $s/q = 1$ indicates that all tasks had identical support. The performance of each method is presented in reference to the performance of a $L_0 L_2$ and thus higher values indicate superior (relative) support recovery.} 
	\label{fig:sims_f1_supp}
\end{figure}

\begin{figure}[H]
	\centerline{
		\includegraphics[width=0.9\linewidth]{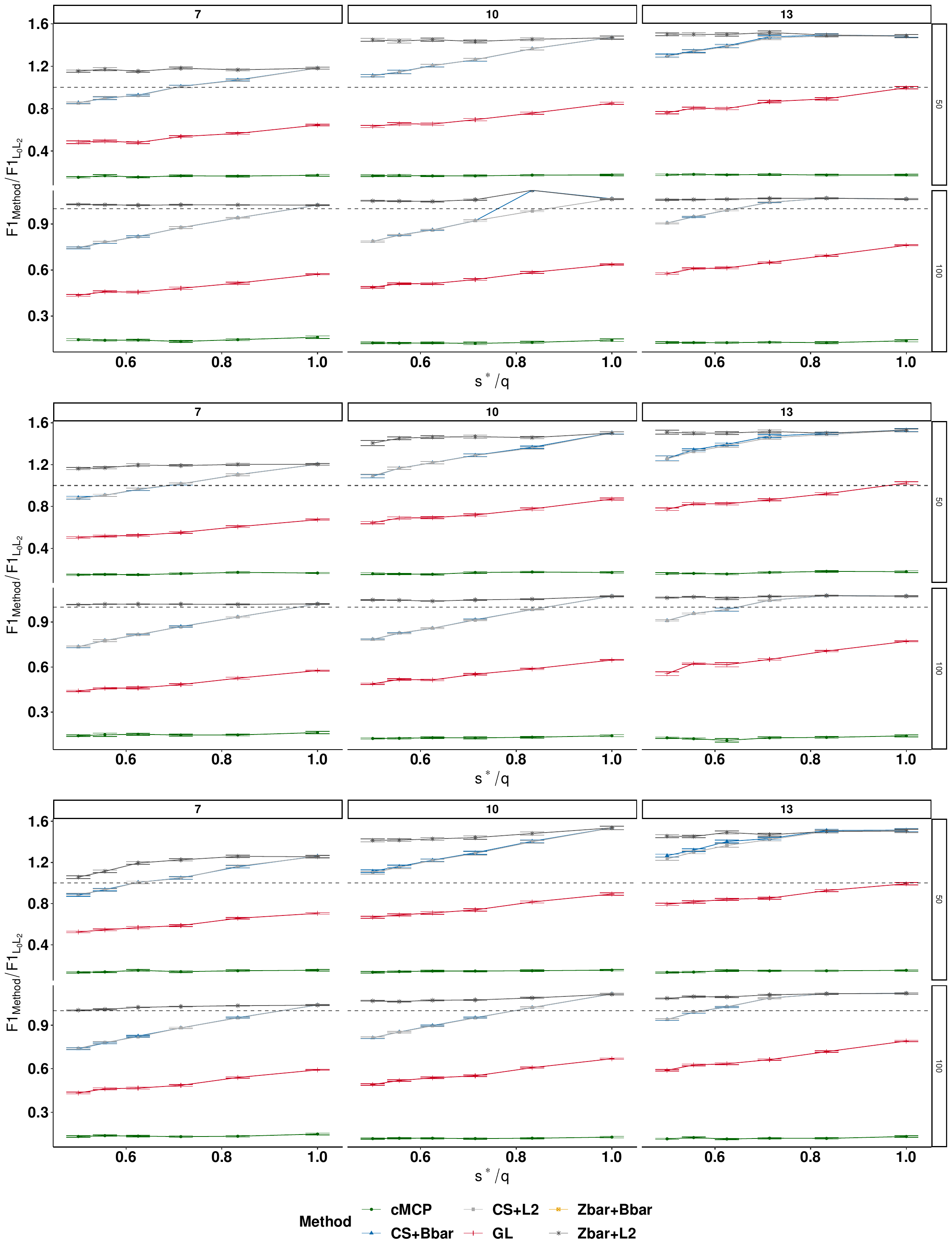}
	}
	\caption{\footnotesize  Support recovery (F1 score) averaged across tasks for different sample sizes, $n_k$ (displayed on the horizontal panels), values of support size, $s$, for model fitting (vertical panels), degrees of support heterogeneity, $s^*/q$ (x-axis), and covariate correlation levels $\rho \in \{0.2, 0.5,0.8\}$ (from top to bottom figure). Low $s^*/q$ indicates high support heterogeneity and $s/q = 1$ indicates that all tasks had identical support. The performance of each method is presented in reference to the performance of a $L_0 L_2$ and thus lower values indicate superior (relative) prediction accuracy.} 
	\label{fig:sims_f1_supp2}
\end{figure}

\subsection{No Common Support Simulations} \label{noCommSupp}
We present additional results from data simulated such that the true task coefficients had no overlap in their supports (i.e., $\B{z}_k^T \B{z}_{k'} = 0 \forall ~k \neq k'$). The data were otherwise simulated in a manner identical to the simulations presented in the main text.

\begin{figure}[t!] 
	\centering
	\begin{tabular}{cc}
		\centering
		\includegraphics[width=0.99\linewidth]{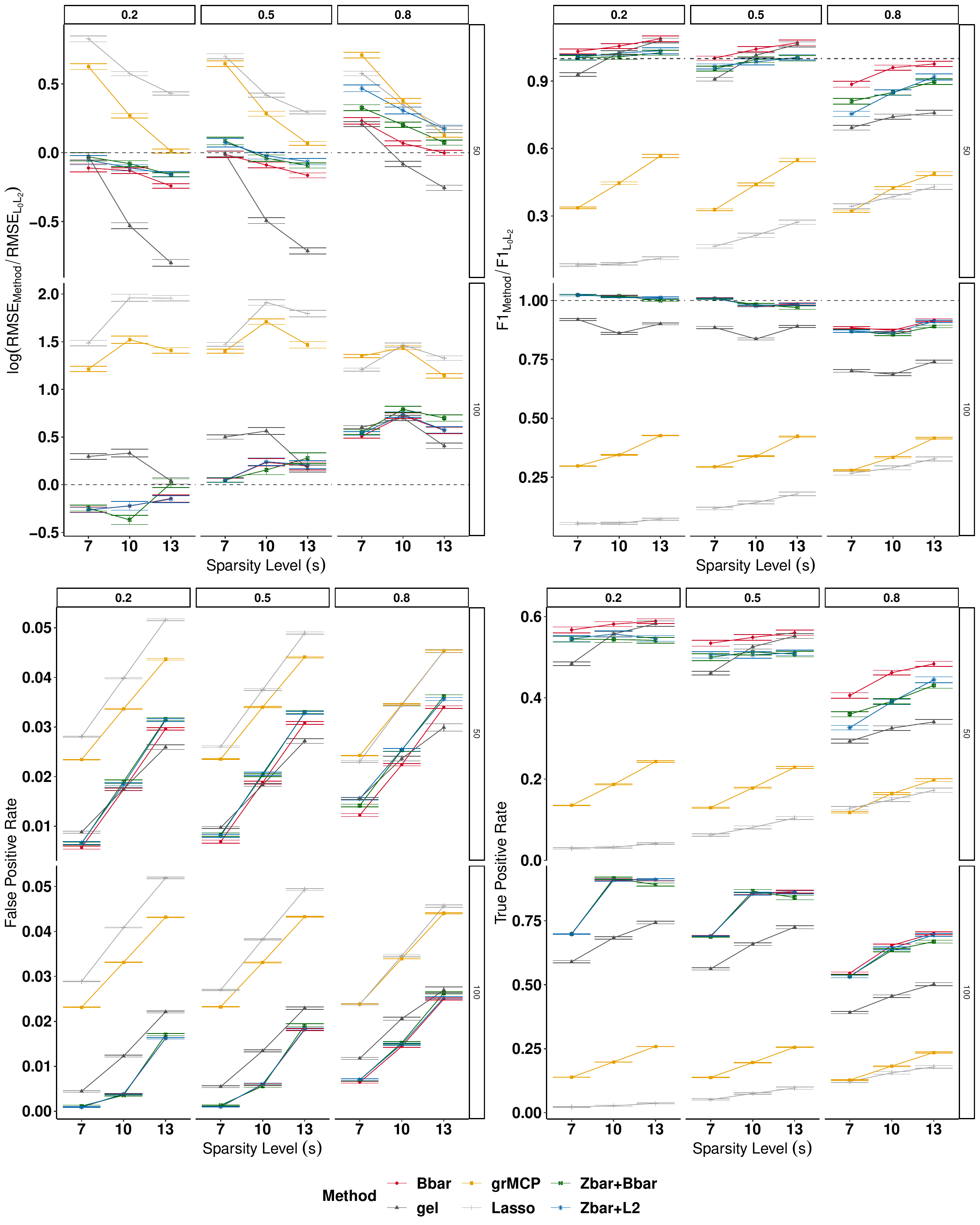} &
	\end{tabular}
	\caption{\footnotesize Prediction performance (RMSE) [Top Left], F1 score of support recovery [Top Right], false positive rate [Bottom Left], and true positive rate [Bottom Right] averaged across tasks for different $n_k$ (rows), and covariate correlation levels ($\rho$) of the true (simulated) model. Tasks were simulated to have no overlap in the supports of their true model coefficients. Lower RMSE and higher F1 scores indicate superior (relative) performance. Data were simulated with true support size $s^*=10$. We show performance of heterogeneous support methods given that data were simulated to have no common support. }
	\label{fig:sims_f1_rmse_q0}
\end{figure}

\clearpage

\subsection{Simulations with Non-Alternating Sparsity Pattern}\label{sec:non_alt_sims}

We present results from synthetic data experiments where the $\tilde{\mathcal{Q}}$ does \textit{not} have an alternating structure (i.e., $\boldsymbol{z}_k$ \textit{can} have successive nonzero elements). The simulation scheme was otherwise identical to the approach described in main text Section~\ref{sec:simulations}. The relative performance of our proposed methods in these simulations appears comparable to the simulation presented in the main text. We first present results from these simulations in a figure like that of main text Figure~\ref{fig:sims_f1_rmse}. We then present results in a series of figures like those presented in Supplement~\ref{supp_sims}.

\subsubsection*{Main Text Figure}

\begin{figure}[t!] 
	\centering
	\begin{tabular}{cc}
		\centering
		\includegraphics[width=0.99\linewidth]{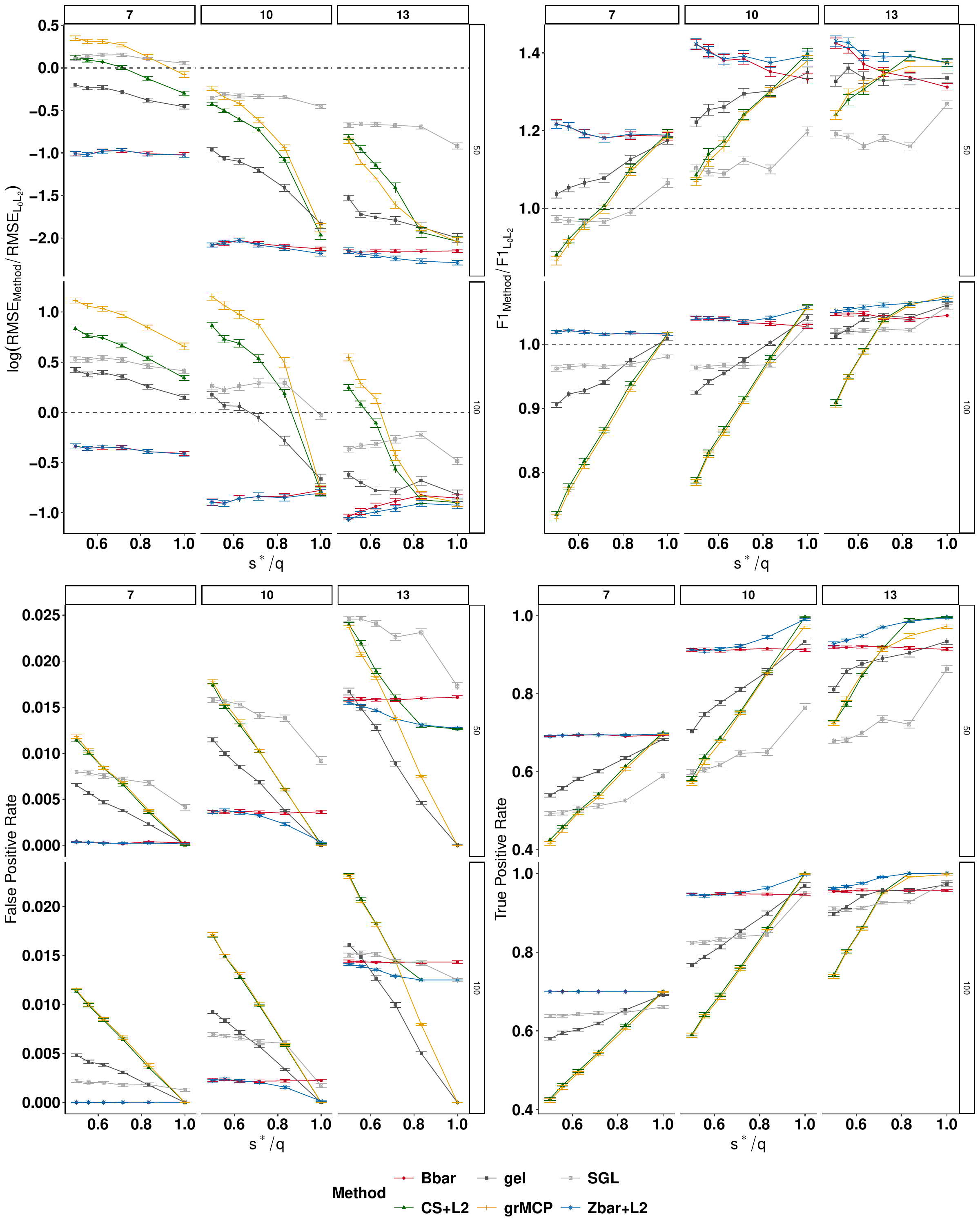} &
	\end{tabular}
	\caption{\footnotesize Prediction performance (RMSE) [Top Left], F1 score of support recovery [Top Right], false positive rate [Bottom Left], and true positive rate [Bottom Right] averaged across tasks for different $n_k$ (rows), support sizes (columns),
		and levels of support homogeneity of the true (simulated) model  ($s^*/q = 1$ indicates that all tasks had identical support). 
		Lower RMSE and higher F1 scores indicate superior (relative) performance. Data were simulated with support size  $s^*=10$, and covariate correlation parameter $\rho = 0.5$.}
	\label{fig:sims_f1_rmse_new}
\end{figure}
\clearpage

\subsubsection*{Supplemental Figures}

\paragraph{Prediction Performance: RMSE}

\clearpage
\begin{figure}[H]
	\centerline{
		\includegraphics[width=0.85\linewidth]{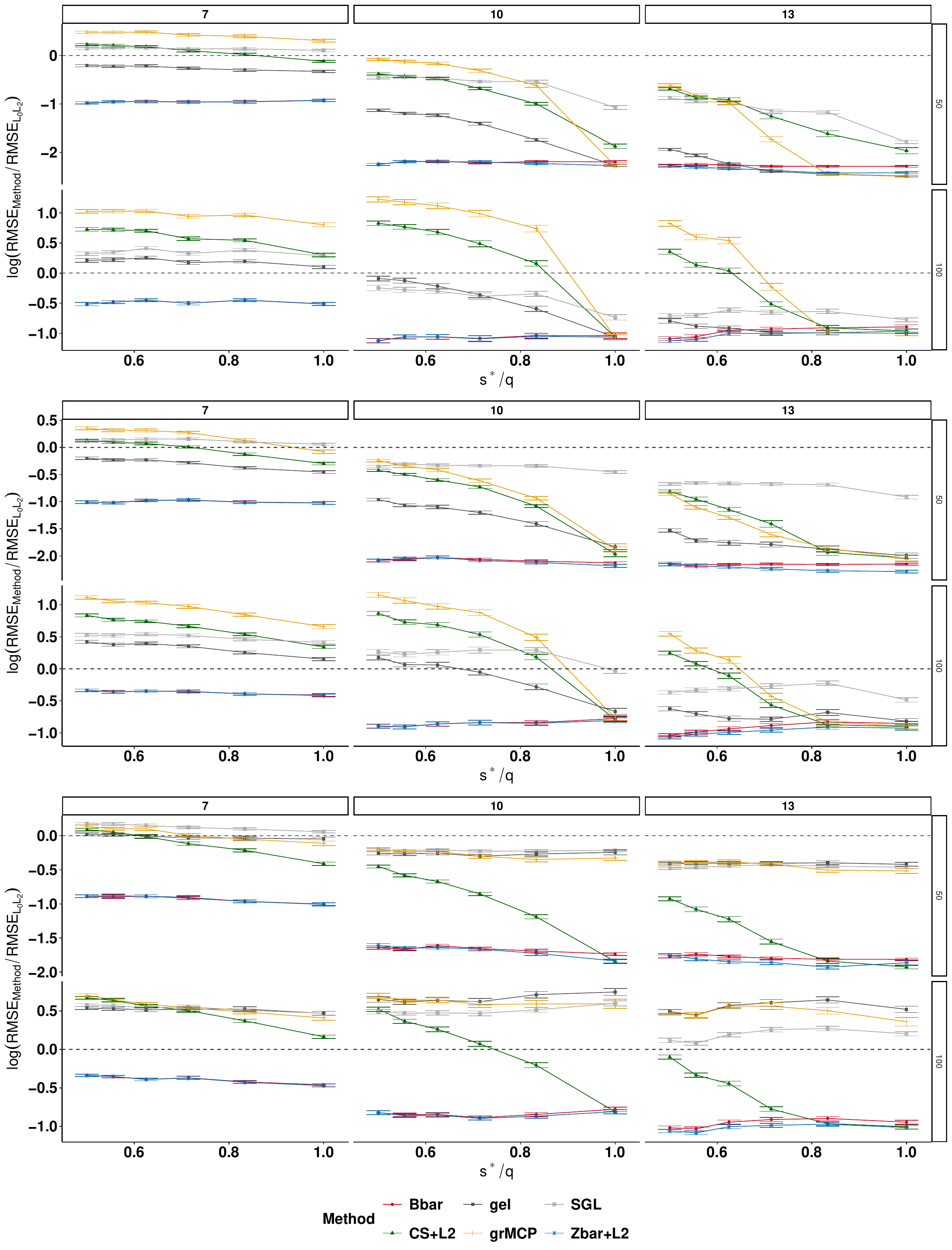}
	}
	\caption{\footnotesize  Out-of-sample prediction performance averaged across tasks for different sample sizes, $n_k$ (displayed on the horizontal panels), values of support size, $s$, for model fitting (vertical panels), degrees of support heterogeneity, $s^*/q$ (x-axis), and covariate correlation levels $\rho \in \{0.2, 0.5,0.8\}$ (from top to bottom figure). Low $s^*/q$ indicates high support heterogeneity and $s/q = 1$ indicates that all tasks had identical support. The performance of each method is presented in reference to the performance of a $L_0 L_2$ and thus lower values indicate superior (relative) prediction accuracy.} 
	\label{fig:sims_rmse_supp_new}
\end{figure}

\begin{figure}[H]
	\centerline{
		\includegraphics[width=0.9\linewidth]{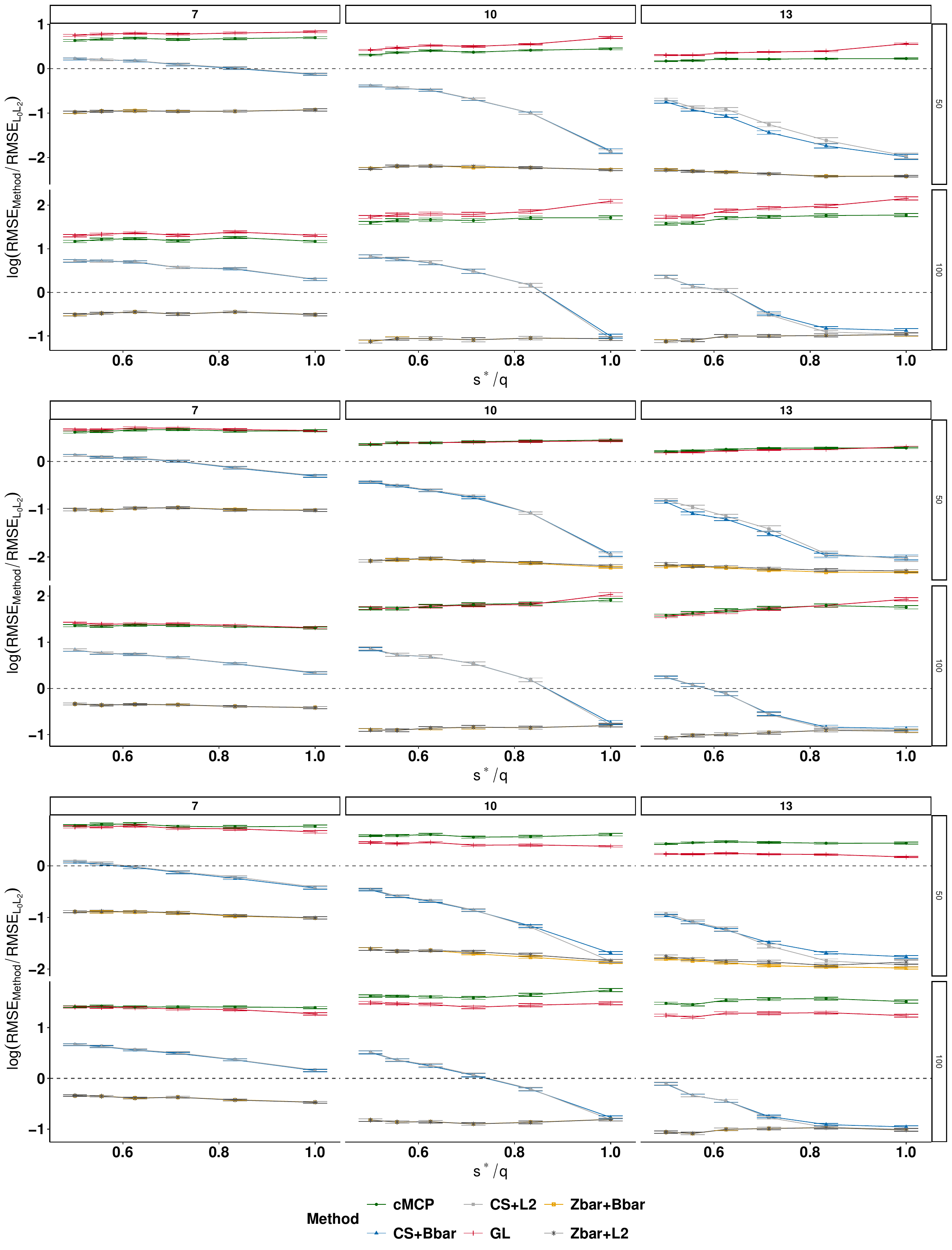}
	}
	\caption{\footnotesize Out-of-sample prediction performance averaged across tasks for different sample sizes, $n_k$ (displayed on the horizontal panels), values of support size, $s$, for model fitting (vertical panels), degrees of support heterogeneity, $s^*/q$ (x-axis), and covariate correlation levels $\rho \in \{0.2, 0.5,0.8\}$ (from top to bottom figure). Low $s^*/q$ indicates high support heterogeneity and $s/q = 1$ indicates that all tasks had identical support. The performance of each method is presented in reference to the performance of a $L_0 L_2$ and thus lower values indicate superior (relative) prediction accuracy.} 
	\label{fig:sims_rmse_supp2_new}
\end{figure}

\begin{figure}[H]
	\centerline{
		\includegraphics[width=0.85\linewidth]{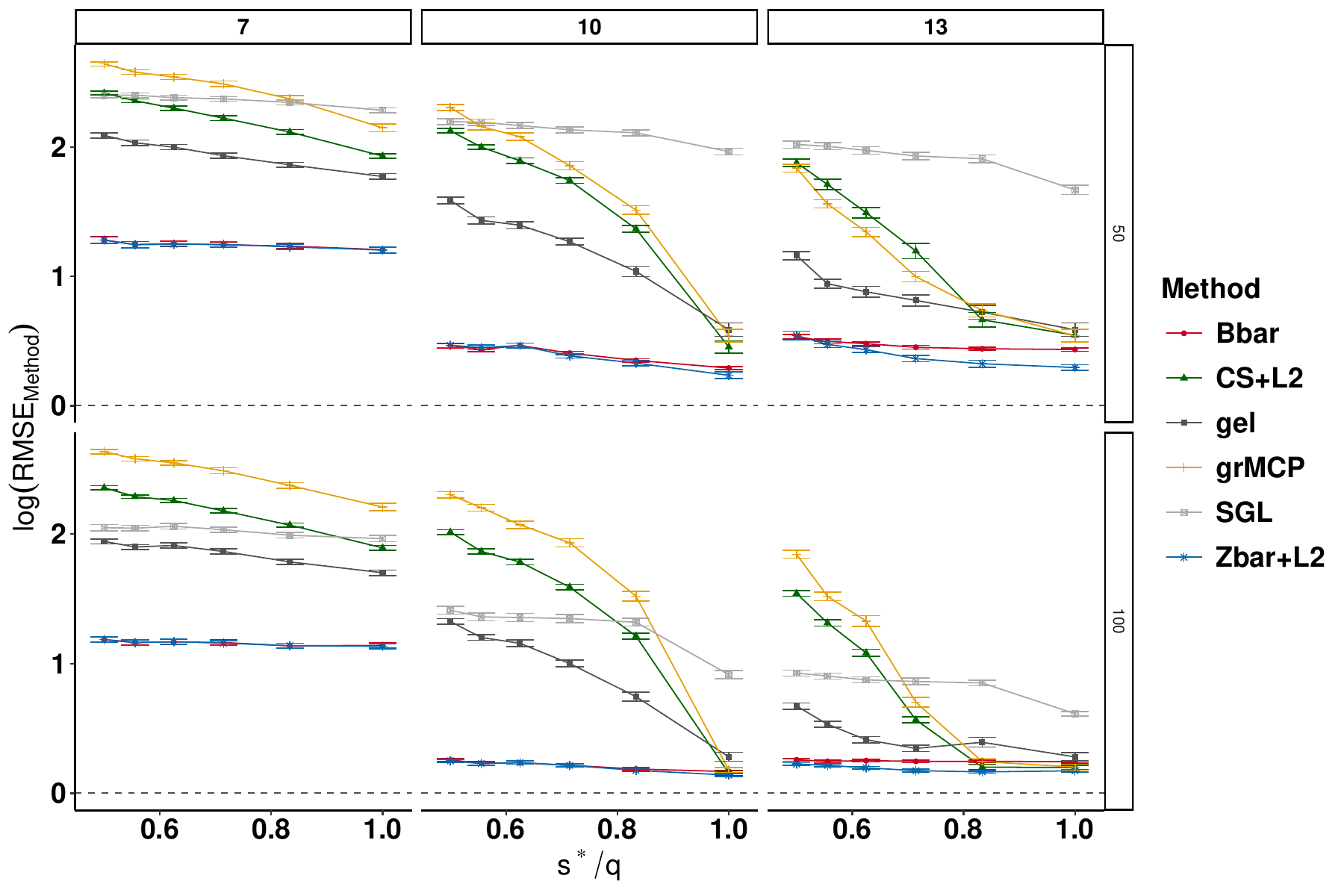}
	}
	\caption{\footnotesize  Out-of-sample prediction performance averaged across tasks for different sample sizes, $n_k$ (displayed on the horizontal panels), values of support size, $s$, for model fitting (vertical panels), degrees of support heterogeneity, $s^*/q$ (x-axis), and covariate correlation levels $\rho \in \{0.2, 0.5,0.8\}$ (from top to bottom figure). Low $s^*/q$ indicates high support heterogeneity and $s/q = 1$ indicates that all tasks had identical support. Unlike the figure in the main text, the performance of each method is presented without adjustment to the $L_0 L_2$ benchmark and is thus in terms of the raw log RMSE.} 
	\label{fig:sims_rmse_unadjust_new}
\end{figure}
\clearpage

\paragraph{Coefficient Estimation Performance: RMSE}
\clearpage
\begin{figure}[H]
	\centerline{
		\includegraphics[width=0.85\linewidth]{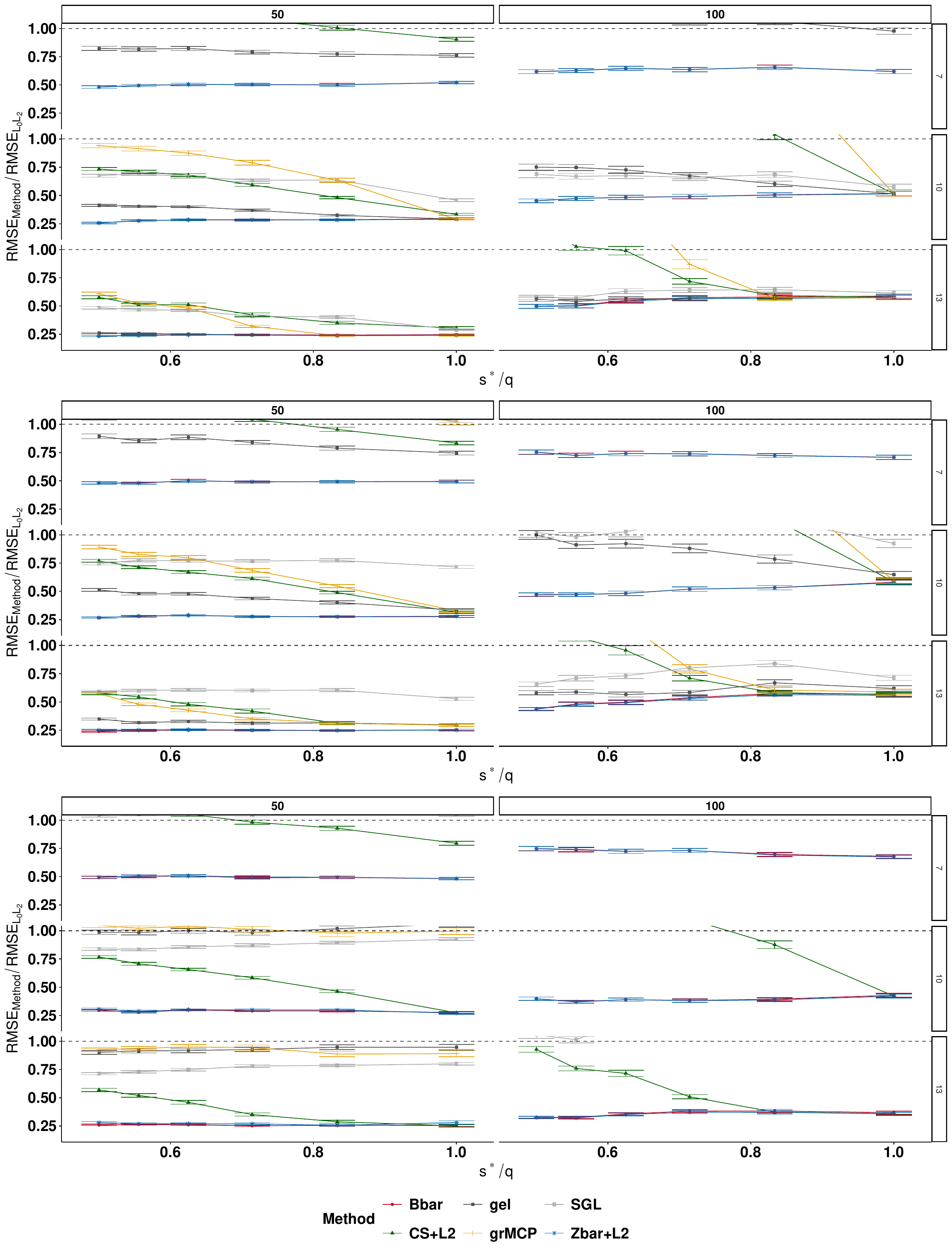}
	}
	\caption{\footnotesize  Coefficient estimation performance averaged across tasks for different sample sizes, $n_k$ (displayed on the horizontal panels), values of support size, $s$, for model fitting (vertical panels), degrees of support heterogeneity, $s^*/q$ (x-axis), and covariate correlation levels $\rho \in \{0.2, 0.5,0.8\}$ (from top to bottom figure). Low $s^*/q$ indicates high support heterogeneity and $s/q = 1$ indicates that all tasks had identical support. The performance of each method is presented in reference to the performance of a $L_0 L_2$ and thus lower values indicate superior (relative) prediction accuracy.} 
	\label{fig:sims_coef_supp_new}
\end{figure}

\begin{figure}[H]
	\centerline{
		\includegraphics[width=0.9\linewidth]{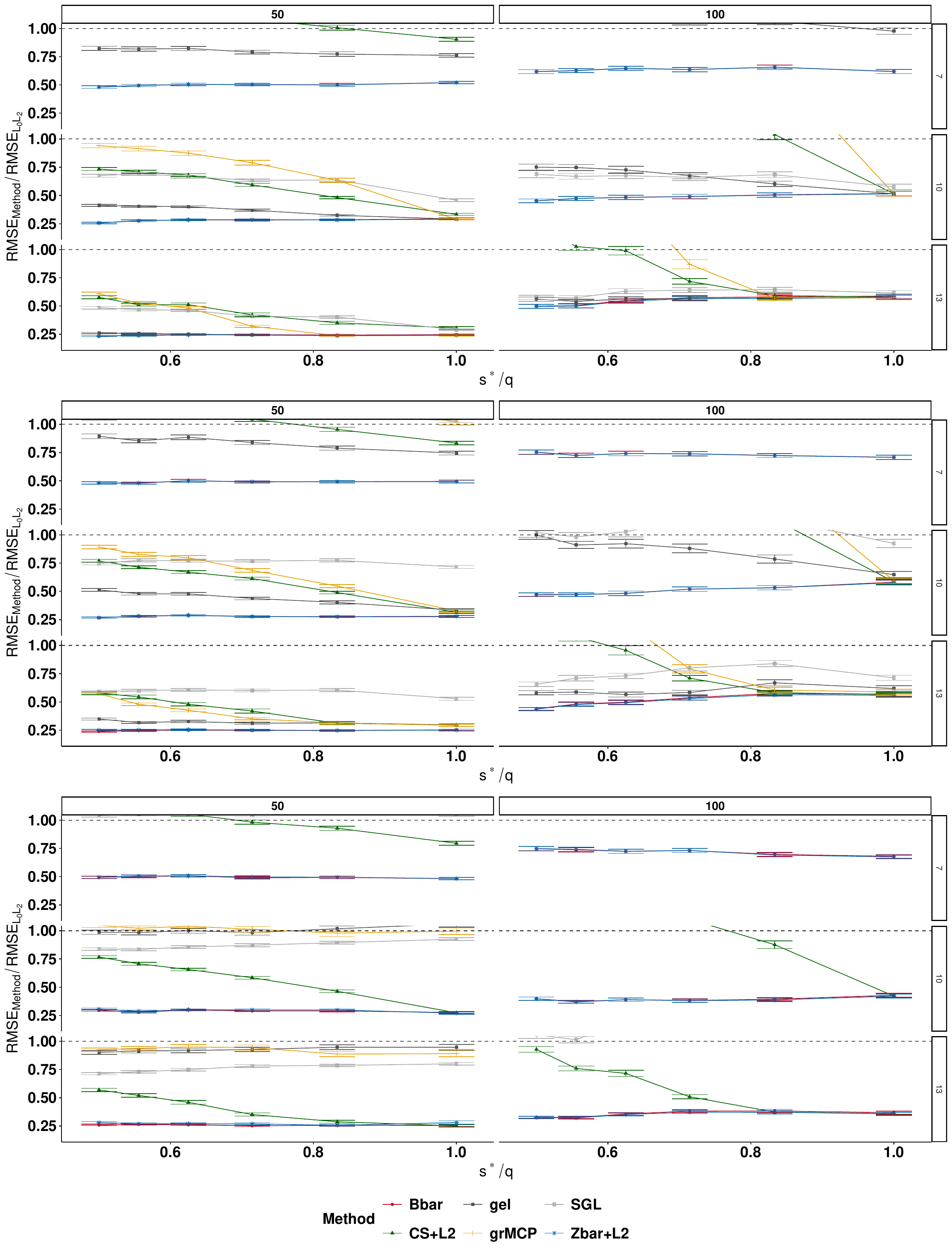}
	}
	\caption{\footnotesize  Coefficient estimation performance averaged across tasks for different sample sizes, $n_k$ (displayed on the horizontal panels), values of support size, $s$, for model fitting (vertical panels), degrees of support heterogeneity, $s^*/q$ (x-axis), and covariate correlation levels $\rho \in \{0.2, 0.5,0.8\}$ (from top to bottom figure). Low $s^*/q$ indicates high support heterogeneity and $s/q = 1$ indicates that all tasks had identical support. The performance of each method is presented in reference to the performance of a $L_0 L_2$ and thus lower values indicate superior (relative) prediction accuracy.} 
	\label{fig:sims_rmse_supp3_new}
\end{figure}
\clearpage

\paragraph{Support Recovery: F1}
\clearpage
\begin{figure}[H]
	\centerline{
		\includegraphics[width=0.85\linewidth]{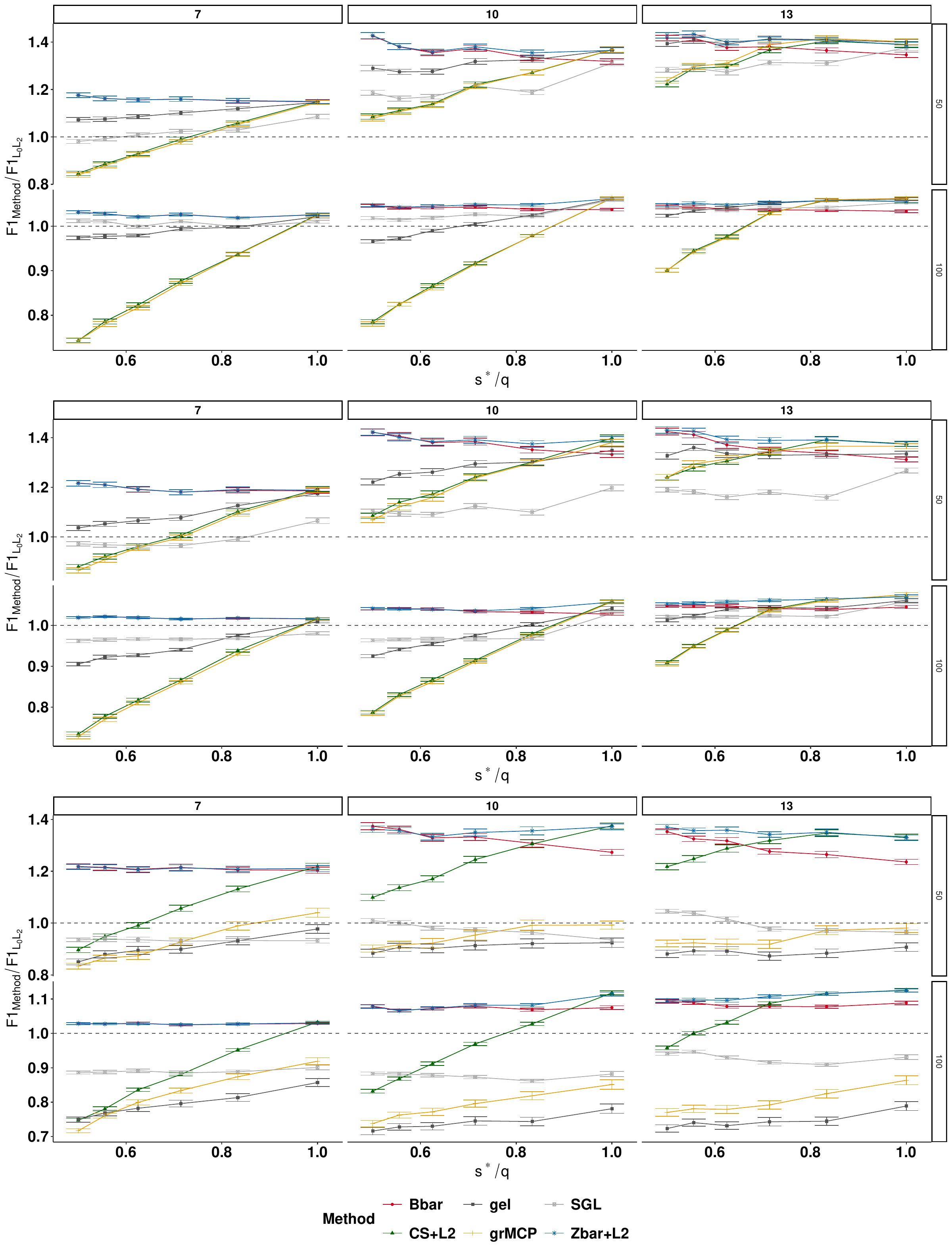}
	}
	\caption{\footnotesize Support recovery (F1 score) averaged across tasks for different sample sizes, $n_k$ (displayed on the horizontal panels), values of support size, $s$, for model fitting (vertical panels), degrees of support heterogeneity, $s^*/q$ (x-axis), and covariate correlation levels $\rho \in \{0.2, 0.5,0.8\}$ (from top to bottom figure). Low $s^*/q$ indicates high support heterogeneity and $s/q = 1$ indicates that all tasks had identical support. The performance of each method is presented in reference to the performance of a $L_0 L_2$ and thus higher values indicate superior (relative) support recovery.} 
	\label{fig:sims_f1_supp_new}
\end{figure}

\begin{figure}[H]
	\centerline{
		\includegraphics[width=0.9\linewidth]{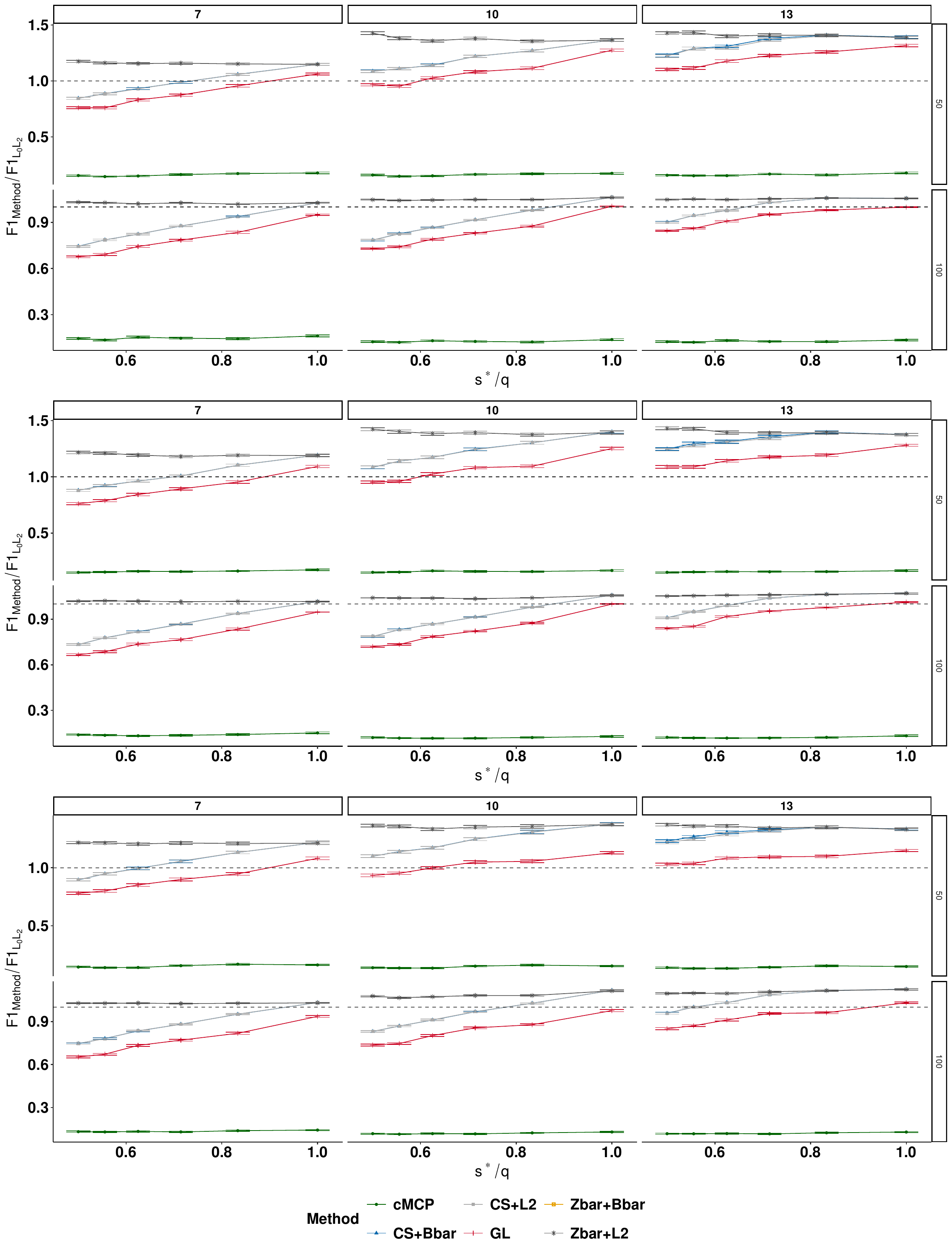}
	}
	\caption{\footnotesize  Support recovery (F1 score) averaged across tasks for different sample sizes, $n_k$ (displayed on the horizontal panels), values of support size, $s$, for model fitting (vertical panels), degrees of support heterogeneity, $s^*/q$ (x-axis), and covariate correlation levels $\rho \in \{0.2, 0.5,0.8\}$ (from top to bottom figure). Low $s^*/q$ indicates high support heterogeneity and $s/q = 1$ indicates that all tasks had identical support. The performance of each method is presented in reference to the performance of a $L_0 L_2$ and thus lower values indicate superior (relative) prediction accuracy.} 
	\label{fig:sims_f1_supp2_new}
\end{figure}

\clearpage

\subsection{Timing Experiments}\label{app:approx-time}
We present runtimes experiment results from 50 simulation replicates. We simulated data as in the main text with $s^* = 10$, $q=16$, $p \in \{250,500,1000\}$, $n_k=2p$ and $\rho = 0.5$. Since $L_0$ methods constrain the sparsity of the solution to a specified level, $s$, and benchmark methods (e.g., gel, SGL) do not, it is difficult to compare the timing of the methods; tuning parameter values influence the speed of convergence for the proposed and benchmark methods. For that reason, we present the timing of all methods averaged across a grid of 50 tuning parameter values in Table~\ref{app:table-approxCV-time}. For example, for the $L_0$ methods, this grid included sparsity levels $s \in \{5,7,10,13,15\}$ and 10 values of $\lambda$ or $\delta$ depending on the penalty. We present the average timing (i.e., total time / 50) summed across 10 folds (the same 10 folds were used for all methods). We also present the timing (in milliseconds) of a single fit on the tuned hyperparameters in Table~\ref{app:table-approx-time}. These results demonstrate that our approximate solver is fast enough to be useful in practice for a range of problem sizes. In fact, our approximate methods can be faster than most of the benchmark methods, especially as the dimension of the covariates, $p,$ grows large.

\begin{table}[H]
	\resizebox{\textwidth}{!}{
		\begin{tabular}{l|cccccccccccccccccccc}
			\hline
			\hline
			p & \multicolumn{2}{c}{Bbar} & \multicolumn{2}{c}{cMCP} & \multicolumn{2}{c}{CS+Bbar} & \multicolumn{2}{c}{CS+L2} & \multicolumn{2}{c}{gel} & \multicolumn{2}{c}{GL} & \multicolumn{2}{c}{grMCP} & \multicolumn{2}{c}{SGL} & \multicolumn{2}{c}{Zbar+Bbar} & \multicolumn{2}{c}{Zbar+L2} \\ 
			\hline
			\hline
			250  & \multicolumn{1}{r@{}}{$575.5$} & \multicolumn{1}{@{ $\pm$ }l}{$19.3$} & \multicolumn{1}{r@{}}{$\phantom{0}126.9$} & \multicolumn{1}{@{ $\pm$ }l}{$\phantom{0}8.5$} & \multicolumn{1}{r@{}}{$296.0$} & \multicolumn{1}{@{ $\pm$ }l}{$\phantom{0}8.4$} & \multicolumn{1}{r@{}}{$280.8$} & \multicolumn{1}{@{ $\pm$ }l}{$\phantom{0}7.4$} & \multicolumn{1}{r@{}}{$\phantom{0}99.5$} & \multicolumn{1}{@{ $\pm$ }l}{$\phantom{0}7.7$} & \multicolumn{1}{r@{}}{$\phantom{0}294.4$} & \multicolumn{1}{@{ $\pm$ }l}{$\phantom{0}12.6$} & \multicolumn{1}{r@{}}{$\phantom{0}578.5$} & \multicolumn{1}{@{ $\pm$ }l}{$28.5$} & \multicolumn{1}{r@{}}{$\phantom{0}313.5$} & \multicolumn{1}{@{ $\pm$ }l}{$\phantom{0}11.2$} & \multicolumn{1}{r@{}}{$561.6$} & \multicolumn{1}{@{ $\pm$ }l}{$17.5$} & \multicolumn{1}{r@{}}{$569.0$} & \multicolumn{1}{@{ $\pm$ }l}{$20.8$} \\
			500  & \multicolumn{1}{r@{}}{$438.1$} & \multicolumn{1}{@{ $\pm$ }l}{$\phantom{0}6.5$} & \multicolumn{1}{r@{}}{$\phantom{0}280.8$} & \multicolumn{1}{@{ $\pm$ }l}{$\phantom{0}6.3$} & \multicolumn{1}{r@{}}{$343.7$} & \multicolumn{1}{@{ $\pm$ }l}{$\phantom{0}9.3$} & \multicolumn{1}{r@{}}{$310.0$} & \multicolumn{1}{@{ $\pm$ }l}{$\phantom{0}8.3$} & \multicolumn{1}{r@{}}{$248.9$} & \multicolumn{1}{@{ $\pm$ }l}{$\phantom{0}9.9$} & \multicolumn{1}{r@{}}{$1074.4$} & \multicolumn{1}{@{ $\pm$ }l}{$\phantom{0}13.1$} & \multicolumn{1}{r@{}}{$\phantom{0}928.8$} & \multicolumn{1}{@{ $\pm$ }l}{$15.1$} & \multicolumn{1}{r@{}}{$1102.5$} & \multicolumn{1}{@{ $\pm$ }l}{$\phantom{00}9.9$} & \multicolumn{1}{r@{}}{$478.1$} & \multicolumn{1}{@{ $\pm$ }l}{$\phantom{0}9.0$} & \multicolumn{1}{r@{}}{$550.9$} & \multicolumn{1}{@{ $\pm$ }l}{$13.3$} \\
			1000  & \multicolumn{1}{r@{}}{$958.2$} & \multicolumn{1}{@{ $\pm$ }l}{$15.0$} & \multicolumn{1}{r@{}}{$1057.5$} & \multicolumn{1}{@{ $\pm$ }l}{$31.4$} & \multicolumn{1}{r@{}}{$529.5$} & \multicolumn{1}{@{ $\pm$ }l}{$14.7$} & \multicolumn{1}{r@{}}{$489.8$} & \multicolumn{1}{@{ $\pm$ }l}{$16.0$} & \multicolumn{1}{r@{}}{$922.8$} & \multicolumn{1}{@{ $\pm$ }l}{$23.9$} & \multicolumn{1}{r@{}}{$8059.0$} & \multicolumn{1}{@{ $\pm$ }l}{$151.7$} & \multicolumn{1}{r@{}}{$2696.4$} & \multicolumn{1}{@{ $\pm$ }l}{$58.7$} & \multicolumn{1}{r@{}}{$8205.2$} & \multicolumn{1}{@{ $\pm$ }l}{$154.7$} & \multicolumn{1}{r@{}}{$946.6$} & \multicolumn{1}{@{ $\pm$ }l}{$18.2$} & \multicolumn{1}{r@{}}{$928.3$} & \multicolumn{1}{@{ $\pm$ }l}{$17.9$} \\
			\hline 
	\end{tabular}}
	\caption{\footnotesize [Runtime for approximate methods] Comparison of the average runtime (in milliseconds) of a single model fit on tuned hyperparameters. }
	\label{app:table-approx-time}
\end{table}

\begin{table}[H]
	\resizebox{\textwidth}{!}{
		\begin{tabular}{l|cccccccccccccccccccc}
			\hline
			\hline
			p & \multicolumn{2}{c}{Bbar} & \multicolumn{2}{c}{cMCP} & \multicolumn{2}{c}{CS+Bbar} & \multicolumn{2}{c}{CS+L2} & \multicolumn{2}{c}{gel} & \multicolumn{2}{c}{GL} & \multicolumn{2}{c}{grMCP} & \multicolumn{2}{c}{SGL} & \multicolumn{2}{c}{Zbar+Bbar} & \multicolumn{2}{c}{Zbar+L2} \\ 
			\hline
			\hline
			250  & \multicolumn{1}{r@{}}{$\phantom{0}81.35$} & \multicolumn{1}{@{ $\pm$ }l}{$3.30$} & \multicolumn{1}{r@{}}{$\phantom{0}14.42$} & \multicolumn{1}{@{ $\pm$ }l}{$0.36$} & \multicolumn{1}{r@{}}{$25.74$} & \multicolumn{1}{@{ $\pm$ }l}{$0.72$} & \multicolumn{1}{r@{}}{$21.84$} & \multicolumn{1}{@{ $\pm$ }l}{$0.48$} & \multicolumn{1}{r@{}}{$\phantom{0}25.57$} & \multicolumn{1}{@{ $\pm$ }l}{$0.60$} & \multicolumn{1}{r@{}}{$\phantom{0}10.02$} & \multicolumn{1}{@{ $\pm$ }l}{$0.33$} & \multicolumn{1}{r@{}}{$\phantom{0}30.58$} & \multicolumn{1}{@{ $\pm$ }l}{$0.84$} & \multicolumn{1}{r@{}}{$\phantom{0}27.72$} & \multicolumn{1}{@{ $\pm$ }l}{$0.90$} & \multicolumn{1}{r@{}}{$\phantom{0}76.71$} & \multicolumn{1}{@{ $\pm$ }l}{$2.71$} & \multicolumn{1}{r@{}}{$108.84$} & \multicolumn{1}{@{ $\pm$ }l}{$2.94$} \\
			500  & \multicolumn{1}{r@{}}{$\phantom{0}78.86$} & \multicolumn{1}{@{ $\pm$ }l}{$1.21$} & \multicolumn{1}{r@{}}{$\phantom{0}38.06$} & \multicolumn{1}{@{ $\pm$ }l}{$0.19$} & \multicolumn{1}{r@{}}{$40.74$} & \multicolumn{1}{@{ $\pm$ }l}{$0.88$} & \multicolumn{1}{r@{}}{$30.92$} & \multicolumn{1}{@{ $\pm$ }l}{$0.58$} & \multicolumn{1}{r@{}}{$\phantom{0}61.67$} & \multicolumn{1}{@{ $\pm$ }l}{$0.64$} & \multicolumn{1}{r@{}}{$\phantom{0}23.71$} & \multicolumn{1}{@{ $\pm$ }l}{$0.16$} & \multicolumn{1}{r@{}}{$\phantom{0}59.76$} & \multicolumn{1}{@{ $\pm$ }l}{$1.10$} & \multicolumn{1}{r@{}}{$\phantom{0}76.83$} & \multicolumn{1}{@{ $\pm$ }l}{$0.44$} & \multicolumn{1}{r@{}}{$\phantom{0}76.83$} & \multicolumn{1}{@{ $\pm$ }l}{$1.33$} & \multicolumn{1}{r@{}}{$119.22$} & \multicolumn{1}{@{ $\pm$ }l}{$1.76$} \\
			1000  & \multicolumn{1}{r@{}}{$149.41$} & \multicolumn{1}{@{ $\pm$ }l}{$2.16$} & \multicolumn{1}{r@{}}{$171.27$} & \multicolumn{1}{@{ $\pm$ }l}{$2.26$} & \multicolumn{1}{r@{}}{$88.55$} & \multicolumn{1}{@{ $\pm$ }l}{$3.24$} & \multicolumn{1}{r@{}}{$53.36$} & \multicolumn{1}{@{ $\pm$ }l}{$1.91$} & \multicolumn{1}{r@{}}{$303.50$} & \multicolumn{1}{@{ $\pm$ }l}{$3.70$} & \multicolumn{1}{r@{}}{$126.94$} & \multicolumn{1}{@{ $\pm$ }l}{$2.17$} & \multicolumn{1}{r@{}}{$193.51$} & \multicolumn{1}{@{ $\pm$ }l}{$4.30$} & \multicolumn{1}{r@{}}{$483.06$} & \multicolumn{1}{@{ $\pm$ }l}{$8.78$} & \multicolumn{1}{r@{}}{$142.75$} & \multicolumn{1}{@{ $\pm$ }l}{$2.39$} & \multicolumn{1}{r@{}}{$179.07$} & \multicolumn{1}{@{ $\pm$ }l}{$3.34$} \\
			\hline 
	\end{tabular}}
	\caption{\footnotesize [Runtime for approximate methods] 
		Comparison of the average runtime $\pm SEM$ (in milliseconds) of a single model fit during cross validation of 50 tuning parameters with our approximate methods. To calculate an average time per fit we divide the total tuning time by the product 50 hyperparameter values $\times 10$ CV  folds. }
	\label{app:table-approxCV-time}
\end{table}

\subsection{Additional Experiments with the Exact Solver}\label{app:exact-expts}
In this section, we present additional experiment results with our proposed exact solver. In particular, we study the effect of warm starting our exact solver with the solution from our approximate solver. We follow the same setup from Section~\ref{sec:scalability}. We use the exact solver setup from  Section~\ref{sec:scalability} as the solver that utilizes warm starts. As for the cold start solver, we initialize $\B{z}_1^0=\cdots=\B{z}_K^0$ to be the same. In particular, we set $\B{z}_1^0=\cdots=\B{z}_K^0=\B{v}$ where the vector $\B{v}\in\R^p$ follows $v_1=\cdots=v_{p-s}=0$ and the rest of the coordinates are set to one. Under this setup, the support of $\B{z}_1^0,\cdots,\B{z}_K^0$ do not intersect with the support of any of true $\B\beta_k$.

The results for these experiments are presented in Table~\ref{app:table-mip}. As we see, using warm starts (available from our approximate solvers) help to reduce the exact solver runtime. However, even without any warm start, our outer approximation based exact solver can obtain and certify optimal solutions, significantly faster than an off-the-shelf solver such as Gurobi which can only scale to $p\leq 200$ (see Table~\ref{table-mip}). This simultaneously demonstrates the strength of our exact solver which is scalable to larger instances (compared to Gurobi) even with no warm start, and the benefits of our approximate solvers that reduce the runtime of our exact solver.

\begin{table*}[t!]
	\small
	\centering
	\begin{tabular}{ c|cc|cc }
		\midrule
		\midrule
		& \multicolumn{2}{c|}{$s=5$} & \multicolumn{2}{c}{$s=6$}\\
		& Warm start  & Cold start & Warm start   & Cold start \\
		\midrule
		\midrule
		$p=100$& $2.4\pm 0.4$ & $4.9\pm1.1$  & $8.1\pm 3.0$  & $10.7\pm4.3$  \\
		$p=200$& $4.5\pm 1.2$ & $10.6\pm3.7$   & $23.4\pm 3.8$ &  $48.6\pm9.1$ \\
		$p=500$& $20.4\pm 5.9$& $56.2\pm 12.5 $  & $51.3\pm 13.6$ & $104\pm 14 $\\
		$p=1000$& $121\pm 31$ &  $197\pm 22$ & $184\pm 15$ &  $295\pm63$\\
		\hline
	\end{tabular}
	
	\caption{\footnotesize [Runtime for exact solver] 
		Comparison of the average runtime (in seconds) across 10 replicates ($\pm$ standard error) of our tailored outer approximation method with and without warm starts (Supplement~\ref{app:exact-expts}). }
	\label{app:table-mip}
	
\end{table*}

\section{Neuroscience Application} \label{neuro}

\subsection{Application Background} \label{neuro_background}

Studying the brain circuitry involved in diseases of the nervous system is critical for developing treatments. Neuroscientists are interested in measuring fast changes in the concentration of neurotransmitters (e.g., dopamine, serotonin and norepinephrine), that serve as chemical messengers between brain cells (neurons). Recently, the application of fast scan cyclic voltammetry (FSCV) has been applied to study neurotransmitter levels in humans. During surgical procedures, researchers place electrodes into the brains of patients during awake neurosurgery \citep{Kishida, Kishida2011PLoSone} and collect FSCV measurements while subjects perform decision-making tasks (e.g., stock market games) designed to test hypotheses about the role of neurotransmitters in cognitive processes. The implementation of FSCV in humans relies on prediction methods to estimate neurotransmitter concentration based upon raw electrical measurements recorded by electrodes. The technique records changes in electrical current that arise from varying the voltage potential on the measurement electrode. This results in a high dimensional time series signal, which is used as a vector of covariates to model the concentration of specific neurotransmitters. In vitro datasets are generated to serve as training datasets because the true concentrations, the outcome, are known (i.e., the data are labelled). The trained models are then used to make predictions of neurochemical concentration in the brain. 

In practice, each in vitro dataset is generated on a different electrode, which we treat as a task here because signals of each electrode differ in the marginal distribution of the covariates and in the conditional distribution of the outcome given the features \citep{Loewinger, Bang, Kishida, Moran}. An in-depth description of the data can be found in \citep{Loewinger}. Given the high dimensional nature of the covariates, researchers typically apply regularized linear models \citep{Kishida} or linear models with a dimension reduction pre-processing step, such as principal component regression \citep{Loewinger, Johnson}. Importantly, coefficient estimates from sparse linear models fit on each task separately ($L_0 L_2$) exhibit considerable heterogeneity in both their values and supports as can be seen in Figure \ref{fig:SHR_supp}. For these reasons, one might predict that multi-task methods that employ regularizers that share information through the $\boldsymbol{\beta}_k$ values, such as the Bbar method, may perform worse than methods that borrow strength across the supports, $\B{z}_k$.

\subsection{Neuroscience Data Pre-processing} \label{neuro_modeling}

Since the outcome is a measure of chemical concentration, and the experiments were run such that the marginal distribution of the outcome was uniform across a biologically-feasible range of values, we sought to determine whether transformations of the outcome would improve prediction performance. For a given outcome, $y_{k,i}$, we considered the following functions of the data: $f_1(y_{k,i}) = y_{k,i}$, $f_2(y_{k,i}) = \log(y_{k,i} + 1 \times 10^{-6})$, $f_3(y_{k,i}) = \log \left [ y_{k,i} + F^{-1}_{y_k}(0.75) / F^{-1}_{y_k}(0.25) \right ]$, $f_4(y_{k,i}) = \phi(y_{k,i}, {\zeta})$, where $\phi$ is the Yeo-Johnson Transformation \citep{yeo} and $\zeta$ was estimated through maximum likelihood estimation via the $\texttt{car}$ package \citep{car}. We finally centered and scaled the outcome with the marginal mean and standard deviation of the outcome for each task separately. We also centered and scaled the features of each task separately. The features and outcome of the test sets were centered and scaled using the statistics from the training set. In numerical experiments conducted on a different set of datasets, we inspected results of the above experiments, and concluded that $f_2(y_{k,i})$ produced the best prediction performance for the benchmark method, $L_0 L_2$, and proceeded with this transformation for all subsequent analyses. We did not inspect the performance of our methods with the other transformation on the primary datasets in order to avoid biases.

\subsection{Neuroscience Application Figures} \label{neuro_figs}

To inspect support heterogeneity in the FSCV application we fit Zbar+L2, $L_0 L_2$, gel, and Bbar methods on four tasks. To standardize across tasks, we centered and scaled the design matrices. The marginal distribution of the outcomes was the same across tasks due to the lab experimental design. We fit $L_0 L_2$ models with $s = p/20 = 50$ on each study separately with a task-specific cross-validated Ridge penalty and plotted the $\hat{\boldsymbol{\beta}}_k$ here. 

\begin{figure}[H]
	\centerline{
		\includegraphics[width=0.9\linewidth]{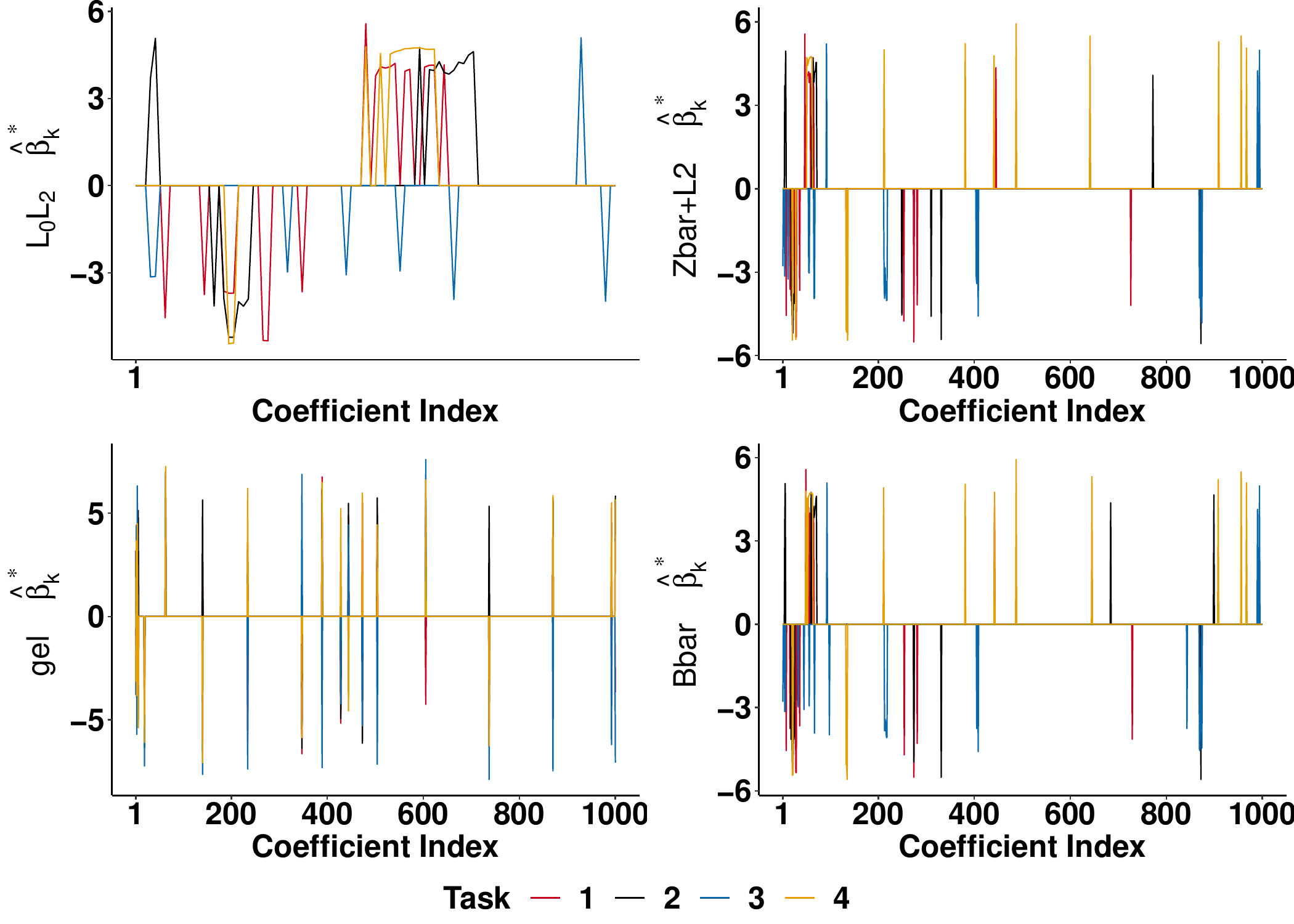}
	}
	\caption{\footnotesize Coefficient estimates with $L_0 L_2$ [Top Left], Zbar+L2 [Top Right], gel [Bottom Left] and Bbar [Bottom Right]. To show the solutions on the same scale across tasks, we plot the nonzero elements of $\hat{\boldsymbol{\beta}}_k^* = sgn(\hat{\boldsymbol{\beta}}_k) \odot \log(|\hat{\boldsymbol{\beta}}_k|)$.}
	\label{fig:SHR_supp}
\end{figure}

Next we show a more exhaustive collection of out-of-sample prediction performance figures than what was shown in the main text. As in the simulation experiments, we did not use local search during the optimization procedure when fitting models for hyperparameter tuning of the $\ell_0$ methods. Once the hyperparameters were selected, we did, however, conduct up to 50 iterations of local search when fitting the resulting models.

\begin{figure} 
	\centering
	\begin{subfigure}
		{0.7\textwidth}
		\centering
		\includegraphics[width=0.90\linewidth]{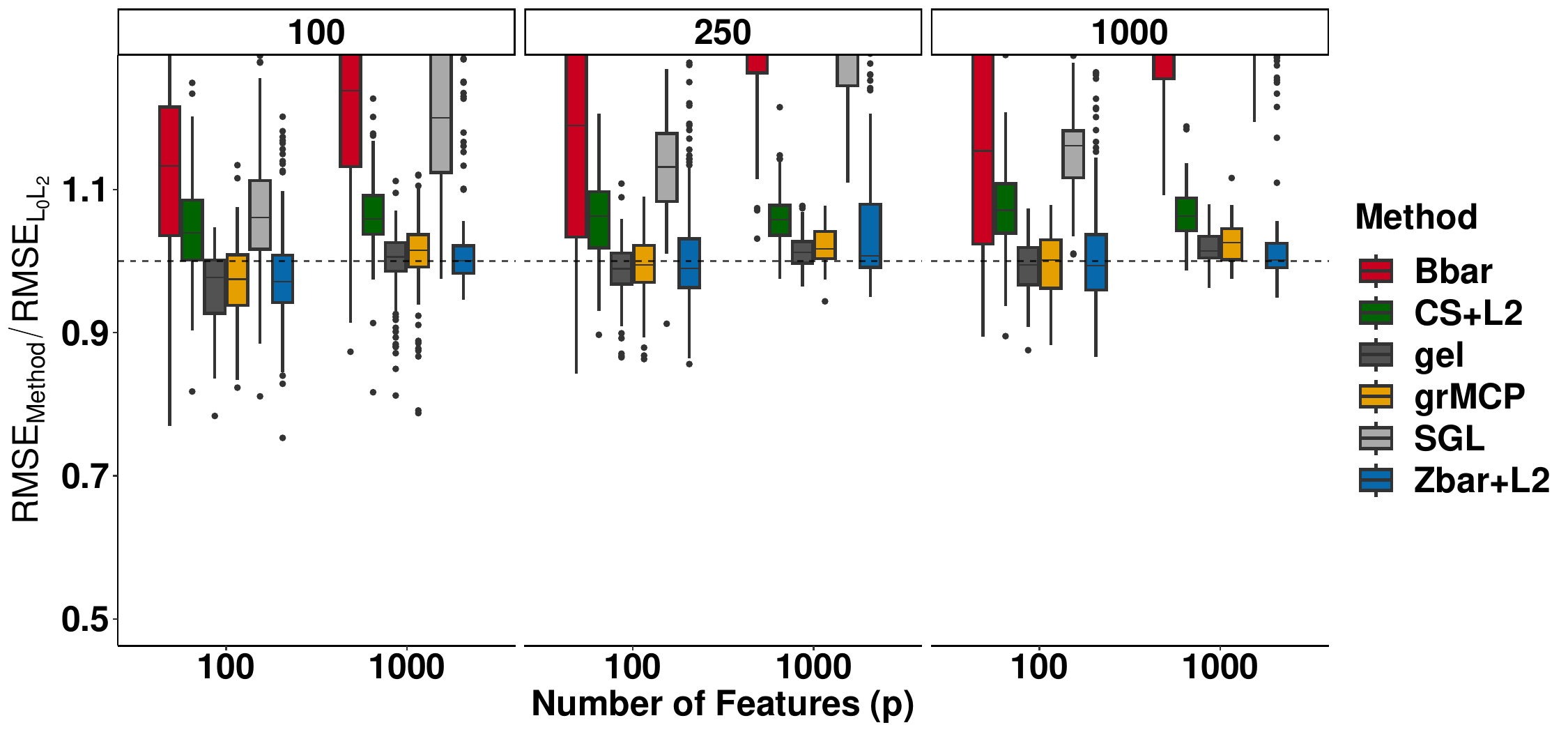}
	\end{subfigure}
	\begin{subfigure} 
		{0.7\textwidth}
		\centering
		\includegraphics[width=0.90\linewidth]{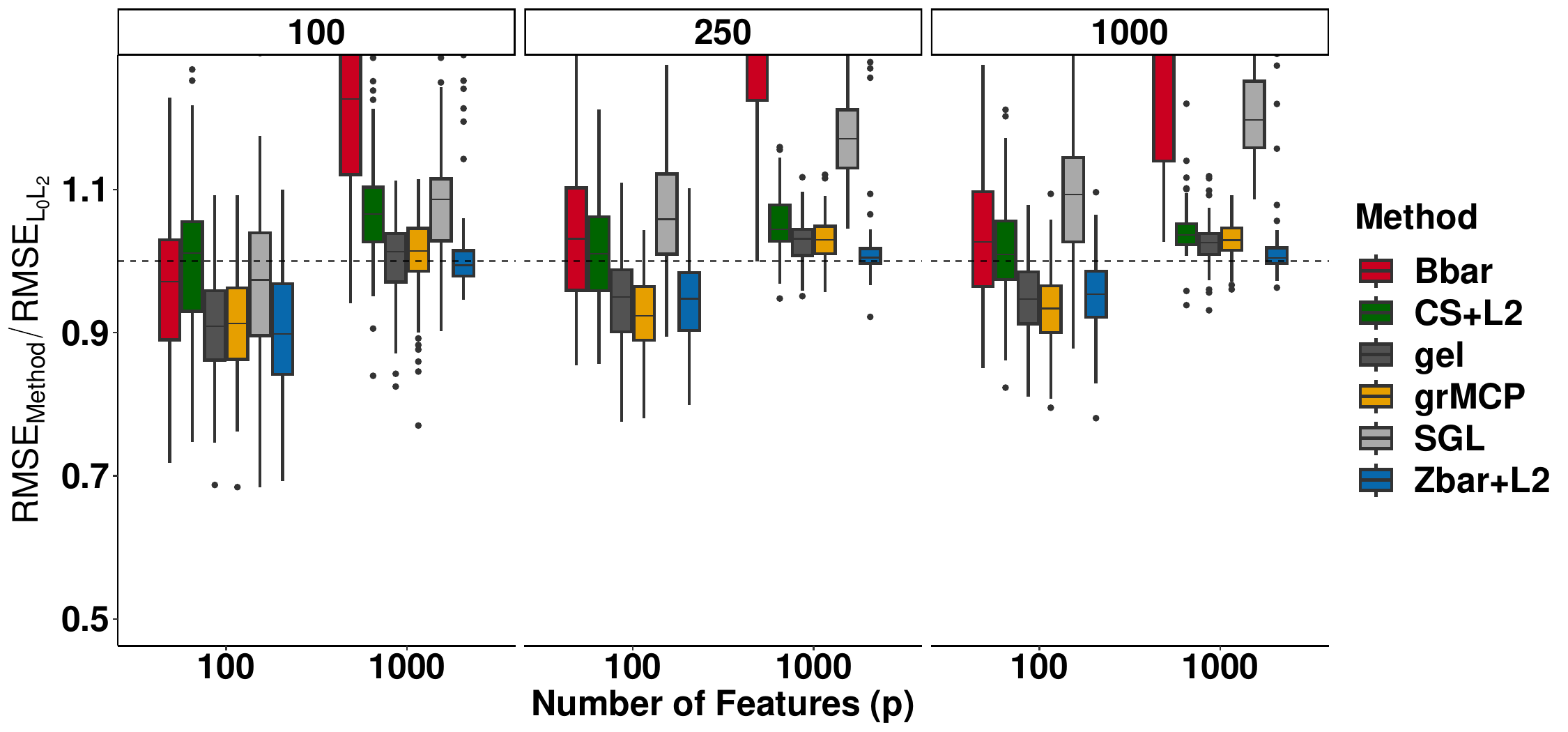}
	\end{subfigure}
	\centering
	\begin{subfigure}[t]{0.7\textwidth}
		\centering
		\includegraphics[width=0.90\linewidth]{multiTaskRMSE_fscv_L0vBenchmark_rho_25.pdf}
	\end{subfigure}
	\begin{subfigure}[t]{0.7\textwidth}
		\centering
		\includegraphics[width=0.90\linewidth]{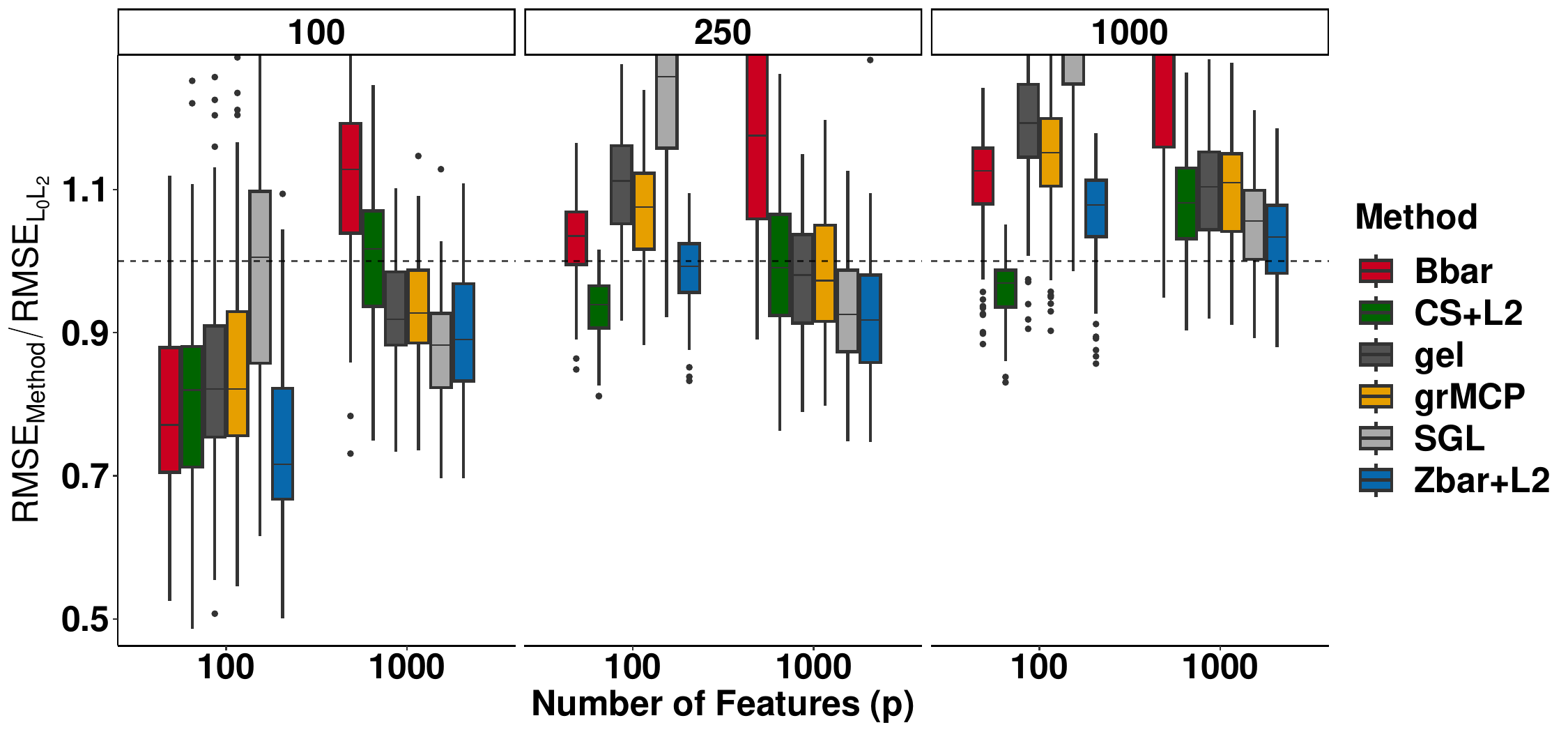}
	\end{subfigure}
	\caption{\footnotesize {Neuroscience application results.} Out-of-sample prediction performance (RMSE) averaged across random sets of tasks ($K=4$) for different $p$ (displayed on the x-axis) and $n_k$ (displayed on the panels) for methods shown in main text. Sparsity levels (from top) $s = 5, 10, 25, 50$.} 
	\label{fig:fscvRMSE_supp1}
\end{figure}

\begin{figure} 
	\centering
	\begin{subfigure} 
		{0.7\textwidth}
		\centering
		\includegraphics[width=0.90\linewidth]{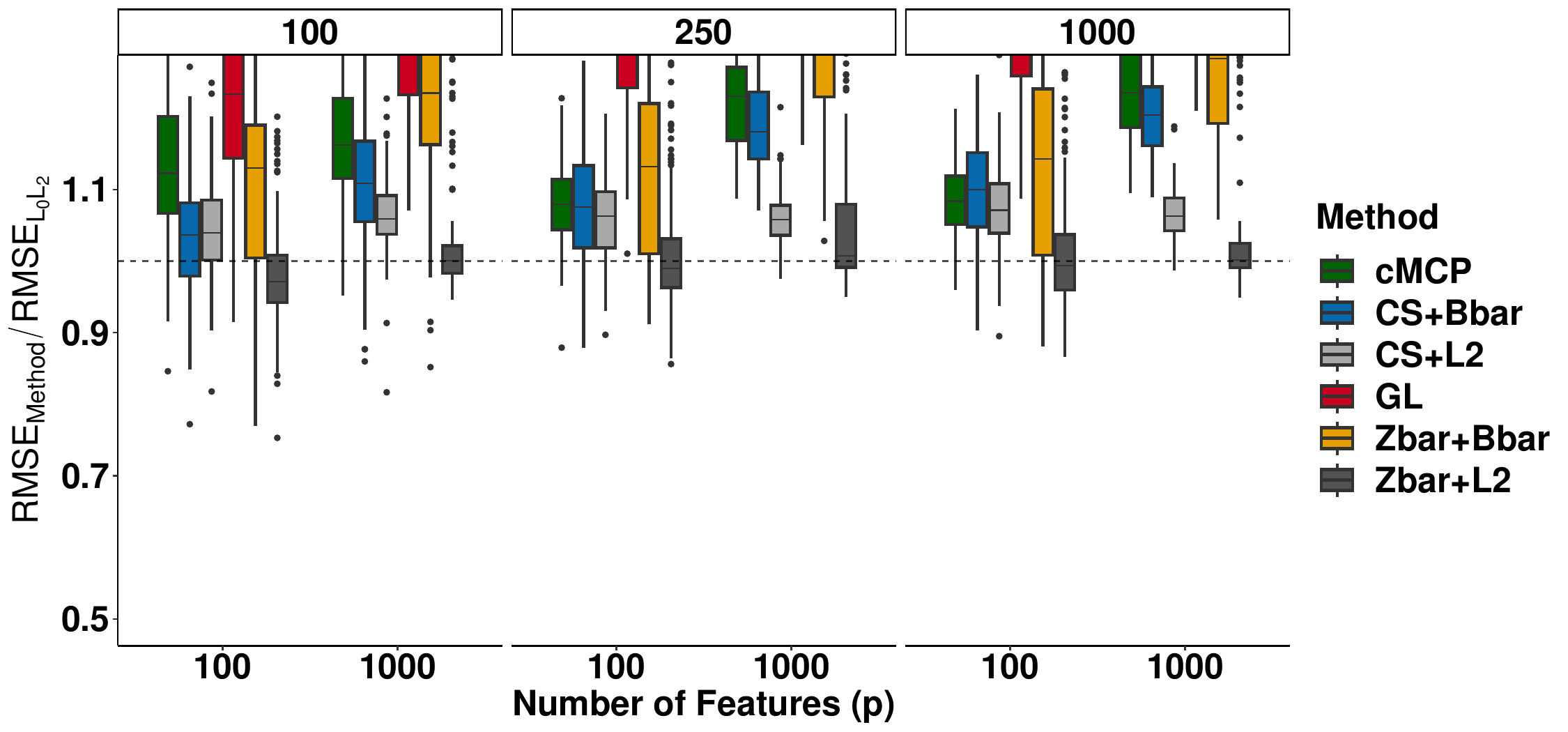}
	\end{subfigure}
	\begin{subfigure} 
		{0.7\textwidth}
		\centering
		\includegraphics[width=0.90\linewidth]{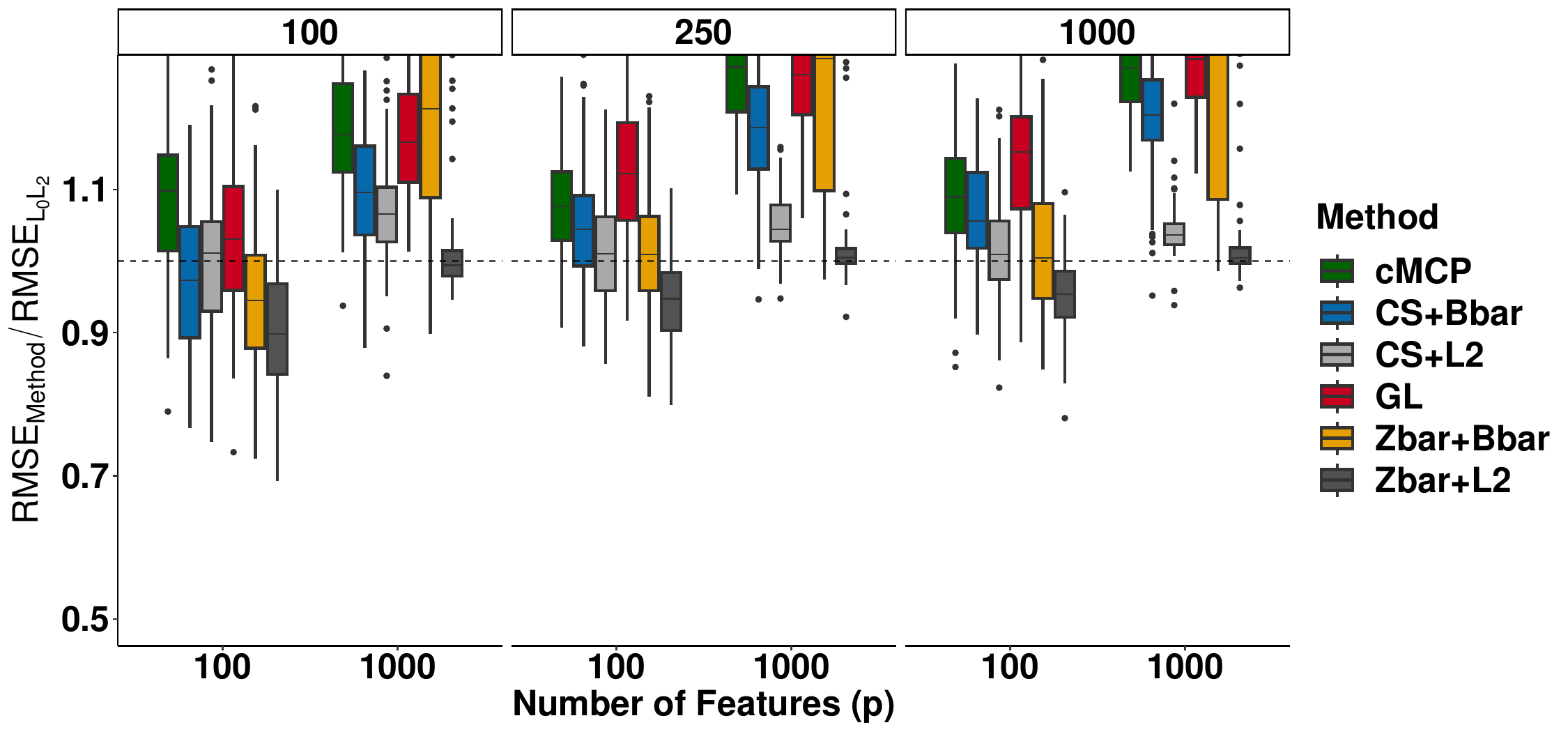}
	\end{subfigure}
	
	\centering
	\begin{subfigure}{0.7\textwidth}
		\centering
		\includegraphics[width=0.90\linewidth]{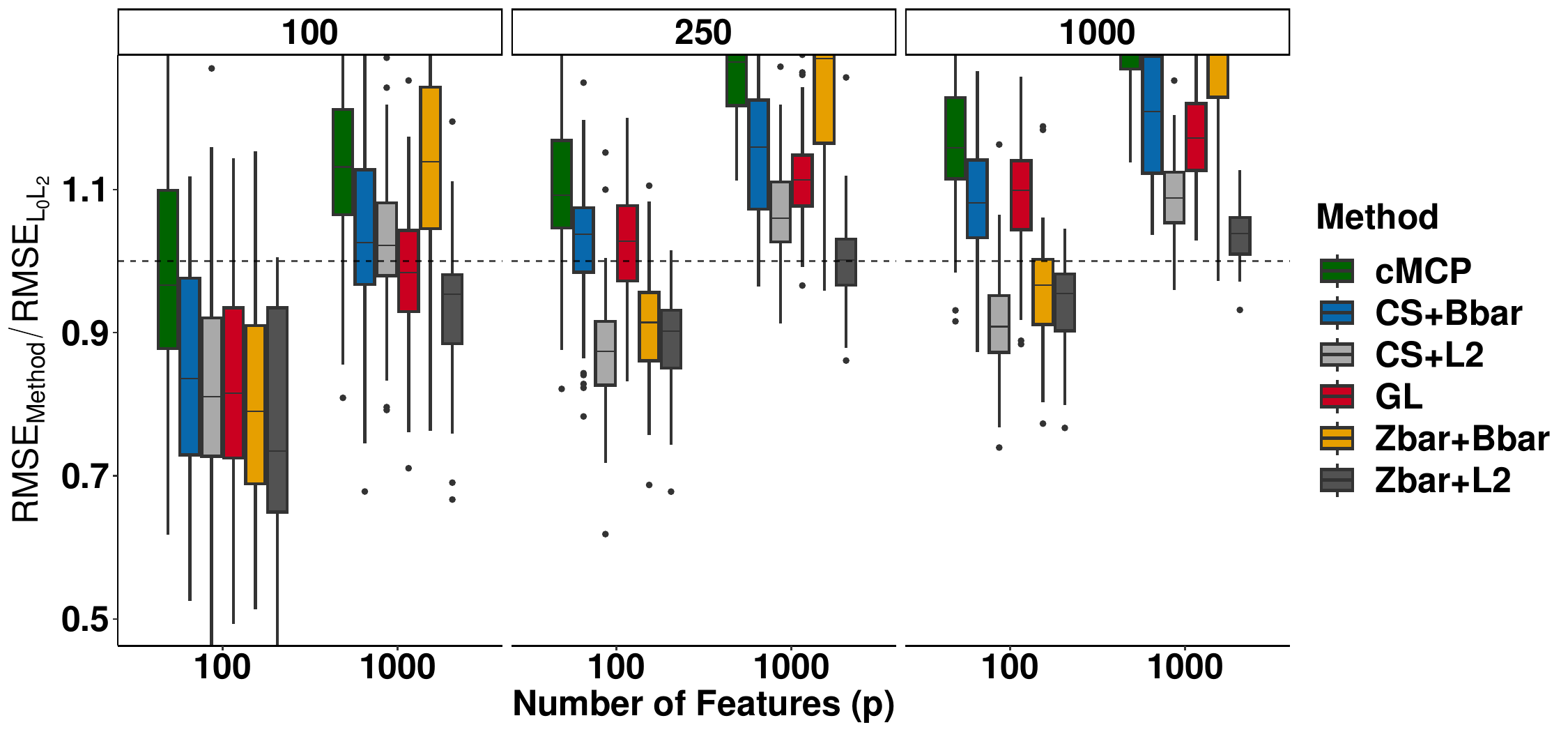}
	\end{subfigure}
	\begin{subfigure}{0.7\textwidth}
		\centering
		\includegraphics[width=0.90\linewidth]{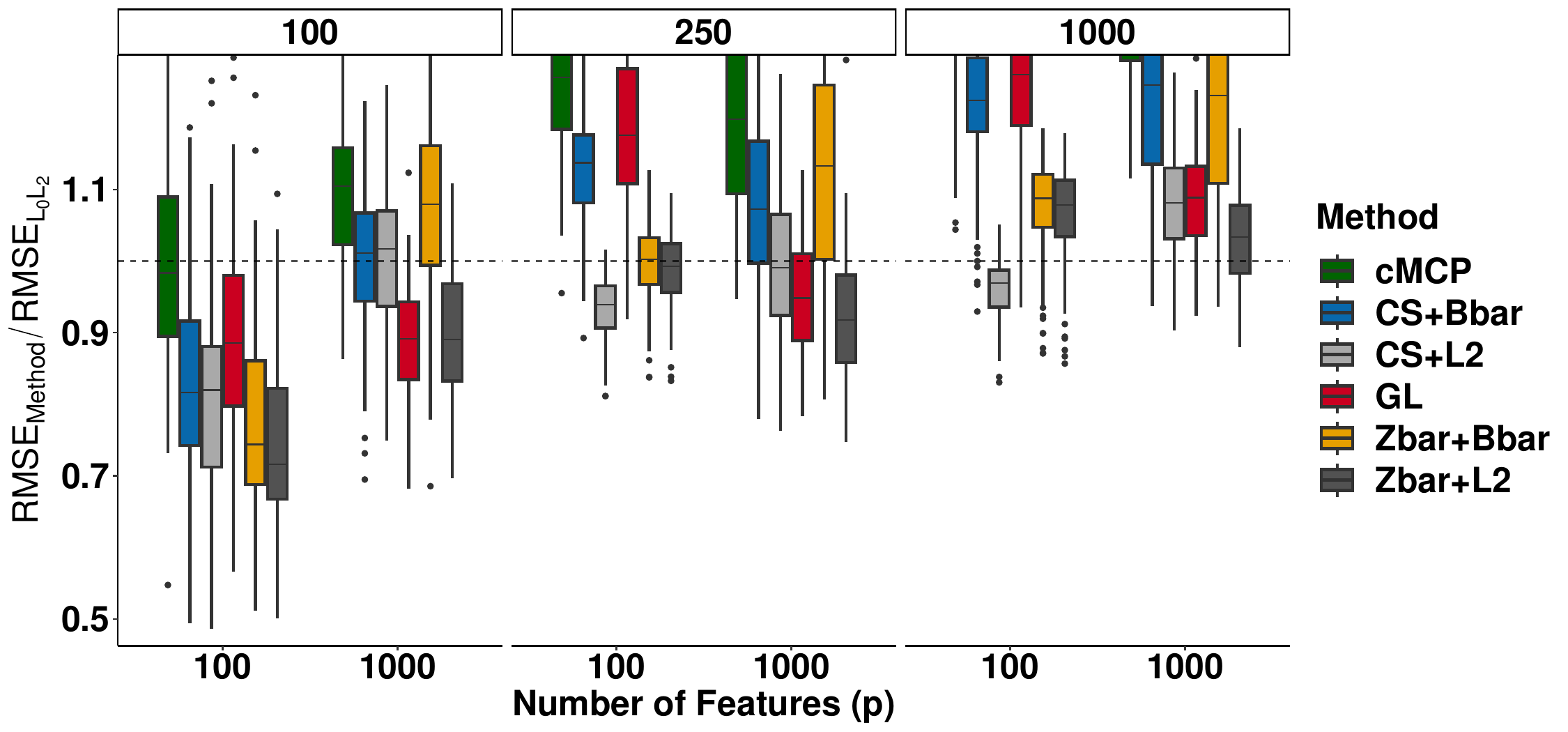}
	\end{subfigure}
	\caption{\footnotesize {Neuroscience application results.} Out-of-sample prediction performance (RMSE) averaged across random sets of tasks ($K=4$) for different $p$ (displayed on the x-axis) and $n_k$ (displayed on the panels) for additional penalties and $\ell_0$ methods not shown in main text. Sparsity levels (from top) $s = 5, 10, 25, 50$.} 	\label{fig:fscvRMSE_supp2}
\end{figure}
\section{Cancer Application} \label{cancer_supp}

\subsection{Cancer Application Figures} 

\begin{figure}[H]
	\centerline{
		\includegraphics[width=0.85 \linewidth]{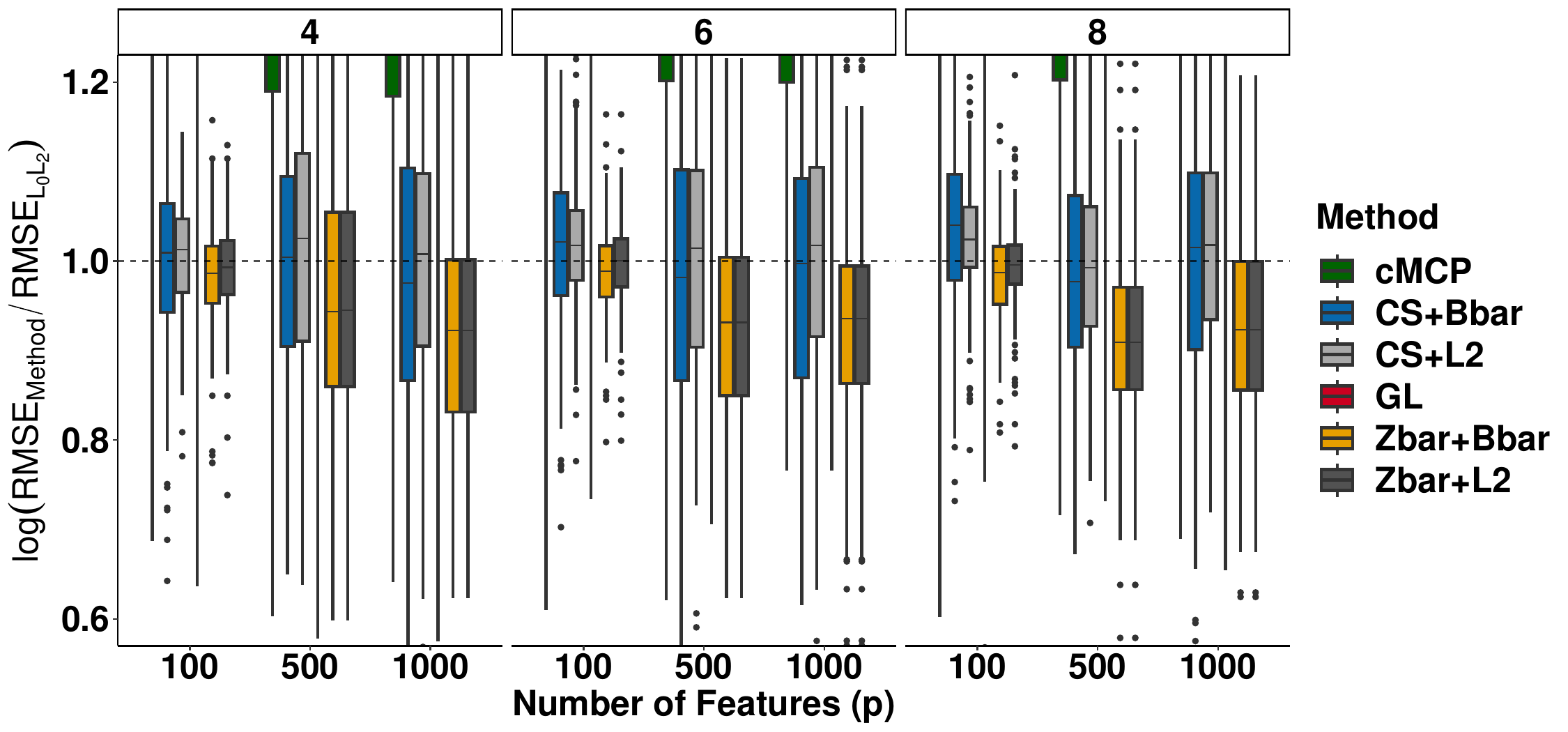}
	}
	\caption{ \footnotesize {Cancer application results.} Hold-one-study-out prediction performance of additional methods, not presented in main test, averaged across tasks for different $K$. The performance (RMSE) is presented relative to the performance of task-specific sparse regressions.} 
	\label{fig:cancerRMSE_supp}
\end{figure}

\begin{figure}[H] 
	\centering
	\begin{subfigure}[t]{0.75\textwidth}
		\centering
		\includegraphics[width=0.90\linewidth]{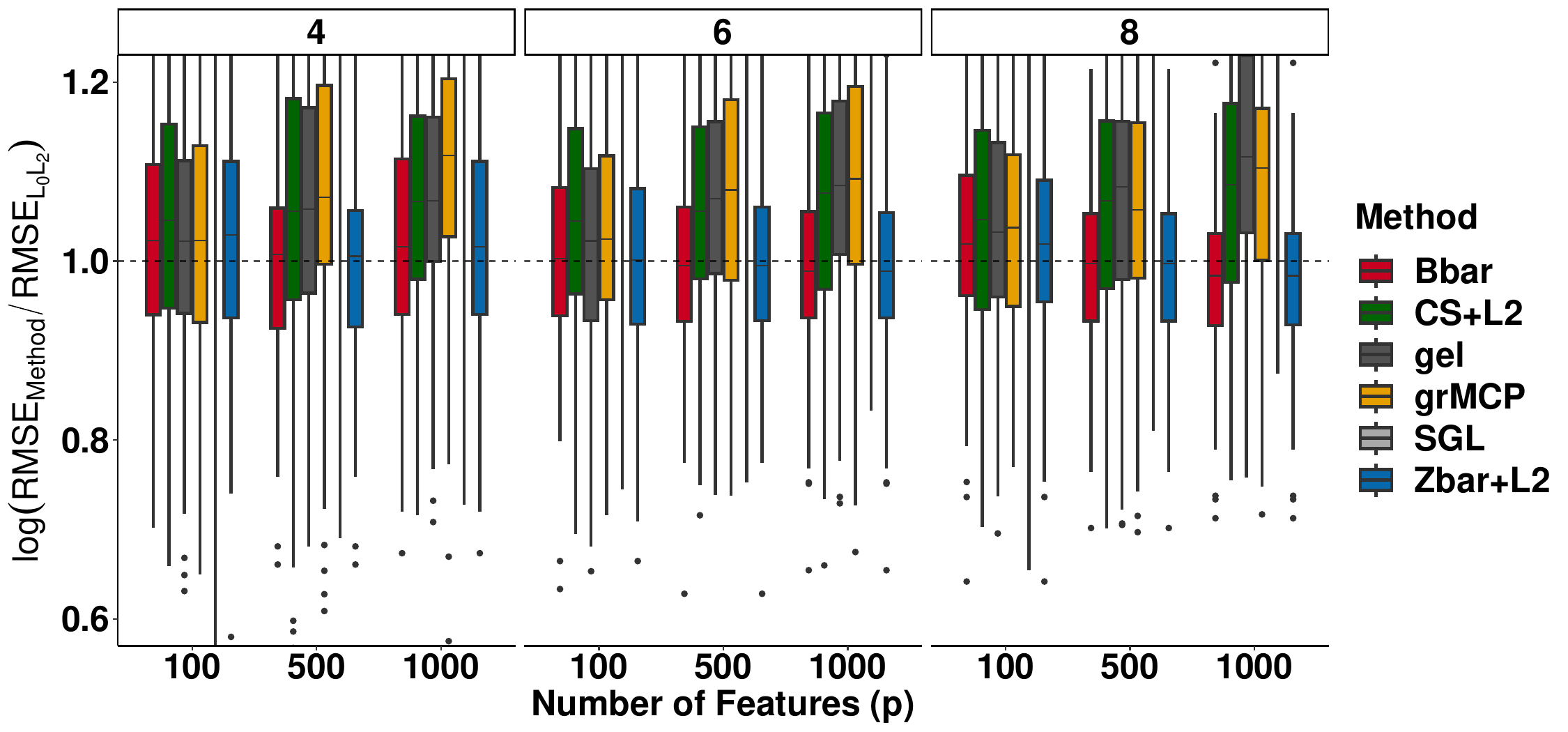}
	\end{subfigure}
	\begin{subfigure}[t]{0.75\textwidth}
		\centering
		\includegraphics[width=0.90\linewidth]{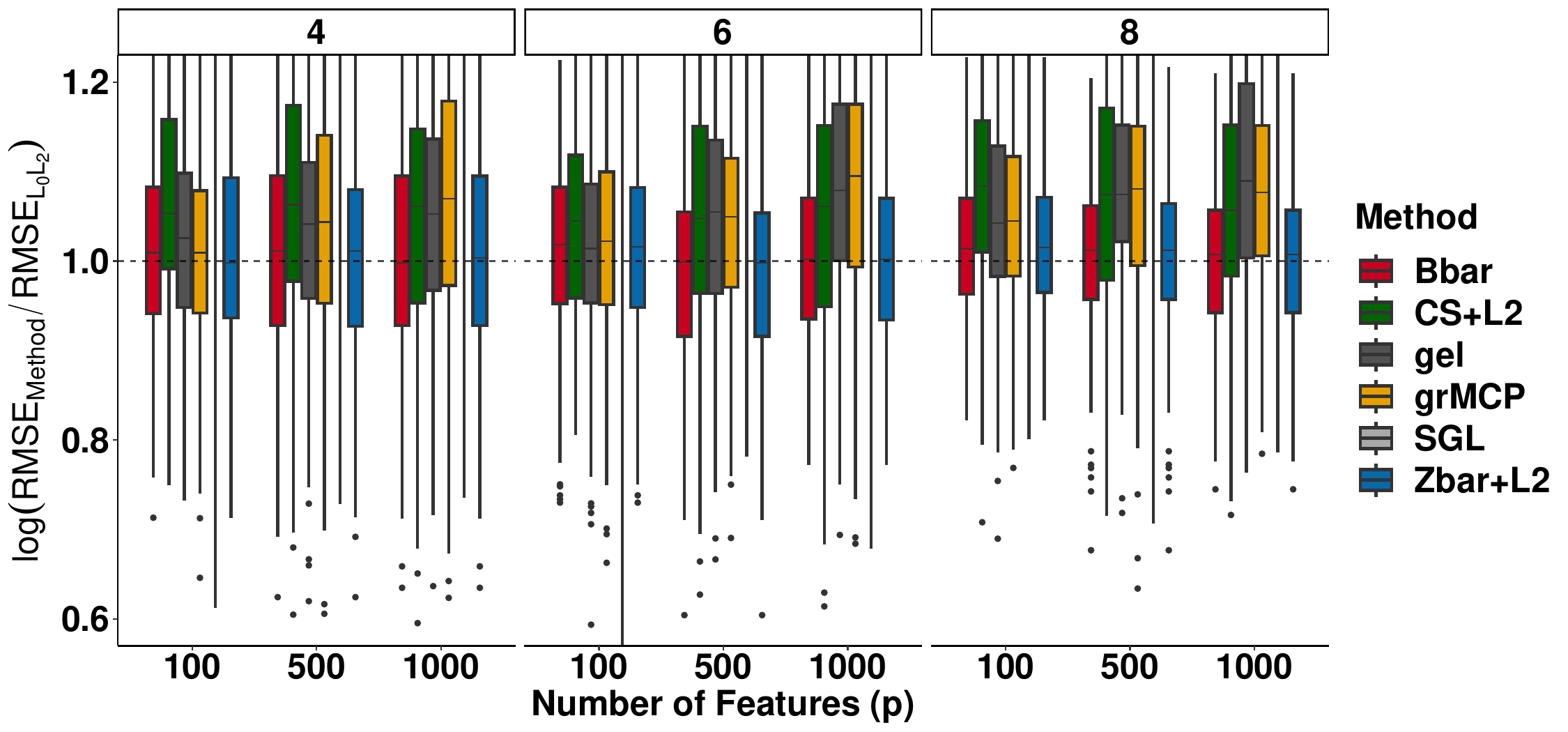}
	\end{subfigure}
	\centering
	\begin{subfigure}[t]{0.75\textwidth}
		\centering
		\includegraphics[width=0.90\linewidth]{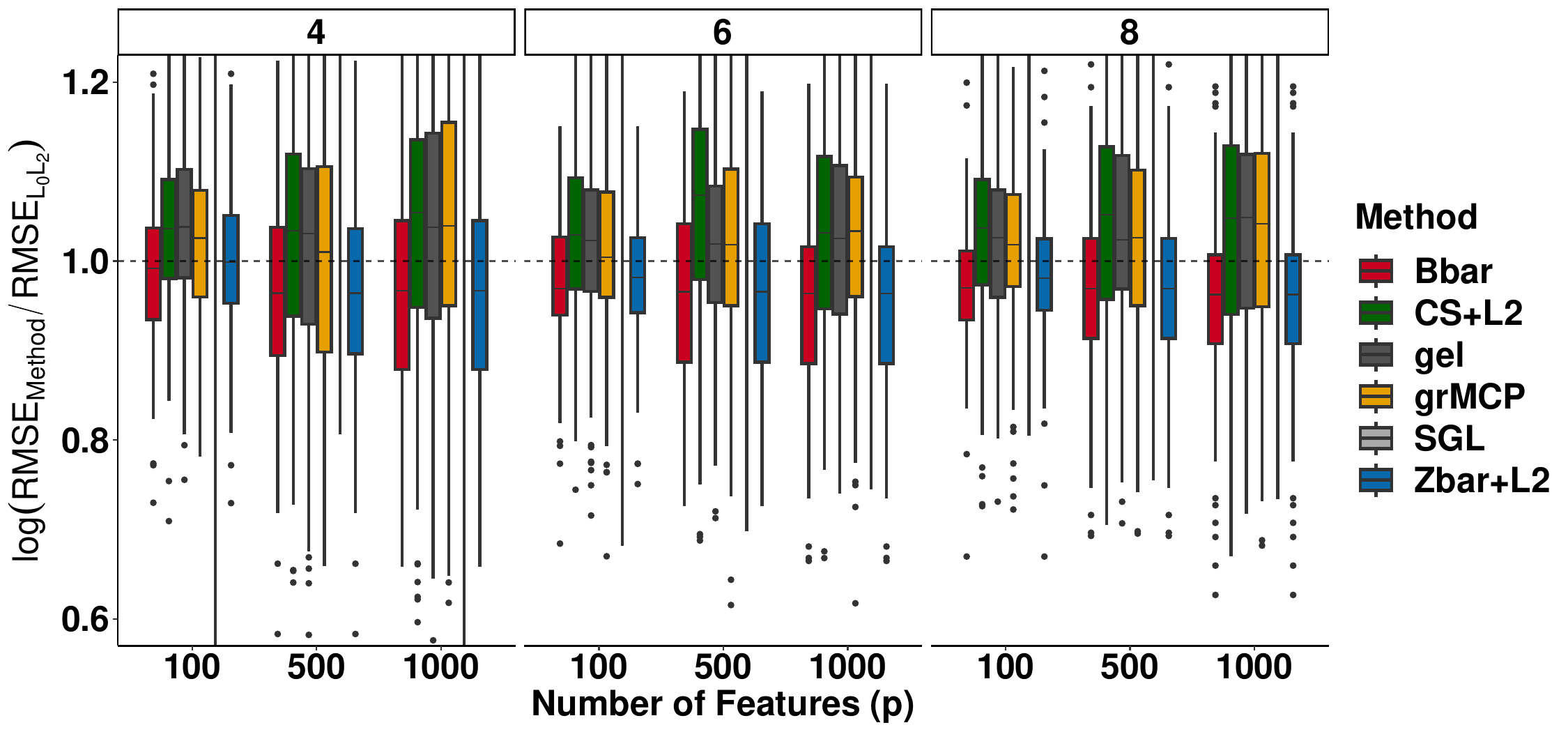}
	\end{subfigure}
	\begin{subfigure}[t]{0.75\textwidth}
		\centering
		\includegraphics[width=0.90\linewidth]{multiTaskRMSE_bCancer_L0_50.pdf}
	\end{subfigure}
	\caption{\footnotesize {Cancer application results.} Hold-one-study-out prediction performance of methods presented in main text, averaged across tasks for different $K$. Sparsity levels (from top) $s = 5, 10, 25, 50$.} 
	\label{fig:cancerRMSE_supp1}
\end{figure}

\begin{figure}[H] 
	\centering
	\begin{subfigure}[t]{0.75\textwidth}
		\centering
		\includegraphics[width=0.90\linewidth]{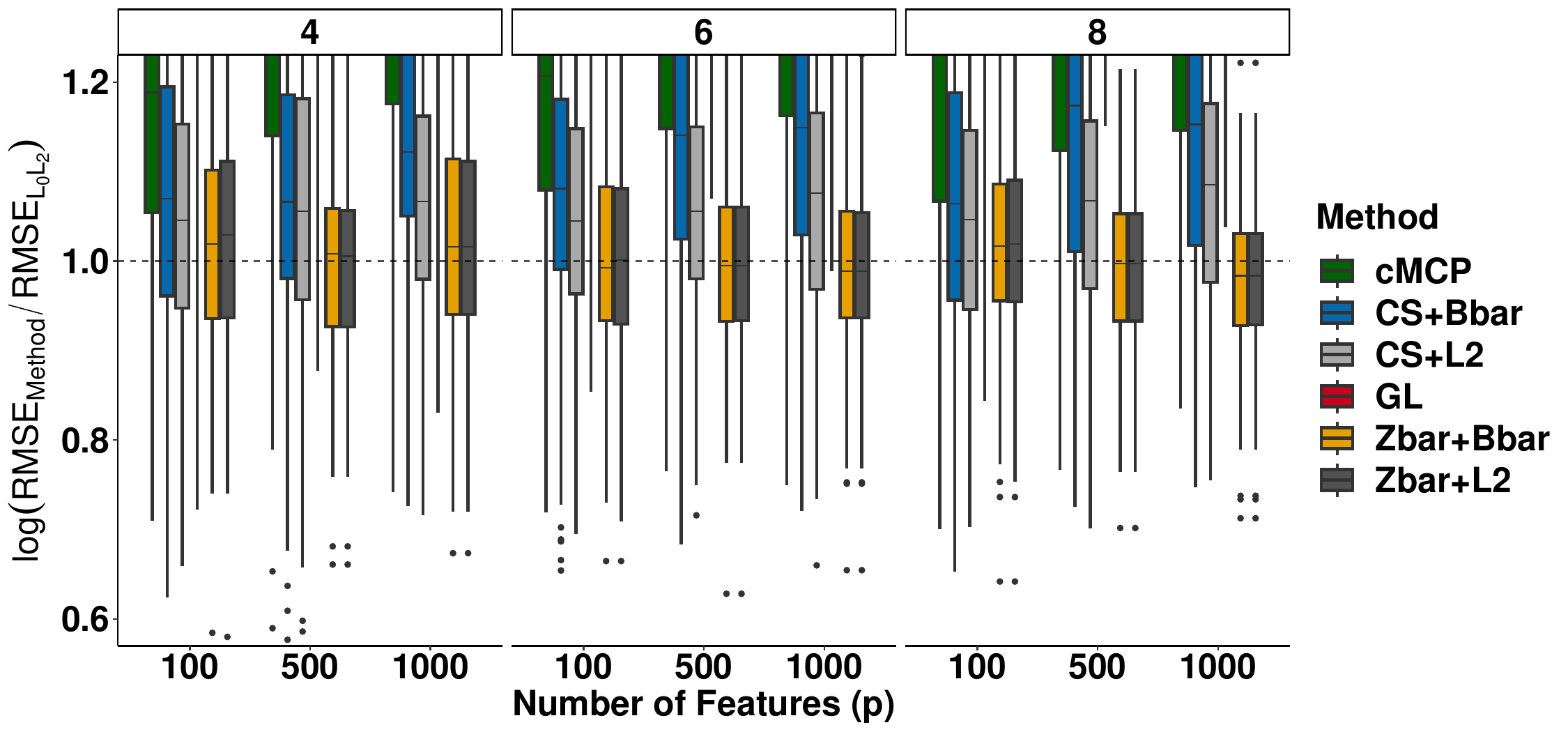}
	\end{subfigure}
	\begin{subfigure}[t]{0.75\textwidth}
		\centering
		\includegraphics[width=0.90\linewidth]{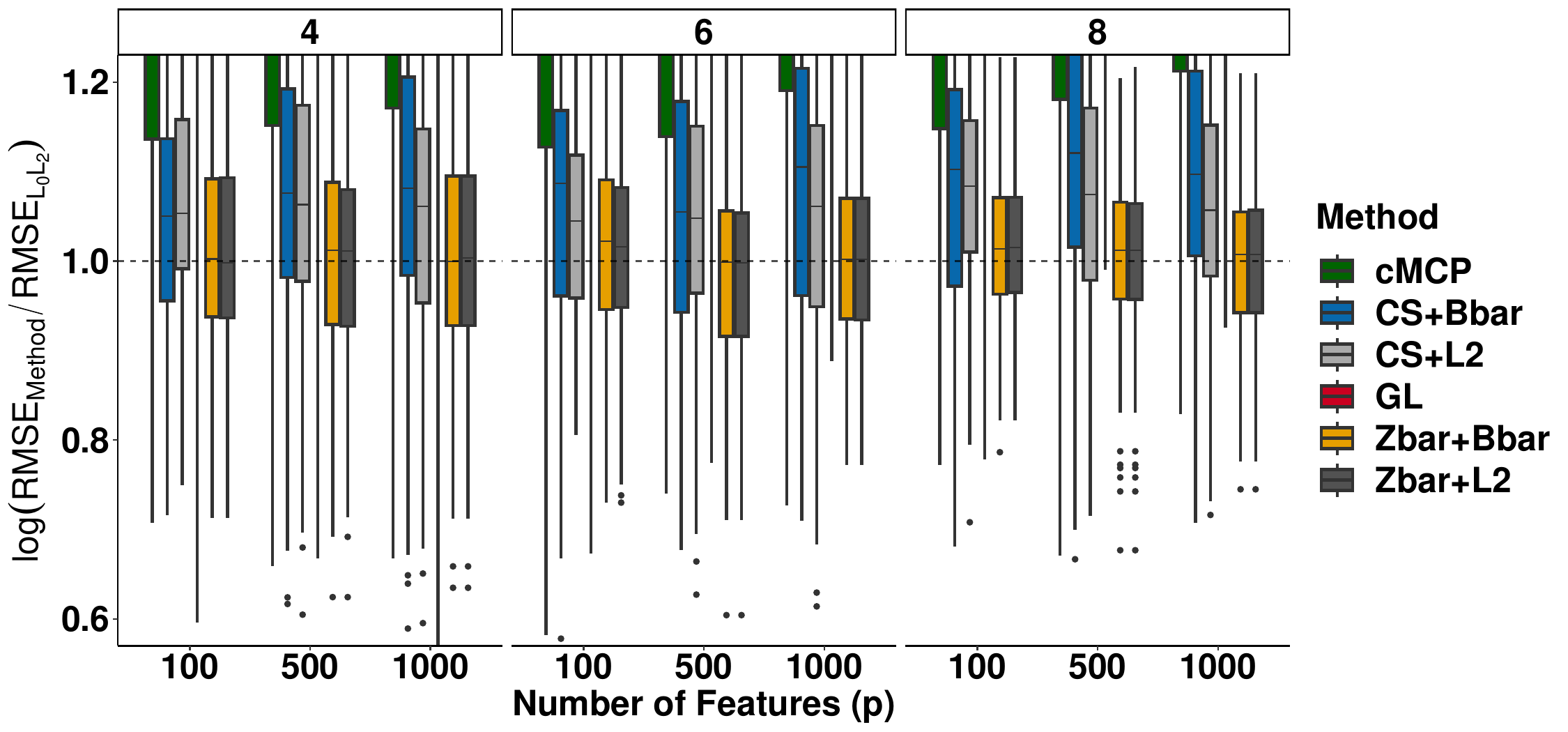}
	\end{subfigure}
	\centering
	\begin{subfigure}[t]{0.75\textwidth}
		\centering
		\includegraphics[width=0.90\linewidth]{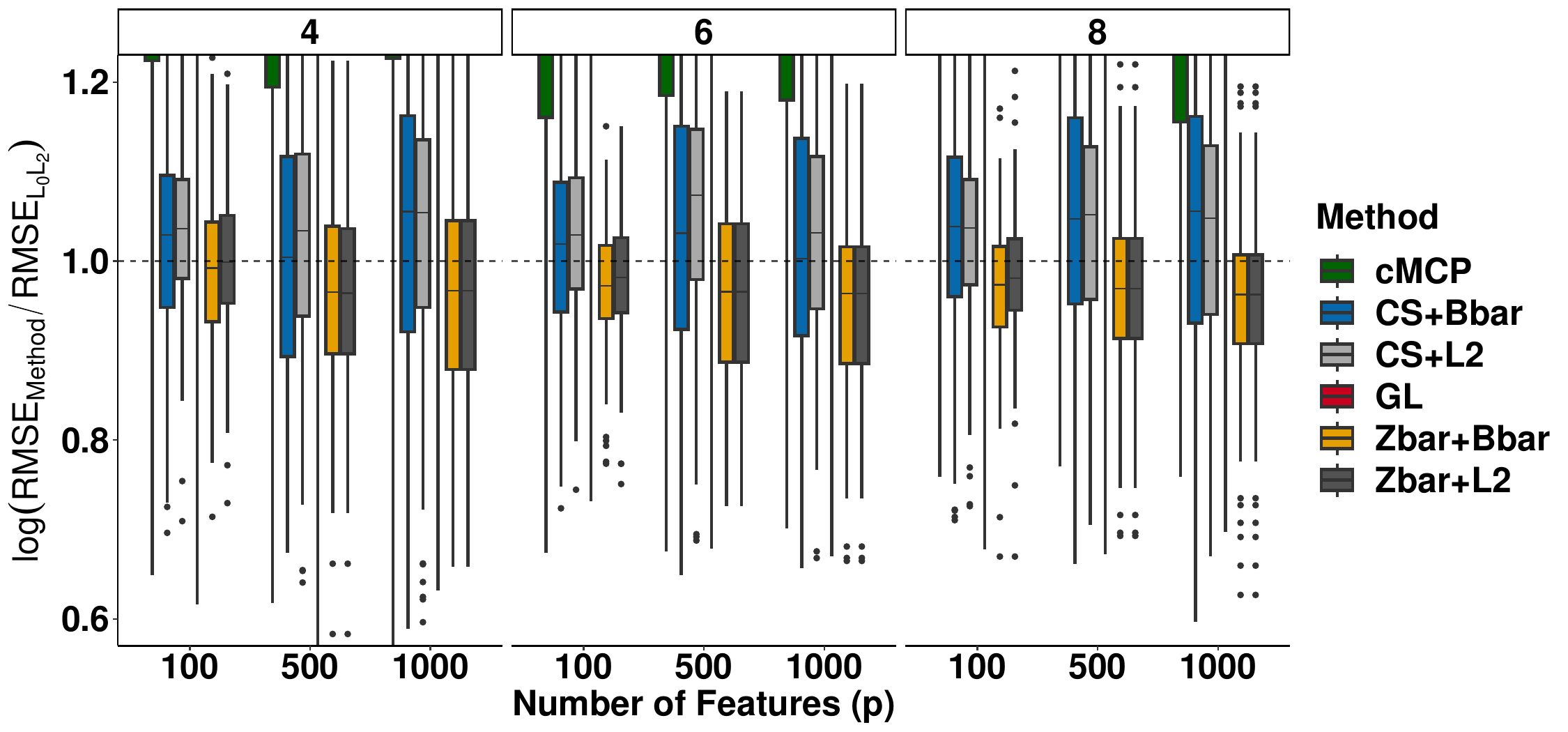}
	\end{subfigure}
	\begin{subfigure}[t]{0.75\textwidth}
		\centering
		\includegraphics[width=0.90\linewidth]{multiTaskRMSE_bCancer_all_50.pdf}
	\end{subfigure}
	\caption{\footnotesize {Cancer application results.} Hold-one-study-out prediction performance of additional methods not presented in main text, averaged across tasks for different $K$. Sparsity levels (from top) $s = 5, 10, 25, 50$.} 
	\label{fig:cancerRMSE_supp2}
\end{figure}

\section{Rug Plots} \label{rugPlots}

\begin{figure}[H]
	\centerline{
		\includegraphics[width=0.75\linewidth]{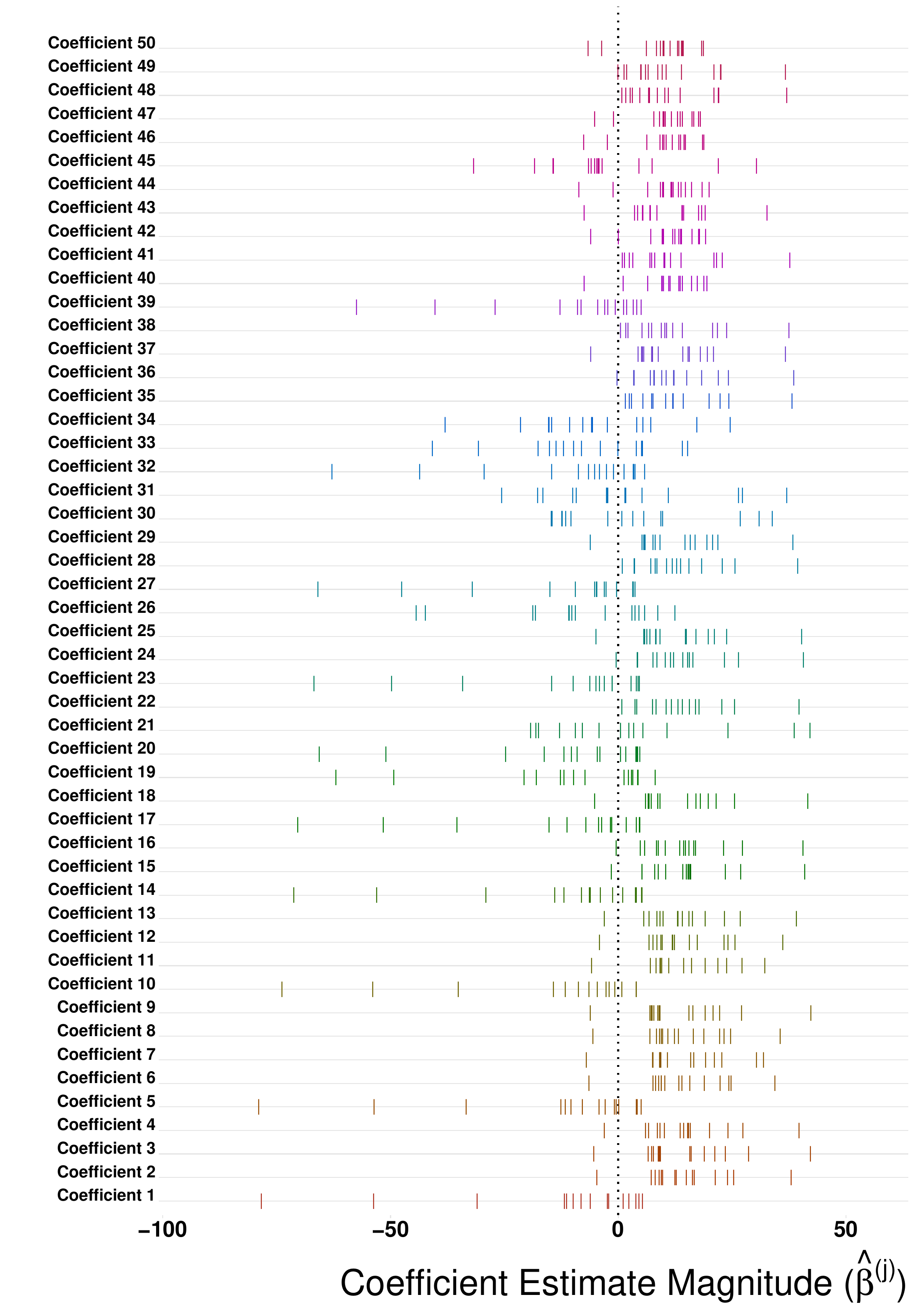}
	}
	\caption{\footnotesize FSCV application rug plots displaying $\hat{\beta}_{k,j}$ for the 50 $j$'s with the greatest average magnitude across the $K$ task-specific Ridge regressions ($\frac{1}{K} \sum_{k=1}^K | \hat{\beta}_{k,j}|$). For Coefficient $j$, each mark on the horizontal line is one of $K$ task-specific empirical estimates of the $\hat{\boldsymbol{\beta}}_{k,j}$.} 
\end{figure}

\begin{figure}[H]
	\centerline{
		\includegraphics[width=0.75\linewidth]{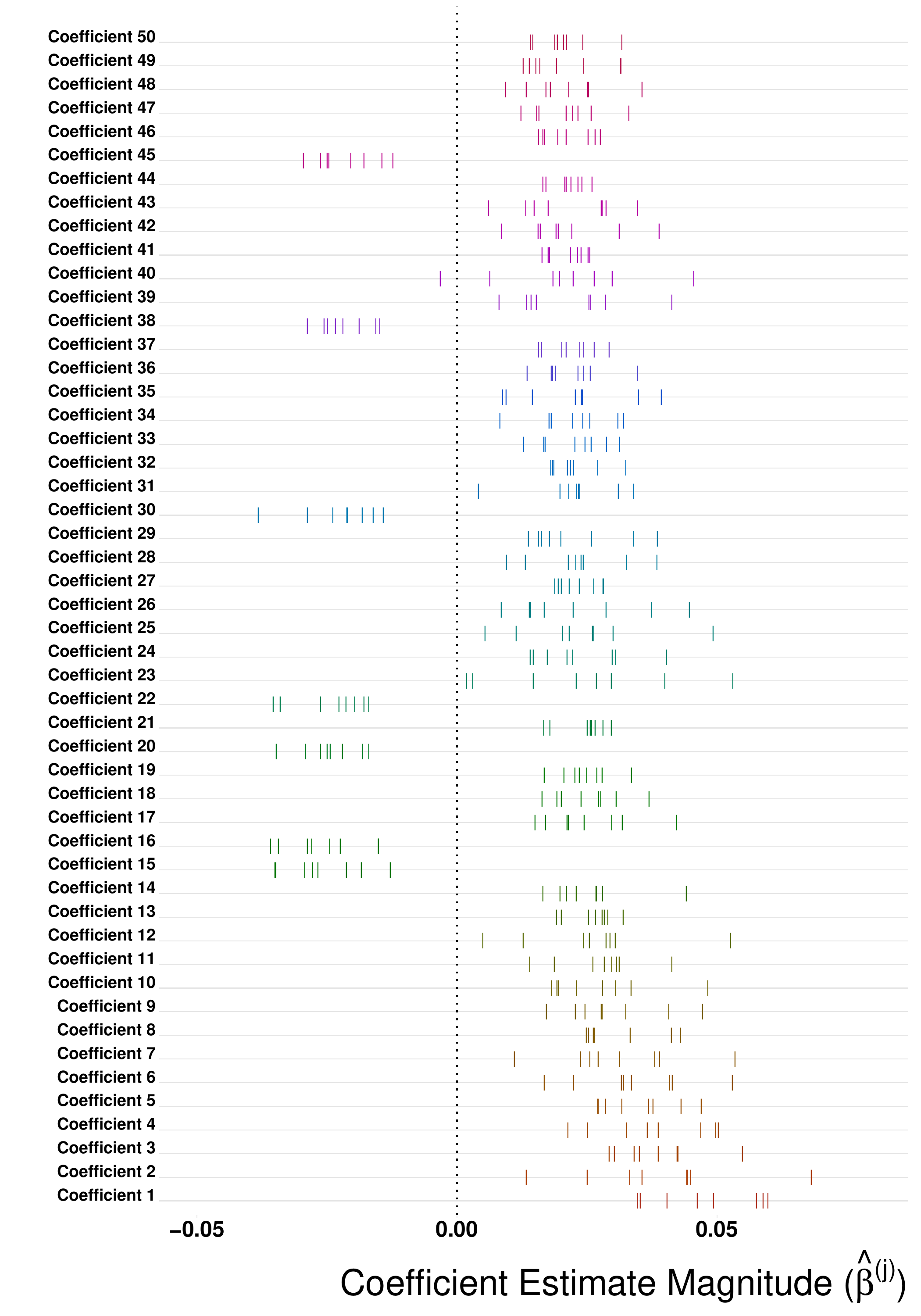}
	}
	\caption{\footnotesize Cancer genomics application rug plots displaying $\hat{\beta}_{k,j}$ for the 50 $j$'s with the greatest average magnitude across the $K$ task-specific Ridge regressions ($\frac{1}{K} \sum_{k=1}^K | \hat{\beta}_{k,j}|$). For Coefficient $j$, each mark on the horizontal line is one of $K$ task-specific empirical estimates of the $\hat{\boldsymbol{\beta}}_{k,j}$.} 
\end{figure}

\begin{figure}[H]
	\centerline{
		\includegraphics[width=0.75\linewidth]{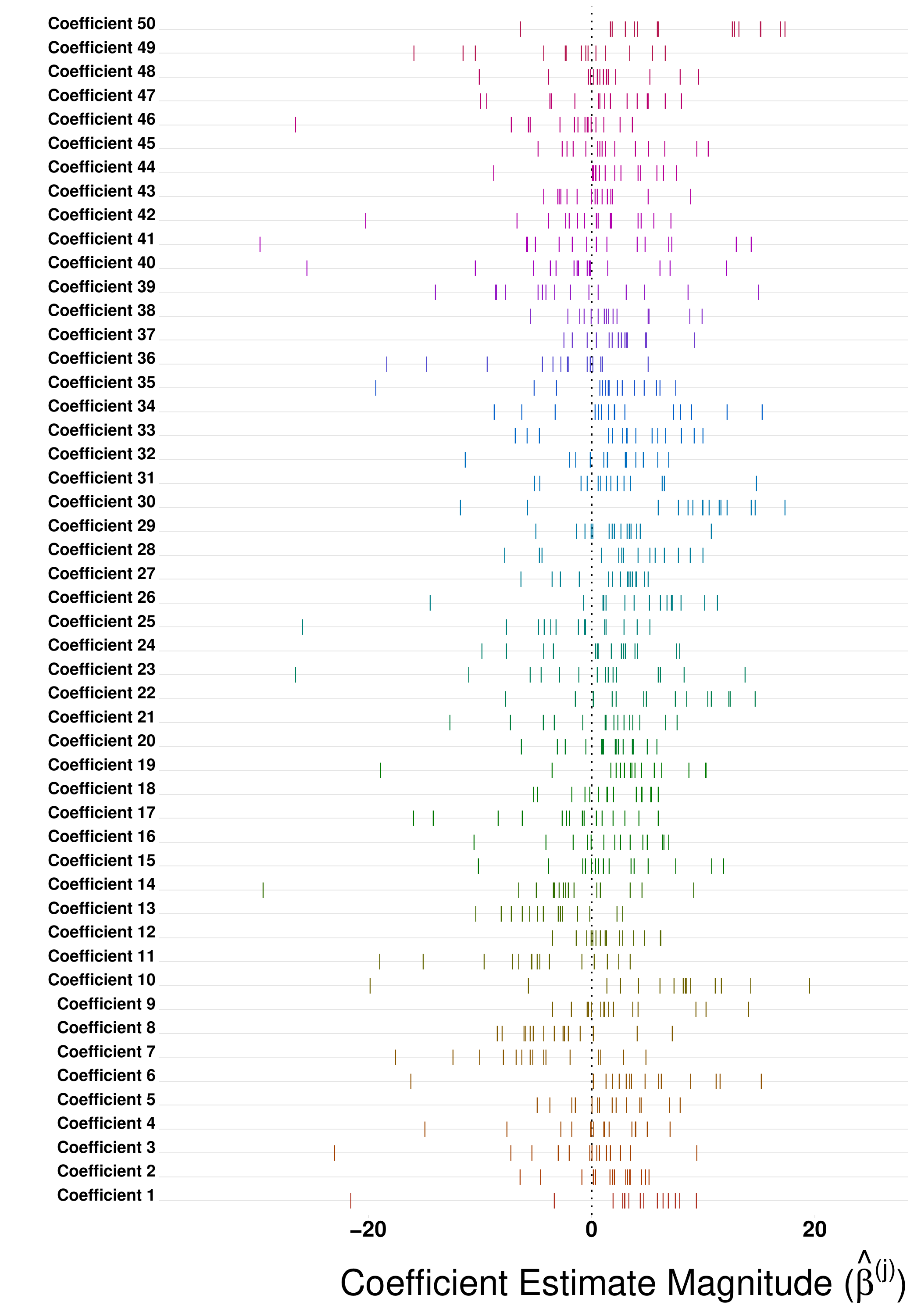}
	}
	\caption{\footnotesize FSCV application rug plots displaying $\hat{\beta}_{k,j}$ for a random set of 50 $j$'s across the $K$ task-specific Ridge regressions ($\frac{1}{K} \sum_{k=1}^K | \hat{\beta}_{k,j}|$). For Coefficient $j$, each mark on the horizontal line is one of $K$ task-specific empirical estimates of the $\hat{\boldsymbol{\beta}}_{k,j}$.} 
\end{figure}

\begin{figure}[H]
	\centerline{
		\includegraphics[width=0.75\linewidth]{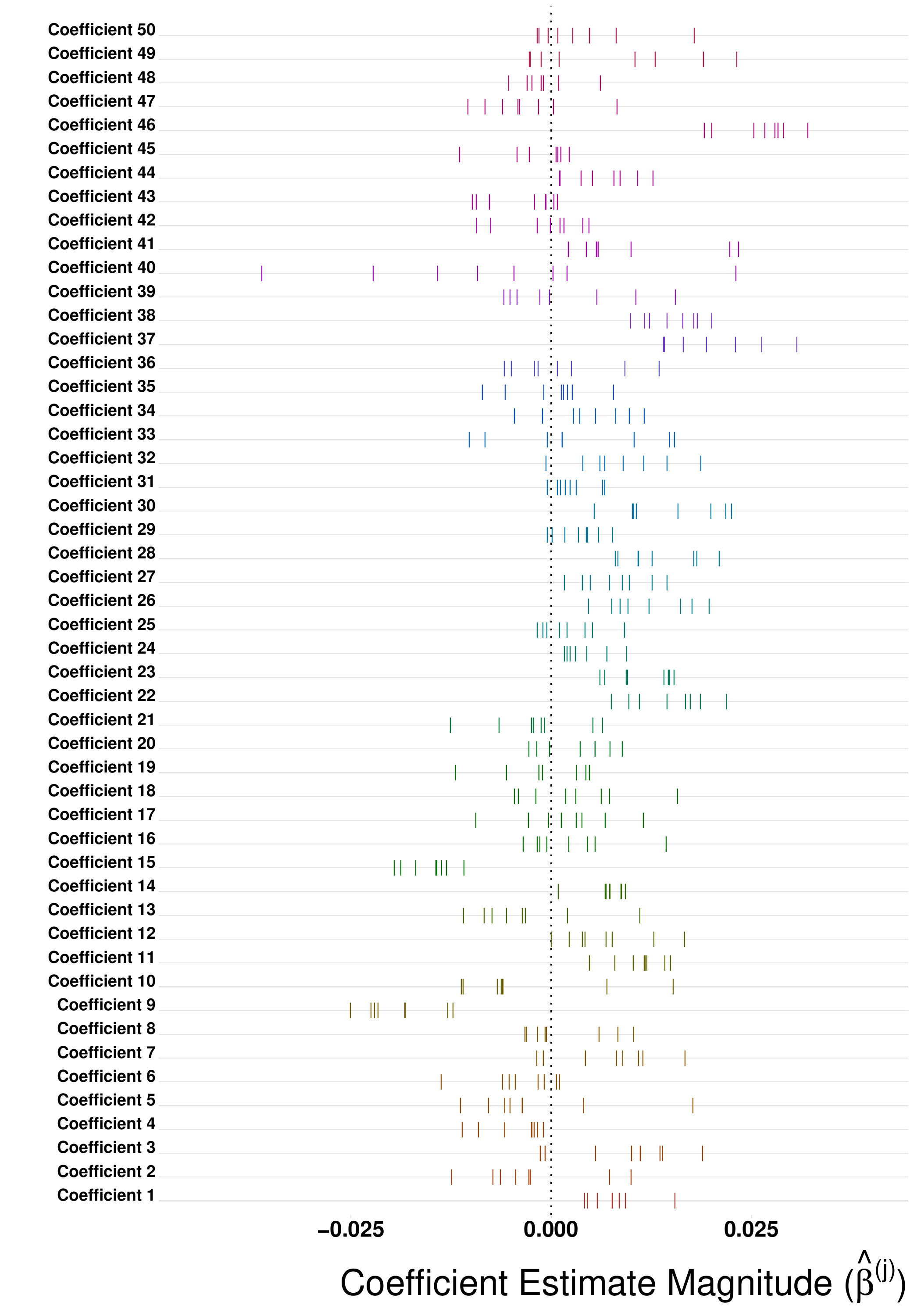}
	}
	\caption{\footnotesize Cancer genomics application rug plots displaying $\hat{\beta}_{k,j}$ for a random set of 50 $j$'s across the $K$ task-specific Ridge regressions ($\frac{1}{K} \sum_{k=1}^K | \hat{\beta}_{k,j}|$). For Coefficient $j$, each mark on the horizontal line is one of $K$ task-specific empirical estimates of the $\hat{\boldsymbol{\beta}}_{k,j}$.} 
\end{figure}

\end{document}